\newcommand{\ii}{\mathrm{i}}
\newcommand{\sn}{\mathrm{sn}}
\newcommand{\cn}{\mathrm{cn}}
\newcommand{\dn}{\mathrm{dn}}
\newcommand{\ee}{\mathrm{e}}
\newcommand{\dd}{\mathrm{d}}
\newcommand{\diag}{\mathrm{diag}}
\newcommand{\off}{\mathrm{off}}
\newcommand{\nc}{\mathrm{nc}}
\newcommand{\tn}{\mathrm{tn}}
\newcommand{\dc}{\mathrm{dc}}
\newcommand{\cs}{\mathrm{cs}}
\newcommand{\ns}{\mathrm{ns}}
\newcommand{\ds}{\mathrm{ds}}
\newcommand{\cd}{\mathrm{cd}}
\newcommand{\sd}{\mathrm{sd}}
\newcommand{\nd}{\mathrm{nd}}
\newtheorem{theorem}{Theorem}
\newtheorem{lemma}{Lemma}
\newtheorem{prop}{Proposition}
\newtheorem{define}{Definition}
\newtheorem{remark}{Remark}
\numberwithin{equation}{section}
\titleformat{\section}{\centering\LARGE\bfseries}{\thesection}{1em}{}
\titleformat{\subsection}{\Large\bfseries}{\thesubsection}{1em}{}
\begin{document}

\title{Focusing mKdV equation: Two-phase solutions and their stability analysis}

\author{Liming Ling}
\address{School of Mathematics, South China University of Technology, Guangzhou, China, 510641}
\email{linglm@scut.edu.cn}

\author{Xuan Sun}
\address{School of Mathematics and Statistics, Donghua University, Shanghai, China, 201620}
\email{sunx@dhu.edu.cn}

\begin{abstract} 
	In this work, we primarily focus on the two-phase solutions and their stability to the focusing mKdV equation.
	By employing the algebro-geometric approach in combination with an effective integration method, we construct explicit two-phase solutions and their corresponding wave-functions expressed in terms of the Riemann theta function. 
	The spectral stability of two-phase solutions is examined via a modified squared-eigenfunction approach, and their stability with respect to subharmonic perturbations is further analyzed under spectrally unstable conditions.
	In addition, the orbital stability of the two-phase solutions is investigated.
	To the best of our knowledge, this study provides the first rigorous stability theory for the two-phase solutions of the focusing mKdV equation.
	
{\bf Keywords:} Two-phase solutions, mKdV equation, Spectral stability, Orbital stability
\end{abstract}

\date{\today}

\maketitle

\section{Introduction}

In this work, we primarily investigate genus-two periodic traveling wave solutions (alias for two-phase solutions) and their stability for the focusing modified Korteweg-de Vries (mKdV) equation:
\begin{equation}\label{eq:mKdV}\tag{mKdV}
	u_t+u_{xxx}+6u^2u_{x}=0,\qquad 
	u\equiv u(x,t)\in \mathbb{R}, \quad 
	(x,t)\in \mathbb{R}^2,
\end{equation} 
which arises in various physical contexts, including water waves \cite{AblowitzS-81,Schamel-1973-modified} and plasma physics\cite{AblowitzC-91}.
As a completely integrable model, the \ref{eq:mKdV} equation can be derived from the third positive flow of the Ablowitz-Kaup-Newell-Segur (AKNS) system via two symmetry reductions \cite{AblowitzKNS-1974-inverse}.
It admits the Lax pair \cite{Lax-68-INE}, possesses a bi-Hamiltonian structure \cite{MaF-96-biHamiltonian}, and supports an infinite hierarchy of conserved quantities $\mathcal{H}_i$, $i=0,1,2,\cdots$ \cite{Dickey-2003-soliton,MiuraGK-68-mKdV-conservation}. 
The associated Lax pair is given by:
\begin{equation}\label{eq:Lax-pair} 
	\Phi_x(x,t;\lambda)=\mathbf{U}(\lambda;\mathbf{Q})\Phi(x,t;\lambda), \qquad \Phi_t(x,t;\lambda)=\mathbf{V}(\lambda;\mathbf{Q})\Phi(x,t;\lambda), 
\end{equation}
where the spectral parameter $\lambda\in\mathbb{C}\cup\{\infty\}$,
\begin{equation}\label{eq:Lax-U-V-define}
	\begin{split}
		&\mathbf{U}(\lambda; \mathbf{Q})
		=-\ii \lambda \sigma_3+\mathbf{Q}, \qquad \mathbf{Q}=
		\begin{bmatrix}
			0 &   u \\  -u & 0 
		\end{bmatrix}, \qquad \sigma_3=\begin{bmatrix}
		1 &   0 \\  0 & -1 
		\end{bmatrix},\\
		& \mathbf{V}(\lambda; \mathbf{Q})=4\lambda^2\mathbf{U}(\lambda; \mathbf{Q})
		+2\ii\lambda  \sigma_3(\mathbf{Q}_x-\mathbf{Q}^2)-(\mathbf{Q}_{xx}-2 \mathbf{Q}^3). 
	\end{split}
\end{equation}
It can be readily verified that both matrices $\mathbf{U}(\lambda;\mathbf{Q})$ and $\mathbf{V}(\lambda;\mathbf{Q})$ satisfy the following symmetries:
\begin{equation}\label{eq:symmety-U-V}
	\begin{split}
		\mathbf{U}^{\dagger}(\lambda^*;\mathbf{Q})=-\mathbf{U}(\lambda;\mathbf{Q}),\qquad 
		\mathbf{U}^{\top}(-\lambda;\mathbf{Q})=-\mathbf{U}(\lambda;\mathbf{Q}).
	\end{split}
\end{equation}
The compatibility condition of the linear system \eqref{eq:Lax-pair}, expressed as $\Phi_{tx}(x,t;\lambda)=\Phi_{xt}(x,t;\lambda)$, is equivalent to the zero-curvature equation $\mathbf{U}_t(\lambda;\mathbf{Q})-\mathbf{V}_x(\lambda;\mathbf{Q})+\left[\textbf{U}(\lambda;\mathbf{Q}),\textbf{V}(\lambda;\mathbf{Q})\right]=0$, which leads to the \ref{eq:mKdV} equation.

As a classical integrable equation, the \ref{eq:mKdV} equation admits a rich variety of solutions \cite{AblowitzKNS-1974-inverse,BilmanM-19,GardnerGKM-67,FritzH-2003-Algebro-Geometric,ZakharovS-72}, including solitons, breathers, elliptic function solutions.
Over the years, the study of elementary function solutions, such as solitons and breathers, has become relatively well developed.
More recently, researchers have increasingly focused on periodic solutions, in particular elliptic function solutions, which arise in applications such as photonic crystal fibers and nonlinear metamaterials \cite{LapineSK-2014,Russell-2003}.
The finite-gap solutions for the \ref{eq:mKdV} equation, together with their dynamics and stability properties, also have long attracted significant interest. 
Studying the stability of these solutions provides valuable insight into the underlying structure and dynamics of the equation.

In this work, continuing our previous work on investigated the subharmonic stability of $\cn$- and $\dn$-type solutions \cite{LingS-23-mKdV-stability}, we investigate the subharmonic stability of genus-two periodic traveling wave solutions  of the focusing \ref{eq:mKdV} equation, constructed from three pairs of branch points (also known as two-gap or two-phase solutions).
To the best of our knowledge, this problem has not been previously addressed, due to the non-self-adjoint of Lax operator \cite{Deconinck-10}. 
In the following, we review relevant studies.

\subsection{Review on the stability analysis of finite-gap solutions}

We provide a brief overview of earlier research on finite-gap solutions and their stability. 
These solutions have a long-standing role in integrable systems, providing explicit representations of nonlinear evolution equations. 
Understanding their dynamical properties and long-term behavior relies critically on stability analyses.

\vspace{0.5cm}
\noindent 
\textbf{\Large Algebro-geometric solutions}
\vspace{0.2cm}

The finite-zone theory provides solutions of nonlinear integrable equations, initially applied to the nonlinear Schr\"odinger (NLS) and sine-Gordon (SG) equations via Abelian varieties \cite{DubrovinMN-1976}.
This approach was subsequently extended to the treatment of periodic  problems for nonlinear systems \cite{Dubrovin-1977}, and algebro-geometric Poisson brackets were formulated for real finite-zone solutions of the Korteweg–de Vries (KdV) equation \cite{VeselovN-1982}.
Real finite-zone solutions were then uniformly represented in
terms of Riemann theta functions \cite{DubrovinN-1982},
and the spectral properties of matrix finite-zone operators were linked to algebraic curves through theta functions \cite{Dubrovin-1983}.
Hyperelliptic quasi-periodic (g-gap) solutions of the NLS equation were also constructed using Riemann theta functions \cite{PreviatoE-1985-NLS}. 
In addition, algebro-topological methods were introduced to effectively classify real finite-zone solutions of the SG equation \cite{Novikov-1985}.
More advanced techniques were later developed to systematically analyze finite-gap solutions of integrable equations, including the KdV equation \cite{BelokolosBME-1986} and the Schr\"{o}dinger operator \cite{BelokolosE-1994}.

In recent decades, the construction of finite-gap solutions for integrable equations experienced continuous development, leading to increasingly diverse analytical representations.
Among the various approaches, algebro-geometric methods played a central role in generating finite-gap solutions of integrable nonlinear equations.
Through the development of this methodology, finite-gap solutions were constructed for a wide range of equations, including the SG equation \cite{GrinevichN-2003}, the Camassa-Holm hierarchy \cite{Qiao-03}, the vortex filament equation \cite{CaliniI-05-VFE}, the coupled mKdV hierarchy \cite{GengZD-2014}, the three-wave interaction system \cite{HeGW-2014}, the NLS equation \cite{Wright-16,Wright-19,Wright-2017}, all of which could be expressed in terms of Riemann theta functions.
Reduction theory of theta functions further formalized the construction of algebro-geometric solutions for nonlinear integrable systems \cite{BelokolosE-2001,BelokolosE-2002}.
Finite-gap solutions of the NLS equation were also analyzed via the Riemann–Hilbert method \cite{BertolaT-2017}.
In addition to theta-functional solutions, explicit one-gap and two-gap solutions, as well as localized waves on periodic traveling wave backgrounds, were derived for both the focusing \cite{ChenPW-19,Chen-19-mKdV} and defocusing mKdV equations \cite{ArrudaP-2025} in terms of Jacobi elliptic functions. 
Based on Jacobi theta functions, the elliptic-localized wave solutions of the NLS equation \cite{FengLT-20} and the SG equation \cite{LingS-22-SG} were obtained, and higher-order rational elliptic rogue wave solutions of the integrable nonlinear soliton equations \cite{LingS-24-RW} were constructed.
Moreover, by means of the Miura transformation, finite-gap solutions of the KdV equation and the mKdV equation \cite{SmirnovA-24} were also generated.

Building on the various forms of solutions discussed above, significant progress has also been made in the study of their spectral and orbital stability.  
In what follows, we review relevant studies, particularly those focusing on the spectral and orbital stability periodic traveling wave solutions  to integrable equations.

\vspace{0.5cm}
\noindent 
\textbf{\Large The stability analysis}
\vspace{0.2cm}

As early as the 20th century, researchers such as Benjamin \cite{Benjamin-72},
Bona \cite{Bona-75,BonaSS-87}, Grillakis, Shatah, Strauss \cite{GrillakisSS-87,GrillakisSS-90}, and Weinstein \cite{Weinstein-85,Weinstein-86}, made significant contributions to the analysis of solitary wave stability. 
%
Kapitula, Kevrekidis, and Sandstede \cite{KapitulaKS-04} proposed a method for studying the spectral stability of nonlinear waves using the Krein signature.
Additionally, H\v{a}r\v{a}gus and Kapitula \cite{HaragusK-08}, employing the Floquet–Bloch decomposition, 
to establish relationships among the operators $\mathcal{L}$, $\mathcal{JL}$, and the eigenvalue $\Omega$.
Alejo and Mu\~{n}oz \cite{Alejo-2013-nonlinear} analyzed the nonlinear stability of breather solutions by introducing a new Lyapunov functional and utilizing the higher-order conservation laws to characterize the dynamics of small perturbations.
Building on this work, Semenov \cite{Semenov-2022-orbital} investigated the orbital stability of multi-soliton/breather solutions of the mKdV equation by modifying the Lyapunov functional.
Recently, the spectral stability of non-degenerate vector soliton solutions
and the nonlinear stability of breather solutions for the coupled nonlinear Schr\"{o}dinger equation have been studied based on the integrability structure and the Lyapunov method \cite{LingPS-2024}.

Building on these foundational methods, various approaches to the subharmonic stability analysis of finite-gap solutions have been developed in recent years.
Pava \cite{Pava-07} established the orbital stability of $\dn$-type solutions for both the mKdV equation and the NLS equation. 
Additionally, Gallay and H\u{a}r\u{a}gu\c{s} \cite{KapitulaH-07-small,KapitulaH-07-per} studied the spectral stability of periodic solutions for the NLS equation under co-periodic perturbations. Deconinck and Kapitula \cite{DeconinckK-10} studied the orbital stability of cnoidal waves of the KdV equation under the subharmonic perturbations.
Bottman, Deconinck, and Nivala \cite{BottmanDN-11} examined both the spectral and orbital stability of elliptic function solutions for the defocusing NLS equation.
Deconinck and Segal \cite{DeconinckS-17} demonstrated that $\dn$-type solutions of the focusing NLS equation are spectrally stable with respect to co-periodic perturbations, while $\cn$-type solutions exhibit spectral stability under subharmonic perturbations, employing special functions such as the Weierstrass $\wp$ function, the $\zeta$ function.  
Continuing this work, Deconinck and Upsal \cite{DeconinckyU-20} further explored the orbital stability of elliptic function solutions, including $\cn$-type, $\dn$-type, and elliptic function solutions with nontrivial phase, by constructing a novel Lyapunov function based on higher-order conserved quantities.

Deconinck and Nivala \cite{Deconinck-10} showed that the periodic traveling wave solutions for the defocusing mKdV equation are spectrally stable and pointed out that in the focusing case, the spectral parameter is no longer confined to the real axis, which poses additional challenges for stability analysis.
To overcome this difficulty, authors \cite{LingS-23-mKdV-stability} used the theta function theory to solve the subharmonic stability analysis partially, i.e., the spectral and orbital stability of $\cn$-type and $\dn$-type solutions under the subharmonic perturbations. Building on this work, we will analyze the stability of genus-two periodic traveling wave solutions  for the focusing mKdV equation. 
Compared with the genus-one case, the new difficulties are from deriving explicit and simplified forms of genus-two periodic traveling wave solutions  and the corresponding wave functions of the Lax pair, which are crucial for facilitating the subsequent stability analysis. 
The genus-two algebro-geometric solutions are associated with the two-dimensional Riemann theta function. For the general periodic waves of mKdV equation, these solutions can be represented by the elliptic functions. Notably, the two-dimensional Riemann theta functions are not directly related to elliptic functions; however, under a specific symmetric condition, they can be reduced to a product of two Jacobi theta functions. The expressions of Riemann theta functions also involve certain hyperelliptic integrals, which, under the symmetry of mKdV equation, can be transformed into elliptic integrals through a variable transformation. These elliptic integrals arise from the Abel maps, Abelian integrals and other integral constants. We will develop a systematic way to tackle these integrals in a uniform framework. By systematically combining these two steps, the Riemann theta function solutions can be reduced to the elliptic functions. Furthermore, the wave functions of the corresponding Lax pair will be derived explicitly.
Building on these explicit representations, we aim to investigate the subharmonic spectral stability of the solutions.
Furthermore, we explore the existence of an appropriate functional framework in which the genus-two periodic traveling wave solutions  of the focusing mKdV equation exhibit orbital stability.

\subsection{Main results}

As is well known, the classical AKNS system \cite{AblowitzKNS-1974-inverse} generates an infinite hierarchy of integrable nonlinear soliton equations.
Within this hierarchy, the NLS equation and the mKdV equation correspond to positive power flows, while the SG equation is derived from the negative power flow of this system. 
The mKdV equation possesses an infinite number of conserved quantities
	\begin{equation}\label{eq:H0}
		\!\!\!
		\mathcal{H}_1=\frac{1}{2}\int_{-PT}^{PT} u^2 \dd  x  , \,\,\,
		\mathcal{H}_3=\frac{1}{2}\int_{-PT}^{PT}\left(u_x^2-u^4\right) \dd  x ,  \,\,\,
		\mathcal{H}_5= \frac{1}{2}\int_{-PT}^{PT} \left( u_{xx}^2-10u^2u_{x}^2+2u^6 \right)\dd x,  \,\,\, \cdots
		\!\!\!
	\end{equation} 
where the period of the function $u$ is $2PT$. 
The Hamiltonian flows in the mKdV hierarchy are given by $u_{t_n}=\partial_x\mathcal{H}'_{2n+1}(u)$, $i=0,1,\cdots$, where the prime denotes the variational derivative of the Hamiltonian $\mathcal{H}_n$ with respect to $u$. 
In particular, $n=0$ corresponds to the equation $u_{t_0}=u_x$;  $n=1$ yields the mKdV equation; and $n=2$ produces the fifth-order mKdV equation.
Introducing the time variables $\eta_n$, the equation can be expressed in a moving coordinate form $(\xi,\eta_n)$ as
\begin{equation}\label{eq:H-mKdV}
	u_{\eta_n}=\mathcal{J}\hat{\mathcal{H}}'_n(u),\qquad \mathcal{J}=\partial_{\xi}, \qquad
	\hat{\mathcal{H}}_n:=\mathcal{H}_{2n+1}+\sum_{i=0}^{n-1}c_{n,i}\mathcal{H}_{2i+1}, \qquad \hat{\mathcal{H}}_0:=\mathcal{H}_0,
\end{equation}
where $c_{n,i}\in \mathbb{R},i=0,1,...,n-1$. 
The stationary solution of the $n$-th mKdV equation satisfies the ordinary differential equation $\mathcal{J}\hat{\mathcal{H}}'_n(u)=0$. 
Furthermore, by introducing the recursion operator $\mathcal{F}$ defined as 
$\mathcal{F}:=-(\partial_x^2+4u^2 - 4u \partial_x^{-1} u_x)$, the Hamiltonians satisfy $\mathcal{H}^{\prime}_{2n+1}=\mathcal{FH}_{2n-1}^{\prime}$, $n=1,2,\cdots$.
For ease of expression, we introduce a vector $\mathbf{t}=\left(\cdots,t_{-2}, t_{-1},t_0, t_1, t_2 ,\cdots\right)\in \mathbb{R}^{\infty}$, where $t_i$, $i=1,2,\cdots$, correspond to positive power flows and $t_{-i}$, $i=1,2,\cdots$, correspond to negative power flows. 
Considering the positive power flow, we set the wave function $\Phi(x,\mathbf{t};\lambda)$ as 
\begin{equation}\label{eq:Phi-m}
	\Phi(x,\mathbf{t};\lambda)=m(x,\mathbf{t};\lambda)\exp\left(-\ii\lambda \sigma_3 \left(x+\sum_{n=1}^{\infty}\lambda^n t_n\right) \right),
\end{equation}
where the $2\times 2$ matrix function $\Phi(x,\mathbf{t};\lambda)$ is called the wave function and $m(x,\mathbf{t};\lambda)$ is a meromorphic matrix function in $\mathbb{C}\setminus \Gamma$ smoothly depending on $x$ and $\mathbf{t}$ \cite{BealsC-84,Terng-97}. 
By considering the third positive flow and setting $t=4t_3$, we obtain the Lax pair \eqref{eq:Lax-pair} and the mKdV equation. 
Defining $\Psi(x,\mathbf{t};\lambda):=m(x,\mathbf{t};\lambda)\sigma_3 m^{-1}(x,\mathbf{t};\lambda)$,
it can be readily verified that $\Psi(x,\mathbf{t};\lambda)$ satisfies the zero-curvature equations:
\begin{equation}\label{eq:zero-curve-equation}
	\Psi_x(x,\mathbf{t};\lambda)=\left[\mathbf{U}(\lambda;\mathbf{Q}), \Psi(x,\mathbf{t};\lambda)\right],\qquad
	\Psi_{t}(x,\mathbf{t};\lambda)=\left[\mathbf{V}(\lambda;\mathbf{Q}), \Psi(x,\mathbf{t};\lambda)\right],
\end{equation}
where the matrices $\mathbf{U}(\lambda;\mathbf{Q})$ and $\mathbf{V}(\lambda;\mathbf{Q})$ are defined in \eqref{eq:Lax-U-V-define}.
Detailed derivations and definitions are presented in \Cref{sec:Preliminaries}.

\begin{prop}\label{prop:L-matrix}
	Define the matrix function $\mathbf{L}(\lambda)=\mathbf{L}(x,\mathbf{t};\lambda)$ as
	\begin{equation}\label{eq:L-matrix-genus-two}
		\mathbf{L}(\lambda)=-\ii \sum_{i=0}^{g+1}\left(\alpha_i\lambda^i\Psi(x,\mathbf{t};\lambda)\right)_{+}=-\ii \sum_{i=0}^{g+1}\sum_{j=i}^{g+1}\alpha_j \Psi_{j-i}\lambda^{i},
	\end{equation}
	with the matrix functions $\Psi_i(x,t)$ defined in equation \eqref{eq:Theta-expand-lambda-infty}.
	If the matrix function $\mathbf{L}(\lambda)$ satisfies the stationary zero-curvature equations \eqref{eq:zero-curve-equation},
	then it imposes an additional constraint ordinary differential equation
	\begin{equation}\label{eq:ode}
		\sum_{i=1}^{g+1}\alpha_i\Psi_{i+1}^{\rm off}+\ii\alpha_0\mathbf{Q}=0.
	\end{equation}
	Here, $\Psi_i^{\rm off}$ denotes the off-diagonal part of $\Psi_i$.
\end{prop}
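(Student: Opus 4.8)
The plan is to substitute the explicit form \eqref{eq:L-matrix-genus-two}, written as $\mathbf{L}(\lambda)=\sum_{k=0}^{g+1}\mathbf{L}_k\lambda^k$ with $\mathbf{L}_k=-\ii\sum_{j=k}^{g+1}\alpha_j\Psi_{j-k}$, into the $x$-part of the zero-curvature equations \eqref{eq:zero-curve-equation} written for $\mathbf{L}$ in place of $\Psi$, i.e. $\mathbf{L}_x=[\mathbf{U}(\lambda;\mathbf{Q}),\mathbf{L}(\lambda)]$, and to compare coefficients of powers of $\lambda$. First I would record two facts about the expansion $\Psi(x,\mathbf{t};\lambda)=\sum_{j\ge0}\Psi_j\lambda^{-j}$ from \eqref{eq:Theta-expand-lambda-infty}: inserting it into the first equation of \eqref{eq:zero-curve-equation} and matching powers of $\lambda$ shows that $\Psi_0$ is diagonal and produces the recursion
\[
\Psi_{j,x}=-\ii[\sigma_3,\Psi_{j+1}]+[\mathbf{Q},\Psi_j],\qquad j\ge 0;
\]
moreover $\Psi_1=\ii\mathbf{Q}$, hence $\Psi_1^{\off}=\ii\mathbf{Q}$. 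I would also use the elementary identity $[\sigma_3,A]=2\sigma_3A^{\off}$ valid for any $2\times2$ matrix $A$.

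Since $\mathbf{U}=-\ii\lambda\sigma_3+\mathbf{Q}$, the right-hand side of $\mathbf{L}_x=[\mathbf{U},\mathbf{L}]$ equals $\sum_k\big(-\ii[\sigma_3,\mathbf{L}_{k-1}]+[\mathbf{Q},\mathbf{L}_k]\big)\lambda^k$ with the conventions $\mathbf{L}_{-1}=\mathbf{L}_{g+2}=0$, while the left-hand side is $\sum_k\mathbf{L}_{k,x}\lambda^k$ with $\mathbf{L}_{k,x}$ evaluated through the recursion above. I would then compare coefficients one power at a time. The $\lambda^{g+2}$ coefficient vanishes identically because $\mathbf{L}_{g+1}=-\ii\alpha_{g+1}\Psi_0$ is diagonal. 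For $1\le k\le g+1$, a shift of the summation index (which is exactly where the term $\Psi_0$ gets annihilated by $[\sigma_3,\cdot]$) reorganizes $\mathbf{L}_{k,x}$ into $-\ii[\sigma_3,\mathbf{L}_{k-1}]+[\mathbf{Q},\mathbf{L}_k]$, so these powers impose no condition. The only surviving relation is the $\lambda^0$ coefficient: since $\mathbf{L}_{-1}=0$ it reads $\mathbf{L}_{0,x}-[\mathbf{Q},\mathbf{L}_0]=0$, and substituting the recursion for $\Psi_{j,x}$ into $\mathbf{L}_{0,x}=-\ii\sum_{j=0}^{g+1}\alpha_j\Psi_{j,x}$ collapses this to $\sum_{j=0}^{g+1}\alpha_j[\sigma_3,\Psi_{j+1}]=0$.

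Finally I would rewrite this relation as \eqref{eq:ode}: by $[\sigma_3,\Psi_{j+1}]=2\sigma_3\Psi_{j+1}^{\off}$ and invertibility of $\sigma_3$, it is equivalent to $\sum_{j=0}^{g+1}\alpha_j\Psi_{j+1}^{\off}=0$, and separating the $j=0$ term together with $\Psi_1^{\off}=\ii\mathbf{Q}$ yields precisely $\sum_{i=1}^{g+1}\alpha_i\Psi_{i+1}^{\off}+\ii\alpha_0\mathbf{Q}=0$.

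The main obstacle is organizational rather than conceptual: carrying the double sum through the coefficient comparison, tracking the index shift so that the diagonal term $\Psi_0$ (killed by $[\sigma_3,\cdot]$) is exactly the one that drops out, and respecting the boundary conventions $\mathbf{L}_{-1}=\mathbf{L}_{g+2}=0$, so that one sees cleanly that only the $\lambda^0$ coefficient gives a nontrivial condition. The single genuinely structural input is the value $\Psi_1^{\off}=\ii\mathbf{Q}$, which is what converts the homogeneous identity $\sum_j\alpha_j\Psi_{j+1}^{\off}=0$ into the inhomogeneous ordinary differential equation \eqref{eq:ode}.
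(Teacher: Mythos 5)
Your argument is correct, and for the implication actually asserted in the proposition it follows the same mechanism as the paper: use the fact that $\Psi$ itself satisfies \eqref{eq:zero-curve-equation}, so that $\Psi_{j,x}=-\ii[\sigma_3,\Psi_{j+1}]+[\mathbf{Q},\Psi_j]$ (the paper's \eqref{eq:define-Psi-x}), insert this into $\mathbf{L}_x-[\mathbf{U},\mathbf{L}]$, observe that every positive power of $\lambda$ cancels because the would-be extra term is $\alpha_{k-1}[\sigma_3,\Psi_0]=0$, and read off from the $\lambda^0$ coefficient that $\sum_{j=0}^{g+1}\alpha_j\Psi_{j+1}^{\off}=0$, which with $\Psi_1^{\off}=\ii\mathbf{Q}$ is exactly \eqref{eq:ode}. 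The paper compresses this whole computation into one displayed line; your power-by-power bookkeeping with the conventions $\mathbf{L}_{-1}=\mathbf{L}_{g+2}=0$ is just a more explicit version of the same step, and all your intermediate identities ($\Psi_0$ diagonal, $[\sigma_3,A]=2\sigma_3A^{\off}$, $\Psi_1=\ii\mathbf{Q}$) check out.

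The one substantive difference is scope rather than correctness: the bulk of the paper's proof is devoted to showing that, once the $x$-part constraint \eqref{eq:ode} holds, the $t$-part of the stationary zero-curvature equation $\mathbf{L}_t=[\mathbf{V},\mathbf{L}]$ is satisfied automatically (this is the long chain of $\mathcal{O}(\lambda^2)$, $\mathcal{O}(\lambda)$, $\mathcal{O}(1)$ identities using the Jacobi identity and \eqref{eq:Theta-i-expression}). Your proposal never touches the $t$-part. For the proposition as literally stated -- ``if $\mathbf{L}$ satisfies the stationary zero-curvature equations then \eqref{eq:ode} holds'' -- this is not needed, since the $x$-part of the hypothesis alone yields the conclusion, so your proof is complete for the statement. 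But the paper's extra half is what licenses the proposition's subsequent use (e.g.\ identifying it with \eqref{eq:H-mKdV} and imposing only \eqref{eq:ode} to obtain traveling waves): it shows \eqref{eq:ode} is the \emph{only} condition, i.e.\ sufficiency as well as necessity. If you intend your proposition to carry that meaning, you would still need to supply the $t$-part verification.
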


The proof of \Cref{prop:L-matrix} is presented in \Cref{sec:Preliminaries}. 
For the two-phase solution of the mKdV equation (i.e., setting $g=2$ in equation \eqref{eq:L-matrix-genus-two}), we impose the constraints $\alpha_2=\alpha_0=0$. 
Under these conditions, the constraint ordinary differential equation \eqref{eq:ode} reduce to $\alpha_3(u_{xxx}+6u^2u_x)-4\alpha_1 u_x=0$, which further implies $\alpha_3 u_t+4\alpha_1 u_x=0$. 
In other words, $u(x,t)$ must be a traveling solution.
Indeed, \Cref{prop:L-matrix} corresponds precisely to equation \eqref{eq:H-mKdV}.

Without loss of generality, setting $\alpha_3=1$, the expression of the matrix function \eqref{eq:L-matrix-genus-two} can be rewritten as $\mathbf{L}(\lambda):=\mathbf{V}(\lambda;\mathbf{Q})/4+\alpha_1 \mathbf{U}(\lambda;\mathbf{Q})$, where the $(i,j)$-elements of the matrix function $\mathbf{L}(\lambda)$ are given by
\begin{equation}\label{eq:L-elements}
	\begin{split}
		\mathbf{L}_{11}(\lambda)=&\  -\ii \lambda^3 -\ii\lambda (\alpha_1-u^2/2), \qquad \qquad \mathbf{L}_{22}(\lambda)=-\mathbf{L}_{11}(\lambda), \\
		\mathbf{L}_{12}(\lambda)=&\ u\left(\lambda^2+\ii u_x \lambda /(2u) -u_{xx}/(4u)+\alpha_1-u^2/2\right)= u(\lambda-\mu_1)(\lambda-\mu_2),\\
		\mathbf{L}_{21}(\lambda)=&\ -u\left(\lambda^2-\ii u_x \lambda /(2u) -u_{xx}/(4u)+\alpha_1-u^2/2\right)= -u(\lambda-\mu_1^*)(\lambda-\mu_2^*),
	\end{split}
\end{equation}
 and functions $\mu_1$, $\mu_2$, $\mu_1^*$, $\mu_2^*$ are expressed by $u$, $u_x$, and $u_{xx}$.
Based on the symmetric properties of matrices $\mathbf{U}(\lambda;\mathbf{Q})$ and $\mathbf{V}(\lambda;\mathbf{Q})$ as given in equation \eqref{eq:symmety-U-V}, one can readily verify that if matrix $\mathbf{L}(\lambda)$ is the solution of the stationary zero-curvature equation \eqref{eq:zero-curve-equation}, then transformed matrices $\mathbf{L}^{\dagger}(\lambda^*)$ and $\mathbf{L}^{\top}(-\lambda)$ also satisfy this equation.
If the spectral parameter $\lambda=\lambda_i$ is a root of $\det(\mathbf{L}(\lambda))=0$, then $\lambda=\lambda_i^*$ and $\lambda=-\lambda_i$ must also be roots.
Consequently, the determinant of the matrix function $\mathbf{L}(\lambda)$ can be expressed as
\begin{equation}\label{eq:det-L-lambda}
	\det\left(\mathbf{L}(\lambda)\right)
	=\prod_{i=1}^{3}(\lambda-\lambda_i)(\lambda-\lambda_i^*), \qquad i=1,2,3,
\end{equation}
where $\lambda_i,\lambda_i^* \in \mathbb{C}\backslash \mathbb{R}$, $i=1,2,3$, are the six roots of the equation $\det\left(\mathbf{L}(\lambda)\right)=0$.
These roots can be categorized into the following two cases:
\begin{itemize}
	\item[\textbf{Case 1:}] All of the above roots are purely imaginary numbers:
	\begin{equation}\label{case1}\tag{Case 1}
	\lambda_i=-\lambda_i^*\in \ii \mathbb{R},\,\,\,\, i=1,2,3.
	\end{equation}
	\item[\textbf{Case 2:}] Two roots are purely imaginary, while the remaining four are complex numbers:
	\begin{equation}\label{case2}\tag{Case 2} 
	\lambda_2=-\lambda_2^*\in \ii \mathbb{R}\quad\text{and}\quad \lambda_i,\lambda_i^*\in \mathbb{C}\backslash(\mathbb{R}\cup \ii \mathbb{R}), \,\,\,\, i=1,3 \quad\text{with}\quad \lambda_1=-\lambda_3^*.
	\end{equation}
\end{itemize} 
Without loss of generality, let $\pm \ii y$ be two eigenvalues of the matrix function $\mathbf{L}(\lambda)$ defined in equation \eqref{eq:L-matrix-genus-two}, which implies $\det\left(\pm \ii y-\mathbf{L}(\lambda)\right)=0$. 
We then define the associate algebraic curve as:
\begin{equation}\label{eq:define-curve-algebro}
	y^2=\prod_{i=1}^{3}\left(\lambda-\lambda_i\right)\left(\lambda-\lambda_i^*\right)\xlongequal{\text{\ref{case1} or \ref{case2}}}\prod_{i=1}^{3}\left(\lambda^2-\lambda_i^2\right).
\end{equation} 
The compact Riemann surface $\mathcal{R}_2$ of genus-two can be described by $\mathcal{R}_2:=\{ (\lambda,y) | y^2=\prod_{i=1}^{3}(\lambda^2-\lambda_i^2)\}$,
with the standard projection $\boldsymbol{\pi}$: $\mathcal{R}_2\rightarrow \mathbb{CP}^1$ defined by
\begin{equation}\label{eq:define-p-y}
	\boldsymbol{\pi}(P)=\lambda,\qquad P=(\lambda,y),
\end{equation}
so that $\mathcal{R}_2$ forms a two-sheeted covering of $\mathbb{CP}^1$.
There are exactly two points $\infty^{\pm}\in \mathcal{R}_2$ such that $\boldsymbol{\pi} (\infty^{\pm})=\infty \in \mathbb{CP}^1$, with the local behavior
\begin{equation}\label{eq:define-standard-projection}
	P\rightarrow \infty ^{\pm} \quad \Leftrightarrow \quad  \lambda\rightarrow \infty, \quad y \rightarrow \pm 
	\lambda^{3}.
\end{equation}
Based on this construction, we proceed to develop the algebro-geometric approach to obtain the explicit solutions of the \ref{eq:mKdV} equation in terms of the Riemann theta function.

Define the divisor $\mathcal{D}$ on $\mathcal{R}_2$ as a map $\mathcal{D}: \mathcal{R}_2 \rightarrow \mathbb{Z}$, where $\mathcal{D}(P)\neq 0$ for only finitely many points $P\in \mathcal{R}_2$. 
The periodic lattice $L_2(\mathcal{R}_2)\subset \mathbb{C}^2$ is defined by $L_2(\mathcal{R}_2)=\{\mathbf{z}\in \mathbb{C}^2 \left| \,\, \mathbf{z}=2\pi \ii (\mathbf{n}+\mathbf{Bm}),\right.$ $\mathbf{n}, \mathbf{m}\in \mathbb{Z}^2\}$, where $\mathbf{B}$ is a periodic matrix of $\mathcal{R}_2$.
The Jacobi variety $J(\mathcal{R}_2)$ of $\mathcal{R}_2$ is then given by $J(\mathcal{R}_2)=\mathbb{C}^2/L_2(\mathcal{R}_2)$.
 The Abel maps are defined by 
\begin{equation}\label{eq:abel_map}
	\mathcal{A}_{P_0}:\mathcal{R}_2 \rightarrow J(\mathcal{R}_2), \quad
	P \mapsto \mathcal{A}_{P_0}(P)=\left(\mathcal{A}_{P_0,1}(P), \mathcal{A}_{P_0,2}(P) \right)=\left(\int_{P_0}^{P} w_1\, \dd \lambda, \int_{P_0}^{P} w_2\, \dd \lambda\right),  	
\end{equation}
and $\alpha_{P_0}:\mathrm{Div}(\mathcal{R}_2)\rightarrow J(\mathcal{R}_2)$,
$\mathcal{D} \mapsto \alpha_{P_0}(\mathcal{D})=\sum_{P\in \mathcal{R}_2} \mathcal{D}(P)\mathcal{A}_{P_0}(P)$,
where $P_0\in \mathcal{R}_2$ is a fixed base point.
 For convenience, the same path is chosen from $P_0$ to $P$ for $j=1,2$.
Considering the algebraic curve defined in equation \eqref{eq:define-curve-algebro},
the differential $w_i \ \dd \lambda$ form a basis in the space of holomorphic $1$-form defined on $\mathcal{R}_2$ satisfying 
\begin{equation}\label{eq:define-basis-B}
	\oint_{a_j}w_i \ \dd \lambda=2\pi\ii \delta_{ij}, \quad 
	\oint_{b_j}w_i \ \dd \lambda=\mathbf{B}_{ij}, \quad w_i=\sum_{k=0}^{1}d_{ik}\lambda^{k}y^{-1}, \quad  d_{ik}\in \mathbb{C}, \quad 
	\delta_{ij}=
	\left\{ \begin{aligned}
		1, \quad i=j, \\
		0, \quad i\neq j,
	\end{aligned} \right.
\end{equation}
where $a_1$, $a_2$, $b_1$, $b_2$, are the homology basis for $\mathcal{R}_2$, 
such that $a_i \circ a_j=0$, $b_i \circ b_j=0$, and $a_i\circ b_j=\delta_{ij}=-b_j \circ a_i$.
By the above two cases with respect to branch points (\ref{case1} and \ref{case2}), 
we would like to define the basis of the above algebraic curves in two cases (shown in \Cref{fig:genus-two-figure}).

\begin{figure}[h]
	\centering
	\subfigure[\ref{case1}]{\includegraphics[width=0.38\textwidth]{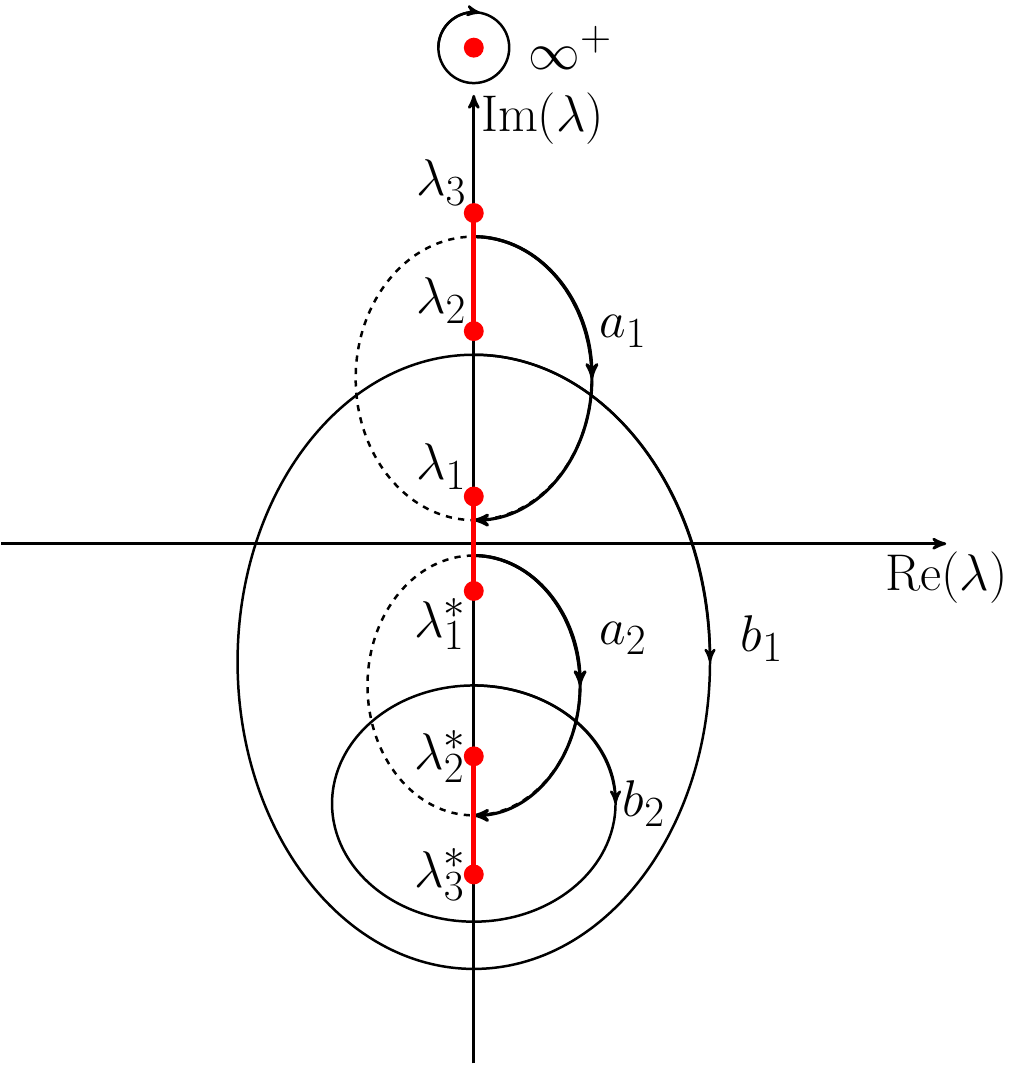}\label{fig:genus-two-figure-p}}
	\subfigure[\ref{case2}]{\includegraphics[width=0.38\textwidth]{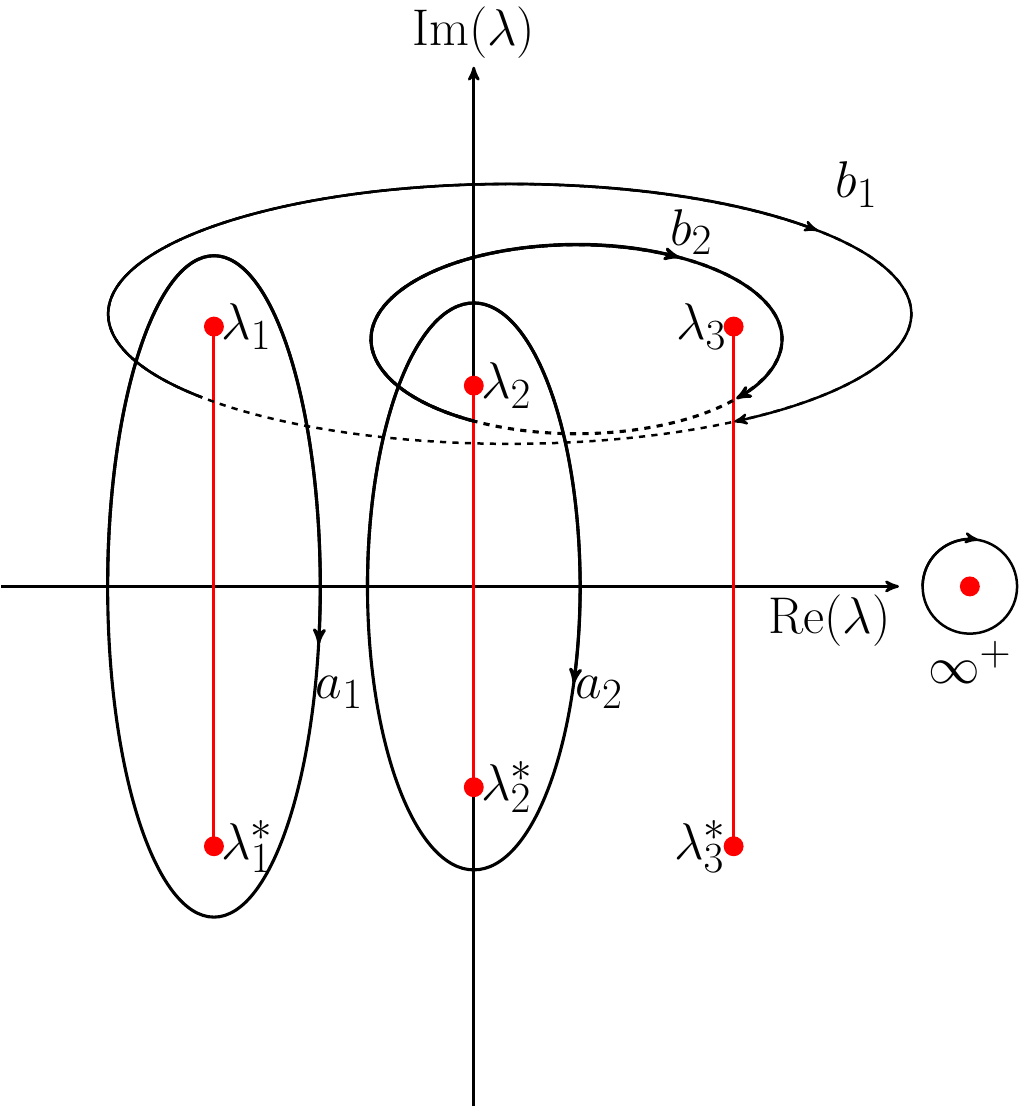}\label{fig:genus-two-figure-c}}
	\caption{The homology basis for the curve $y^2=\prod_{i=1}^{3}\left(\lambda^2-\lambda_i^2\right)$ defined in equation \eqref{eq:define-curve-algebro}. 
	The \Cref{fig:genus-two-figure-p} and \Cref{fig:genus-two-figure-c} corresponds to the \ref{case1} and \ref{case2}, respectively.}
	\label{fig:genus-two-figure}
\end{figure}

\begin{define}[Riemann theta function {\cite[p.35]{BelokolosBME-1986}}]\label{define:Riemann-Theta-function}
	The Riemann theta function (dimension-g) is defined as 
	\begin{equation}\label{eq:define-Riemann-Theta}
		\Theta(\mathbf{z})=\sum_{\mathbf{n}\in \mathbb{Z}^g}\exp\left\{ \left(\langle \mathbf{n}/2, \mathbf{B}\mathbf{n}\rangle+\langle \mathbf{n},\mathbf{z}\rangle\right)\right\}, 
	\end{equation}
	where $\mathbf{B}\in \mathbb{C}^{g\times g}$ is a $g\times g$ matrix; parameters $\mathbf{n}=[n_1, n_2, \cdots, n_g]^{\top}\in \mathbb{Z}^g$ and $\mathbf{z}=[z_1, z_2, \cdots, z_g]^{\top}\in \mathbb{C}^g$ are both $g$-dimensional vectors and $\langle 
	\mathbf{u}, \mathbf{v} \rangle=\mathbf{u}^{\top}\mathbf{v}$ with $\mathbf{u}\in \mathbb{Z}^g$ and $\mathbf{v}\in \mathbb{C}^g$. 
	Naturally, the Riemann theta function also satisfies equations
	\begin{equation}\label{eq:Riemann-Theta-prop}
		\Theta(\mathbf{z}+2\pi \ii\mathbf{e}_k)=\Theta(\mathbf{z}), \qquad 
		\Theta(\mathbf{z}+ \mathbf{B}\mathbf{e}_k)=\Theta(\mathbf{z})\ee^{-\mathbf{z}_k-\mathbf{B}_{kk}/2}, \qquad 
		\Theta(-\mathbf{z})=\Theta(\mathbf{z}),
	\end{equation}
	where the vector $\mathbf{e}_k\in \mathbb{Z}^{g}$ is defined as the $k$-th column of identity matrix $\mathbb{I}_g$. 
\end{define}

For the above definitions and combining with the algebro-geometric approach \cite{BelokolosBEI-94,FritzH-2003-Algebro-Geometric}, we can obtain two-phase solutions of the mKdV equation expressed by the Riemann theta function as follows:
\begin{equation}\label{eq:q1-q0}
	u(x,t)= C_{u_0}\frac{\Theta(D+\Delta+\ii Ux+\ii V t)}{\Theta(D+\ii Ux+\ii V t)}\ee^{2\ii \omega_2 x+2\ii \omega_3  t},\qquad C_{u_0}=\frac{\Theta(D)}{\Phi_{11,0}^{+}\Theta(D+\Delta)},\\
\end{equation}
where $\Phi_{11,0}^{+}$ (defined in equation \eqref{eq:Phi-lim-infty-initial}) is a constant depending on the initial points we setting. 
The solutions $\Phi_{11}=\Phi_{11}(x,t;\lambda)$ and $\Phi_{21}=\Phi_{21}(x,t;\lambda)$ of the related Lax pair \eqref{eq:Lax-pair} can be expressed as 
\begin{equation}\label{eq:Phi-expression}
	\begin{split}
		\!\!\!
		\Phi_{11}
		&\!
		\xlongequal[\eqref{eq:define-D},\eqref{eq:define-D-1}]{\eqref{eq:define-Phi},\eqref{eq:D-mu-define}}\!\! \frac{\Theta(D)\Theta(D+\mathcal{A}_{\infty^{-}}(P)+\ii Ux+\ii Vt)}{\Theta(D+\mathcal{A}_{\infty^{-}}(P))\Theta(D+\ii Ux+\ii Vt)}
		\ee^{\ii \left(\Omega_2(P) + \omega_2 \right)x+\ii\left(\Omega_3(P) + \omega_3 \right)t}
		,\\
		\!\!\!
		\Phi_{21}
		&\!
		\xlongequal[ \eqref{eq:define-D},\eqref{eq:define-D-1},\eqref{eq:C_u0-define}]{\eqref{eq:r-0},\eqref{eq:D-mu-define},\eqref{eq:relation-mu}}  \!\!
		\frac{2\ii\Theta(D)\Theta(D+\mathcal{A}_{\infty^{-}}(P)+\ii Ux+\ii Vt-\Delta)}
		{C_{u_0}\omega_1 \Theta(D+\mathcal{A}_{\infty^{-}}(P))\Theta(D+\ii Ux+\ii Vt)}
		\ee^{\ii \left(\Omega_2(P)-\omega_2 \right)x+\ii\left(\Omega_3(P) - \omega_3 \right)t+\Omega_1(P)}.\!\!\!\!
	\end{split}
\end{equation}
Additional details are given in \Cref{sec:algebro-geometric-approach}.
The Abelian integrals $\Omega_{1,2,3}(P)$ are defined as follows.
\begin{define}
	The Abelian integrals $\Omega_{1,2,3}(P)$, which have no singularities at points different from $\infty^{\pm}$, are defined as follows:
	\begin{itemize}
		\item The function $\Omega_1(P)$ is defined as
		\begin{equation}\label{eq:define-int-w1}
			\begin{split}
				&\Omega_1(P):=\int_{P_0}^{P} \dd\Omega_1, 
				\quad
				\oint_{a_i} \dd \Omega_1=0,
				\quad
				\text{and} \quad \Omega_1(P)=\pm(\ln(\lambda) 
				+ \ln(\omega_1)+o(1)), \quad  P\rightarrow \infty^{\pm}; 
			\end{split}
		\end{equation}
		\item The function $\Omega_2(P)$ is defined as 
		\begin{equation}\label{eq:define-int-w2}
			\Omega_2(P):=\int_{P_0}^{P}\dd\Omega_2, 
			\quad 	
			\oint_{a_i}\dd\Omega_2=0,  \quad
			\text{and} \quad
			\Omega_2(P)=\pm(\lambda
			+ \omega_2+o(1)), \quad  P\rightarrow \infty^{\pm}; 
		\end{equation}
		\item The function $\Omega_3(P)$ is defined as 
		\begin{equation}\label{eq:define-int-w3}
			\Omega_3(P):=\int_{P_0}^{P} \dd\Omega_3, \quad 
			\oint_{a_i}\dd \Omega_3=0, \quad 
			\text{and} \quad	
			\Omega_3(P)=\pm (4\lambda^3
			+ \omega_3+o(1)), 
			\quad 
			P\rightarrow \infty^{\pm};		
		\end{equation}
		where $P_0=(\lambda_3,0)$ and the integral path $a_i$, $i=1,2$ are shown in \Cref{fig:genus-two-figure}.	
	\end{itemize}
	Parameters $\omega_{1,2,3}$ are determined by the branch points $\lambda_{1,2,3}$ and independent of the spectral parameter $\lambda$.
\end{define}

Solutions \eqref{eq:q1-q0}, expressed in terms of hyperelliptic integrals, are not convenient for analyzing their stability. To address this, we introduce appropriate transformations that convert the hyperelliptic integrals into three standard forms of elliptic integrals, providing essential theoretical support for deriving explicit solutions. The hyperelliptic integrals required to determine the parameters of the corresponding solutions primarily include the following two types:
\begin{subequations} \label{eq:hyper}
	\begin{align}
		&\int \frac{\lambda^{2n}\dd \lambda }{((\lambda^2-\lambda_1^2)(\lambda^2-\lambda_2^2)(\lambda^2-\lambda_3^2))^{1/2}}\xlongequal{\lambda^2=\Lambda}\pm\int \frac{\Lambda^n\dd \Lambda}{2(\Lambda (\Lambda-\lambda_1^2)(\Lambda-\lambda_2^2)(\Lambda-\lambda_3^2))^{1/2}}, \label{eq:hyper-1}\\
		& \int \frac{\lambda^{2n+1}\dd \lambda }{((\lambda^2-\lambda_1^2)(\lambda^2-\lambda_2^2)(\lambda^2-\lambda_3^2))^{1/2}} \xlongequal{\lambda^2=\Lambda}\int \frac{\Lambda^n\dd \Lambda }{2((\Lambda-\lambda_1^2)(\Lambda-\lambda_2^2)(\Lambda-\lambda_3^2))^{1/2}}, \label{eq:hyper-2}
	\end{align}
\end{subequations}
by setting $\lambda^2=\Lambda$.
The sign $``\pm"$ is deduced by the correspondence between $\lambda$ and $\Lambda^{1/2}$, i.e., $\lambda=\Lambda^{1/2}$ or $\lambda=-\Lambda^{1/2}$.
Furthermore, we aim to demonstrate that the above hyperelliptic integrals can be expressed in terms of the three canonical forms of elliptic integrals defined in \Cref{define:elliptic-function} as well as in terms of elliptic functions, through the following steps.
\begin{enumerate}
	\item[\textbf{Step 1}] \textbf{Introduce a suitable conformal map with the aim of converting the integrals into the standard form.} 
	Introduce a conformal map between parameters $\Lambda$ and $z$, to transform elliptic integrals $\int [(\Lambda-\lambda_1^2)(\Lambda-\lambda_2^2)(\Lambda-\lambda_3^2)]^{-1/2} \dd \Lambda$ and $\int [\Lambda(\Lambda-\lambda_1^2)(\Lambda-\lambda_2^2)(\Lambda-\lambda_3^2)]^{-1/2}$ $\dd \Lambda$ into Legrangde standard elliptic integrals, which are listed in 
	\Cref{prop:elliptic-int-1} and \Cref{prop:elliptic-int-2}.
	The different forms of elliptic integrals $\int [(1-z^2)(1-k^2z^2)]^{-1/2}\dd z$ and $\int [(1-z^2)((k^{\prime })^2+k^2z^2)]^{-1/2}\dd z$ we choosing is dependent on the different cases of the branch points $\lambda_i$, $i=1,2,3$, i.e.  \ref{case1},  and \ref{case2} respectively.
	The detailed processes are provided in \Cref{appendix:map}.
	\item[\textbf{Step 2}] \textbf{Convert the general hyperelliptic integrals into general elliptic integrals form.}
	Providing the suitable transformations between the spectral parameter $\lambda$ and the new parameter $z$ expressed by the rational forms of elliptic functions, we deduce the hyperelliptic integrals provided in equation \eqref{eq:hyper} into the integral expressed by Jacobi elliptic functions in terms of linear fractional transformations.
	The transformation in this paper are listed in equations \eqref{eq:fs-2}, \eqref{eq:fs-1}, \eqref{eq:fs-4} and \eqref{eq:fs-3}.  
	\item[\textbf{Step 3}] \textbf{Determine the integration path after applying the above transformations.}
	Building on the previous two steps, the most crucial part is to identify the appropriate integration path, as hyperelliptic integrals are inherently path-dependent. Furthermore, it is also necessary to account for the specific sheet of $y$ selected during the evaluation of the integrals in equation \eqref{eq:hyper}.
	\item[\textbf{Step 4}] \textbf{Derive the recursive formula to get the exact expressions of the general hyperelliptic integrals in terms of three kinds of normal elliptic integrals.} 
	In order to obtain the recursive formulas associated with elliptic integrals, the hyperelliptic integrals given in equation \eqref{eq:hyper} are transformed into a combination of the three canonical forms of elliptic integrals. 
\end{enumerate}
Based on the above steps, we obtain the \Cref{prop:case-1-P}-\ref{prop:case-1-C},
and get the explicit two-phase solutions of the mKdV equation and related Lax pair through the algebro-geometric method. 
Further details are provided in \Cref{section:exact-solution} and the Appendix \ref{appendix:map}.
Notably, the transformations we utilizing in this work are not unique.
In summary, we obtain the explicit expressions of the solution $u(x,t)$ of the \ref{eq:mKdV} equation as follows.

\begin{theorem}\label{theorem:solution-u}
	The two-phase solutions of the \ref{eq:mKdV} equation can be expressed as
	\begin{equation}\label{eq:solutions-dn}
		u(x,t)=C_{u_0}^{(i)}\frac{\Theta(\ii U^{(i)}(x+vt)+\Delta^{(i)}+D^{(i)})}{\Theta(\ii U^{(i)}(x+vt)+D^{(i)})}\ee^{2\ii \omega_{2}^{(i)}(x+vt)}, 
	\end{equation}
	where $v=2(\lambda_1^2+\lambda_2^2+\lambda_3^2)$.
	The superscript ``$(i)$" denotes Case $i$, $i = 1, 2$. 
	The rest parameters are given in two distinct cases respectively.
	\begin{itemize}
		\item For the \ref{case1}, without loss of generality, we set $0<\Im(\lambda_1)<\Im(\lambda_2)<\Im(\lambda_3)$, $\lambda_{1,2,3}\in \ii \mathbb{R}$. 
		The related parameters of the solution $u(x,t)$ are $\omega_{2}^{(1)}=0$, $	U^{(1)}=\kappa^{(1)}\mathbf{1}$, $\Delta^{(1)}=(\ii \pi+ \nu^{(1)}\sigma_3) \mathbf{1}$,
		\begin{subequations}\label{eq:u-parameters-dn}
			\begin{align}
			&
				\mathbf{B}^{(1)}=\ii \pi \begin{bmatrix}
					  \tau^{(1)}_2 + \tau^{(1)}_1 &   \tau^{(1)}_2 - \tau^{(1)}_1 \\[3pt]   \tau^{(1)}_2 - \tau^{(1)}_1 &   \tau^{(1)}_2 + \tau^{(1)}_1
				\end{bmatrix},
				\qquad
				C_{u_0}^{(1)}=\frac{2\ii \vartheta_1(\nu^{(1)},  \tau^{(1)}_1)\vartheta_2(\nu^{(1)},  \tau^{(1)}_1)\lambda_3}{\vartheta_2(0,  \tau^{(1)}_1)\vartheta_1(2 \nu^{(1)},  \tau^{(1)}_1)}, \label{eq:u-parameters-dn-B}\\
				\kappa^{(1)}&=\frac{\pi(\lambda_1^2-\lambda_3^2)^{1/2}}{  K^{(1)}_2}, \quad
				  k^{(1)}_2=\frac{(\lambda_1^2-\lambda_2^2)^{1/2}}{(\lambda_1^2-\lambda_3^2)^{1/2}}, \quad
				  k^{(1)}_1=   k^{(1)}_2\frac{\lambda_3}{\lambda_2}, \quad \nu^{(1)}=\frac{\ii \pi  }{  K^{(1)}_1}F\left(\frac{\lambda_2}{\lambda_3},  k^{(1)}_1\right),  \label{eq:u-parameters-dn-k}
			\end{align}
		\end{subequations} 
		with $K^{(1)}_i=K(k^{(1)}_i)$, $K^{{(1)}\prime}_i=K(k^{{(1)}\prime}_i)$, $\tau^{(1)}_i=\ii K^{{(1)}\prime}_i/K^{(1)}_i$ and $\mathbf{1}=[1 \,\,\, 1]^{\top}$.
		The parameters choosing $D^{(1)}=\mathbf{0}$ and $D^{(1)}=\ii   \tau^{(1)}_2\pi \mathbf{1}$ will correspond to two distinct solutions of the mKdV equation.
		\item For the \ref{case2}, without loss of generality, we set $\Re(\lambda_1)<0<\Re(\lambda_3)$, $\lambda_{1,3}\in \mathbb{C}\backslash(\ii \mathbb{R}\cup \mathbb{R})$, and $\lambda_2\in \ii \mathbb{R}$. 
		The related parameters of the solution $u(x,t)$ are $\omega_{2}^{(2)}=\kappa^{(2)}/2$, $U^{(2)}=-\kappa^{(2)}\mathbf{2}$,
		\begin{subequations}\label{eq:u-parameters-cn}
			\begin{align}
				&
				\Delta^{(2)}=\ii \pi \begin{bmatrix}
					1-\tau^{(2)}_2 \\
					\frac{\tau^{(2)}_2+\tau^{(2)}_1}{-2}+\frac{\ii\nu^{(2)} }{2 \pi}
				\end{bmatrix},\,\,\, \nu^{(2)}=\frac{\ii \pi}{ K^{(2)}_1}F\left(\frac{2\ii(AB)^{1/2}}{A-B}, k^{(2)}_1\right)\!, \,\,\, \label{eq:u-parameters-cn-U}\\
					\mathbf{B}^{(2)}&=\ii \pi \begin{bmatrix}
					2 \tau^{(2)}_2  & \tau^{(2)}_2-1 \\ 
					\tau^{(2)}_2-1 & (\tau^{(2)}_1+\tau^{(2)}_2)/2
				\end{bmatrix},  \qquad 	C_{u_0}^{(2)}=\frac{(A-B)\vartheta_4(0,\tau^{(2)}_1)\ee^{\frac{\ii\tau^{(2)}_2\pi}{4}}}{\lambda_2\vartheta_1(\ii\tau^{(2)}_1\pi+\nu ^{(2)},\tau^{(2)}_1)}, \label{eq:u-parameters-cn-B}\\			 
				 \kappa^{(2)}&=\frac{A^{1/2} \pi}{K^{(2)}_2}
, \,\,\,\,  k^{(2)}_1=\frac{(\lambda_2^4-(A-B)^2)^{1/2}}{2(AB)^{1/2}}, \,\,\,
				 k^{(2)}_2=\frac{(2(A+\lambda_2^2)-\lambda_3^2-\lambda_1^2)^{1/2}}{2(A)^{1/2}},\label{eq:u-parameters-cn-k}
			\end{align}
		\end{subequations}
		with $A=|\lambda_1^2-\lambda_2^2|$, $B=|\lambda_1^2|$, $D^{(2)}=\ii \pi \mathbf{1}$, $K^{(2)}_i=K(k^{(2)}_i)$, $K^{{(2)}\prime}_i=K(k^{{(2)}\prime}_i)$, $\mathbf{2}=[2,1]^{\top}$, and $\tau^{(2)}_i=\ii K^{{(2)}\prime}_i/K^{(2)}_i$.
	\end{itemize}
	These two-phase solutions of the mKdV equation are the traveling waves with velocity $-v$. 
\end{theorem}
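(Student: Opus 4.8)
\
The plan is to carry the algebro-geometric construction that produced \eqref{eq:q1-q0} and \eqref{eq:Phi-expression} all the way down to Jacobi theta/elliptic functions, exploiting the extra symmetry $\sigma\colon(\lambda,y)\mapsto(-\lambda,y)$ carried by the curve $\mathcal R_2$ of \eqref{eq:define-curve-algebro} because $\det\mathbf L(\lambda)$ is even in $\lambda$. First I would record that the genus-two potential is automatically a travelling wave: expanding the two expressions \eqref{eq:L-elements} and \eqref{eq:det-L-lambda} for $\det\mathbf L(\lambda)$ and matching the coefficient of $\lambda^{4}$ forces $\alpha_1=-\tfrac12(\lambda_1^2+\lambda_2^2+\lambda_3^2)$, so the constraint of \Cref{prop:L-matrix} (with $\alpha_0=\alpha_2=0$, $\alpha_3=1$) becomes $u_t=2(\lambda_1^2+\lambda_2^2+\lambda_3^2)\,u_x$, i.e.\ $u=u(x+vt)$ with $v=2(\lambda_1^2+\lambda_2^2+\lambda_3^2)$. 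Consequently the frequency vectors in \eqref{eq:q1-q0} obey $V=vU$ and the constants in \eqref{eq:define-int-w2}--\eqref{eq:define-int-w3} obey $\omega_3=v\omega_2$, which collapses $\ii Ux+\ii Vt$ into $\ii U(x+vt)$ and yields the single travelling phase appearing in \eqref{eq:solutions-dn}.

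Second, I would reduce every hyperelliptic object attached to $\mathcal R_2$ onto the two elliptic curves obtained from \eqref{eq:hyper} via $\lambda^2=\Lambda$, namely $E'\colon w^2=\prod_{i=1}^3(\Lambda-\lambda_i^2)$ (the quotient $\mathcal R_2/\sigma$) and $E''\colon w^2=\Lambda\prod_{i=1}^3(\Lambda-\lambda_i^2)$ (the quotient $\mathcal R_2/(\sigma\iota)$, with $\iota$ the hyperelliptic involution), so that $J(\mathcal R_2)$ is isogenous to $E'\times E''$. With the homology basis of \Cref{fig:genus-two-figure} chosen compatibly with $\sigma$, the normalized holomorphic differentials $w_1\,\dd\lambda,\,w_2\,\dd\lambda$ of \eqref{eq:define-basis-B} are suitable combinations of the pullbacks of the two elliptic holomorphic differentials, and taking $b$-periods produces the symmetric matrices $\mathbf B^{(i)}$ of \eqref{eq:u-parameters-dn-B} and \eqref{eq:u-parameters-cn-B}, with $\tau^{(i)}_1,\tau^{(i)}_2$ the two modular ratios $\ii K'/K$. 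The Step~1 conformal maps (\Cref{prop:elliptic-int-1}, \Cref{prop:elliptic-int-2}, worked out in \Cref{appendix:map}) identify the moduli $k^{(i)}_1,k^{(i)}_2$ of \eqref{eq:u-parameters-dn-k} and \eqref{eq:u-parameters-cn-k}; reducing $\dd\Omega_2$ onto $E'$ and evaluating the resulting period gives $\kappa^{(i)}$, with the way the basis of \Cref{fig:genus-two-figure} distributes over $E',E''$ producing $U^{(1)}=\kappa^{(1)}\mathbf{1}$ and $U^{(2)}=-\kappa^{(2)}\mathbf{2}$; reducing the Abel image $\mathcal A_{\infty^-}(\infty^+)=\alpha_{P_0}(\infty^+-\infty^-)$ along the contour anchored at $P_0=(\lambda_3,0)$ turns $\Delta$ into an incomplete elliptic integral of the first kind plus half-period contributions, yielding the $\nu^{(i)}$ and $\Delta^{(i)}$ of \eqref{eq:u-parameters-dn}, \eqref{eq:u-parameters-dn-k} and \eqref{eq:u-parameters-cn-U}; and $D^{(i)}$ is fixed by the requirement that the Dirichlet divisor stay on a real oval of $\mathcal R_2$, the two real ovals in each case furnishing the two admissible choices of $D^{(i)}$, which is the genus-two counterpart of the $\cn/\dn$ dichotomy of \cite{LingS-23-mKdV-stability}.

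Third, I would compute the prefactor. By \Cref{define:Riemann-Theta-function}, the symmetric shape of $\mathbf B^{(i)}$ makes the two-dimensional $\Theta$ split, under the lattice automorphism $(n_1,n_2)\mapsto(n_1+n_2,n_1-n_2)$, into a bilinear combination of one-dimensional Jacobi theta functions; evaluating $\Theta(D^{(i)})$ and $\Theta(D^{(i)}+\Delta^{(i)})$ this way, combining with the wave-function normalization $\Phi^{+}_{11,0}$ entering $C_{u_0}$ in \eqref{eq:q1-q0}, and using theta addition/duplication identities, yields $C_{u_0}^{(i)}$ exactly in the forms \eqref{eq:u-parameters-dn-B} and \eqref{eq:u-parameters-cn-B} (including the phase $\ee^{\ii\tau^{(2)}_2\pi/4}$). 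Substituting the reduced data back into \eqref{eq:q1-q0} and collecting the residual exponential $\ee^{2\ii\omega_2^{(i)}(x+vt)}$ gives \eqref{eq:solutions-dn}; the same reduction applied to \eqref{eq:Phi-expression} produces the corresponding explicit Lax wave functions. As independent checks one can degenerate one pair of branch points and recover the $\cn$- and $\dn$-type one-phase solutions, or substitute \eqref{eq:solutions-dn} directly into \eqref{eq:mKdV} using elliptic-function identities.

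The hard part will be Step~3 of the reduction scheme, the path and sheet tracking. The integrals defining $\Delta^{(i)}$, $D^{(i)}$, $\omega_{2,3}$ and the theta arguments inside $C_{u_0}^{(i)}$ are genuinely path- and branch-dependent, so every additive half-period — the $\ii\pi$'s, the $\ii\tau^{(i)}_2\pi$ terms, the $\sigma_3$ sign pattern, the quarter-period phase — must be accounted for by fixing explicit contours on $E'$ and $E''$ consistent both with the homology basis of \Cref{fig:genus-two-figure} and with the local behaviour \eqref{eq:define-standard-projection} at $\infty^{\pm}$ and at the base point $P_0$. By contrast, the algebraic parts — the conformal maps, the reduction of the period matrix, and the splitting of $\Theta$ — are systematic; the substantive labour, carried out in \Cref{appendix:map}, is precisely this bookkeeping of contours, of the sheet of $y$ selected in \eqref{eq:hyper}, and of the resulting half-period shifts.
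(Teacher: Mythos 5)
The first two thirds of your outline essentially retrace the paper's own route: the travelling-wave reduction (matching the $\lambda^4$ coefficient of $\det\mathbf L$, giving $V=vU$ and $\omega_3=v\omega_2$), and the reduction of all genus-two data to two elliptic curves via the symmetry $\lambda\mapsto-\lambda$ are exactly what the paper carries out concretely through the substitution $\lambda^2=\Lambda$, the conformal maps of \Cref{appendix:map}, and Lemmas \ref{lemma:dn-1-int}--\ref{lemma:U-V-R-cn}, which produce $\mathbf B^{(i)}$, $U^{(i)}$, $\kappa^{(i)}$, $k^{(i)}_{1,2}$, $\nu^{(i)}$, $\Delta^{(i)}$, $\omega^{(i)}_{2,3}$. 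Your more abstract phrasing (isogeny of $J(\mathcal R_2)$ with $E'\times E''$) is a legitimate repackaging of the same computation, and your acknowledgement that the path/sheet bookkeeping is the hard part is accurate.

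The genuine gap is in the determination of $D^{(i)}$ and $C_{u_0}^{(i)}$, which is precisely what the paper's proof of \Cref{theorem:solution-u} is devoted to. You fix $D^{(i)}$ by asserting that the Dirichlet divisor must lie on real ovals, with ``two real ovals in each case furnishing the two admissible choices''; this is never derived, and it would predict two genuinely distinct solutions in \ref{case2} as well, whereas the correct outcome is a single $D^{(2)}=\ii\pi\mathbf 1$ (the other lattice choice only flips the sign of $u$), in contrast to the two distinct solutions of \ref{case1}. Your route to the prefactor---evaluate $\Theta(D^{(i)})$, $\Theta(D^{(i)}+\Delta^{(i)})$ and ``combine with $\Phi^{+}_{11,0}$''---is circular, because $\Phi^{+}_{11,0}$ is an unknown normalization of the Baker--Akhiezer function that is never computed independently. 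The paper instead pins down both $D$ and $C_{u_0}$ by imposing reality of $u$: comparing the two theta representations \eqref{eq:q1-q0} and \eqref{eq:C_u0-define}/\eqref{eq:u-2} forces $D+D^*=2\pi\ii\mathbf n+\mathbf B\mathbf m$ (equation \eqref{eq:define-D-value}) together with the constraints \eqref{eq:C_u_0-r-c}--\eqref{eq:C_u_0-r-c-1}, and analysing these with the splitting formula \eqref{eq:formula-Rieman-shift-1} yields the admissible $D^{(i)}$ and $C_{u_0}^{(i)}=2\ii/\omega_1^{(i)}$, where $\omega_1^{(i)}$ comes from the normalized third-kind Abelian integral $\Omega_1$ in \eqref{eq:define-int-w1}, \eqref{eq:omega-1-define}, evaluated in Lemmas \ref{lemma:U-V-R-dn} and \ref{lemma:U-V-R-cn}. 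Neither $\Omega_1/\omega_1$ nor the dual representation \eqref{eq:u-2} appears anywhere in your proposal, yet these are exactly the ingredients that make the constants in \eqref{eq:u-parameters-dn-B} and \eqref{eq:u-parameters-cn-B} computable; without them, or a fully worked-out reality argument replacing them, the values of $C_{u_0}^{(i)}$ and the precise list of admissible $D^{(i)}$ asserted in the theorem are not established.
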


	Under the different cases of branch points $\lambda_i$ and $\lambda_i^*$, $i=1,2,3$ satisfying the \ref{case1} and \ref{case2}, the related solutions exhibiting different cases are divided into the following cases:
	\begin{itemize}
		\item If all of branch points are nonzero, the related solutions can be constructed by genus-two algebraic curves and expressed as the Riemann theta function form shown in \Cref{theorem:solution-u}.
		\item If a pair of branch points on the imaginary axis vanish, i.e., $\lambda_1=\lambda_1^*=0$ in \ref{case1} and $\lambda_2=\lambda_2^*=0$ in \ref{case2}, the corresponding solutions can degenerate into $\cn$-type and $\dn$-type elliptic solutions.
	\end{itemize}
The detailed process can be found in Section \ref{section:exact-solution}. Furthermore, by examining the relationships among Riemann theta functions and Jacobi theta functions, we establish the equivalence between solutions expressed in terms of Riemann theta functions and those expressed in terms of elliptic functions.

\begin{prop}\label{prop:solutions-equivalent}
	All periodic traveling wave solutions of the mKdV equation can be expressed by Riemann theta functions, as shown in equation \eqref{eq:solutions-dn}. In other words, the genus-two periodic traveling wave solutions represented by the Riemann theta function \eqref{eq:solutions-dn} are equivalent to those expressed in terms of elliptic functions \eqref{eq:u1-elliptic},\eqref{eq:u2-elliptic}.
\end{prop}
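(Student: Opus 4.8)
The plan is to prove the two inclusions separately. The substantive one is that the Riemann theta representation \eqref{eq:solutions-dn} is in fact an elliptic function and coincides with the profiles \eqref{eq:u1-elliptic}, \eqref{eq:u2-elliptic}; here I would use the reduction of the genus-two theta function to Jacobi theta functions forced by the symmetry $\lambda\mapsto-\lambda$ of the spectral curve \eqref{eq:define-curve-algebro}. The reverse inclusion --- that every periodic traveling wave of the \ref{eq:mKdV} equation is already of this form --- I would obtain from the classical fact that such a wave is elliptic, combined with the algebro-geometric construction of \Cref{sec:algebro-geometric-approach} read in reverse.

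First I would record the structural input behind the reduction. The curve $\mathcal R_2$ of \eqref{eq:define-curve-algebro} carries the holomorphic involution $(\lambda,y)\mapsto(-\lambda,y)$ coming from the substitution $\lambda^2=\Lambda$ used in \eqref{eq:hyper-1}--\eqref{eq:hyper-2}, so $\mathrm{Jac}(\mathcal R_2)$ is isogenous to a product of two elliptic curves; this is exactly why the period matrices $\mathbf B^{(1)},\mathbf B^{(2)}$ in \eqref{eq:u-parameters-dn-B}, \eqref{eq:u-parameters-cn-B} have their special (anti)symmetric shape, and I would exhibit an integral change of the homology/summation basis that turns each $\mathbf B^{(i)}$ into a diagonal Riemann matrix built from $\tau^{(i)}_1$ and $\tau^{(i)}_2$, at the price of restricting the lattice sum in \Cref{define:Riemann-Theta-function} to an index-two sublattice.

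Next I would carry out the reduction: after the change of variables the quadratic form $\langle\mathbf n/2,\mathbf B^{(i)}\mathbf n\rangle$ separates into two one-dimensional forms, and summing over the two cosets of the sublattice, with the quasi-periodicity relations \eqref{eq:Riemann-Theta-prop}, expresses $\Theta(\mathbf z;\mathbf B^{(i)})$ as a bilinear combination of two Jacobi theta functions (with half-integer characteristics) in the two components of a linear change of $\mathbf z$. I would then substitute $\mathbf z=\ii U^{(i)}(x+vt)+D^{(i)}$ in the denominator and $\mathbf z=\ii U^{(i)}(x+vt)+\Delta^{(i)}+D^{(i)}$ in the numerator: because $U^{(i)}$ is a rational multiple of $\mathbf 1$ (resp.\ of $\mathbf 2$), one of the two one-dimensional arguments becomes constant, and the precise shifts $D^{(i)},\Delta^{(i)}$ of \Cref{theorem:solution-u} are chosen so that the theta addition and duplication formulas collapse each bilinear sum to a single Jacobi theta function. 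The ratio in \eqref{eq:solutions-dn} thus reduces to a ratio of two one-variable Jacobi theta functions of modular parameter $\tau^{(i)}_2$ (hence modulus $k^{(i)}$ from \eqref{eq:u-parameters-dn-k}, \eqref{eq:u-parameters-cn-k}) and argument $\kappa^{(i)}(x+vt)+\mathrm{const}$, times the prefactor $\ee^{2\ii\omega^{(i)}_2(x+vt)}$. Finally, invoking the classical quotient representations of $\sn,\cn,\dn$ in terms of $\vartheta_1,\dots,\vartheta_4$ identifies this with the profile \eqref{eq:u1-elliptic} or \eqref{eq:u2-elliptic}; matching the amplitude reduces to the product identity for $C^{(i)}_{u_0}$ in \eqref{eq:u-parameters-dn-B}, \eqref{eq:u-parameters-cn-B}, which is the same identity evaluated at $x+vt=0$.

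For the reverse inclusion I would integrate the profile equation $u_{xxx}+6u^2u_x=(\text{const})\,u_x$ twice to obtain $u_x^2=$ a quartic polynomial in $u$, so that $u$ is an elliptic function; reading the previous paragraph backwards then recovers the Riemann theta representation \eqref{eq:solutions-dn}, the six branch points $\pm\lambda_i$ being the roots of $\det(\mathbf L(\lambda))$ in \eqref{eq:det-L-lambda} assembled from $u,u_x,u_{xx}$ through \eqref{eq:L-elements}. I expect the main obstacle to be precisely the middle computation: keeping track of which half-integer characteristics survive the index-two splitting, fixing the branch of the linear change of variables compatibly with the integration paths of \Cref{fig:genus-two-figure} and the chosen sheet of $y$, and checking that all the auxiliary constants ($\nu^{(i)}$, $\omega^{(i)}_2$, $C^{(i)}_{u_0}$, and the identification of $\tau^{(i)}_{1,2}$ with the moduli $k^{(i)}_{1,2}$) match exactly --- together with verifying that the correspondence from the branch points to the pair $(k^{(i)},\kappa^{(i)})$ is a bijection over the ranges admitted in \ref{case1} and \ref{case2}.
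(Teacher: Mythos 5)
Your overall architecture coincides with the paper's: the genus-two theta function is reduced to bilinear combinations of Jacobi theta functions by exactly the index-two lattice splitting you describe (this is \Cref{prop:Riemann-2-1}, formula \eqref{eq:formula-Rieman-shift-1}), and the reverse inclusion is obtained, as you propose, by integrating the profile equation to $u_x^2=-R(u)$ and classifying the roots (\Cref{sec:effective-integration-method}). The genuine problem is the central mechanism you propose for the identification step. After freezing one argument (since $U^{(1)}=\kappa^{(1)}\mathbf 1$, resp.\ $U^{(2)}=-\kappa^{(2)}\mathbf 2$), the numerator and denominator of \eqref{eq:solutions-dn} become combinations of the form $a\,\vartheta_3(w,2\tau^{(1)}_2)+b\,\vartheta_2(w,2\tau^{(1)}_2)$ with generically $a,b\neq 0$ (in \ref{case1} with $D^{(1)}=\mathbf 0$ one has $a=\vartheta_3(2\nu^{(1)},2\tau^{(1)}_1)$, $b=\vartheta_2(2\nu^{(1)},2\tau^{(1)}_1)$, cf.\ \eqref{eq:u-convert-p}). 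Under $w\mapsto w+2\pi\ii$ the function $\vartheta_3$ is invariant while $\vartheta_2$ changes sign, so such a combination is only quasi-periodic with respect to an index-two sublattice; it is a second-order theta function with two zeros per cell and is never equal to a single Jacobi theta function, no matter how $D^{(i)}$ and $\Delta^{(i)}$ are chosen. Hence the promised ``collapse via addition and duplication formulas'' cannot be carried out literally; all that is true is that the quotient is a degree-two elliptic function of $w$, i.e.\ a fractional-linear function of $\vartheta_2(w,2\tau^{(1)}_2)/\vartheta_3(w,2\tau^{(1)}_2)$.

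This is exactly why the paper does not attempt a direct formula collapse. It keeps the bilinear form, locates the poles and zeros of the theta representation using the parameter identities of \Cref{lemma:parameter-hat} and \Cref{lemma:parameter-check} (which translate quotients such as $\vartheta_2(2\tilde{\nu}_i,2\tau^{(1)}_2)/\vartheta_3(2\tilde{\nu}_i,2\tau^{(1)}_2)$, with $\tilde{\nu}_i$ as in \eqref{eq:tilde-v-1-2}, into branch-point data), checks that they coincide with the poles and zeros of \eqref{eq:u1-elliptic}, \eqref{eq:u2-elliptic}, and then removes the remaining constant ambiguity by evaluating both sides at $x+vt=\pi/(2\kappa^{(i)})$ (not at $0$) and invoking Liouville's theorem. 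If you replace your collapse step by this pole/zero-plus-Liouville comparison -- equivalently, by matching the two zeros, two poles and one value of the degree-two elliptic quotient -- the remainder of your plan, including the amplitude matching through $C^{(i)}_{u_0}$ and the bookkeeping of sheets and characteristics you flag at the end, goes through along the lines of the paper.
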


By employing the algebro-geometric method, the fundamental solutions of the associated Lax pair can be expressed as follows:
\begin{theorem}\label{theorem:solution-Phi}
	The vector solutions of the Lax pair with the above two-phase solutions of the mKdV equation can be expressed as follows:
	\begin{equation}\label{eq:Phi-solution}
		\Phi(x,t;P)=
		\begin{bmatrix}
			\frac{\Theta(\ii U^{(i)}(x+v t)+\mathcal{A}^{(i)}_{\infty^{-}}(P)+D^{(i)})}{\Theta(\ii U^{(i)}(x+vt)+D^{(i)})}\ee^{\ii \omega_2^{(i)}(x+vt)}\\
			\frac{\Theta(\ii U^{(i)}(x+vt)+\mathcal{A}^{(i)}_{\infty^{-}}(P)+D^{(i)}-\Delta^{(i)})}{\Theta(\ii U^{(i)}(x+vt)+D^{(i)})}\ee^{\Omega_1^{(i)}-\ii \omega_2^{(i)} (x+vt)}
		\end{bmatrix}\ee^{\ii (\Omega_2^{(i)}(P) x+ \Omega_3^{(i)}(P) t)}, \quad i=1,2,
	\end{equation}
	where parameters $U^{(1,2)}$, $D^{(1,2)}$ are provided in \Cref{theorem:solution-u};
	and functions $\Omega_{1,2,3}^{(1)}$, $\Omega_{1,2,3}^{(2)}$, $\mathcal{A}^{(1)}_{ \infty^{-}}(P)$, and $\mathcal{A}^{(2)}_{ \infty^{-}}(P)$ are defined in equations \eqref{eq:define-Omega-123-P}, \eqref{eq:define-Omega-123-C}, \eqref{eq:define-AP-P}, and  \eqref{eq:define-AP-C}, respectively.
\end{theorem}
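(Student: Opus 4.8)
The plan is to recover the vector solution $\Phi(x,t;P)$ of the Lax pair directly from its analytic data on the Riemann surface $\mathcal{R}_2$, following the classical Baker--Akhiezer construction already set up in \eqref{eq:Phi-expression}. First I would specialize the general formulas \eqref{eq:Phi-expression} for $\Phi_{11}$ and $\Phi_{21}$ to each of the two cases by substituting the explicit parameters obtained in \Cref{theorem:solution-u}: the period matrix $\mathbf{B}^{(i)}$, the wave numbers $U^{(i)}$, the phase shifts $\Delta^{(i)}$, $D^{(i)}$, and the velocity $v=2(\lambda_1^2+\lambda_2^2+\lambda_3^2)$. The key observation is that, because all the branch points satisfy the mKdV symmetry $\lambda_i^2\in\mathbb{R}$, the frequency vectors in \eqref{eq:Phi-expression} collapse: the $x$-flow and $t$-flow wavenumbers become proportional, $V=vU$ and $\omega_3=v\omega_2$, so that the theta arguments $D+\mathcal{A}_{\infty^-}(P)+\ii Ux+\ii Vt$ reduce to $\ii U^{(i)}(x+vt)+\mathcal{A}^{(i)}_{\infty^-}(P)+D^{(i)}$. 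This reduction is exactly what turns the two-variable quasi-periodic wave into a genuine traveling wave, and it has to be checked from the definitions \eqref{eq:define-int-w2}--\eqref{eq:define-int-w3} of $\Omega_2,\Omega_3$ together with the hyperelliptic-to-elliptic integral reductions established in Steps 1--4 and \Cref{prop:case-1-P}--\ref{prop:case-1-C}.

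Next I would assemble the exponential prefactor. The Abelian integrals $\Omega_2(P)$ and $\Omega_3(P)$ contribute the oscillatory factor $\ee^{\ii(\Omega_2^{(i)}(P)x+\Omega_3^{(i)}(P)t)}$, while the constants $\omega_2,\omega_3$ that were split off in \eqref{eq:L-elements} and absorbed into $C_{u_0}$ in \eqref{eq:q1-q0} now reappear as the $\ee^{\ii\omega_2^{(i)}(x+vt)}$ and $\ee^{-\ii\omega_2^{(i)}(x+vt)}$ factors multiplying the first and second components respectively. For the lower component one additionally carries the normalization $\Omega_1(P)$ coming from \eqref{eq:define-int-w1}; the constant $2\ii/(C_{u_0}\omega_1)$ appearing in \eqref{eq:Phi-expression} for $\Phi_{21}$ is harmless for a \emph{vector} (as opposed to matrix) solution, since any overall $P$-independent, $(x,t)$-independent scalar may be divided out, and $\omega_1$ together with the $\Theta(D)/\Theta(D+\mathcal{A}_{\infty^-}(P))$ prefactors common to both entries of \eqref{eq:Phi-expression} likewise cancel from a properly normalized eigenvector. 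This is where I would invoke the definitions \eqref{eq:define-Omega-123-P}, \eqref{eq:define-Omega-123-C} of $\Omega^{(1,2)}_{1,2,3}$ and \eqref{eq:define-AP-P}, \eqref{eq:define-AP-C} of $\mathcal{A}^{(i)}_{\infty^-}(P)$ to match notation.

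To verify that the resulting $\Phi(x,t;P)$ genuinely solves \eqref{eq:Lax-pair}, I would argue via uniqueness of the Baker--Akhiezer function: the right-hand side of \eqref{eq:Phi-solution} is, by construction from $\Theta$, a meromorphic function on $\mathcal{R}_2$ with a pole divisor of degree $g=2$ (the zeros of $\Theta(\ii U^{(i)}(x+vt)+D^{(i)})$, independent of $P$, do not count against $P$-poles, and the actual $P$-poles come from $\Theta(D+\mathcal{A}_{\infty^-}(P))$ — a standard non-special divisor) and the prescribed essential singularities $\ee^{\ii(\lambda x+4\lambda^3 t)(\cdot)}$ at $\infty^\pm$ dictated by the asymptotics in \eqref{eq:define-standard-projection} and \eqref{eq:define-int-w2}--\eqref{eq:define-int-w3}. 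By the standard theorem these data determine $\Phi$ uniquely up to the $(x,t)$-independent normalization, and the vector so determined automatically satisfies a first-order linear system in $x$ and in $t$ whose coefficients are the $\mathbf{U},\mathbf{V}$ of \eqref{eq:Lax-U-V-define} with $u(x,t)$ the function of \eqref{eq:solutions-dn} — this is precisely the content recorded in \Cref{prop:L-matrix} and the algebro-geometric machinery of \Cref{sec:algebro-geometric-approach}.

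The main obstacle, as in \Cref{theorem:solution-u} itself, will be bookkeeping the integration paths and branch choices: the Abelian integrals $\Omega_{1,2,3}(P)$ and the Abel map $\mathcal{A}_{\infty^-}(P)$ are multivalued, and after the substitution $\lambda^2=\Lambda$ and the conformal maps of Step 1 the sheet of $y$ and the homology basis of \Cref{fig:genus-two-figure} must be tracked consistently in each of \ref{case1} and \ref{case2}; a wrong sign here flips $\infty^+\leftrightarrow\infty^-$ and produces the reciprocal wave function. Concretely I expect the delicate point to be confirming, in each case, that $\Omega_2^{(i)}(P)\to\pm(\lambda+\omega_2^{(i)})$ and $\Omega_3^{(i)}(P)\to\pm(4\lambda^3+\omega_3^{(i)})$ with $\omega_3^{(i)}=v\,\omega_2^{(i)}$, so that the combined exponent $\ii\Omega_2^{(i)}(P)x+\ii\Omega_3^{(i)}(P)t$ respects the traveling-wave form with velocity $-v$ consistently with \Cref{theorem:solution-u}; everything else is then routine substitution into \eqref{eq:Phi-expression}.
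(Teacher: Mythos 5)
Your overall route is the same as the paper's: specialize the Baker--Akhiezer formulas \eqref{eq:Phi-expression} to the two cases by inserting the explicit data from \Cref{theorem:solution-u} and the evaluated Abelian integrals \eqref{eq:define-Omega-123-P}, \eqref{eq:define-Omega-123-C}, \eqref{eq:define-AP-P}, \eqref{eq:define-AP-C} (i.e.\ Lemmas \ref{lemma:U-V-R-dn}, \ref{lemma:U-V-R-cn}, \ref{lemma:Abel-integral}), use $V^{(i)}=vU^{(i)}$ and $\omega_3^{(i)}=v\omega_2^{(i)}$ to collapse the theta arguments and exponents to the traveling-wave variable $x+vt$, and then strip off the $(x,t)$-independent factor $\Theta(D^{(i)})/\Theta(D^{(i)}+\mathcal{A}^{(i)}_{\infty^-}(P))$, which is legitimate because it is common to both components. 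Your extra appeal to uniqueness of the Baker--Akhiezer function is consistent with the construction in \Cref{sec:algebro-geometric-approach} but is not needed once \eqref{eq:Phi-expression} is taken as given.

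There is, however, one concrete gap in your normalization argument. The factor $2\ii/(C_{u_0}\omega_1)$ in $\Phi_{21}$ of \eqref{eq:Phi-expression} multiplies \emph{only the second component}, so it is not an ``overall'' scalar of the vector and cannot be divided out: rescaling a single component of a solution of $\Phi_x=\mathbf{U}\Phi$, $\Phi_t=\mathbf{V}\Phi$ destroys the solution property because $\mathbf{U}$ and $\mathbf{V}$ couple the components through their off-diagonal entries. The same objection applies to your claim that $\omega_1$ ``cancels'' as a prefactor common to both entries --- it appears only in $\Phi_{21}$. The correct step, and the one the paper uses, is the identity $C_{u_0}^{(i)}=2\ii/\omega_1^{(i)}$ established in the proof of \Cref{theorem:solution-u} (via \eqref{eq:C_u_0-r-c} and \eqref{eq:C_u_0-r-c-1}), which makes $2\ii/(C_{u_0}^{(i)}\omega_1^{(i)})$ identically equal to $1$, so nothing asymmetric remains to be removed; only the genuinely common constant $\Theta(D^{(i)})/\Theta(D^{(i)}+\mathcal{A}^{(i)}_{\infty^-}(P))$ is then divided out of the whole vector. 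With that replacement your argument matches the paper's proof; as written, the justification of the second component would fail.
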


For the periodic traveling wave solutions  of the mKdV equation, we would consider the spectral stability under the following transformations:
\begin{equation}\label{eq:transformation-xi-eta}
	(x,t) \xlongequal[\eta=t]{\xi=x+vt} (\xi,\eta),
\end{equation}
where the parameter $-v$ is called the velocity of the solution defined in \Cref{theorem:solution-u}.
Under this transformation, the \ref{eq:mKdV} equation would be converted into
\begin{equation}\label{eq:mKdV1}
	u_{\eta}+vu_{\xi}+u_{\xi\xi\xi}+6u^2u_{\xi}=0.
\end{equation}
To study the spectral stability of genus-two traveling wave solutions , we introduce perturbations of the stationary solution $\hat{u}(\xi,\eta)=u(\xi)+\epsilon w(\xi,\eta)+\mathcal{O}(\epsilon^2)$,
where $\epsilon$ is a small parameter and $w(\xi,\eta)$ is a real-valued function of $(\xi,\eta)\in \mathbb{R}^2$. 
Plugging $\hat{u}(\xi,\eta)$ into \eqref{eq:mKdV1} and considering the first-order term of $\epsilon$, we obtain the linearized equation
\begin{equation}\label{eq:linearized-mKdV}
	\partial_t w +\partial_{\xi}^3w +v\partial_{\xi}w +6u^2\partial_{\xi}w +12w u\partial_{\xi}u=0, \qquad u\equiv u(\xi), \quad w\equiv w(\xi,\eta).
\end{equation}
Since equation \eqref{eq:linearized-mKdV} is autonomous in time, we can decompose $w(\xi,\eta)$ into the following form
\begin{equation}\label{eq:w}
	w(\xi,\eta)=W(\xi;\Omega)\exp(\Omega \eta)+W^*(\xi;\Omega)\exp(\Omega^* \eta),
\end{equation}
by separating variables.
Then, we obtain the linearized spectral problem of equation \eqref{eq:linearized-mKdV}:
\begin{equation}\label{eq:spectral}
	\partial_\xi  (-\partial_\xi^2-v-6u^2)W=\mathcal{JL}W=\Omega W, \qquad W(\xi;\Omega) \in C_b^0(\mathbb{R}),
\end{equation} 
where $\mathcal{J}=\partial_\xi,\mathcal{L}=-\partial_\xi^2-v-6u^2$, $\Omega\in \mathbb{C}$, and $C_b^0(\mathbb{R})$ denotes the space of bounded continuous functions on the real line. 
The spectrum is defined as 
\begin{equation}\label{eq:spectrum}
	\sigma(\mathcal{JL}):=\{\Omega\in \mathbb{C}| W(\xi)\in C^0_b(\mathbb{R})\}. 
\end{equation}
Due to the Hamiltonian structure of the spectrum \cite{HaragusK-08}, the genus-two solution $u$ is spectrally stable with respect to perturbations in $C_b^0(\mathbb{R})$ if $\sigma(\mathcal{JL})\subset \ii \mathbb{R}$. 
Then, the definition of spectral stability is given as follows: 
\begin{define}\label{define:spect-stable}
	The genus-two periodic traveling solution $u(\xi)$ is spectrally stable to perturbations $w(\xi,\eta)$ in $C_b^0(\mathbb{R})$, where $w(\xi,\eta)$ is defined in equation \eqref{eq:w}, if all $\Omega\in \ii \mathbb{R}$. In brief, the stability spectrum is defined as $\sigma(\mathcal{JL})\subset \ii \mathbb{R}$, where $\sigma(\mathcal{JL})$ is defined in equation \eqref{eq:spectrum}.
\end{define}

As Deconinck and Kapitula pointed out in \cite{DeconinckK-10}, the Lax spectrum of the focusing mKdV equation is no longer confined to the real axis, which makes the detailed analysis of the bounded eigenfunctions more difficult. 
To overcome this difficulty, we use the Riemann theta function to express the squared eigenfunction $W(\xi;\Omega)$, which converts the problem of analyzing bounded functions into studying the algebraic problems on Zeta function and the radical fraction with respect to the spectral parameter $\lambda$. 
According to the Floquet theorem (Theorem in \cite{DeconinckK-06,Floquet-83}), we know that the solution $W(\xi;\Omega)$ in the linear homogeneous differential equation \eqref{eq:linearized-mKdV} is of the form
$W(\xi;\Omega)=\ee^{\ii \hat{\eta} \xi} \hat{W}(\xi;\Omega), \hat{W}(\xi+2T;\Omega)=\hat{W}(\xi;\Omega),\hat{\eta}\in \mathbb{C}$, where $2T$ is the period of the function $\hat{W}(\xi ;\Omega )$. 
Every bounded solutions of spectral problem \eqref{eq:spectral} is of the form
\begin{equation}\label{eq:W-W-eta}
	W(\xi ;\Omega )=\ee^{\ii \hat{\eta} \xi  } \hat{W}(\xi ;\Omega ),\qquad \hat{W}(\xi  +2T;\Omega)=\hat{W}(\xi ;\Omega ),  \qquad \hat{\eta} \in  \left[-\frac{\pi}{2T},\frac{\pi }{2T}\right). 
\end{equation}
Utilizing the squared-eigenfunction method \cite{BottmanD-2009}, we get the squared-eigenfunction $W(\xi;\Omega)$, which can be used to gain all solutions of the equation \eqref{eq:spectral}. Additional details are given in \Cref{section:spectral-stability}.
By the explicit expression of the function $W(\xi;\Omega)$ shown in equation \eqref{eq:define-W}, we get
\begin{equation}\nonumber
	\exp{(2\ii \hat{\eta} T)}=\frac{W(\xi+2T;\Omega)}{W(\xi;\Omega)}
	=\exp{\left(4\ii \left(\Omega_2(P) + \omega_2 \right)T\right)}.
\end{equation}
For the stability analysis, we just consider the bounded function $W(\xi;\Omega)$, which implies that the real part of the exponent of the function $W(\xi;\Omega)$ is zero, i.e.,  the parameter $\lambda$ must locate in the set $Q$ defined as
\begin{equation}\label{eq:set-Q}
	Q:=\left\{  \lambda\in \mathbb{C}\cup\{\infty\}\left| {\Im}\left(\mathcal{I} (\lambda)\right)=0  \right. \right\},
\end{equation}
where the function $\mathcal{I}(\lambda)$ is defined as
\begin{equation}\label{eq:M}
	M(\lambda):=2\hat{\eta} T=4 \left(\Omega_2(P) + \omega_2 \right)T, \qquad 
	\mathcal{I}(\lambda):=M(\lambda)/(2T)=2\Omega_2(P) + 2\omega_2,
\end{equation}
where the parameter $\hat{\eta}$ is defined in equation \eqref{eq:W-W-eta}.
We also divide the analysis of the spectral stability into the following two cases -- \ref{case1} and \ref{case2}.

When the \ref{case1} holds, we rewrite the set $Q$ as $Q^{(1)}$ and define three sets as: 
\begin{equation}\label{eq:define-hat-Q}
	\begin{split}
		& Q^{(1)} _R:=\left\{\lambda \in  \mathbb{R}\right\}, \qquad
		 Q^{(1)} _{P_1}:=
		\left\{\lambda \in \ii \mathbb{R}\left| \,
		|\Im(\lambda)|\le\Im(\lambda_1) \right.\right\}, \\
		&  Q^{(1)} _{P_2}:=
		\left\{\lambda \in \ii \mathbb{R} \left| \,
		\Im(\lambda_2) \le |\Im(\lambda)|\le\Im(\lambda_3) \right.\right\}.
	\end{split}
\end{equation}
We get that the set $Q^{(1)}$ can be expressed as the union of the above three sets, i.e. $Q^{(1)}= Q^{(1)} _R\cup  Q^{(1)} _{P_1} \cup  Q^{(1)} _{P_2}$, which is proved in \Cref{lemma:dn-bounded}.
Moreover, for any $\lambda\in Q^{(1)}$, the corresponding eigenvalues $\Omega(\lambda)$ are pure imaginary.
Then, we get the consequence for the spectral stability.
\begin{theorem}\label{theorem:spectral-image}
	Under the \ref{case1}, the two-phase solutions of the mKdV equation are spectrally stable.
\end{theorem}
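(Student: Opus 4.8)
The plan is to reduce \Cref{theorem:spectral-image} to two statements about the parametrization of the Lax spectrum by the Riemann surface $\mathcal{R}_2$: first, a complete description of the set $Q^{(1)}$ of spectral parameters $\lambda$ for which the squared eigenfunction $W(\xi;\Omega)$ of \eqref{eq:define-W} is bounded on $\mathbb{R}$ (this is \Cref{lemma:dn-bounded}); and second, the assertion that the eigenvalue map $\lambda\mapsto\Omega(\lambda)$ takes purely imaginary values on $Q^{(1)}$. Granting both, \Cref{define:spect-stable} closes the argument: $\sigma(\mathcal{JL})=\{\Omega(\lambda):\lambda\in Q^{(1)}\}\subset\ii\mathbb{R}$.

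For the first statement I would start from \eqref{eq:set-Q}--\eqref{eq:M}, where $\lambda\in Q^{(1)}$ exactly when the harmonic function $\Im\mathcal{I}(\lambda)=\Im\big(2\Omega_2(P)+2\omega_2\big)$ vanishes. The symmetries \eqref{eq:symmety-U-V} furnish two anti-holomorphic involutions of $\mathcal{R}_2$: $\sigma\colon(\lambda,y)\mapsto(\lambda^*,y^*)$, whose fixed locus projects to the real $\lambda$-axis, and $\tau\colon(\lambda,y)\mapsto(-\lambda^*,y^*)$, whose fixed locus projects to the portions of $\ii\mathbb{R}$ on which $\prod_i(\lambda^2-\lambda_i^2)\ge 0$; since all $\lambda_i^2<0$ in \ref{case1}, this is precisely $Q^{(1)}_{P_1}\cup Q^{(1)}_{P_2}$. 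Because $\dd\Omega_2$ can be, and is, chosen as the normalized second-kind differential compatible with both $\sigma$ and $\tau$, the quantity $\Omega_2(P)+\omega_2$ is real along both fixed loci, giving $Q^{(1)}_R\cup Q^{(1)}_{P_1}\cup Q^{(1)}_{P_2}\subseteq Q^{(1)}$. The nontrivial inclusion is the reverse one: I would establish it by pushing down to the $\Lambda=\lambda^2$-plane via \eqref{eq:hyper} and using the explicit elliptic-integral expressions for $\Omega_2$ and for the constants $\omega_2$ supplied by \Cref{theorem:solution-u} in Case 1, so that monotonicity of the relevant incomplete elliptic integrals on $[0,\infty)$ and on the gap intervals, together with the local model $\mathcal{I}(\lambda)\sim 2\lambda$ near $\infty^{\pm}$ from \eqref{eq:define-int-w2} and the square-root branching at the $\lambda_i$, pins the zero set down to exactly the three announced arcs.

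For the second statement, the squared-eigenfunction construction of \Cref{section:spectral-stability} expresses $\Omega$ as $\ii$ times a fixed real-linear combination of $\Omega_2(P)+\omega_2$ and $\Omega_3(P)+\omega_3$, with coefficients built from the velocity $v=2(\lambda_1^2+\lambda_2^2+\lambda_3^2)\in\mathbb{R}$ (the time-exponent inherited after the change of frame \eqref{eq:transformation-xi-eta}). Running the same reality argument for $\dd\Omega_3$ — again the normalized second-kind differential compatible with $\sigma$ and $\tau$ — each of $\Omega_2(P)+\omega_2$ and $\Omega_3(P)+\omega_3$ is real over every point of $Q^{(1)}_R\cup Q^{(1)}_{P_1}\cup Q^{(1)}_{P_2}$ at which $y$ is real, hence $\Omega(\lambda)\in\ii\mathbb{R}$ throughout $Q^{(1)}$, and \Cref{theorem:spectral-image} follows.

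The step I expect to be the main obstacle is the reverse inclusion in \Cref{lemma:dn-bounded}: controlling the level set $\{\Im\mathcal{I}=0\}$ on the entire spectral plane and ruling out spurious loops or crossing arcs, particularly near the band edges $\lambda=\pm\lambda_i$ where the two sheets of $\mathcal{R}_2$ coalesce and $W(\xi;\Omega)$ degenerates. By contrast, the reality of the Abelian integrals on the $\sigma$- and $\tau$-fixed loci and the explicit form of $\Omega(\lambda)$ are comparatively routine once the real-curve normalization and the Case-1 reductions of \Cref{theorem:solution-u} are in hand.
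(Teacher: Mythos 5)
Your proposal is correct and takes essentially the same route as the paper: the paper's proof of this theorem is precisely the reduction you describe, namely cite \Cref{lemma:dn-bounded} to get $Q=Q^{(1)}=Q^{(1)}_R\cup Q^{(1)}_{P_1}\cup Q^{(1)}_{P_2}$, observe that $\prod_{i=1}^{3}(\lambda^2-\lambda_i^2)\ge 0$ there so that $y$ is real, and conclude $\Omega(\lambda)=8\ii y\in\ii\mathbb{R}$ from \eqref{eq:define-Omega} (your reality-of-$\Omega_2,\Omega_3$ detour collapses to the same fact, since $\Omega_3-v\Omega_2=4y$). The only place you diverge is in the sketched proof of the reverse inclusion of \Cref{lemma:dn-bounded}, where the paper does not use monotonicity of elliptic integrals along the axes but rather counts the zeros and half-order poles of $(\mathcal{I}^{(1)})'$ and invokes the maximum principle for the harmonic function $\Im(\mathcal{I}^{(1)})$ to exclude extra components; since that lemma is a separately established ingredient, this does not affect the validity of your argument for the theorem itself.
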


When the \ref{case2} holds, we denote the set $Q$ as $Q^{(2)}$ and define three sets as: 
\begin{equation}\label{eq:define-Q-C-IRp}
	 Q^{(2)}_{R}:=\left\{\lambda\in \mathbb{R}\right\}, \qquad 
	 Q^{(2)}_{I}:=\left\{\lambda\in \ii \mathbb{R}\left| \ |\Im(\lambda)|\le \Im(\lambda_2)\right.\right\}, \qquad 
	 Q^{(2)}_{P}:=\left\{\lambda_1,\lambda_1^*,\lambda_3,\lambda_3^* \right\}.
\end{equation}
In this case, we obtain that the set satisfies $Q^{(2)}\neq Q^{(2)}_{R} \cup Q^{(2)}_{I} \cup Q^{(2)}_{P}$ and $(Q^{(2)}_{R} \cup Q^{(2)}_{I} \cup  Q^{(2)}_{P})\subset Q^{(2)}$. 
In \Cref{prop:Q-M-cn}, we prove that for any $\lambda \in (Q^{(2)}_{R} \cup Q^{(2)}_{I} \cup  Q^{(2)}_{P})$, the eigenvalue $\Omega(\lambda)$ satisfies $\Omega(\lambda)\in \ii \mathbb{R}$.
Furthermore, we can prove that $\Omega(\lambda)\notin \ii \mathbb{R}$ for any $\lambda \in Q^{(2)}\backslash(Q^{(2)}_{R} \cup Q^{(2)}_{I} \cup  Q^{(2)}_{P})$.
In conclusion, we obtain the following Theorem:
\begin{theorem}\label{theorem:spectral-complex}
	The two-phase solutions satisfying the \ref{case2} of the mKdV equation are spectrally unstable.
\end{theorem}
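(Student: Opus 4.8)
The plan is to derive the spectral instability in \ref{case2} from the structure of the admissible set $Q^{(2)}$, localizing the analysis at the genuinely complex branch point $\lambda_3$. The theorem reduces quickly once two structural claims are in place, namely (a) $Q^{(2)}\supsetneq Q^{(2)}_R\cup Q^{(2)}_I\cup Q^{(2)}_P$, and (b) $\Omega(\lambda)\notin\ii\mathbb{R}$ for every $\lambda\in Q^{(2)}\setminus(Q^{(2)}_R\cup Q^{(2)}_I\cup Q^{(2)}_P)$. Indeed, choosing such a $\lambda_\star$, the function $W(\xi;\Omega)$ of \eqref{eq:define-W} is a bounded eigenfunction of $\mathcal{JL}$ with eigenvalue $\Omega(\lambda_\star)\notin\ii\mathbb{R}$; since $\sigma(\mathcal{JL})$ is invariant under $\Omega\mapsto-\Omega$ (the Hamiltonian symmetry of the spectrum, see \cite{HaragusK-08}) and under $\Omega\mapsto\Omega^*$ (the operators $\mathcal{J},\mathcal{L}$ are real), one of $\pm\Omega(\lambda_\star)$ has strictly positive real part, so $u$ is not spectrally stable in the sense of \Cref{define:spect-stable}. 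It therefore suffices to establish (a) and (b), which I would do by a local computation at $\lambda_3$.

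From the squared-eigenfunction representation in \Cref{section:spectral-stability} one records that both the Floquet quantity $\mathcal{I}(\lambda)=2(\Omega_2(P)+\omega_2)$ of \eqref{eq:M} and the linearized eigenvalue $\Omega(\lambda)$ are $\mathbb{R}$-linear in the Abelian integrals $\Omega_2(P),\Omega_3(P)$ and the constants $\omega_2,\omega_3$; after passing to the frame $\xi=x+vt$ this takes the form $\Omega(\lambda)=2\ii\big((\Omega_3(P)+\omega_3)-v(\Omega_2(P)+\omega_2)\big)$, so that $d\mathcal{I}=2Q_2(\lambda)y^{-1}\,d\lambda$ and $d\Omega=2\ii\big(Q_3(\lambda)-vQ_2(\lambda)\big)y^{-1}\,d\lambda$ with $Q_2,Q_3$ the numerator polynomials fixed by the $a$-normalizations \eqref{eq:define-int-w2} and \eqref{eq:define-int-w3}. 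In \ref{case2} one has $\lambda_3\in\mathbb{C}\setminus(\mathbb{R}\cup\ii\mathbb{R})$; in the local coordinate $\zeta$ with $\zeta^2=\lambda-\lambda_3$, using $y=\big((y^2)'(\lambda_3)\big)^{1/2}\zeta\,(1+\mathcal{O}(\zeta^2))$ and the base point $P_0=(\lambda_3,0)$, so that $\Omega_2(P_0)=\Omega_3(P_0)=0$, one obtains $\mathcal{I}(\lambda)=2\omega_2+c_{\mathcal I}\zeta+\mathcal{O}(\zeta^3)$ and $\Omega(\lambda)=2\ii(\omega_3-v\omega_2)+c_{\Omega}\zeta+\mathcal{O}(\zeta^3)$ with $c_{\mathcal I}\propto Q_2(\lambda_3)$ and $c_{\Omega}\propto\ii\big(Q_3(\lambda_3)-vQ_2(\lambda_3)\big)$, carrying the same nonzero constant of proportionality. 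Because $\lambda_3\in Q^{(2)}_P\subset Q^{(2)}$ forces $\mathcal{I}(P_0)=2\omega_2\in\mathbb{R}$ and, by \Cref{prop:Q-M-cn}, $\Omega(P_0)=2\ii(\omega_3-v\omega_2)\in\ii\mathbb{R}$, both $\omega_2$ and $\omega_3$ are real; hence near $\lambda_3$ the set $Q^{(2)}=\{\Im\mathcal{I}=0\}$ is, to leading order, the image of a straight line through $\zeta=0$, i.e.\ a smooth arc through $\lambda_3$. Since $\lambda_3\notin\mathbb{R}\cup\ii\mathbb{R}$, this arc lies in neither $Q^{(2)}_R$ nor $Q^{(2)}_I$ and meets $Q^{(2)}_P$ only at $\lambda_3$, which gives (a). Parametrizing the arc by $s:=c_{\mathcal I}\zeta\in\mathbb{R}$, one finds that $\Re\Omega(\lambda)$ equals, to leading order in $s$, a nonzero constant times $s\,\Im\!\big(Q_3(\lambda_3)/Q_2(\lambda_3)\big)$; thus $\Omega$ leaves $\ii\mathbb{R}$ along the arc — giving (b) and hence the asserted instability — provided $Q_3(\lambda_3)/Q_2(\lambda_3)\notin\mathbb{R}$.

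The main obstacle is precisely verifying $Q_3(\lambda_3)/Q_2(\lambda_3)\notin\mathbb{R}$. Under the substitution $\lambda^2=\Lambda$ and the \ref{case2} reality relations ($\lambda_2\in\ii\mathbb{R}$ and $\lambda_1=-\lambda_3^*$, so $y^2$ has real coefficients in $\Lambda$), and with the conjugation-symmetric homology basis of \Cref{fig:genus-two-figure-c}, the polynomials $Q_2,Q_3$ are odd in $\lambda$ and descend to real-coefficient polynomials $\widetilde{Q}_2(\Lambda)=\Lambda+\beta$ and $\widetilde{Q}_3(\Lambda)=12\Lambda^2+\gamma\Lambda+\delta$, whose coefficients $\beta,\gamma,\delta$ are computable from the elliptic integrals of \Cref{theorem:solution-u}; hence $Q_3(\lambda_3)/Q_2(\lambda_3)=\widetilde{Q}_3(\Lambda_3)/\widetilde{Q}_2(\Lambda_3)$ with $\Lambda_3=\lambda_3^2\notin\mathbb{R}$. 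A quotient of a quadratic by a linear real polynomial is real at the non-real point $\Lambda_3$ if and only if it takes equal values at $\Lambda_3$ and at $\Lambda_3^*=\lambda_1^2$, i.e.\ if and only if the single algebraic relation $12\lambda_1^2\lambda_3^2+12\beta(\lambda_1^2+\lambda_3^2)+\beta\gamma-\delta=0$ holds; the crux is to show that this relation is incompatible with the admissible region carved out by the strict inequalities of \ref{case2}. Finally, to upgrade (b) so that it holds on all of $Q^{(2)}\setminus(Q^{(2)}_R\cup Q^{(2)}_I\cup Q^{(2)}_P)$, not merely near $\lambda_3$, one uses that along $Q^{(2)}$ one has $\Re\Omega=-2\Im(\Omega_3(P)+\omega_3)$, a real-analytic function that vanishes on $Q^{(2)}$ precisely on the conjugation-invariant part $Q^{(2)}_R\cup Q^{(2)}_I$ together with the isolated points $Q^{(2)}_P$; and should the leading-order analysis at $\lambda_3$ ever degenerate, the symmetries $\Omega(\lambda^*)=\Omega(\lambda)^*$ and $\Omega(-\lambda)=-\Omega(\lambda)$ let the same computation be run at $\lambda_1$, $\lambda_1^*$, or $\lambda_3^*$.
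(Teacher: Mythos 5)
Your skeleton is sound as far as it goes: exhibiting one $\lambda\in Q^{(2)}$ with $\Omega(\lambda)\notin\ii\mathbb{R}$ does give instability under \Cref{define:spect-stable}, the local coordinate $\zeta^2=\lambda-\lambda_3$ correctly gives $\mathcal{I}=\mathcal{I}(\lambda_3)+c_{\mathcal{I}}\zeta+\mathcal{O}(\zeta^3)$, $\Omega=c_\Omega\zeta+\mathcal{O}(\zeta^3)$ with $c_{\mathcal I}\propto Q_2(\lambda_3)\neq0$ (the zeros of $Q_2$ are $0,\pm\lambda_0$ with $\lambda_0^2\in\mathbb{R}$ by \eqref{eq:I-def-C-deriv}, \eqref{eq:define-lambda-0}, while $\lambda_3^2\notin\mathbb{R}$), and along the branch of $\{\Im\mathcal{I}=0\}$ emanating from $\lambda_3$ one indeed gets $\Re\Omega\approx -s\,\Im\bigl(Q_3(\lambda_3)/Q_2(\lambda_3)\bigr)$ since $\Omega_3-v\Omega_2=4y$ and $v\in\mathbb{R}$. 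The genuine gap is that the decisive step — $Q_3(\lambda_3)/Q_2(\lambda_3)\notin\mathbb{R}$ — is exactly what you do not prove: you reduce it to an algebraic relation with unspecified coefficients $\beta,\gamma,\delta$ and then write that "the crux is to show" it is incompatible with \ref{case2}, which is an announcement, not an argument. If that quantity were real, the leading term of $\Re\Omega$ along your arc vanishes and your entire proof collapses (one would need the next order, or a different point). The condition is in fact checkable from the paper's data: with $Q_3-vQ_2=2(y^2)'$ and $Q_2(\lambda)\propto\lambda(\lambda^2-\lambda_0^2)$, degeneracy amounts to $A K^{(2)}_2+\bigl(\Re(\lambda_3^2)-\lambda_2^2\bigr)\bigl(K^{(2)}_2-2E^{(2)}_2\bigr)=0$, which is impossible because $|\Re(\lambda_3^2)-\lambda_2^2|<A$ and $|K^{(2)}_2-2E^{(2)}_2|<K^{(2)}_2$ (compare the ratio $\mathcal{I}^{(2)\prime}(\lambda_3)/\Omega^{\prime}(\lambda_3)$ recorded in \eqref{eq:derivative-1}); but none of this appears in your write-up, so the proof is incomplete at its central point.

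A second problem is your closing "upgrade": the assertion that $\Re\Omega$ vanishes on $Q^{(2)}$ precisely on $Q^{(2)}_R\cup Q^{(2)}_I\cup Q^{(2)}_P$ is not a fact you may quote — it is the global statement that constitutes essentially the whole of the paper's proof, established there by passing to $\Upsilon=(\lambda^2-\lambda_2^2)/A$ and showing, via monotonicity along the lines $\Upsilon_R=\Upsilon_{3R}$ and $\Upsilon_I=\Upsilon_{3I}$, the auxiliary function $H$, and Rolle's theorem, that the curves $\{\Im\hat{\mathcal{I}}=0\}$ and $\{\Re\hat\Omega=0\}$ meet only at $\Upsilon_3$ (two cases according to the sign of $2E^{(2)}_2A/K^{(2)}_2-A$). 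For \Cref{theorem:spectral-complex} as stated your local route would suffice and is genuinely shorter than the paper's, but it does not yield this stronger global fact, which the paper needs again in \Cref{theorem:spectral-complex-P}; so even when repaired it buys less. A minor further inaccuracy: since $\lambda_3$ is a half-order singularity of $\mathcal{I}'$, the zero set of $\Im\mathcal{I}$ in the $\lambda$-plane is a single ray terminating at $\lambda_3$ (the line through $\zeta=0$ folds under $\lambda=\lambda_3+\zeta^2$), not a smooth arc through it; this does not affect existence of the required points but should be stated correctly.
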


Under this case, we would like to study the subharmonic perturbation and to explore the subharmonic perturbation stability.

\begin{define}\label{defin:P-sub} 
	For the two-phase solutions $u(\xi)$ with period $2T$, 
	if the perturbation of this solution is $2PT$ periodic function $P\in \mathbb{N}^{+}$, 
	it is called a P-subharmonic perturbation of solution $u(\xi)$. 
	If the period of perturbations is the same as the solution $u(\xi)$, we call it co-periodic perturbation.
\end{define}

Combining \Cref{define:spect-stable} with \Cref{defin:P-sub}, we obtain the definition of subharmonic perturbations.

\begin{define}\label{define:spect-P}
	If perturbations $W(\xi;\Omega)$ are $2PT$ periodic functions and $\Omega\in \ii \mathbb{R}$, 
	i.e., the spectrum $\sigma_P(\mathcal{JL})$ satisfies 
	\begin{equation}\nonumber
		\sigma_P(\mathcal{JL}):=\{\Omega\in \mathbb{C}| W(\xi;\Omega)\in C^0_b(\mathbb{R})\cap L^2_{per}([-PT,PT]) \} \subset \ii \mathbb{R},
	\end{equation} 
	then the solution $u(\xi)$ is P-subharmonic perturbation spectrally stable.
\end{define}
 
In the following cases, we are going to study the period of the function $W(\xi;\Omega)$, i.e., consider the parameter $\hat{\eta}$.
Based on \Cref{defin:P-sub}, for the $P$-subharmonic perturbation problems,
$\hat{\eta}$ can be defined in any interval of length $2\pi/(2T)$, i.e.,
\begin{equation}\label{eq:eta}
		\hat{\eta}=\frac{2m\pi}{2PT}+\frac{(2n+1)\pi}{2T} ,\qquad m=-P,-P+1,\cdots,-1, \quad \text{and} \quad n\in \mathbb{Z}.
\end{equation}
Together with equations \eqref{eq:M} and \eqref{eq:eta}, the $P$-subharmonic perturbation problems must satisfy $M(\lambda)=2 n\pi /P$, $n\in \mathbb{Z}$. 
The spectral stability with respect to the subharmonic perturbations of period $2PT$ is that all eigenvalues $\Omega$ of $2PT$ periodic function $W(\xi;\Omega)$ satisfying \eqref{eq:spectral} are imaginary, i.e., $\Omega(\lambda)\in \ii \mathbb{R}$. 
We set 
\begin{equation}\label{eq:Q_P}
		Q_{sub}:=\left\{ z\in Q|M(\lambda)=\frac{2\pi}{P}m+(2n+1)\pi, \quad m=-P,-P+1,\cdots,-1, \quad n \in \mathbb{Z}\right\}.
\end{equation}
When for any $\lambda\in Q_{sub}$, the value $\Omega(\lambda)\in \ii \mathbb{R}$ and then the corresponding solutions are spectrally stable with respect to perturbations of period $2PT$. 

\begin{theorem}\label{theorem:spectral-complex-P}
	Under the \ref{case2}, the two-phase solutions of the mKdV equation are $P$-subharmonic spectrally stable. 
	The parameter $P$ is dependent on the modulus $k_2^{(2)}$ as follows:
	\begin{itemize}
		\item When $1<2 E^{(2)}_2/  K^{(2)}_2$, solutions are $P$-subharmonic spectrally stable with $P\le 4\pi/(\pi+M(\lambda_0))$ with $M(\lambda_0)$ provided in equation \eqref{eq:M-lambda_0};
		\item When $1=2 E^{(2)}_2/  K^{(2)}_2$, solutions are $2$-subharmonic spectrally stable;
		\item When $1>2 E^{(2)}_2/  K^{(2)}_2$, solutions are co-subharmonic spectrally stable.
	\end{itemize}
	The function $M(\lambda)$ and parameter $\lambda_0$ are defined in equations \eqref{eq:M} and \eqref{eq:define-lambda-0}, respectively.
\end{theorem}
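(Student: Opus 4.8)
\textbf{Proof proposal for Theorem~\ref{theorem:spectral-complex-P}.}
The plan is to parametrize the subharmonic spectrum via the level sets of the function $M(\lambda)$ restricted to the set $Q^{(2)}$, and to show that every $2PT$-periodic bounded eigenfunction has purely imaginary eigenvalue precisely when the relevant level set of $M$ stays inside $Q^{(2)}_R\cup Q^{(2)}_I\cup Q^{(2)}_P$, the portion of $Q^{(2)}$ already known (from \Cref{prop:Q-M-cn}) to carry imaginary $\Omega(\lambda)$. First I would recall from the discussion preceding the theorem that the $P$-subharmonic perturbations correspond exactly to the spectral parameters $\lambda\in Q_{sub}$, i.e. those $\lambda\in Q^{(2)}$ with $M(\lambda)=\tfrac{2\pi}{P}m+(2n+1)\pi$ for $m=-P,\dots,-1$ and $n\in\mathbb Z$. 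Since $M(\lambda)$ is real-valued on $Q^{(2)}$ and, on the "bad" arcs $Q^{(2)}\setminus(Q^{(2)}_R\cup Q^{(2)}_I\cup Q^{(2)}_P)$, the eigenvalue $\Omega(\lambda)$ leaves $\ii\mathbb R$, the task reduces to the purely one-dimensional question: for which $P$ does the arithmetic progression $\{\tfrac{2\pi}{P}m+(2n+1)\pi\}$ avoid the image $M\bigl(Q^{(2)}\setminus(Q^{(2)}_R\cup Q^{(2)}_I\cup Q^{(2)}_P)\bigr)$?

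The key computational step is therefore to pin down the range of $M$ on each component of $Q^{(2)}$. I would use the explicit representation $\mathcal I(\lambda)=2\Omega_2(P)+2\omega_2$ from \eqref{eq:M}, together with the reduction of the Abelian integral $\Omega_2$ to elliptic integrals established in \Cref{section:exact-solution}, to express $M$ on the real axis, on the imaginary segment $Q^{(2)}_I$, and at the branch points $Q^{(2)}_P$ in terms of the complete and incomplete integrals $K^{(2)}_2$, $E^{(2)}_2$ and the parameter $\lambda_0$ of \eqref{eq:define-lambda-0}. The monotonicity of $M$ along each arc (which follows from the sign-definiteness of the integrand $\dd\Omega_2$ once the branch of $y$ is fixed) then gives $M$ as a bijection onto an interval; the endpoints of the "unstable" arc, evaluated at the branch points and at $\lambda_0$, produce the threshold value $M(\lambda_0)$ of \eqref{eq:M-lambda_0}. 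Comparing the quantity $2E^{(2)}_2/K^{(2)}_2$ with $1$ controls whether the critical value $M(\lambda_0)$ exceeds, equals, or falls below $\pi$, and this is exactly the trichotomy in the statement: in the first regime the progression with spacing $2\pi/P$ first hits the forbidden interval only once $P$ is large enough, yielding the bound $P\le 4\pi/(\pi+M(\lambda_0))$; in the borderline case only $P=2$ survives; and in the last regime already the co-periodic ($P=1$) perturbations are the only stable ones.

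Concretely, the steps in order are: (i) restate $Q_{sub}$ and the reduction "$\Omega(\lambda)\in\ii\mathbb R$ for all $2PT$-perturbations $\iff Q_{sub}\subset Q^{(2)}_R\cup Q^{(2)}_I\cup Q^{(2)}_P$", invoking \Cref{prop:Q-M-cn} and \Cref{theorem:spectral-complex}; (ii) compute $M$ on $Q^{(2)}_R$, $Q^{(2)}_I$, $Q^{(2)}_P$ and establish monotonicity, identifying the image intervals and the gap around the branch-point values where $Q^{(2)}$ is strictly larger than the union; (iii) locate $\lambda_0$ as the point where $M$ attains the edge of the forbidden window and derive the explicit value $M(\lambda_0)$; (iv) translate "the progression $\tfrac{2\pi}{P}m+(2n+1)\pi$ misses the forbidden window" into the inequality on $P$, splitting into the three cases according to the sign of $1-2E^{(2)}_2/K^{(2)}_2$; (v) check the borderline and degenerate subcases ($P=2$, co-periodic) directly.

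The main obstacle I anticipate is step (ii)–(iii): getting a clean, provably monotone expression for $M(\lambda)=2\Omega_2(P)+2\omega_2$ along the imaginary segment and at the complex branch points, because this requires carefully tracking the sheet of $y$ and the integration path (Step~3 of the program in the introduction) through the chain of conformal and linear-fractional substitutions \eqref{eq:fs-2}–\eqref{eq:fs-3}, and then recognizing the resulting combination of incomplete elliptic integrals as the closed form $M(\lambda_0)$ in \eqref{eq:M-lambda_0}. Once $M$ is under explicit control on all of $Q^{(2)}$, the remaining number-theoretic bookkeeping about which $P$ make the arithmetic progression avoid a fixed sub-interval of $[0,2\pi)$ is routine.
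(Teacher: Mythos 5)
Your proposal follows essentially the same route as the paper: reduce $P$-subharmonic stability (via \Cref{prop:Q-M-cn} and \Cref{theorem:spectral-complex}) to the condition that $Q_{sub}$ avoids the complex arcs of $Q^{(2)}$, then exploit monotonicity of $M(\lambda)$ along the level curve $\Im(\mathcal{I}^{(2)}(\lambda))=0$ together with the values $M(\lambda_{1,3})=\pi \bmod 2\pi$, $M(\lambda_2)$, and $M(\lambda_0)$ to obtain the trichotomy in $2E^{(2)}_2/K^{(2)}_2$. The only minor difference is that the paper establishes the needed monotonicity through the directional-derivative identity of \Cref{prop:I-M-increase} (and organizes the argument along the five geometric cases of \Cref{prop: zr0 zi0}) rather than through sign-definiteness of $\dd\Omega_2$, but this does not change the substance of the argument.
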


Therefore, we obtain that the condition of the maximize parameter $P$ is dependent on the modulus $k_2^{(2)}$. 
In \Cref{fig:P}, we exhibit the correspondence between the maximize parameter $P$ and the modulus $k_2^{(2)}$.
When solutions are subharmonic spectrally stable, we further study the orbital stability of the above two-phase solutions in a suitable function space.
For $2PT$-periodic functions $u(\xi):[-PT,PT]\mapsto \mathbb{C}$, we define the space $H^{k}_{per}([-PT,PT])$, as
\begin{equation}\nonumber
	H^{k}_{per}([-PT,PT]):=\left\{u\left|\left(\sum_{j=0}^{k}\int_{-PT}^{PT}\left|\partial_{\xi}^ju(\xi)\right|^2\dd \xi\right)^{1/2}< \infty\right.\right\}.
\end{equation}

\begin{define}\label{define:orbital stable}
	The genus-two solution $u(\xi)$ of the mKdV equation is orbitally stable with respect to perturbations in a Hilbert space $X$ if for any solution $v(\xi,\eta)$ of the mKdV equation and any given $\epsilon>0$, there exists $\delta>0$ satisfying
	\begin{equation} \nonumber
		\| v(\xi,0)- \mathcal{T}(\gamma_0)u(\xi) \|_{X}\le \delta,
	\end{equation}
	such that for any $ \eta \in \mathbb{R} $,
	\begin{equation} \nonumber
		\inf_{\gamma\in \mathbb{R}} \| v(\xi,\eta)-\mathcal{T}(\gamma)u(\xi,\eta ) \|_{X} \le \epsilon, 
	\end{equation} 
	where $\|\cdot \|$ denotes the norm obtained through $\left\langle\cdot, \cdot\right\rangle $ in the space $X$ and the operator $\mathcal{T}(\gamma)$ is defined here as
	\begin{equation}\label{eq:defin-operator}
		\mathcal{T}(\gamma)u(\xi)\equiv u(\xi+\gamma).
	\end{equation}
\end{define}

As is well known, the mKdV equation possesses an infinite number of conserved quantities \eqref{eq:H0}, where the period of the function $u$ is $2PT$.
Define the $n$-th mKdV equation with time variables $\eta_n$ under the moving coordinate form $(\xi,\eta_n)$ in equation \eqref{eq:H-mKdV}.

For any conserved quantities $\mathcal{H}_i$, $i=1,3,5,\cdots$ in the mKdV hierarchy (equation \eqref{eq:H0}), the corresponding operator $\mathcal{L}_i$ and Krein signature $\mathcal{K}_i(\lambda)$ are defined in \Cref{defin:Krein}. 
Based on the stationary solution $u$, we linearize equations $u_{\eta_i}=\mathcal{J}\hat{\mathcal{H}}'_i(u),i=1,2,\cdots,n$ about $u$ with 
\begin{equation}\nonumber
	v(\xi,\boldsymbol{\eta})=u(\xi,\boldsymbol{\eta})+\epsilon w(\xi,\boldsymbol{\eta})+\mathcal{O}(\epsilon^2), \qquad \boldsymbol{\eta}=\left(\eta_1,\eta_2,\cdots,\eta_n\right),
\end{equation}
and result in the linear system: $w_{\eta_i}=\mathcal{JL}_{i}w, i=1,2,\cdots,n$, where $\mathcal{L}_i$ is the variational derivative $\hat{\mathcal{H}}''_{i}$, $i=1,2,\cdots$, evaluated at the stationary solution. Then, we obtain
\begin{equation} \label{eq:spectral-Omega}
	\Omega_nW=\mathcal{JL}_nW,\qquad \Omega_n^*W^*=\mathcal{JL}_nW^*, 
\end{equation} 
where $W=W(\xi;\Omega_n)$.
\begin{define}\label{defin:Krein}
	Krein signature is the sign of 
	\begin{equation}\label{eq:define-krein}
		\mathcal{K}_n(\lambda):=\left\langle W_n,\mathcal{L}_n W_n \right\rangle_{L^2}, \qquad \left\langle W_n,\mathcal{L}_n W_n \right\rangle_{L^2}=\int_{-PT}^{PT}W_n^*\mathcal{L}_n W_n \dd \xi,
	\end{equation}
	where $W_n=W(\xi;\Omega_n)$ is an eigenfunction of the $n$-th mKdV equation \eqref{eq:spectral-Omega}. 
	The inner product is defined in the $L^2([-PT,PT])$ inner product space.
\end{define}
Under the two different cases, the Krein signatures $\mathcal{K}_{1,2}(\lambda)$ satisfy the following Proposition.
\begin{prop}\label{prop:kerin}
	When $2 E^{(2)}_2/  K^{(2)}_2\le 1$, for any $\lambda \in  Q^{(2)} _R\cup  Q^{(2)} _I$, $\mathcal{K}_1(\lambda)\ge 0$; when $2 E^{(2)}_2/  K^{(2)}_2>1$, not all $\lambda \in  Q^{(2)} _R\cup  Q^{(2)} _I$ such that $\mathcal{K}_1(\lambda)\ge 0$.
	For any $k\in (0,1)$, $\lambda\in Q^{(2)}_I\cup Q^{(2)}_R$, $\mathcal{K}_2(\lambda)\ge 0$. If and only if $\lambda=0,\pm \lambda_0$, the equality holds.
\end{prop}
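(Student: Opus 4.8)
The plan is to collapse both Krein integrals to a single explicitly computable function on $\mathcal{R}_2$ and then determine its sign on $Q^{(2)}_R$ and $Q^{(2)}_I$. The structural input is that the squared eigenfunction $W(\xi;\Omega)$ constructed in \Cref{section:spectral-stability} is a simultaneous eigenfunction of the whole linearized hierarchy: for a fixed $\lambda$ one has $\mathcal{JL}_n W = \Omega_n(\lambda)W$, $n=1,2$, where $\Omega_n(\lambda)$ is an explicit function on $\mathcal{R}_2$ built from the Abelian integrals $\Omega_{2,3}(P)$ of the $\xi$- and $\eta_n$-flows (see \eqref{eq:define-int-w2}--\eqref{eq:define-int-w3}), and $W$ itself does not depend on $n$. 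Since $\mathcal{J}=\partial_\xi$ and integration of $\partial_\xi(\mathcal{L}_nW)=\Omega_nW$ over the period forces $\int_{-PT}^{PT}W\,\dd\xi=0$ whenever $\Omega_n\neq0$, we get $\mathcal{L}_nW=\Omega_n(\lambda)\,\partial_\xi^{-1}W+\mathrm{const}$ with the constant orthogonal to $W$; hence $\mathcal{K}_n(\lambda)=\Omega_n(\lambda)\,\mathcal{R}(\lambda)$, where $\mathcal{R}(\lambda):=\langle W,\partial_\xi^{-1}W\rangle_{L^2([-PT,PT])}$, and in particular $\mathcal{K}_2(\lambda)=\big(\Omega_2(\lambda)/\Omega_1(\lambda)\big)\mathcal{K}_1(\lambda)$. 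Writing $W=\ee^{\ii\hat\eta\xi}\hat W$ with $\hat W$ of period $2T$ and $\hat\eta=\mathcal{I}(\lambda)\in\mathbb{R}$ for $\lambda\in Q$, one integration by parts shows the boundary term carries the unimodular Floquet multiplier and cancels, so $\mathcal{R}(\lambda)\in\ii\mathbb{R}$; combined with $\Omega_n(\lambda)\in\ii\mathbb{R}$ on $Q^{(2)}_R\cup Q^{(2)}_I$ (\Cref{prop:Q-M-cn}), this already yields $\mathcal{K}_n(\lambda)\in\mathbb{R}$.

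Next I would make $\mathcal{R}(\lambda)$ and $\Omega_{1,2}(\lambda)$ fully explicit. Substituting the theta representation \eqref{eq:define-W} of $W$, reducing the genus-two theta functions to products of Jacobi theta functions (\Cref{prop:solutions-equivalent}) and invoking the elliptic-integral identities of \Cref{prop:case-1-P}--\ref{prop:case-1-C}, the integral $\mathcal{R}(\lambda)$ collapses to a combination of complete and incomplete elliptic integrals of modulus $k^{(2)}_2$ evaluated at $\lambda$, times elementary functions of the branch points; likewise $\Omega_{1,2}(\lambda)$ come out in closed form. The outcome is $\mathcal{K}_2(\lambda)=(\text{manifestly positive prefactor})\cdot\big(1-g(\lambda)\big)$ on $Q^{(2)}_R\cup Q^{(2)}_I$, with $g$ an explicit elliptic expression satisfying $g\le1$, and $g(\lambda)=1$ exactly at $\lambda=0$ (the translational eigenfunction) and $\lambda=\pm\lambda_0$ (the points of \eqref{eq:define-lambda-0}, where the relevant Abelian differential, equivalently $\hat\eta'(\lambda)$, degenerates); a monotonicity analysis of the incomplete integral entering $g$ along each of the two pieces of $Q^{(2)}_R\cup Q^{(2)}_I$, together with its endpoint values at $\lambda=0,\pm\lambda_2,\pm\infty$, then gives $\mathcal{K}_2(\lambda)\ge0$ with the stated equality cases. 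For $\mathcal{K}_1(\lambda)=\big(\Omega_1(\lambda)/\Omega_2(\lambda)\big)\mathcal{K}_2(\lambda)$, the factor $\Omega_1/\Omega_2$ is an explicit function on $\mathcal{R}_2$ whose sign on $Q^{(2)}_R\cup Q^{(2)}_I$ is governed by the position of its zeros relative to $\pm\lambda_0$ and $\pm\lambda_2$; one checks that this configuration, hence whether $\Omega_1/\Omega_2$ preserves the sign of $\mathcal{K}_2$ on the whole spectral set, switches precisely when $2E^{(2)}_2/K^{(2)}_2$ crosses $1$, so that $\mathcal{K}_1(\lambda)\ge0$ everywhere when $2E^{(2)}_2/K^{(2)}_2\le1$ and an explicit sub-interval has $\mathcal{K}_1(\lambda)<0$ when $2E^{(2)}_2/K^{(2)}_2>1$.

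The main obstacle is the sign analysis of these last steps. Neither $1-g(\lambda)$ nor the factor $\Omega_1/\Omega_2$ is sign-definite term by term, and the positivity of $\mathcal{K}_2$ hinges on monotonicity identities for incomplete elliptic integrals (differentiation under the integral sign and comparison with $K^{(2)}_2$ and $E^{(2)}_2$); it is precisely the competition between $E^{(2)}_2$ and $\tfrac{1}{2}K^{(2)}_2$ in those comparisons that makes the threshold $2E^{(2)}_2/K^{(2)}_2=1$ appear in the $\mathcal{K}_1$ statement. A secondary but genuine point is the limiting argument at $\lambda=0$ and $\lambda=\pm\lambda_0$, where $\Omega_n$ or $\mathcal{R}$ vanishes and the unnormalized $W$ degenerates: one must renormalize $W$ (or work with $W_\xi$, which stays bounded) to verify that $\Omega_n(\lambda)\mathcal{R}(\lambda)$ genuinely vanishes there, so that the ``if and only if'' in the equality statement is exact.
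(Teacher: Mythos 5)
Your opening reduction is the same as the paper's: from $\mathcal{JL}_nW=\Omega_nW$ one gets $\mathcal{K}_n(\lambda)=(\Omega_n/\Omega_1)\,\mathcal{K}_1(\lambda)$, which the paper also uses. But after that point there is a genuine gap, and it concerns the heart of the statement. The positivity of $\mathcal{K}_2$ is \emph{not} a parameter-free analytic fact about elliptic integrals that a "monotonicity analysis of the incomplete integral entering $g$" could deliver: $\mathcal{L}_2=\hat{\mathcal{H}}_2''$ contains the free coefficients $c_{5,3},c_{5,1}$ from \eqref{eq:H-mKdV}, so $\mathcal{K}_2$ itself depends on a choice, and for a generic choice the claim $\mathcal{K}_2\ge0$ is false. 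The paper's mechanism, which your proposal never identifies, is the hierarchy identity $\Omega_2=(v+4\lambda^2+c_{5,3})\Omega_1$ — a \emph{polynomial} relation between the two eigenvalue functions, proved via \Cref{lemma:Omega-n-m} together with the stationary zero-curvature identities \eqref{eq:stationary-2}--\eqref{eq:stationary-6} — combined with the explicit closed form of $\mathcal{K}_1$ in \Cref{lemma:Krein} (itself obtained pointwise from \Cref{lemma:Phi+-} plus a one-line period integral of $u^2$, not from a theta-function reduction of $\langle W,\partial_\xi^{-1}W\rangle$). One then \emph{chooses} $c_{5,3}=4(A-\lambda_2^2)-v-8AE^{(2)}_2/K^{(2)}_2$ so that the polynomial factor $v+4\lambda^2+c_{5,3}$ is proportional to the same linear factor $(\lambda_2^2-A-\lambda^2)+2AE^{(2)}_2/K^{(2)}_2$ appearing in \eqref{eq:Krein-value}; the product becomes a negative multiple of $\Omega_1^2$ times a perfect square, see \eqref{eq:K_2(z)}, and $\mathcal{K}_2\ge0$ follows for every modulus, with equality exactly where $\Omega_1$ or the linear factor vanishes. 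Without recognizing both the tunable Hamiltonian coefficient and the polynomial relation $\Omega_2=p(\lambda)\Omega_1$, your route to $\mathcal{K}_2\ge0$ does not close.

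The $\mathcal{K}_1$ part of your argument is also misdirected, and partly circular. You propose to read off the sign of $\mathcal{K}_1$ from $\mathcal{K}_2$ via a "sign switch of $\Omega_1/\Omega_2$ at $2E^{(2)}_2/K^{(2)}_2=1$", but $\Omega_2/\Omega_1$ is the polynomial above, whose sign structure is governed by the chosen $c_{5,3}$, not by that threshold; and since you have not independently evaluated either $\mathcal{K}_1$ or $\mathcal{R}(\lambda)$, deducing one from the other gives nothing. In the paper the threshold arises directly from the explicit formula \eqref{eq:Krein-value}: the factor vanishes at $\pm\lambda_0$ with $\lambda_0^2=2AE^{(2)}_2/K^{(2)}_2-A+\lambda_2^2$ as in \eqref{eq:define-lambda-0}, and whether $\lambda_0$ lies on $Q^{(2)}_R$ (so the factor changes sign on the spectral bands) or on $Q^{(2)}_I$ beyond the band is exactly the dichotomy $2E^{(2)}_2/K^{(2)}_2>1$ versus $\le1$; the case-by-case sign check in the proof of the proposition is then elementary. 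Your secondary remark about renormalizing $W$ at the degenerate points is reasonable but peripheral; the essential missing ingredients are the explicit evaluation of $\mathcal{K}_1$ and the tunable-coefficient/perfect-square argument for $\mathcal{K}_2$.
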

With the aid of methods in \cite{GrillakisSS-87,GrillakisSS-90,KapitulaP-13}, we provide an orbital stability analysis and come to the following theorems. All proofs are provided in \Cref{sec:orbital-stability}.

\begin{theorem}\label{theorem:orbital-cn}
	If the two-phase solutions of the mKdV equation constructed by branch points satisfying \ref{case2} are spectrally stable with respect to perturbations of period $2PT,P\in \mathbb{Z}_{+}$ and $P< \frac{4\pi}{\pi+ M(\lambda_0)}$, then they are orbitally stable in the space $H^2_{per}([-PT,PT])$.
\end{theorem}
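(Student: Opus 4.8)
The plan is to apply the general orbital stability framework of Grillakis--Shatah--Strauss \cite{GrillakisSS-87,GrillakisSS-90} (in the form adapted to periodic problems for integrable equations, cf.\ \cite{KapitulaP-13,DeconinckyU-20}), using the higher-order conserved quantities of the mKdV hierarchy to build a Lyapunov functional. Since the linearized operator $\mathcal{L}=\mathcal{L}_1$ associated with $\mathcal{H}_3$ typically has more than one negative direction on $H^2_{per}([-PT,PT])$ when $P>1$, one cannot use $\mathcal{H}_3$ alone; instead I would look for a suitable real combination $\mathcal{E}:=\hat{\mathcal{H}}_2+\sum c_i \mathcal{H}_{2i+1}$ of conserved quantities (the $\eta_2$-flow Hamiltonian of the fifth-order mKdV equation, corrected by lower-order ones) whose critical point is the two-phase solution $u(\xi)$ and whose Hessian $\mathcal{L}_2$ at $u$ is nonnegative except for a controlled finite-dimensional kernel/negative subspace. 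The key input here is \Cref{prop:kerin}: for $\lambda\in Q^{(2)}_R\cup Q^{(2)}_I$ one has $\mathcal{K}_2(\lambda)=\langle W_2,\mathcal{L}_2 W_2\rangle\ge 0$, with equality only at $\lambda=0,\pm\lambda_0$, which says that $\mathcal{L}_2$ is nonnegative on the relevant spectral subspace and its nullspace is finite-dimensional and explicitly generated by the translational mode $u_\xi$ together with the modes coming from $\pm\lambda_0$.

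The steps, in order, would be: (i) verify that the two-phase solution $u(\xi)$ is a critical point of the constrained functional, i.e.\ there exist Lagrange multipliers such that $\hat{\mathcal{H}}_2'(u)+\sum c_i\mathcal{H}_{2i+1}'(u)=0$ — this follows from the stationary structure encoded in \eqref{eq:H-mKdV} and \Cref{prop:L-matrix}, since every genus-two traveling wave is a stationary point of some flow in the hierarchy; (ii) identify the second variation $\mathcal{L}_2$ at $u$ and use the squared-eigenfunction connection together with \Cref{prop:kerin} to count exactly its negative eigenvalues $n(\mathcal{L}_2)$ and describe $\ker(\mathcal{L}_2)$ on $H^2_{per}([-PT,PT])$; (iii) compute the dimension of the kernel contributed by the symmetry group (just translation here, so a one-parameter orbit) and check the nondegeneracy/transversality condition, i.e.\ that the remaining kernel directions are not tangent to the orbit and are "absorbed" by the constraints $\langle \mathcal{H}_{2i+1}'(u),\cdot\rangle$; (iv) invoke the abstract theorem: if $n(\mathcal{L}_2)$ equals the number of independent constraints plus (roughly) the contribution predicted by the Hamiltonian-Krein index count, and if $\mathcal{JL}$ has no unstable spectrum on $L^2_{per}([-PT,PT])$ (which is exactly the hypothesis "spectrally stable with respect to perturbations of period $2PT$" carried over from \Cref{theorem:spectral-complex-P}), then $u$ is a constrained local minimizer of $\mathcal{E}$ modulo the orbit, and hence orbitally stable in $H^2_{per}$; (v) finally transfer the minimization statement into the $\epsilon$--$\delta$ statement of \Cref{define:orbital stable} via a standard coercivity/convexity argument, using conservation of $\mathcal{E}$ along the mKdV flow and continuity of $\mathcal{E}$ on $H^2_{per}$.

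The restriction $P<\tfrac{4\pi}{\pi+M(\lambda_0)}$ enters precisely in step (ii)/(iv): it is the condition under which the number of subharmonic Floquet modes $\lambda\in Q^{(2)}_R\cup Q^{(2)}_I$ that are "active" for period $2PT$ stays small enough that the negative index $n(\mathcal{L}_2)$ does not exceed the number of available constraints, so the Hamiltonian--Krein index is nonnegative and no instability is forced. I would make this precise by relating $M(\lambda_0)$ (the value of the Abelian-integral quantity at the special point $\lambda_0$ from \eqref{eq:define-lambda-0}) to how many of the quantized Floquet exponents \eqref{eq:eta} fall into the spectral bands before the threshold $\lambda_0$ where $\mathcal{K}_2$ changes or the Krein signature becomes indefinite for $\mathcal{K}_1$.

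The hard part will be step (ii)–(iii): getting an exact count of $n(\mathcal{L}_2)$ and a complete description of $\ker(\mathcal{L}_2)$ on the subharmonic space, because $\mathcal{L}_2$ is a fourth-order non-self-adjoint-flow operator whose spectral data must be read off from the genus-two Riemann theta representation and the squared eigenfunctions; one must carefully track which eigenfunctions $W_2(\xi;\Omega_2)$ are genuinely $2PT$-periodic (the quantization \eqref{eq:eta}) and at which the Krein form \eqref{eq:define-krein} degenerates, and then confirm via the equality case of \Cref{prop:kerin} that the only degeneracies on $Q^{(2)}_R\cup Q^{(2)}_I$ are at $\lambda=0,\pm\lambda_0$, with $\lambda=0$ giving the translational kernel and $\pm\lambda_0$ giving at most a two-dimensional extra kernel that is neutralized by one conserved-quantity constraint. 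Once that bookkeeping is done cleanly, the remaining invocation of \cite{GrillakisSS-87,GrillakisSS-90,KapitulaP-13} is essentially mechanical.
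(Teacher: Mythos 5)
Your overall strategy is the same as the paper's: the Lyapunov functional is built from the higher Hamiltonian $\hat{\mathcal{H}}_2$ of the hierarchy, \Cref{prop:kerin} is the decisive input, and the conclusion is reached by a GSS-type constrained coercivity argument in $H^2_{per}$ combined with conservation and continuity of the functional. The packaging differs, though. The paper never runs a Hamiltonian--Krein index count: it fixes the concrete coefficient $c_{5,3}=4(A-\lambda_2^2)-v-8AE^{(2)}_2/K^{(2)}_2$, proves $\Omega_2=(v+4\lambda^2+c_{5,3})\Omega_1$ (via \Cref{lemma:Omega-n-m} and the stationary identities \eqref{eq:stationary-5}--\eqref{eq:stationary-6}), so that $\mathcal{K}_2(\lambda)$ in \Cref{prop:kerin} becomes a manifestly nonnegative multiple of $-\Omega_1^2$ times a perfect square; there is then no negative direction at all among the subharmonic squared eigenfunctions, and the coercivity $\left\langle \mathcal{L}_2 W,W\right\rangle_{L^2}\ge\alpha_0\|W\|^2_{H^2}$ away from $\partial_\xi u$ is recorded in \Cref{lemma:alpha-0}. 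The nonlinear part is then done by hand: the decomposition $h=h_1+\hat{c}u$ on the admissible set, H\"older/Minkowski estimates in which the genus-two novelty $\hat{c}_1\neq 0$ (hence $\mathcal{L}_2(u)u=\hat{c}_3\neq0$) has to be absorbed, the lower bound \eqref{eq:H_2-3}, \Cref{lemma:H<u}, global well-posedness, and conservation of $\hat{\mathcal{H}}_2$. Your plan would arrive at the same place, but "find a suitable combination of conserved quantities" is precisely the nontrivial computation the paper carries out explicitly; without exhibiting $c_{5,3}$ and the relation $\Omega_2=(v+4\lambda^2+c_{5,3})\Omega_1$, steps (ii)--(iv) of your outline remain unexecuted.

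One point in your proposal is genuinely off and would misdirect the hard bookkeeping you flag: the role of $P<\frac{4\pi}{\pi+M(\lambda_0)}$. It is not that the number of active Floquet modes keeps $n(\mathcal{L}_2)$ below the number of constraints --- by your own appeal to \Cref{prop:kerin} there are no negative Krein signatures on $Q^{(2)}_R\cup Q^{(2)}_I$ once the correct $c_{5,3}$ is chosen, so there is no negative index to budget against constraints. The danger is degeneracy: $\mathcal{K}_2$ vanishes at $\pm\lambda_0$ (where $\mathcal{K}_1$ changes sign), and if the corresponding squared eigenfunctions were admissible the uniform constant $\alpha_0$ would be lost. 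The hypothesis $P<\frac{4\pi}{\pi+M(\lambda_0)}$ is exactly the condition $M(\lambda_0)<-\pi(P-2)/P+2\pi$ ensuring that $M(\lambda_0)$ is not attained by the subharmonic quantization \eqref{eq:Q_P}, so $W(\xi;\Omega(\pm\lambda_0))$ is not $2PT$-periodic and the only kernel direction in the periodic space is $\partial_\xi u$ (\Cref{lemma:alpha-0}, case (a)); the borderline equality case (case (b)) is deliberately excluded by the strict inequality in \Cref{theorem:orbital-cn}. Your alternative of keeping the $\pm\lambda_0$ modes and "neutralizing" a two-dimensional extra kernel by a conserved-quantity constraint is neither what happens in the paper nor justified in your sketch; if you pursue the argument as planned, you should replace that step by the exclusion mechanism above.
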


\begin{theorem}\label{theorem:orbital-dn}
	 Under the \ref{case1}, the two-phase solutions $u(\xi)$ are orbitally stable in the space $H^2_{per}([-PT,PT])$, $P\in \mathbb{Z}_{+}$.
\end{theorem}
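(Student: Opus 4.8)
The plan is to follow the Grillakis–Shatah–Strauss framework as adapted in \cite{KapitulaP-13} and in the companion work \cite{LingS-23-mKdV-stability}, exploiting the fact that in \ref{case1} the two-phase solution $u(\xi)$ is a critical point of the conserved functional $\hat{\mathcal{H}}$ obtained from a suitable linear combination of the higher-order Hamiltonians in \eqref{eq:H0}. First I would identify the correct Lyapunov functional: since the genus-two stationary equation is $\mathcal{J}\hat{\mathcal{H}}'_n(u)=0$ with $\hat{\mathcal{H}}_n=\mathcal{H}_{2n+1}+\sum_{i=0}^{n-1}c_{n,i}\mathcal{H}_{2i+1}$, one chooses the constants $c_{n,i}$ so that $u$ is a genuine constrained critical point, and the natural candidate here is built from $\mathcal{H}_1,\mathcal{H}_3,\mathcal{H}_5$ (so that the Hessian is a fourth-order operator acting on $H^2_{per}$). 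The linearization of this functional at $u$ is the self-adjoint operator $\mathcal{L}_2$ (the variational derivative $\hat{\mathcal{H}}_2''$) appearing in \eqref{eq:spectral-Omega}, whose spectral properties are what must be controlled.

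The key steps, in order, are: (i) verify that $\mathcal{JL}$ (equivalently $\mathcal{JL}_1$) has no spectrum off the imaginary axis — this is exactly \Cref{theorem:spectral-image}, so spectral stability under $2PT$-periodic perturbations is already in hand; (ii) count the negative eigenvalues and the kernel of the Hessian operator $\mathcal{L}_2$ restricted to the space of $2PT$-periodic functions, using the Krein-signature information from \Cref{prop:kerin}, which states that $\mathcal{K}_2(\lambda)\ge 0$ for all $\lambda\in Q^{(2)}_I\cup Q^{(2)}_R$ with equality only at $\lambda=0,\pm\lambda_0$ (the analogous positivity statement holds in \ref{case1}, where the spectrum is entirely stable, so every eigenvalue carries a definite, non-negative signature); (iii) translate this non-negativity of the Krein signature into the statement $n(\mathcal{L}_2)=n(\mathcal{JL}_2|_{\text{restricted}})$ via the index formula of \cite{KapitulaKS-04,HaragusK-08}, i.e. the number of negative directions of the constrained Hessian is matched exactly by the (absence of) unstable/positive-signature pairs; (iv) check the nondegeneracy/transversality condition — that the derivative of the conserved quantities with respect to the parameters (the branch points $\lambda_{1,2,3}$ and the translation parameter) has full rank, so that the kernel of $\mathcal{L}_2$ is spanned precisely by $u_\xi$ and the generators of the parameter family; (v) conclude, via the abstract theorem of \cite{GrillakisSS-87,GrillakisSS-90} in the form used in \cite{KapitulaP-13}, that $u$ is a constrained local minimizer of $\hat{\mathcal{H}}$ modulo the translation orbit, hence orbitally stable in $H^2_{per}([-PT,PT])$.

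The main obstacle I expect is step (ii)–(iii): correctly counting the negative index of the Hessian $\mathcal{L}_2$ on $L^2_{per}([-PT,PT])$ and showing it is exactly balanced by the number of constraints, for \emph{all} $P\in\mathbb{Z}_+$. Because the Lax operator is non-self-adjoint, one cannot use Sturm–Liouville oscillation theory directly; instead the count must come from the Krein signature computed through the squared-eigenfunction / theta-function representation (\Cref{defin:Krein}, \Cref{prop:kerin}), and one must ensure that the parameter-dependence of the conserved quantities $\mathcal{H}_1,\mathcal{H}_3,\mathcal{H}_5$ produces a Jacobian of the right rank so that no ``extra'' negative direction survives the constraint. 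A secondary technical point is verifying that $H^2_{per}$ is indeed the sharp space in which $\hat{\mathcal{H}}$ is $C^2$ and the mKdV flow is well-posed with the relevant conserved quantities continuous; this is standard but must be stated. Once the index count and the transversality condition are established — and in \ref{case1} the fully stable spectrum makes the Krein-signature bookkeeping cleaner than in \ref{case2}, since there are no unstable pairs to subtract — the orbital stability conclusion follows by invoking the abstract GSS machinery without further complication, uniformly in $P$.
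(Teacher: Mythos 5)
Your skeleton (Lyapunov functional built from $\mathcal{H}_1,\mathcal{H}_3,\mathcal{H}_5$, Hessian $\mathcal{L}_2$, $H^2_{per}$ setting, Krein-signature control, then a GSS-type conclusion) matches the paper's strategy, but the central justification you give in step (ii) contains a genuine gap. You assert that because the spectrum in \ref{case1} is entirely imaginary (\Cref{theorem:spectral-image}), ``every eigenvalue carries a definite, non-negative signature.'' That implication is false, and the paper's computation shows it fails here: in Section 4.2 the Krein signature of the basic functional is computed explicitly as $\mathcal{K}_1(\lambda)=-8|\lambda^2|\,\Omega_1^2\,P\,K^{(1)}_2\bigl((\lambda^2+\lambda_1^2+\lambda_2^2)+(\lambda_3^2-\lambda_1^2)E^{(1)}_2/K^{(1)}_2\bigr)/\alpha$, and the paper states that $\mathcal{K}_1(\lambda)\ge 0$ does \emph{not} hold for all $\lambda\in Q^{(1)}$ with $\Omega_1\in\ii\mathbb{R}$, even though the solution is spectrally stable. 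Purely imaginary spectrum is compatible with negative-signature eigenvalues; that is exactly what the Krein signature detects, so your appeal to \Cref{prop:kerin} ``by analogy'' does not close the argument for \ref{case1} — the Case 1 signature must be computed anew (via the squared-eigenfunction representation, as in \Cref{lemma:Phi+-} and the Case 1 analogue of \Cref{lemma:Krein}).

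The missing idea is the specific choice of the coefficient $c_{5,3}$ in $\hat{\mathcal{H}}_2=\mathcal{H}_5+c_{5,3}\mathcal{H}_3+c_{5,1}\mathcal{H}_1$. Besides fixing $c_{5,1}$ through the stationarity relation \eqref{eq:c1c2} (your ``critical point'' condition), one must use the hierarchy identity $\Omega_2=(v+4\lambda^2+c_{5,3})\Omega_1$ — established through \Cref{lemma:Omega-n-m} and the stationary identities for the $\Psi_i$ — which gives $\mathcal{K}_2(\lambda)=(v+4\lambda^2+c_{5,3})\mathcal{K}_1(\lambda)$. The paper then chooses $c_{5,3}=2\bigl(\lambda_1^2+\lambda_2^2-\lambda_3^2+2(\lambda_3^2-\lambda_1^2)E_2^{(1)}/K_2^{(1)}\bigr)$ precisely so that the prefactor changes sign where $\mathcal{K}_1$ does, yielding $\mathcal{K}_2(\lambda)\ge 0$ on all of $Q^{(1)}$; only then does the coercivity estimate of \Cref{lemma:alpha-0} apply and the Lyapunov argument of \Cref{theorem:orbital-cn} (decomposition $h=h_1+\hat{c}u$, the bound $\langle\mathcal{L}_2 h_1,h_1\rangle\ge\alpha_0\|h_1\|_{H^2}^2$, and the resulting quadratic-minus-cubic energy estimate) carry over to give stability for every $P\in\mathbb{Z}_+$. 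Also note the paper does not run the index/transversality bookkeeping of your steps (iii)–(iv); it proceeds by the direct coercivity-plus-continuity Lyapunov argument, which avoids having to verify a full-rank Jacobian of the conserved quantities. Without the sign-corrected choice of $c_{5,3}$, your steps (ii)–(iii) would fail: the constrained Hessian of the functional you propose is not non-negative.
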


The main contributions of this work are summarized as follows:
\begin{itemize}
	\item We present a method to establish the relationship between genus-two hyperelliptic Riemann theta function solutions and Jacobi elliptic function solutions of the mKdV equation. This approach can be extended to the other AKNS system in the genus-two case, for instance, the two-phase solution of  sine-Gordon equation \cite{DubrovinN-1982} and double-periodic solutions of NLS equation \cite{ChenPW-19}. 
	\item Building on the general properties of hyperelliptic integrals and their recursive relations, we express the required hyperelliptic integrals in terms of the three canonical elliptic integrals. Subsequently, all hyperelliptic integrals in the Riemann theta function solution can be evaluated using Jacobi elliptic integrals.
	These formulations provide the essential foundation for the explicit evaluation of the Riemann theta function solutions of the mKdV equation and their associated Lax pairs. 
	They further lay the groundwork for analyzing the stability of genus-two periodic traveling wave solutions under two distinct scenarios.	
	\item We investigate the subharmonic spectral stability of genus-two traveling wave solutions and derive the necessary and sufficient conditions for their spectral stability under subharmonic perturbations with the aid of squared eigenfunction method. 
	Furthermore, we analyze the orbital stability of these genus-two periodic traveling wave solutions. To the best of our knowledge, this constitutes the first rigorous proof to the nonlinear stability for genus-two traveling solutions of mKdV equation.
\end{itemize}

\subsection{Outline for this work}

The organization of this work is as follows. 
In \Cref{sec:algebro-geo-solution}, by applying the algebro-geometric method, we construct the two-phase solutions of the mKdV equation together with the corresponding Lax pair solution expressed in terms of Riemann theta functions. 
Using the relations among Jacobi theta functions, Jacobi elliptic functions, and Riemann theta functions, we demonstrate that the resulting two-phase solutions are equivalent to their elliptic-function representations.
In \Cref{section:spectral-stability}, we investigate the linearized spectral problem of the focusing mKdV equation for these two-phase solutions via the squared eigenfunction method and analyze the spectral stability of periodic waves with respect to subharmonic perturbations.
In \Cref{sec:orbital-stability}, based on the conditions for subharmonic spectral stability, we further establish the orbital stability of periodic traveling waves in an appropriate Hilbert space.

\section{Two-phase solutions  of the mKdV equation}\label{sec:algebro-geo-solution}
In this section, we mainly deduce the two-phase solutions  of the focusing \ref{eq:mKdV} equation and its associated Lax pair, through the algebro-geometric approach. 
These solutions are expressed by Riemann theta functions with related branch points $\lambda_{1,2,3}$, which greatly facilitate the analysis of the spectral and orbital stability of genus two periodic traveling wave solutions  in the following sections.
Firstly, we introduce the integrable hierarchy, which is given in \Cref{sec:Preliminaries}.
Secondly, we provide the two-phase solutions in terms of Riemann theta functions in \Cref{sec:algebro-geometric-approach}.
Based on the general properties of hyperelliptic integrals and their recursive relationships, all hyperelliptic integrals listed in \Cref{sec:algebro-geometric-approach} are expressed by the canonical forms of the elliptic integrals with related branch points in \Cref{section:exact-solution}. 
Finally, we provide the explicit expressions and their Lax pair solutions of the genus-two periodic traveling wave solutions .

\subsection{Integrable hierarchy}\label{sec:Preliminaries}
We start from matrix function $\Phi \equiv \Phi(x,\mathbf{t};\lambda)$ defined in equation \eqref{eq:Phi-m} and taking the derivative of variables $x$ and $t_n$, it follows that
\begin{equation}\label{eq:Lax-ti-x-part}
	\begin{split}
		\Phi_{x}\Phi^{-1}
		\xlongequal{\eqref{eq:Phi-m}} &\ m_{x}m^{-1}-\ii \lambda  m\sigma_3m^{-1}
		= \mathbf{U}(\lambda;\mathbf{Q})+\mathcal{O}\big(\lambda^{-1}\big),\\
		\Phi_{t_n}\Phi^{-1}
		\xlongequal{\eqref{eq:Phi-m}} &\ m_{t_n}m^{-1}-\ii \lambda^{n} m\sigma_3m^{-1}
		= \mathbf{V}_{n}(\lambda;\mathbf{Q})+\mathcal{O}\big(\lambda^{-1}\big).
	\end{split}
\end{equation}
 Thus, through the Liouville theorem, matrices $\mathbf{U}(\lambda;\mathbf{Q})$ and $\mathbf{V}_n(\lambda;\mathbf{Q})$ can be determined by
\begin{equation}\label{eq:U-V-n}
	\begin{split}
		& \mathbf{U}(\lambda;\mathbf{Q})
		:=-\ii( \lambda  m\sigma_3m^{-1})_{+},  \qquad
		\mathbf{V}_n(\lambda;\mathbf{Q})
		:=-\ii( \lambda^{n} m\sigma_3m^{-1})_{+},
	\end{split}
\end{equation}
where $(\cdot)_{+}$ defines the regular part of the function with respect to the spectral parameter $\lambda$. 
As $\lambda\rightarrow \infty$, the matrix function $m$ is expressed as $m=\mathbb{I}_2+m_1(x,\mathbf{t})\lambda^{-1}+m_2(x,\mathbf{t})\lambda^{-2}+\mathcal{O}(\lambda^{-3})$. Combined with equation \eqref{eq:U-V-n}, the $x$-part of the Lax pair can be written as
\begin{equation}\label{eq:Lax-x-part}
	\Phi_x(x,\mathbf{t};\lambda)=\mathbf{U}(\lambda;\mathbf{Q}) \Phi(x,\mathbf{t};\lambda), \qquad \mathbf{U}(\lambda;\mathbf{Q})=-\ii \lambda \sigma_3 +\mathbf{Q}, \qquad 
	\mathbf{Q}
	=\begin{bmatrix}
		0 &  u(x,\mathbf{t}) \\  -u(x,\mathbf{t}) & 0
	\end{bmatrix},
\end{equation}
where $\left[\mathbf{A},\mathbf{B}\right]=\mathbf{AB}-\mathbf{BA}$ denotes the commutator and $\mathbf{Q}$ is called the potential function. Furthermore, we will show that the matrix $\mathbf{V}_n(\lambda;\mathbf{Q})$ in terms of $\mathbf{Q}$.
The matrix function $\Psi(x,\mathbf{t};\lambda):=m\sigma_3 m^{-1}$ could be represented as a summation form and the matrix $\mathbf{V}_n(\lambda;\mathbf{Q})$ could be rewritten by matrix functions $\Psi_i=\Psi_i(x,\mathbf{t})$ as follows:
\begin{equation}\label{eq:Theta-expand-lambda-infty}
	\Psi(x,\mathbf{t};\lambda)=\sum_{i=0}^{\infty}\Psi_i\lambda^{-i},\qquad \Psi^2(x,\mathbf{t};\lambda)=\mathbb{I}_2, \qquad 	\mathbf{V}_n(\lambda;\mathbf{Q})=-\ii \lambda^{n} \sum_{i=0}^{n}\Psi_i\lambda^{-i}.
\end{equation} 
Since the matrix function $\Psi(x,\mathbf{t};\lambda)$ satisfies the stationary zero-curvature equation defined in equation \eqref{eq:zero-curve-equation}, then
matrices $\Psi_i$ can be determined recursively as follows: $\Psi_0=\sigma_3$, $\Psi_{1}=\ii \mathbf{Q}$, and
\begin{subequations}\label{eq:Theta-i-expression}
\begin{align}
	\Psi_{i+1}^{\off}
	&=\frac{\ii \sigma_3}{2}\left(\Psi_{i,x}^{\off} -\left[\mathbf{Q},
	\Psi_{i}^{\diag}\right]\right), \label{eq:Theta-i-expression-off}\\
	\Psi_{i+1}^{\diag}
	&=-\frac{\sigma_3}{2}\sum_{j=1}^{i}\left(\Psi_{j}^{\diag}\Psi_{i+1-j}^{\diag}+\Psi_{j}^{\off}\Psi_{i+1-j}^{\off}\right), \label{eq:Theta-i-expression-diag}
\end{align}
\end{subequations}
$i=1,2,\cdots$, where $\Psi^{\diag}$ denotes the diagonal part of the matrix $\Psi$ and $\Psi^{\off}=\Psi-\Psi^{\diag}$ denotes its off-diagonal part.
The terms $\Psi_i$, $i=2,3,4,5$ are given by
\begin{equation}\label{eq:Psi-i}
\begin{split}
\Psi_2= &\  \frac{\sigma_3}{2}(\mathbf{Q}^2-\mathbf{Q}_x), \\
\Psi_3= &\ \frac{\ii}{4}(2\mathbf{Q}^3-\mathbf{Q}_{xx}+\mathbf{Q}_x\mathbf{Q}-\mathbf{Q}\mathbf{Q}_x), \\
\Psi_4=&\  \frac{\sigma_3}{8}(\mathbf{Q}_{xxx}-6\mathbf{Q}_x\mathbf{Q}^2)+\frac{\sigma_3}{8}(3\mathbf{Q}^4+\mathbf{Q}_x^2-\mathbf{Q}\mathbf{Q}_{xx}-\mathbf{Q}_{xx}\mathbf{Q}), \\
\Psi_5=&\ \frac{\ii}{16}(\mathbf{Q}_{xxxx}-8\mathbf{Q}_{xx}\mathbf{Q}^2-4\mathbf{Q}_{x}^2\mathbf{Q}-6\mathbf{Q}_{x}\mathbf{Q}\mathbf{Q}_{x}-2\mathbf{Q}\mathbf{Q}_{xx}\mathbf{Q}+6\mathbf{Q}^5) \\
&\ -\frac{\ii}{16}(\mathbf{Q}_{xxx}\mathbf{Q}-\mathbf{Q}\mathbf{Q}_{xxx}+6\mathbf{Q}^3\mathbf{Q}_x-6\mathbf{Q}_x\mathbf{Q}^3+\mathbf{Q}_x\mathbf{Q}_{xx}-\mathbf{Q}_{xx}\mathbf{Q}_x).
\end{split}
\end{equation}
Plugging matrices $\Psi_i$ into equation \eqref{eq:Theta-expand-lambda-infty}, we get expressions for matrices $\mathbf{V}_n(\lambda;\mathbf{Q})$. 
In such a case, the compatibility conditions of ordinary differential equations
\begin{equation}\nonumber
	\Phi_{x}(x,\mathbf{t};\lambda)=\mathbf{U}(\lambda;\mathbf{Q})\Phi(x,\mathbf{t};\lambda), \qquad \Phi_{t_n}(x,\mathbf{t};\lambda)=\mathbf{V}_n(\lambda;\mathbf{Q})\Phi(x,\mathbf{t};\lambda),
\end{equation} 
deduce the related integrable nonlinear soliton equation \cite{Wright-19} under different symmetries. 
Moreover, by taking a linear combination of the aforementioned ordinary differential equations with $\hat{t}_{n}=\sum_{i=1}^{n}a_{i}t_i$, we can derive various integrable nonlinear soliton equations. 

In the following work, we consider the third flow of the AKNS system (the \ref{eq:mKdV} equation).
Choosing $t=4t_3$, we obtain the Lax pair \eqref{eq:Lax-pair} and the mKdV equation. 

\newenvironment{proof-prop-matrix-L}{\emph{\textbf{Proof of the \Cref{prop:L-matrix}.}}}{\hfill$\Box$\medskip}
	\begin{proof-prop-matrix-L}
	Since the matrix function $\Psi(x,t;\lambda)$ satisfies the stationary zero-curvature equation \eqref{eq:zero-curve-equation}, the matrices $\Psi_i\equiv\Psi_i(x,t)$, $i=0,1,\cdots$, must satisfy
	\begin{equation}\label{eq:define-Psi-x}
		\begin{split}
			 \ii \Psi_{i,x}= \left[\Psi_{0}, \Psi_{i+1}\right]+ \left[\Psi_{1},\Psi_{i}\right], \quad
			 \ii \Psi_{i,t}=4 \left(\left[\Psi_{0}, \Psi_{i+3}\right]+ \left[\Psi_{1},\Psi_{i+2}\right] + \left[ \Psi_2, \Psi_{i+1}\right] + \left[\Psi_3,\Psi_{i}\right]\right).
		\end{split}
	\end{equation}
	Comparing the definition of the matrix functions $\Psi(x,t;\lambda)$ and $\mathbf{L}(x,t;\lambda)$, we deduce that the $x$-part of stationary zero-curvature equation of $\mathbf{L}(x,t;\lambda)$ satisfies the additional constraint ordinary differential equation:
	\begin{equation}\label{eq:define-x}
		\begin{split}
			\mathbf{L}_x(x,t;\lambda)-[\mathbf{U}(\lambda;\mathbf{Q}),\mathbf{L}(x,t;\lambda)]=-\sum_{j=0}^{g+1}\alpha_j\left[\Psi_0,\Psi_{j+1}\right]=0, \quad \text{and} \quad \sum_{j=0}^{g+1} \alpha_j \Psi_{j+1}^{\off}=0,
		\end{split}
	\end{equation}
	which can deduce the equation \eqref{eq:ode}. 
		
	Then, we would like to prove that the $t$-part of the stationary zero-curvature equation holds automatically (i.e. $\mathbf{L}_t(x,t;\lambda)=[\mathbf{V}(\lambda;\mathbf{Q}),\mathbf{L}(x,t;\lambda)]$).
	Similarly, we consider the coefficients of the spectral parameter $\lambda$.
	For the coefficients of $\lambda^2$ for the quantity $\mathbf{L}_t(x,t;\lambda)-[\mathbf{V}(\lambda;\mathbf{Q}),\mathbf{L}(x,t;\lambda)]=0$, we obtain
	\begin{equation}\nonumber
		\begin{split}
			\mathcal{O}(\lambda^2): \quad 
			& \ -\ii \sum_{j=2}^{g+1}\alpha_j \Psi_{j-2,t}+4\left[ \Psi_1,\sum_{j=0}^{g+1}\alpha_j\Psi_{j}\right]+4\left[ \Psi_2,\sum_{j=1}^{g+1}\alpha_j\Psi_{j-1}\right]+4\left[ \Psi_3,\sum_{j=2}^{g+1}\alpha_j\Psi_{j-2}\right] \xlongequal[\eqref{eq:define-x}]{\eqref{eq:define-Psi-x}}0.
	\end{split}
	\end{equation}
Differentiating both sides of the equation \eqref{eq:define-x} with respect to $x$, we obtain 
\begin{equation}\label{eq:sum-1}
	\begin{split}
		&0\xlongequal{\eqref{eq:define-x}}\ii\sum_{j=0}^{g+1}\alpha_j\Psi_{j+1,x}^{\off}\xlongequal{\eqref{eq:define-Psi-x}} \sum_{j=0}^{g+1}\alpha_j\left([\Psi_0,\Psi_{j+2}]+[\Psi_1,\Psi_{j+1}]\right)^{\off}.
	\end{split}
\end{equation}
Utilizing the above equations, we compute the coefficient of $\lambda$ and get
\begin{equation}\nonumber
	\begin{split}
		\mathcal{O}(\lambda): \qquad 
		& \ -\ii \sum_{j=1}^{g+1}\alpha_j \Psi_{j-1,t}+4\left[ \Psi_2,\sum_{j=0}^{g+1}\alpha_j\Psi_{j}\right]+4\left[ \Psi_3,\sum_{j=1}^{g+1}\alpha_j\Psi_{j-1}\right] 
		\xlongequal[\eqref{eq:define-x}]{\eqref{eq:define-Psi-x},\eqref{eq:sum-1}} 0.
	\end{split}
\end{equation}
Due to the Jacobi-identity $[[\mathbf{A},\mathbf{B}],\mathbf{C}]+[[\mathbf{B},\mathbf{C}],\mathbf{A}]+[[\mathbf{C},\mathbf{A}],\mathbf{B}]=0$, we get
\begin{equation}\nonumber
		\begin{split}
			0=&\ -\sum_{j=0}^{g+1}\alpha_j\Psi_{j+1,xx}^{\off}
			\xlongequal[\eqref{eq:define-x}]{\eqref{eq:define-Psi-x}, \eqref{eq:sum-1}}
			  \sum_{j=0}^{g+1}\alpha_j \left([\Psi_0,[\Psi_0,\Psi_{j+3}]+[\Psi_1,\Psi_{j+2}]+[\Psi_2,\Psi_{j+1}]]\right)^{\off}.
		\end{split}
\end{equation}
From equation \eqref{eq:define-x}, it is easy to get
\begin{equation}\nonumber
		\begin{split}
			0\xlongequal{\eqref{eq:define-x}}&
		\sum_{j=0}^{g+1}\alpha_j	\left(\left[\sigma_3\Psi_{j+1,x},\Psi_{1}\right]+\left[\Psi_{1,x},\sigma_3\Psi_{j+1}\right]\right)^{\diag}
			\xlongequal{\eqref{eq:define-Psi-x}}
			 \sum_{j=0}^{g+1}\alpha_j(2[ \Psi_{j+2},\Psi_{1}]+2[\Psi_{j+1},\Psi_2])^{\diag}.
		\end{split}
	\end{equation}
	So, we get
	\begin{equation}\nonumber
		\begin{split}
			\mathcal{O}(1): \quad 
			& \ -\ii \sum_{j=0}^{g+1}\alpha_j \Psi_{j,t}+4\left[ \Psi_3,\sum_{j=0}^{g+1}\alpha_j\Psi_{j}\right] 
			\xlongequal{\eqref{eq:define-Psi-x}} 4\sum_{j=0}^{g+1}\alpha_j  \left(\left[\Psi_{j+3}, \Psi_{0}\right]+\left[\Psi_{j+2}, \Psi_{1}\right]+\left[\Psi_{j+1}, \Psi_{2}\right]\right) \\
			\xlongequal{\eqref{eq:sum-1}}&	\  4\sum_{j=0}^{g+1}\alpha_j  \left(\left[\Psi_{j+3}, \Psi_{0}\right]+\left[\Psi_{j+2}, \Psi_{1}\right]+\left[\Psi_{j+1}, \Psi_{2}\right]\right)^{\off}+ \left(\left[\Psi_{j+2}, \Psi_{1}\right]+\left[\Psi_{j+1}, \Psi_{2}\right]\right)^{\diag}=0.
		\end{split}
	\end{equation}
	Thus the \Cref{prop:L-matrix} holds.
\end{proof-prop-matrix-L}

In another viewpoint, the $x$-part of the stationary zero-curvature equation defined in equation \eqref{eq:define-x} could be re-expressed as
\begin{equation}\nonumber
	\begin{split}
		\mathbf{L}_x-[\mathbf{U}(\lambda;\mathbf{Q}),\mathbf{L}]\xlongequal[\eqref{eq:Theta-expand-lambda-infty}]{\eqref{eq:L-matrix-genus-two}} &\
		\sum_{i=0}^{g+1}\alpha_i(\mathbf{V}_{i,x}(\lambda;\mathbf{Q})-[\mathbf{U}(\lambda;\mathbf{Q}),\mathbf{V}(\lambda;\mathbf{Q})])
		= \sum_{i=0}^{g+1}\alpha_i \mathbf{Q}_{t_i}=0.
	\end{split}
\end{equation}
When $g+1=n$, letting $\alpha_i=c_{2,i}$, $i = 0, 1,\cdots, g$, $\alpha_{g+1} = 1$, the ordinary differential equation \eqref{eq:ode} is equivalent to the right-hand side of equation \eqref{eq:H-mKdV}.

\subsection{The algebro-geometric approach}\label{sec:algebro-geometric-approach}
The algebro-geometric approach has been developed over several decades \cite{BelokolosBEI-94,FritzH-2003-Algebro-Geometric}. 
In this subsection, we introduce this method and apply it to construct two-phase solutions for the \ref{eq:mKdV} equation.

Suppose that $\pm \ii y$ are two eigenvalues of the matrix function $\mathbf{L}$ defined in equation \eqref{eq:L-matrix-genus-two}, which implies that the equation $\det\left(\pm \ii y -\mathbf{L}\right)=0$ holds. 
Eigenvectors of corresponding eigenvalues $\pm \ii y$ are $[1, r_1]^{\top}$ and $[1, r_2]^{\top}$ respectively, where the fundamental meromorphic functions $r_{1,2}$ on $\mathcal{R}_2$ are defined as 
\begin{equation}\label{eq:define-r-12}
	\begin{split}
		&\ r_1(x,t;P)=\frac{\mathbf{L}_{21}(x,t;\lambda)}{\ii y+\mathbf{L}_{11}(x,t;\lambda)}=\frac{\ii y-\mathbf{L}_{11}(x,t;\lambda)}{\mathbf{L}_{12}(x,t;\lambda)}, \\
		&\ r_2(x,t;P)=\frac{\mathbf{L}_{21}(x,t;\lambda)}{-\ii y+\mathbf{L}_{11}(x,t;\lambda)}=\frac{-\ii y-\mathbf{L}_{11}(x,t;\lambda)}{\mathbf{L}_{12}(x,t;\lambda)}, 
	\end{split}
\end{equation}
where $P=(\lambda,y)\in \mathcal{R}_2$ is defined in equation \eqref{eq:define-p-y}.
Furthermore, the eigenvectors corresponding to the eigenvalues  $\pm \ii y$ can also be expressed in terms of the solution of the Lax pair \eqref{eq:Lax-pair} as $\Phi(x,t;\lambda)[1, r_1(0,0;P)]^{\top}$ and $\Phi(x,t;\lambda)[1, r_2(0,0;P)]^{\top}$, which can refer to \cite{LingS-23-mKdV-stability}.
Comparing these different representations for the kernels of matrices $\mathbf{L}\pm \ii y $, we deduce that functions $\Phi_{ij}(x,t;P)$, $i,j=1,2$, satisfy equations:
\begin{equation}\label{eq:relation-Phi-r}
	\Phi_{2i}(x,t;P)=r_i(x,t;P)\Phi_{1i}(x,t;P), \qquad i=1,2.
\end{equation}
Their details are provided in \cite{FengLT-20,LingS-23-mKdV-stability}.
Based on the above relations, we would like to utilize the algebro-geometric method to construct two-phase solutions of the mKdV equation and their wave functions explicitly.

Building on above facts, we aim to find the relationship among functions $r_{1,2}(x,t;\lambda)$, $u(x,t)$ and $\Phi_{ij}(x,t;\lambda)$ to provide the explicit expressions of the solution $u(x,t)$ of the \ref{eq:mKdV} equation and the fundamental solution $\Phi(x,t;\lambda)$ of the related Lax pair \eqref{eq:Lax-pair}.
Here, we just study the function $r_1(x,t;P)$, since $r_2(x,t;P)$ can be obtained by a shift of sheets on the Riemann surface.
As $\lambda\rightarrow \infty$, based on equation \eqref{eq:define-curve-algebro}, 
the expansion of the parameter $y$ could be expressed as 
\begin{equation}\label{eq:y-limitation}
	y=\pm \left(\lambda^3-v\lambda/4\right)+\mathcal{O}(\lambda^{-1}), \quad \text{as} \qquad P \rightarrow \infty^{\pm},
\end{equation}
where $v$ is defined in equation \eqref{eq:solutions-dn} and $P$ is defined in equation \eqref{eq:define-standard-projection}.
Combining the parameter $y$ in equation \eqref{eq:y-limitation} with the function $r_{1}(x,t;\lambda)$ in equation \eqref{eq:define-r-12} and $\mathbf{L}_{ij}$, $i,j=1,2$, in equation \eqref{eq:L-elements}, we obtain that as $\lambda\rightarrow \infty$, the function $r_{1}\equiv  r_{1}(x,t;\lambda)$ has 
\begin{equation}\label{eq:asy-r-y-infty}
		r_{1}(x,t;P)=\left\{
		\begin{aligned}
			\frac{ \ii y-\mathbf{L}_{11}}{\mathbf{L}_{12}}=&\ \frac{2\ii }{u}\lambda+\frac{ u_x}{u^2}+\ii\frac{(u^4+uu_{xx}-u_x^2)}{2u^3\lambda}-u_x\frac{4\alpha_1-3u^2}{4u^2\lambda^2}\\
			&-u_x\frac{u_x^2-2uu_{xx}}{4u^4\lambda^2}+\mathcal{O}(\lambda^{-3}),\qquad \text{as}\quad 
			P \rightarrow \infty^{+},\!\!\! \\
			\frac{\mathbf{L}_{21}}{ \ii y+\mathbf{L}_{11}}=& -\frac{\ii u}{2\lambda}-\frac{ u_x}{4\lambda^2}+\mathcal{O}(\lambda^{-3}), \qquad \text{as}\quad 
			P \rightarrow \infty^{-}.
		\end{aligned}\right.	
\end{equation}
Together with equations \eqref{eq:relation-Phi-r} and \eqref{eq:asy-r-y-infty}, it is easy to obtain that functions $\Phi_{i}\equiv \Phi_{i1}(x,t;\lambda)$, $i=1,2$, have the following relations:
\begin{equation}\label{eq:Phi-limitation}
	\begin{split}
		&
		\left\{ \begin{aligned}
			&\frac{\Phi_{1,x}}{\Phi_{1}} = - \ii \lambda+ \mathcal{O}(\lambda^{-1}),\\
			&\frac{\Phi_{1,t}}{\Phi_{1}} = - 4\ii\lambda^3+ \mathcal{O}(\lambda^{-1}),
		\end{aligned}\right.  \quad \qquad \text{as} \quad  P \rightarrow \infty^{-};\\
		&
		\left\{ \begin{aligned}
			&\frac{\Phi_{1,x}}{\Phi_{1}} = \ii \lambda+\frac{u_x}{u}+ \mathcal{O}(\lambda^{-1}), \\
			&\frac{\Phi_{1,t}}{\Phi_{1}} = 4\ii\lambda^3+\frac{u_t}{u}+ \mathcal{O}(\lambda^{-1}),
		\end{aligned} \right. \qquad \text{as}  \quad P \rightarrow \infty^{+};
	\end{split}
\end{equation} 
since $\mathbf{Q}_t+4\alpha_1\mathbf{Q}_x=0$ as deduced from  equation \eqref{eq:ode}.
When we consider the function $r_2(x,t;\lambda)$, we could obtain the solution $\Phi_{i2}(x,t;\lambda)$.
Combining equations \eqref{eq:relation-Phi-r}-\eqref{eq:Phi-limitation} with the related Lax pair \eqref{eq:Lax-pair}, it follows that

\begin{subequations}\label{eq:limitation-Phi}
    \begin{align}
	\text{as}\quad P\rightarrow \infty^{-},\qquad 
	&\begin{aligned}
	\begin{bmatrix}
	\Phi_{1}  \\
	\Phi_{2}
	\end{bmatrix}
	=
	\Phi_{1,0}^{-}\left(\begin{bmatrix}
			1\\
			0
		\end{bmatrix}+\mathcal{O}(\lambda^{-1})\right)\ee^{-\ii  \lambda x-4\ii \lambda^3 t},
		\end{aligned} \label{eq:limitation-Phi-n}\\
	\text{as}\quad P\rightarrow \infty^{+},\qquad 
		&\begin{aligned}	
			\begin{bmatrix}
				\Phi_{1}\\
				\Phi_{2}
			\end{bmatrix}=
			&\ \Phi_{1,0}^{+}
			\left(\begin{bmatrix}
				u\\[3pt]
				r_1u
			\end{bmatrix}+\mathcal{O}(\lambda^{-1})\right)\ee^{\ii  \lambda x+4\ii \lambda^3 t}\\
			\xlongequal{\eqref{eq:asy-r-y-infty}}&\ 2\ii \lambda\Phi_{1,0}^{+}\left(\begin{bmatrix}
				0   \\ 	1	\end{bmatrix}+\mathcal{O}(\lambda^{-1})\right)\ee^{\ii  \lambda x+4\ii \lambda^3 t}, 
		\end{aligned}\label{eq:limitation-Phi-p}
	\end{align}
\end{subequations}
where 
\begin{equation}\label{eq:Phi-lim-infty-initial}
	\Phi_{1,0}^{\pm}=\lim_{P\rightarrow \infty^{\pm}}\Phi_{1}(x_0,t_0;P)\ee^{\pm\ii\lambda(x_0+4\lambda^2 t_0)}, \qquad 
	u_0=u(x_0,t_0), 
\end{equation}
and $(x_0,t_0)\in \mathbb{R}^2$ is called the initial point. 

Then we would like to introduce the Abel maps and elliptic integrals to obtain two-phase solution.
The divisor of the function $r_{1}(x,t;P)$ in \eqref{eq:define-r-12} is 
\begin{equation}\label{eq:div-r}
	\mathcal{D}iv\left(r_{1}(x,t;P)\right)=\mathcal{D}_{P_{\infty^{-}},\mu^*}(\lambda)-\mathcal{D}_{P_{\infty^{+}},\mu}(\lambda), 
\end{equation}
	where the abbreviations are given by $\mu=\left\{\hat{\mu}_1,\hat{\mu}_2\right\}\in \mathrm{Sym}^2(\mathcal{R}_2)$, $\mu^*=\left\{\hat{\mu}_1^*,\hat{\mu}_2^*\right\} \in \mathrm{Sym}^2(\mathcal{R}_2)$,
with $\hat{\mu}_i=(\mu_i,-\ii \mathbf{L}_{11}(x,t;\mu_i))\in \mathcal{R}_2$, 
$\hat{\mu}_i^*=(\mu_i^*,\ii \mathbf{L}_{11}(x,t;\mu_i^*))\in \mathcal{R}_2$, $i=1,2$ and $P_{\infty^{\pm}}$ denote the points at infinity.
The equation \eqref{eq:div-r} reveals that $P_{\infty^{-}}, \hat{\mu}_1^*,\hat{\mu}_2^*$ are three zeros and $P_{\infty^{+}}, \hat{\mu}_1,\hat{\mu}_2$ are three poles of the function $r_1(x,t;P)$.
Thus, by the Riemann-Roch theorem \cite{FritzH-2003-Algebro-Geometric}, the function $r_1(x,t;P)$ can be expressed as:
\begin{equation}\label{eq:define-r1}
	r_1(x,t;P)= \hat{r}_1(x,t)\frac{\Theta(\mathcal{C}+\mathcal{A}_{P_0}(P)-\alpha_{P_0}(\mathcal{D}_{\mu^*}))}{\Theta(\mathcal{C}+\mathcal{A}_{P_0}(P)-\alpha_{P_0}(\mathcal{D}_{\mu}))}\ee^{\Omega_1(P) -\ln(\omega_1 )}, \quad i=1,2,
\end{equation}
in which $\Theta(\cdot)$ is the Riemann theta function defined in \Cref{define:Riemann-Theta-function}; the Abel map $\mathcal{A}_{P_0}(P)$ is defined in equation \eqref{eq:abel_map};  zeros and poles divisors of the meromorphic function $r_1(x,t;P)$ are shown in equation \eqref{eq:div-r};
the function $\hat{r}_1(x,t)$ is independent of the spectral parameter $\lambda$; and the parameter $\mathcal{C}$ is called the Riemann constant \cite[p.41]{BelokolosBEI-94}.
The definition of the Jacobi theta function is defined in the \Cref{define:theta}. 
Furthermore, combining equations \eqref{eq:define-int-w1} and \eqref{eq:define-r1}, the function $r_1(x,t;P)$ would deduce into 
\begin{equation}\label{eq:r12-P-pm-infty}
	\begin{split}
		&r_1(x,t;P)
		\xlongequal{\eqref{eq:define-r1}}\left\{
	\begin{aligned}
		&\hat{r}_1(x,t)\frac{\Theta(\mathcal{C}+\mathcal{A}_{P_0}(\infty^{+})-\alpha_{P_0}(\mathcal{D}_{\mu^*}))}{\Theta(\mathcal{C}+\mathcal{A}_{P_0}(\infty^{+})-\alpha_{P_0}(\mathcal{D}_{\mu}))}\lambda+\mathcal{O}(1),
		\qquad \quad 	P\rightarrow \infty^{+},\\
		&\frac{\hat{r}_1(x,t)\Theta(\mathcal{C}+\mathcal{A}_{P_0}(\infty^{-})-\alpha_{P_0}(\mathcal{D}_{\mu^*}))}{\omega_1^2\Theta(\mathcal{C}+\mathcal{A}_{P_0}(\infty^{-})-\alpha_{P_0}(\mathcal{D}_{\mu}))}\frac{1}{\lambda }+\mathcal{O}(\lambda^{-2}),\qquad  \,
	 	P\rightarrow \infty^{-}.
	\end{aligned}\right.
	\end{split}	
\end{equation}

Without loss of generality, we set the initial point $(x_0,t_0)=(0,0)$, $\Phi_{1,0}^{-}=1$ from equation \eqref{eq:Phi-lim-infty-initial}.
By equations \eqref{eq:Lax-pair} and \eqref{eq:relation-Phi-r}, functions $\Phi_{i}(x,t;P)$, $i=1,2,$ are meromorphic functions on $\mathcal{R}_2\backslash\left\{P_{\infty^{+}},P_{\infty^{-}}\right\}$, except at the poles $\mu$ of the function $r_1(x,t;P)$.
Furthermore, by equations \eqref{eq:define-int-w2}, \eqref{eq:define-int-w3}, \eqref{eq:limitation-Phi-n} and \eqref{eq:define-r1}, functions $\Phi_{i}$, $i=1,2$, can be expressed as
\begin{equation}\label{eq:define-Phi}
	\begin{split}
		\Phi_{1}
		= & \ \frac{\Theta(\mathcal{C}+\mathcal{A}_{P_0}(P)-\alpha_{P_0}(\mathcal{D}_{\mu}))\Theta(\mathcal{C}+\mathcal{A}_{P_0}(\infty^{-})-\alpha_{P_0}(\mathcal{D}_{\mu_0}))}{\Theta(\mathcal{C}+\mathcal{A}_{P_0}(P)-\alpha_{P_0}(\mathcal{D}_{\mu_0}))\Theta(\mathcal{C}+\mathcal{A}_{P_0}(\infty^{-})-\alpha_{P_0}(\mathcal{D}_{\mu}))}
		\ee^{\ii \left(\omega_2+ \Omega_2(P) \right)x+\ii\left(\omega_3+ \Omega_3(P)\right)t},\\
		\Phi_{2}
		= & \  \frac{\hat{r}_1(x,t)\Theta(\mathcal{C}+\mathcal{A}_{P_0}(P)-\alpha_{P_0}(\mathcal{D}_{\mu^*}))\Theta(\mathcal{C}+\mathcal{A}_{P_0}(\infty^{-})-\alpha_{P_0}(\mathcal{D}_{\mu_0}))}{\omega_1\Theta(\mathcal{C}+\mathcal{A}_{P_0}(P)-\alpha_{P_0}(\mathcal{D}_{\mu_0}))\Theta(\mathcal{C}+\mathcal{A}_{P_0}(\infty^{-})-\alpha_{P_0}(\mathcal{D}_{\mu}))}\\
		& \ \ 
		\ee^{\ii \left(\omega_2+\Omega_2(P) \right)x+\ii\left(\omega_3+\Omega_3(P) \right)t+\Omega_1(P)},
	\end{split}
\end{equation}
where $\mu_0=\mu(x_0,t_0)$, functions $\Omega_{i}(P)$ and parameters $\omega_i$, $i=1,2,3$, are defined in equations \eqref{eq:define-int-w1}-\eqref{eq:define-int-w3}.
Considering $P\rightarrow \infty^{+}$, we get 
\begin{equation}\nonumber
	\begin{split}
		& \Phi_{1}(x,t;P) \\
		\xlongequal{\eqref{eq:define-Phi}} &  \  \frac{\Theta(\mathcal{C}+\mathcal{A}_{P_0}(\infty^{+})-\alpha_{P_0}(\mathcal{D}_{\mu}))\Theta(\mathcal{C}+\mathcal{A}_{P_0}(\infty^{-})-\alpha_{P_0}(\mathcal{D}_{\mu_0}))}{\Theta(\mathcal{C}+\mathcal{A}_{P_0}(\infty^{+})-\alpha_{P_0}(\mathcal{D}_{\mu_0}))\Theta(\mathcal{C}+\mathcal{A}_{P_0}(\infty^{-})-\alpha_{P_0}(\mathcal{D}_{\mu}))}\ee^{\ii (\lambda+2\omega_2 )x+\ii (4\lambda^3+ 2 \omega_3 )t}
		+\mathcal{O}(\lambda^{-1}).
	\end{split}
\end{equation}
Combining the above equation with the equation \eqref{eq:limitation-Phi-p}, we obtain the explicit expression of the solution 
\begin{equation}\label{eq:q-1-q0}
	u(x,t)= \frac{\Theta(\mathcal{C}+\mathcal{A}_{P_0}(\infty^{+})-\alpha_{P_0}(\mathcal{D}_{\mu}))\Theta(\mathcal{C}+\mathcal{A}_{P_0}(\infty^{-})-\alpha_{P_0}(\mathcal{D}_{\mu_0}))}{\Phi_{1,0}^{+}\Theta(\mathcal{C}+\mathcal{A}_{P_0}(\infty^{+})-\alpha_{P_0}(\mathcal{D}_{\mu_0}))\Theta(\mathcal{C}+\mathcal{A}_{P_0}(\infty^{-})-\alpha_{P_0}(\mathcal{D}_{\mu}))}\ee^{2\ii \omega_2 x+2\ii \omega_3 t},
\end{equation}
where $\Phi_{1,0}^{+}$ is defined in equation \eqref{eq:Phi-lim-infty-initial}.
Letting $P \rightarrow \infty^{+}$ and together with equations \eqref{eq:asy-r-y-infty}, \eqref{eq:r12-P-pm-infty} and \eqref{eq:q-1-q0}, we obtain
\begin{equation}\label{eq:r-0}
	\begin{split}
		\hat{r}_1(x,t)=
		2\ii\Phi_{1,0}^{+}\frac{ \Theta(\mathcal{C}+\mathcal{A}_{P_0}(\infty^{+})-\alpha_{P_0}(\mathcal{D}_{\mu_0}))\Theta(\mathcal{C}+\mathcal{A}_{P_0}(\infty^{-})-\alpha_{P_0}(\mathcal{D}_{\mu}))}
		{\Theta(\mathcal{C}+\mathcal{A}_{P_0}(\infty^{+})-\alpha_{P_0}(\mathcal{D}_{\mu^*}))\Theta(\mathcal{C}+\mathcal{A}_{P_0}(\infty^{-})-\alpha_{P_0}(\mathcal{D}_{\mu_0}))}\ee^{-2\ii \omega_2 x-2\ii \omega_3 t}.
	\end{split}
\end{equation}
Utilizing equations \eqref{eq:asy-r-y-infty}, \eqref{eq:r12-P-pm-infty} and \eqref{eq:r-0}, we obtain 
\begin{equation}\label{eq:u-2}
	\!\! u(x,t)=\frac{(2\ii)^2\Phi_{1,0}^{+} \Theta(\mathcal{C}+\mathcal{A}_{P_0}(\infty^{+})-\alpha_{P_0}(\mathcal{D}_{\mu_0}))\Theta(\mathcal{C}+\mathcal{A}_{P_0}(\infty^{-})-\alpha_{P_0}(\mathcal{D}_{\mu^*}))}{\omega_1^2\Theta(\mathcal{C}+\mathcal{A}_{P_0}(\infty^{+})-\alpha_{P_0}(\mathcal{D}_{\mu^*}))\Theta(\mathcal{C}+\mathcal{A}_{P_0}(\infty^{-})-\alpha_{P_0}(\mathcal{D}_{\mu_0}))}\ee^{-2\ii \omega_2 x-2\ii \omega_3 t}.
\end{equation}
If the solutions $u(x,t)$ given in equation \eqref{eq:q-1-q0} and \eqref{eq:u-2} are equal to each other and are real ones, then the functions $u(x,t)$ are the solutions of the mKdV equation. 
Then, we are going to provide relations between these two functions, which is the crucial step to get the two-phase solutions of mKdV equation.
Based on the definition of the Abel map, the following conditions hold:
\begin{itemize}
	\item The Abel map linearizes the auxiliary divisors \cite{FritzH-2003-Algebro-Geometric}:
	\begin{equation}\label{eq:D-mu-define}
		\alpha_{P_0}(\mathcal{D}_{\mu})=\alpha_{P_0}(\mathcal{D}_{\mu_0})-\ii U(x-x_0)-\ii V(t-t_0),
	\end{equation} 
	since $\partial_x\left(\alpha_{P_0}(\mathcal{D}_{\mu})\right)=-\ii U$ and $\partial_t\left(\alpha_{P_0}(\mathcal{D}_{\mu})\right)=-\ii V$.
	\item By equation \eqref{eq:div-r}, $\mathcal{D}_{P_{\infty^-},\mu^*}(\lambda)$ and $\mathcal{D}_{P_{\infty^+},\mu}(\lambda)$ are linearly equivalent \cite{CaliniI-05-VFE,FritzH-2003-Algebro-Geometric}, which implies
	\begin{equation}\label{eq:relation-mu}
		\alpha_{P_0}(\mathcal{D}_{\mu^*}) = \alpha_{P_0}(\mathcal{D}_{\mu})+\Delta, \qquad \Delta=\mathcal{A}_{\infty^-}(\infty^{+}).
	\end{equation}
	\item The matrices $U$ and $V$ can be expressed as
	\begin{equation}\label{eq:U-V-define}
		U=\left(U_1,U_2\right),\quad U_i=\oint_{b_i}\dd \Omega_2, \quad 
		V=\left(V_1,V_2\right),\quad V_i=\oint_{b_i}\dd \Omega_3, \quad 
		i=1,2,
	\end{equation}
	as proved in \cite{BelokolosBEI-94,CaliniI-05-VFE,FritzH-2003-Algebro-Geometric}. Functions $\dd \Omega_{2,3}$ are defined in equations \eqref{eq:define-int-w2} and \eqref{eq:define-int-w3}.
	\item Parameters $\omega_2$ and $\omega_3$ \cite{CaliniI-05-VFE} could be obtained by 
	\begin{equation}\label{eq:omega-2-3-define}
		\omega_2 =-\frac{\ii}{4\pi} \sum_{i=1}^{2}U_i\oint_{a_i}\frac{\lambda^2}{y}\dd \lambda, \qquad
		\omega_3 =-\frac{\ii}{4\pi} \sum_{i=1}^{2}V_i\oint_{a_i}\frac{\lambda^2}{y}\dd \lambda.
	\end{equation}
	\item  The parameter $\omega_1$  \cite{CaliniI-05-VFE} defined in equation \eqref{eq:define-int-w1} satisfies the equation
	\begin{equation}\label{eq:omega-1-define}
		\ln(\omega_1 )=-\ln(\lambda_3)+\lim_{ P\rightarrow \infty^{+}}\int_{\Gamma_p}\dd\Omega_1-\frac{\dd \lambda}{\lambda}+\ii \pi,
	\end{equation}
	where $\Gamma_p$ is the integration path from $P_0$ to $P$.
\end{itemize}
The detailed proofs of above conclusions were provided in \cite{BelokolosBEI-94,CaliniI-05-VFE,FritzH-2003-Algebro-Geometric}.
If functions $\hat{\Omega}$ and $\tilde{\Omega}$ are holomorphic on  $\mathcal{R}_2$, then by applying Stokes' Theorem we obtain
	\begin{equation}\nonumber
		0=\int_{\mathcal{R}_2}\dd (\hat{\Omega}\dd \tilde{\Omega})=\oint_{\partial \mathcal{R}_2}  \hat{\Omega}\dd \tilde{\Omega}=\sum_{i=1}^{2} \oint_{a_i} \dd \hat{\Omega} \oint_{b_i} \dd \tilde{\Omega}-\oint_{b_i} \dd \hat{\Omega} \oint_{a_i} \dd \tilde{\Omega}.
	\end{equation}
	If $\hat{\Omega}\dd \tilde{\Omega}$ has singularities at $\pm\infty$, the above equation should be modified as
	\begin{equation}\label{eq:Stokes-int}
		0=\sum_{i=1}^{2} \oint_{a_i} \dd \hat{\Omega} \oint_{b_i} \dd \tilde{\Omega}-\oint_{b_i} \dd \hat{\Omega} \oint_{a_i} \dd \tilde{\Omega}+\oint_{l+\boldsymbol{\pi}(l)}\hat{\Omega}\dd \tilde{\Omega},
	\end{equation}
	where the curve $l+\boldsymbol{\pi}(l)$ corresponds to clockwise cycles around $\infty^{\pm}$.
	As $\lambda \rightarrow \pm \infty$, expanding the function $1/y$ yields
	\begin{equation}\nonumber
		\begin{split}
			\frac{1}{y}=\pm\left(\frac{1}{\lambda^3}+\frac{v}{4\lambda^5}+\mathcal{O}\left(\frac{1}{\lambda^7}\right)\right), \qquad \lambda \rightarrow \pm \infty.
		\end{split}
	\end{equation}
	Setting $\hat{\Omega}=\Omega_2$ and $\dd \tilde{\Omega}=\lambda^2/y \dd \lambda$ and combining with equations \eqref{eq:define-int-w2} and \eqref{eq:Stokes-int}, we obtain 
	\begin{equation}\nonumber
		0\xlongequal[\eqref{eq:U-V-define},\eqref{eq:define-int-w2}]{\eqref{eq:Stokes-int}} 4\pi \ii \omega_2-\sum_{i=1}^2U_i \oint_{a_i}\frac{\lambda^2}{y}\dd \lambda,
	\end{equation}
	which gives the first equation in \eqref{eq:omega-2-3-define}. 
	Similarly, we obtain the second equation in \eqref{eq:omega-2-3-define}.
	If we instead set $\dd \tilde{\Omega}=w_{1,2}\dd \lambda$ in \eqref{eq:define-basis-B}, it is straightforward to verify that
	\begin{equation}\label{eq:U-v}
		0\xlongequal[\eqref{eq:define-basis-B},\eqref{eq:U-V-define}]{\eqref{eq:Stokes-int}}4\pi \ii d_{j1}-\sum_{i=1}^2U_i \oint_{a_i}w_j\dd \lambda\xlongequal{\eqref{eq:define-basis-B}}4\pi \ii d_{j1}-2\pi \ii \sum_{i=1}^2U_i \delta_{ij}.
	\end{equation}
	Thus, we obtain $U_i=2d_{i1}$ for $i=1,2$.
	Similarly, taking $\hat{\Omega}=\Omega_3$ yields
	\begin{equation}\label{eq:V-v}
		0=4\pi \ii v d_{j1}-\sum_{i=1}^2V_i \oint_{a_i}w_j\dd \lambda = 4\pi \ii v d_{j1}-2\pi \ii \sum_{i=1}^2V_i \delta_{ij}.
	\end{equation}
	Therefore, in combination with equations \eqref{eq:U-v} and \eqref{eq:V-v}, we obtain 
	\begin{equation}\label{eq:U-V-v}
		V_i\xlongequal[\eqref{eq:V-v}]{\eqref{eq:U-v}}vU_i=2vd_{i1}.
\end{equation}

Without loss of generality, setting the initial point $(x_0,t_0)=(0,0)$ and
\begin{equation}\label{eq:define-D}
	\mathcal{C}+\mathcal{A}_{P_0}(\infty^{-})-\alpha_{P_0}(\mathcal{D}_{\mu_0})=D,
\end{equation}
it is easy to obtain
\begin{equation}\label{eq:define-D-1}
	\mathcal{C}+\mathcal{A}_{P_0}(\infty^{+})-\alpha_{P_0}(\mathcal{D}_{\mu_0})=\mathcal{C}+\mathcal{A}_{P_0}(\infty^{-})+\Delta-\alpha_{P_0}(\mathcal{D}_{\mu_0})=D+\Delta,
\end{equation} 
where $\mathcal{A}_{P_0}(\infty^{\pm})$ is called the Abel map defined in equation \ref{eq:abel_map} and $\Delta$ is defined in equation \eqref{eq:relation-mu}.
By combining with equations \eqref{eq:q-1-q0},\eqref{eq:D-mu-define},\eqref{eq:define-D} and \eqref{eq:define-D-1},
we obtain that the solution $u(x,t)$ in equations \eqref{eq:q-1-q0} and \eqref{eq:u-2} can be rewritten as \eqref{eq:q1-q0} and 
\begin{equation}\label{eq:C_u0-define}
	 u(x,t)\xlongequal[\eqref{eq:define-D},\eqref{eq:define-D-1}]{\eqref{eq:u-2},\eqref{eq:D-mu-define}} \frac{ -4\Theta(D-\Delta+\ii Ux+\ii V t)}{\omega_1^2C_{u_0} \Theta(D+\ii Ux+\ii V t)}\ee^{-2\ii \omega_2 x-2\ii \omega_3 t},
\end{equation}
with $C_{u_0}$ defined in equation \eqref{eq:q1-q0}, respectively.
Together with equations  \eqref{eq:define-Phi},\eqref{eq:r-0}, \eqref{eq:D-mu-define},\eqref{eq:relation-mu}, \eqref{eq:define-D}, \eqref{eq:define-D-1}, and \eqref{eq:C_u0-define}, the simple form of functions $\Phi_{i}=\Phi_{i}(x,t;P)$, $i=1,2$, defined in equation \eqref{eq:define-Phi} are expressed in equation  \eqref{eq:Phi-expression}.
Then, we calculate the explicit expressions of the above equations by providing suitable formulas and utilizing the properties of hyperelliptic integrals.

\subsection{The two-phase solutions of the mKdV equation}\label{section:exact-solution}

In this subsection, we aim to represent the two-phase solutions of the \ref{eq:mKdV} equation exactly for two different cases. 
To analyze the stability of the two-phase solutions of mKdV equation, we should determine the parameters in the Riemann theta functions exactly, which are related to the hyperelliptic integrals.
To address this issue, appropriate transformations will be introduced to convert the associated hyperelliptic integrals into the three canonical forms of elliptic integrals provided in \Cref{define:elliptic-function}. 
According to the two possible configurations of the branch points described in equation \eqref{eq:det-L-lambda}, the parameters must satisfy \ref{case1} or \ref{case2}.
The corresponding homology bases for these two cases are depicted in \Cref{fig:genus-two-figure}.

In the following, we investigate the hyperelliptic integrals involved in the formulas of two-phase solutions and their corresponding vector solutions of the Lax pair.
For clarity and convenience, we introduce the following notations:
\begin{equation}\label{eq:Y-define}
	\mathcal{Y}_{a_i}^{[j]}=\oint_{a_i}\frac{\lambda^{j}}{y}\dd \lambda, \qquad \mathcal{Y}_{b_i}^{[j]}=\oint_{b_i}\frac{\lambda^{j}}{y}\dd \lambda, \qquad i=1,2, \quad j=0,1,2,\cdots,5. 
\end{equation}
The hyperelliptic integrals described above can be categorized into the two cases presented in equations \eqref{eq:hyper-1} and \eqref{eq:hyper-2}. 
Furthermore, we demonstrate that all of these integrals can be expressed in terms of the three canonical forms of elliptic integrals defined in \Cref{define:elliptic-function}. 
A key step is to apply suitable transformations reducing general hyperelliptic integrals to standard forms of elliptic integrals. 
Specifically, by introducing suitable transformations between the parameter $\Lambda$ and elliptic functions 
$\sn(\nu,k)$, $\cn(\nu,k)$, and $\dn(\nu,k)$, the integrals in equations \eqref{eq:hyper-1} and \eqref{eq:hyper-2} can be rewritten as rational functions of these elliptic functions.
Further, using the three canonical forms of elliptic integrals, we are able to express the hyperelliptic integral given in equation \eqref{eq:Y-define} in terms of standard elliptic integrals.

In the subsequent analysis, we focus on constructing the appropriate transformations, which are provided in equations \eqref{eq:fs-2}, \eqref{eq:fs-1}, \eqref{eq:fs-4} and \eqref{eq:fs-3}, that enable us to express the hyperelliptic integral \eqref{eq:Y-define} in terms of the three standard forms of elliptic integrals, for two distinct configurations of branch points.

\vspace{0.1cm}
\noindent $\bullet$\quad
\textbf{\large \ref{case1}}

Without loss of generality, we assume that $\Im(\lambda_3)>\Im(\lambda_2)>\Im(\lambda_1)$.
\Cref{appendix:map} provides a detailed introduction to the method for calculating elliptic integrals. 
\Cref{prop:elliptic-int-1} presents the transformation that converts the hyperelliptic integrals into the canonical form. 
From the \Cref{prop:case-1-P} and \Cref{prop:case-2-P}, we derive explicit expressions for hyperelliptic integrals along the $a_{1,2}$- and $b_{1,2}$-circles defined in equation \eqref{eq:hyper}.
Accordingly, the explicit values of these hyperelliptic integrals in \eqref{eq:hyper} are obtained as follows.

\begin{lemma}\label{lemma:dn-1-int}
	When $j=0,2$, the hyperelliptic integrals defined in equation \eqref{eq:Y-define}, along the $a_i$-circle and $b_i$-circle, $i=1,2$ are
	\begin{equation}\label{eq:Y-dn-1}\nonumber
		\begin{split}
			&\mathcal{Y}_{a_1}^{[0]}=-\mathcal{Y}_{a_2}^{[0]}=\frac{2\ii   K^{(1)}_1}{\lambda_2\sqrt{\lambda_1^2-\lambda_3^2}}, \qquad
			\mathcal{Y}_{a_1}^{[2]}=-\mathcal{Y}_{a_2}^{[2]}=\frac{2\ii\lambda_1^2 \Pi((\lambda_2^2-\lambda_1^2)/\lambda_2^2,  k^{(1)}_1)}{\lambda_2\sqrt{\lambda_1^2-\lambda_3^2}},\\
			&\mathcal{Y}_{b_2}^{[0]}=-\mathcal{Y}_{b_1}^{[0]}=\frac{2   K^{(1)\prime}_1}{\lambda_2\sqrt{\lambda_1^2-\lambda_3^2}},
			\qquad
			\mathcal{Y}_{b_2}^{[2]} =\frac{2  \lambda_3^2\Pi((\lambda_2^2-\lambda_3^2)/\lambda_2^2,  k^{(1)\prime}_1)}{\lambda_2\sqrt{\lambda_1^2-\lambda_3^2}}, \\
			&
			\mathcal{Y}_{b_1}^{[2]} =\mathcal{Y}_{b_2}^{[2]}+2\lambda_3^2\mathcal{Y}_{b_1}^{[0]}+\frac{4 \lambda_3^2\Pi(\lambda_1^2/(\lambda_1^2-\lambda_3^2),  k^{(1)\prime}_1)}{\lambda_2\sqrt{\lambda_1^2-\lambda_3^2}},
		\end{split}
	\end{equation}
	where $K_1^{(1)}=K(k_1^{(1)})$ and $\Pi(\lambda_1^2/(\lambda_1^2-\lambda_3^2),  k^{(1)\prime}_1)$ the elliptic integrals of the first and third kinds, defined in equations \eqref{eq:define-first-integral} and \eqref{eq:define-third-integral} respectively.
\end{lemma}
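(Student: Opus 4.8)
The plan is to reduce each period in \eqref{eq:Y-define} with $j=0,2$ under \ref{case1} to the three canonical elliptic integrals of \Cref{define:elliptic-function} by the three-step recipe announced before the lemma, and then to read off the constants from the geometry of the homology basis drawn in \Cref{fig:genus-two-figure-p}. First I would apply the substitution $\lambda^2=\Lambda$ exactly as in \eqref{eq:hyper-1}: since $\dd\lambda=\dd\Lambda/(2\Lambda^{1/2})$ and $y=\bigl(\prod_{i=1}^{3}(\Lambda-\lambda_i^2)\bigr)^{1/2}$, one has $\tfrac{\lambda^{j}}{y}\dd\lambda=\tfrac12\Lambda^{(j-1)/2}\bigl(\Lambda\prod_{i=1}^{3}(\Lambda-\lambda_i^2)\bigr)^{-1/2}\dd\Lambda$, so that $j=0$ and $j=2$ produce the quartic integrals $\int\Lambda^{n}\bigl(\Lambda(\Lambda-\lambda_1^2)(\Lambda-\lambda_2^2)(\Lambda-\lambda_3^2)\bigr)^{-1/2}\dd\Lambda$ with $n=0,1$, whose four branch points $\lambda_3^2<\lambda_2^2<\lambda_1^2<0$ are real under \ref{case1}. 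Feeding these into the conformal map of \Cref{prop:elliptic-int-1} (built in \Cref{appendix:map}), which normalizes the quartic to $(1-z^2)\bigl(1-(k^{(1)}_1)^2z^2\bigr)$ with $k^{(1)}_1$ as in \eqref{eq:u-parameters-dn-k}, the $n=0$ integrand becomes a constant multiple of $\bigl((1-z^2)(1-(k^{(1)}_1)^2z^2)\bigr)^{-1/2}\dd z$, hence a first-kind integral, while in the $n=1$ case the extra factor $\Lambda(z)$ is a M\"obius function of $z^2$ (equivalently of $\sn^2$) whose partial-fraction expansion produces a third-kind integrand; the characteristic $(\lambda_2^2-\lambda_1^2)/\lambda_2^2$ of the resulting $\Pi(\,\cdot\,,k^{(1)}_1)$ is the pole of $\Lambda(z)$ along the relevant arc.

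The decisive step, and the one I expect to be the main obstacle, is \textbf{Step 3}: fixing the integration contour, the branch cuts, and the sheet of $y$ on each arc, since all of the overall signs, the factor $\ii$, and the branch of $\sqrt{\lambda_1^2-\lambda_3^2}$ are pinned down only there. Reading \Cref{fig:genus-two-figure-p}, the cycle $a_i$ encircles a conjugate pair $\pm\lambda_i$ on the imaginary $\lambda$-axis; under $\Lambda=\lambda^2$ this projects to a loop around an interval of the negative real $\Lambda$-axis between two consecutive roots of the quartic, which the conformal map sends to a real segment (of the shape $[0,1]$ or $[1,1/k^{(1)}_1]$) in the $z$-plane, whereas a $b_i$-cycle crosses over to the complementary interval and therefore picks up the complementary modulus $k^{(1)\prime}_1$ and the quantities $K^{(1)\prime}_1=K(k^{(1)\prime}_1)$, $\Pi(\,\cdot\,,k^{(1)\prime}_1)$. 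Carrying the rational $z$--$\lambda$ substitutions of \Cref{appendix:map} through these segments, with the sheet of $y$ kept consistent with \eqref{eq:define-standard-projection}, produces the constants $2\ii K^{(1)}_1/(\lambda_2\sqrt{\lambda_1^2-\lambda_3^2})$, $2K^{(1)\prime}_1/(\lambda_2\sqrt{\lambda_1^2-\lambda_3^2})$, and so on; the antisymmetries $\mathcal{Y}^{[j]}_{a_1}=-\mathcal{Y}^{[j]}_{a_2}$ and $\mathcal{Y}^{[0]}_{b_2}=-\mathcal{Y}^{[0]}_{b_1}$ then follow from the symmetry of the homology basis in \Cref{fig:genus-two-figure-p} under the involution $(\lambda,y)\mapsto(-\lambda,-y)$ of $\mathcal{R}_2$, together with the fact that $\lambda^{j}\dd\lambda/y$ is invariant under this involution precisely when $j$ is even, so that the two periods can only differ by an orientation sign.

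Finally, to obtain $\mathcal{Y}^{[2]}_{b_1}$ I would not compute it directly but use the homology relation between the $b_1$- and $b_2$-cycles visible in \Cref{fig:genus-two-figure-p} together with the partial-fraction identity splitting the branch-point value $\lambda_3^2$ off the numerator $\Lambda$: the constant part contributes the term $2\lambda_3^2\mathcal{Y}^{[0]}_{b_1}$, while the remainder is a third-kind integrand with pole $\lambda_1^2/(\lambda_1^2-\lambda_3^2)$, contributing $4\lambda_3^2\Pi\bigl(\lambda_1^2/(\lambda_1^2-\lambda_3^2),k^{(1)\prime}_1\bigr)/(\lambda_2\sqrt{\lambda_1^2-\lambda_3^2})$ via the reduction law for $\Pi$. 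As consistency checks one can feed the $\mathcal{Y}^{[0]}_{a_i},\mathcal{Y}^{[0]}_{b_i}$ into the normalization $\oint_{a_j}w_i\,\dd\lambda=2\pi\ii\delta_{ij}$, $\oint_{b_j}w_i\,\dd\lambda=\mathbf{B}_{ij}$ of \eqref{eq:define-basis-B} and recover the period matrix $\mathbf{B}^{(1)}$ of \eqref{eq:u-parameters-dn-B}, and feed the $\mathcal{Y}^{[2]}_{a_i}$ into \eqref{eq:omega-2-3-define} and recover $\omega_2^{(1)}=0$. The routine parts — the partial-fraction bookkeeping and the elementary evaluations of the Legendre normal forms over the segments $[0,1]$ and $[1,1/k^{(1)}_1]$, which give $K^{(1)}_1$ and $K^{(1)\prime}_1$ respectively, together with their third-kind analogues — I would relegate to \Cref{appendix:map}.
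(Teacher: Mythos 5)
Your route is essentially the paper's: both pass from \eqref{eq:Y-define} through the substitution $\lambda^2=\Lambda$ of \eqref{eq:hyper-1}, the normalization of \Cref{prop:elliptic-int-1}, and the contour/sheet bookkeeping of \Cref{prop:case-1-P} (concretely, the map \eqref{eq:fs-2} and the reduction \eqref{eq:integral-dn-1}), after which the $a$-periods and the $b^{[0]}$-periods are read off as $K^{(1)}_1$, $K^{(1)\prime}_1$ and $\Pi$-values; your identification of $k^{(1)}_1$, of the characteristic $(\lambda_2^2-\lambda_1^2)/\lambda_2^2$, and of which cycles pick up the complementary modulus matches the paper, and the involution argument you use for $\mathcal{Y}^{[j]}_{a_1}=-\mathcal{Y}^{[j]}_{a_2}$ and $\mathcal{Y}^{[0]}_{b_2}=-\mathcal{Y}^{[0]}_{b_1}$ is a legitimate shortcut where the paper simply recomputes.

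The one place where your plan is not yet a proof is $\mathcal{Y}^{[2]}_{b_1}$. First, be aware that $\lambda^{2}\dd\lambda/y$ is a differential of the \emph{third} kind, with simple poles at $\infty^{\pm}$ (it behaves like $\pm\dd\lambda/\lambda$ there by \eqref{eq:define-standard-projection}); its periods are therefore only invariants of homology classes in the punctured surface, i.e.\ up to $2\pi\ii$-residue terms. This is exactly why the lemma's $\mathcal{Y}^{[2]}_{b_1}$ is \emph{not} $-\mathcal{Y}^{[2]}_{b_2}$: using \eqref{eq:Pi-K-alpha} one checks that the stated expression equals $-\mathcal{Y}^{[2]}_{b_2}+2\pi\ii$, the $2\pi\ii$ being the residue picked up because $\sigma(b_1)$ and $-b_2$ differ by a loop around $\infty^{\pm}$. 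So "the homology relation between $b_1$ and $b_2$" must be applied in the punctured surface, or the period must be computed directly along $b_1$, which is what the paper does via \eqref{eq:fs-2}. Second, your partial-fraction bookkeeping as written does not reproduce the coefficients: splitting $\Lambda=\lambda_3^2+(\Lambda-\lambda_3^2)$ yields a term $\lambda_3^2\,\mathcal{Y}^{[0]}_{b_1}$, not $2\lambda_3^2\,\mathcal{Y}^{[0]}_{b_1}$, and does not by itself produce the characteristic $\lambda_1^2/(\lambda_1^2-\lambda_3^2)$; those come out of the concrete parametrization of the $b_1$-path (the relevant arc forces a different linear-fractional reduction, hence the second $\Pi$ with modulus $k^{(1)\prime}_1$ and the extra $2\lambda_3^2\mathcal{Y}^{[0]}_{b_1}$). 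The fix is mechanical—evaluate $\oint_{b_1}\lambda^2\dd\lambda/y$ directly with \eqref{eq:fs-2}, \eqref{eq:integral-dn-1} and \eqref{eq:define-third-integral}, as for the other periods—but as proposed this step is a gap rather than a proof.
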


\begin{proof}
	Based on the \Cref{prop:case-1-P}, combined the equation \eqref{eq:integral-dn-1} with the integration formulas \eqref{eq:define-third-integral}, it is easy to obtain 
	$\mathcal{Y}_{a_1}^{[2n]}$, $n=0,1$:
	\begin{equation}\nonumber
		\begin{split}
			&\ \mathcal{Y}_{a_1}^{[0]}\xlongequal{\eqref{eq:integral-dn-1}}\frac{2\ii   K^{(1)}_1}{\lambda_2\sqrt{\lambda_1^2-\lambda_3^2}}, \\ 
			&\ 
			\mathcal{Y}_{a_1}^{[2]}\xlongequal{\eqref{eq:integral-dn-1}}\frac{2\ii \lambda_1^2}{\lambda_2\sqrt{\lambda_1^2-\lambda_3^2}}\int_{0}^{   K^{(1)}_1}\frac{\dd \nu}{1-\frac{\lambda_2^2-\lambda_1^2}{\lambda_2^2}\sn^2(\nu,  k^{(1)}_1)} \xlongequal{\eqref{eq:define-third-integral}}\frac{2\ii\lambda_1^2 \Pi((\lambda_2^2-\lambda_1^2)/\lambda_2^2,  k^{(1)}_1)}{\lambda_2\sqrt{\lambda_1^2-\lambda_3^2}}.
		\end{split}
	\end{equation}
	 Similarly, by utilizing steps listed in \Cref{prop:case-1-P}, we obtain $\mathcal{Y}_{a_2}^{[2n]}$ and $\mathcal{Y}_{b_i}^{[2n]}$, $i=1,2$.
\end{proof}

\begin{lemma}\label{lemma:dn-2-int}
	Along the $a_i$-circle and $b_i$-circle, $i=1,2$, the hyperelliptic integrals defined in equation \eqref{eq:Y-define} with $j=1,3,5$, are
	\begin{equation}\nonumber
		\begin{split}
			&\mathcal{Y}_{a_1}^{[1]}=\mathcal{Y}_{a_2}^{[1]}=\frac{2\ii K^{(1)}_2}{\sqrt{\lambda_1^2-\lambda_3^2}}, \qquad \mathcal{Y}_{a_1}^{[3]}=\mathcal{Y}_{a_2}^{[3]} =\frac{2\ii  \left(\lambda_3^2  K^{(1)}_2+(\lambda_1^2-\lambda_3^2) E^{(1)}_2\right)}{\sqrt{\lambda_1^2-\lambda_3^2}},\\
			&\mathcal{Y}_{b_1}^{[1]}=\mathcal{Y}_{b_2}^{[1]}=\frac{-2  K^{(1)\prime}_2}{\sqrt{\lambda_1^2-\lambda_3^2}},\qquad
			\mathcal{Y}_{b_1}^{[3]}=\mathcal{Y}_{b_2}^{[3]}=\frac{-2(\lambda_1^2  K^{(1)\prime}_2+(\lambda_3^2-\lambda_1^2) E^{(1)\prime}_2)}{\sqrt{\lambda_1^2-\lambda_3^2}},\\
			& \mathcal{Y}_{a_1}^{[5]}=\mathcal{Y}_{a_2}^{[5]} = \frac{v\mathcal{Y}_{a_1}^{[3]}-v_1\mathcal{Y}_{a_1}^{[1]}}{3}, \qquad \mathcal{Y}_{b_1}^{[5]}=\mathcal{Y}_{b_2}^{[5]} =\frac{v\mathcal{Y}_{b_1}^{[3]}-v_1\mathcal{Y}_{b_1}^{[1]}}{3},
		\end{split}
	\end{equation}
	where $v$ is defined in equation \eqref{eq:solutions-dn} and $v_1$ is defined as $v_1=\lambda_1^2\lambda_2^2+\lambda_1^2\lambda_3^2+\lambda_2^2\lambda_3^2$.
\end{lemma}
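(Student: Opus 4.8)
The plan is to treat the odd powers $j=1,3$ by direct evaluation and the top power $j=5$ by an exact-differential reduction, paralleling the proof of \Cref{lemma:dn-1-int}.

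First I would reduce the genus-two period integrals to elliptic ones. Since the integrands for odd $j$ are odd in $\lambda$, the substitution $\lambda^{2}=\Lambda$ from \eqref{eq:hyper-2} kills the branch point at the origin and turns $\mathcal{Y}^{[2n+1]}=\oint \lambda^{2n+1}y^{-1}\dd\lambda$ into $\tfrac12\oint \Lambda^{n}R(\Lambda)^{-1/2}\dd\Lambda$ on the elliptic curve $w^{2}=R(\Lambda):=(\Lambda-\lambda_1^{2})(\Lambda-\lambda_2^{2})(\Lambda-\lambda_3^{2})$. I would then apply the conformal map of \Cref{prop:case-1-P} and \Cref{prop:case-2-P}, which writes $\Lambda$ as a rational expression in $\sn(\nu,k_2^{(1)})$, $\cn(\nu,k_2^{(1)})$, $\dn(\nu,k_2^{(1)})$ and converts $\Lambda^{n}R(\Lambda)^{-1/2}\dd\Lambda$ into a rational function of Jacobi elliptic functions integrated in $\nu$; expanding that rational function and invoking the canonical elliptic integrals of the first and second kinds from \Cref{define:elliptic-function} yields the closed forms claimed for $\mathcal{Y}^{[1]}_{a_i},\mathcal{Y}^{[3]}_{a_i},\mathcal{Y}^{[1]}_{b_i},\mathcal{Y}^{[3]}_{b_i}$ (the $b$-integrals using the complementary modulus, exactly as in \Cref{lemma:dn-1-int}). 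The equalities $\mathcal{Y}^{[j]}_{a_1}=\mathcal{Y}^{[j]}_{a_2}$ and $\mathcal{Y}^{[j]}_{b_1}=\mathcal{Y}^{[j]}_{b_2}$ for odd $j$ then drop out of the involution $\lambda\mapsto-\lambda$: it swaps the two half-periods in the homology basis of \Cref{fig:genus-two-figure-p} while leaving $\lambda^{j}y^{-1}\dd\lambda$ invariant for odd $j$ (for even $j$ it changes sign, which is precisely the opposite pattern recorded in \Cref{lemma:dn-1-int}).

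For $j=5$ I would avoid a third elliptic evaluation and instead use a total differential. Writing $R(\Lambda)=\Lambda^{3}-\tfrac{v}{2}\Lambda^{2}+v_{1}\Lambda-v_{2}$, where $v$ is as in \eqref{eq:solutions-dn}, $v_{1}=\lambda_1^{2}\lambda_2^{2}+\lambda_1^{2}\lambda_3^{2}+\lambda_2^{2}\lambda_3^{2}$ and $v_{2}=\lambda_1^{2}\lambda_2^{2}\lambda_3^{2}$, one has $\dd\bigl(R(\Lambda)^{1/2}\bigr)=\tfrac12 R'(\Lambda)R(\Lambda)^{-1/2}\dd\Lambda$ with $R'(\Lambda)=3\Lambda^{2}-v\Lambda+v_{1}$. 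Integrating this exact form around any closed cycle, in particular around the images of $a_i$ and $b_i$ in the $\Lambda$-plane, yields $3\oint\Lambda^{2}R^{-1/2}\dd\Lambda-v\oint\Lambda R^{-1/2}\dd\Lambda+v_{1}\oint R^{-1/2}\dd\Lambda=0$; translating back through $\lambda^{2}=\Lambda$ and $\mathcal{Y}^{[2n+1]}=\tfrac12\oint\Lambda^{n}R^{-1/2}\dd\Lambda$ gives $3\mathcal{Y}^{[5]}=v\mathcal{Y}^{[3]}-v_{1}\mathcal{Y}^{[1]}$ on both the $a_i$- and $b_i$-cycles, which is the stated identity.

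The hard part will not be the algebra but the path bookkeeping of Step 3 in \Cref{section:exact-solution}: after the conformal substitution I must pin down the exact $\nu$-endpoints corresponding to each of $a_1,a_2,b_1,b_2$ and track which sheet of $y$ (equivalently which branch of $R^{1/2}$) is taken along each arc, since these fix the global signs and the factors of $\ii$ --- compare the purely imaginary $\mathcal{Y}^{[1]}_{a_i}=2\ii K_2^{(1)}/\sqrt{\lambda_1^{2}-\lambda_3^{2}}$ with the real $\mathcal{Y}^{[1]}_{b_i}$. This is carried out exactly as in \Cref{prop:case-1-P} and \Cref{prop:case-2-P}; once the contours are fixed, the remaining work is the routine expansion into canonical elliptic integrals described above.
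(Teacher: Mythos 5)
Your proposal is correct, and for $j=1,3$ it is essentially the paper's route: the paper evaluates $\mathcal{Y}^{[2n+1]}$ by the substitution of \Cref{prop:case-2-P} (the map \eqref{eq:fs-1}, leading to \eqref{eq:integral-dn-2}), which turns the integrand into $\lambda_3^2+(\lambda_1^2-\lambda_3^2)\dn^2(\nu,k_2^{(1)})$ and reduces everything to $F$ and $E$ (with the complementary modulus on the $b$-cycles), exactly as you describe; note only prop:case-2-P is needed here, since prop:case-1-P handles the even powers of \Cref{lemma:dn-1-int}. Where you genuinely depart from the paper is $j=5$: the paper squares the $\dn^2$-expression and grinds it down with the recursion \eqref{eq:recursive-formula-dn}, whereas you obtain $3\mathcal{Y}^{[5]}_\gamma-v\mathcal{Y}^{[3]}_\gamma+v_1\mathcal{Y}^{[1]}_\gamma=0$ from the exactness of $\dd y=\bigl(3\lambda^5-v\lambda^3+v_1\lambda\bigr)y^{-1}\dd\lambda$ integrated over the closed cycles $a_i,b_i$ (cleanest stated directly on the genus-two curve, which sidesteps any worry about whether the images of $a_i,b_i$ close up in the $\Lambda$-plane). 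This is cleaner and in fact consistent with what the paper does later when it writes $\Omega_3^{(1)}(P)=4y+v\,\Omega_2^{(1)}(P)$ in \eqref{eq:define-Omega-123-P}, so the two approaches buy the same identity, yours with no elliptic-function algebra. One caution on your symmetry shortcut: the involution $(\lambda,y)\mapsto(-\lambda,y)$ does fix the odd differentials, but its action on homology need only send $b_1$ to $b_2$ up to a combination of $a$-cycles, and since the $a$-periods of the odd differentials are nonzero this does not by itself force $\mathcal{Y}^{[j]}_{b_1}=\mathcal{Y}^{[j]}_{b_2}$; either pin down the homology action from the specific basis in \Cref{fig:genus-two-figure-p} or, as the paper does (and as you also plan), just compute the $b$-periods directly, which settles the signs and the factor of $\ii$ in any case.
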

\begin{proof}
Based on the \Cref{prop:case-2-P}, combined with the equation \eqref{eq:integral-dn-2} and the integration formulas \eqref{eq:define-first-integral} and \eqref{eq:define-second-integral}, it is easy to obtain $\mathcal{Y}_{a_1}^{[2n+1]}$, $n=0,1,2$:
\begin{equation}\nonumber
	\begin{split}
	    \mathcal{Y}_{a_1}^{[1]}= &\ \int_{\lambda_2}^{\lambda_1}\frac{2\lambda}{y}\dd \lambda	\xlongequal{\eqref{eq:integral-dn-2}} \frac{2\ii K^{(1)}_2}{\sqrt{\lambda_1^2-\lambda_3^2}},\\
		\mathcal{Y}_{a_1}^{[3]}	\xlongequal{\eqref{eq:integral-dn-2}} &\ \frac{2\ii }{\sqrt{\lambda_1^2-\lambda_3^2}}\int_{0}^{   K^{(1)}_2}(\lambda_3^2+(\lambda_1^2-\lambda_3^2)\dn^2(\nu,  k^{(1)}_2)) \dd \nu\xlongequal[\eqref{eq:define-second-integral}]{\eqref{eq:define-first-integral}}\frac{2\ii  \left(\lambda_3^2  K^{(1)}_2+(\lambda_1^2-\lambda_3^2) E^{(1)}_2\right)}{\sqrt{\lambda_1^2-\lambda_3^2}}, \\
		\mathcal{Y}_{a_1}^{[5]}	\xlongequal{\eqref{eq:integral-dn-2}} &\ \frac{2\ii}{\sqrt{\lambda_1^2-\lambda_3^2}}\int_{0}^{   K^{(1)}_2}\left(\lambda_3^2+(\lambda_1^2-\lambda_3^2)\dn^2(\nu,  k^{(1)}_2)\right)^2 \dd \nu  \xlongequal[\eqref{eq:define-second-integral}]{\eqref{eq:recursive-formula-dn},\eqref{eq:define-first-integral}}  \frac{v\mathcal{Y}_{a_1}^{[3]}-v_1\mathcal{Y}_{a_1}^{[1]}}{3}.\\
	\end{split}
\end{equation}
Similarly, we obtain $\mathcal{Y}_{a_2}^{[2n+1]}$ and  $\mathcal{Y}_{b_i}^{[2n+1]}$, $i=1,2$.
\end{proof} 

\begin{lemma}\label{lemma:U-V-R-dn}
	When branch points satisfy \ref{case1}, parameters $U^{(1)}$, $\Delta^{(1)}$, $\mathbf{B}^{(1)}$, $\omega_2^{(1)}$, and $\omega_3^{(1)}$ are given in equation \eqref{eq:u-parameters-dn}, and $V^{(1)}=vU^{(1)}$, $\omega_3^{(1)}=v\omega_2^{(1)}$,
	\begin{equation}\nonumber
		\omega_1^{(1)}=
		\frac{-\vartheta_2(0,  \tau^{(1)}_1)\vartheta_1(2 \nu^{(1)} ,  \tau^{(1)}_1)}{\lambda_3\vartheta_1( \nu^{(1)} ,  \tau^{(1)}_1)\vartheta_2( \nu^{(1)} ,  \tau^{(1)}_1)}.
	\end{equation}
\end{lemma}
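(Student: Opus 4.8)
The plan is to read off every listed parameter from the normalized basis of holomorphic differentials and from the Abelian integrals $\Omega_{1,2,3}$, using the explicit $a$- and $b$-periods collected in \Cref{lemma:dn-1-int} and \Cref{lemma:dn-2-int} together with the symmetries of the homology basis in \Cref{fig:genus-two-figure-p}. First I would fix the coefficients $d_{ik}$ in $w_i\,\dd\lambda=(d_{i0}+d_{i1}\lambda)y^{-1}\,\dd\lambda$: since \Cref{fig:genus-two-figure-p} gives $\mathcal{Y}_{a_1}^{[0]}=-\mathcal{Y}_{a_2}^{[0]}$ and $\mathcal{Y}_{a_1}^{[1]}=\mathcal{Y}_{a_2}^{[1]}$, the normalization $\oint_{a_j}w_i\,\dd\lambda=2\pi\ii\delta_{ij}$ in \eqref{eq:define-basis-B} forces $d_{10}=-d_{20}=\pi\ii/\mathcal{Y}_{a_1}^{[0]}$ and $d_{11}=d_{21}=\pi\ii/\mathcal{Y}_{a_1}^{[1]}$. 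Inserting $\mathcal{Y}_{a_1}^{[1]}$ from \Cref{lemma:dn-2-int} and using $U_i=2d_{i1}$ from \eqref{eq:U-v} then gives $U^{(1)}=\kappa^{(1)}\mathbf{1}$ with $\kappa^{(1)}=\pi(\lambda_1^2-\lambda_3^2)^{1/2}/K^{(1)}_2$, while $V^{(1)}=vU^{(1)}$ and $\omega_3^{(1)}=v\omega_2^{(1)}$ are immediate from \eqref{eq:U-V-v} and \eqref{eq:omega-2-3-define}.

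Next I would evaluate the period matrix directly from $\mathbf{B}^{(1)}_{ij}=\oint_{b_j}w_i\,\dd\lambda=d_{i0}\mathcal{Y}_{b_j}^{[0]}+d_{i1}\mathcal{Y}_{b_j}^{[1]}$. The relations $\mathcal{Y}_{b_2}^{[0]}=-\mathcal{Y}_{b_1}^{[0]}$, $\mathcal{Y}_{b_1}^{[1]}=\mathcal{Y}_{b_2}^{[1]}$ from \Cref{lemma:dn-1-int} and \Cref{lemma:dn-2-int}, combined with $d_{20}=-d_{10}$ and $d_{21}=d_{11}$, make $\mathbf{B}^{(1)}$ symmetric with equal diagonal entries; setting $\ii\pi\tau^{(1)}_1:=d_{10}\mathcal{Y}_{b_1}^{[0]}$ and $\ii\pi\tau^{(1)}_2:=d_{11}\mathcal{Y}_{b_1}^{[1]}$ and substituting the period values reproduces $\tau^{(1)}_i=\ii K^{(1)\prime}_i/K^{(1)}_i$ and the matrix in \eqref{eq:u-parameters-dn-B}. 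For $\omega_2^{(1)}$ I would use \eqref{eq:omega-2-3-define}: since $U_1=U_2$ but $\mathcal{Y}_{a_1}^{[2]}=-\mathcal{Y}_{a_2}^{[2]}$, the sum $\sum_i U_i\mathcal{Y}_{a_i}^{[2]}$ cancels, hence $\omega_2^{(1)}=0$ and likewise $\omega_3^{(1)}=0$.

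The two substantive parameters are $\Delta^{(1)}$ and $\omega_1^{(1)}$. For $\Delta^{(1)}=\mathcal{A}_{\infty^-}(\infty^+)=\bigl(\int_{\infty^-}^{\infty^+}w_1\,\dd\lambda,\int_{\infty^-}^{\infty^+}w_2\,\dd\lambda\bigr)$ I would split the contour through the base point $P_0=(\lambda_3,0)$, apply the substitution $\lambda^2=\Lambda$ and the conformal map of \Cref{prop:case-1-P} to reduce each component to a combination of complete and incomplete elliptic integrals of the first kind, and carefully fix the sheet of $y$ and the orientation so the $\infty^-\!\to\!\infty^+$ path is traced exactly once. The $\lambda y^{-1}\,\dd\lambda$-term then contributes the common half-period $\ii\pi$ modulo the period lattice, while the $y^{-1}\,\dd\lambda$-term contributes $\pm\nu^{(1)}$ with $\nu^{(1)}=\tfrac{\ii\pi}{K^{(1)}_1}F(\lambda_2/\lambda_3,k^{(1)}_1)$, giving $\Delta^{(1)}=(\ii\pi+\nu^{(1)}\sigma_3)\mathbf{1}$. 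For $\omega_1^{(1)}$ I would represent the third-kind integral $\Omega_1$ of \eqref{eq:define-int-w1} by the standard theta-quotient formula, substitute into \eqref{eq:omega-1-define}, and let $P\to\infty^+$ so that $\ln\omega_1^{(1)}$ becomes a ratio of theta values at $\mathcal{A}(\infty^\pm)$ minus $\ln\lambda_3$; using the symmetric $\mathbf{B}^{(1)}$ of the previous step I would factor the genus-two Riemann theta into products of Jacobi theta functions $\vartheta_1,\vartheta_2$ with modulus $\tau^{(1)}_1$, evaluated at the argument $\nu^{(1)}$ supplied by the $\Delta^{(1)}$ computation, which yields the stated expression for $\omega_1^{(1)}$.

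I expect the main obstacle to be precisely these last two computations. For $\Delta^{(1)}$ the integrand is path-dependent and the endpoints are the two points at infinity on opposite sheets, so one must track exactly on which side of each branch cut of \Cref{fig:genus-two-figure-p} the path runs and how that cut structure transforms under $\lambda^2=\Lambda$; for $\omega_1^{(1)}$ one must control the logarithmic singularity of $\dd\Omega_1$ in \eqref{eq:omega-1-define} while simultaneously carrying out the reduction of the genus-two Riemann theta to a product of Jacobi theta functions — a reduction valid exactly because $\mathbf{B}^{(1)}$ has the symmetric form found above, and the step that ultimately turns the hyperelliptic data into the closed elliptic formulas claimed in the statement.
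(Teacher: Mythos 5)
Your determination of the easy parameters is the same as the paper's and is correct: solving $\oint_{a_j}w_i\,\dd\lambda=2\pi\ii\delta_{ij}$ with the symmetries $\mathcal{Y}_{a_1}^{[0]}=-\mathcal{Y}_{a_2}^{[0]}$, $\mathcal{Y}_{a_1}^{[1]}=\mathcal{Y}_{a_2}^{[1]}$ gives exactly the paper's $d_{ij}$, hence $U^{(1)}=\kappa^{(1)}\mathbf{1}$, $V^{(1)}=vU^{(1)}$ via \eqref{eq:U-V-v}, the stated $\mathbf{B}^{(1)}$, and $\omega_2^{(1)}=\omega_3^{(1)}=0$ by the cancellation in \eqref{eq:omega-2-3-define}. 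The genuine divergence is in $\Delta^{(1)}$ and $\omega_1^{(1)}$. The paper never integrates $w_{1,2}$ from $\infty^-$ to $\infty^+$: it first fixes the normalized third-kind differential $\dd\Omega_1^{(1)}=\sum_{i=0}^{2}c_{1i}\lambda^i y^{-1}\dd\lambda$ (obtaining $c_{12}=1$, $c_{11}=0$ and $c_{10}$ from the vanishing $a$-periods) and then reads off $\Delta^{(1)}_j$ as the $b_j$-periods $\mathcal{Y}_{b_j}^{[2]}+c_{10}\mathcal{Y}_{b_j}^{[0]}$, i.e.\ it needs only the complete periods already tabulated in \Cref{lemma:dn-1-int} and \Cref{lemma:dn-2-int}; this rests on the reciprocity relation between $\mathcal{A}_{\infty^-}(\infty^+)$ and the $b$-periods of $\dd\Omega_1$, which the paper leaves implicit. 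Your direct route through incomplete integrals with sheet- and path-tracking is legitimate — it is essentially the $P\to\infty^+$ limit of the Abel-map computation in \eqref{eq:define-AP-P} — but it is harder than necessary, and you should check your normalization: the paper's own proof produces $\Delta^{(1)}_{1,2}=\ii\pi\pm 2\nu^{(1)}$, while you (following the statement in \eqref{eq:u-parameters-dn}) assert $\ii\pi\pm\nu^{(1)}$; that factor must be pinned down by the actual path bookkeeping you defer.

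For $\omega_1^{(1)}$ your plan is where the real gap sits. The paper does not use a genus-two theta-quotient representation of $\Omega_1$ at all: it evaluates the regularized limit $\lim_{P\to\infty^+}\int_{\Gamma_p}\dd\Omega_1^{(1)}-\dd\lambda/\lambda$ in \eqref{eq:omega-1-define} directly, pushing the hyperelliptic integral through the map \eqref{eq:fs-2}, then using the Zeta-function addition formulas \eqref{eq:add-app}, the third-kind representation \eqref{eq:formula-trans-Pi-Zeta-theta}, and the shifts \eqref{eq:Jacobi-Theta-K-iK}; the Jacobi thetas of modulus $\tau^{(1)}_1$ enter through that elliptic-integral reduction, not through a factorization of the genus-two Riemann theta. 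Your alternative (write $\Omega_1$ as a logarithm of a ratio of genus-two thetas, let $P\to\infty^+$, and factor via the symmetric $\mathbf{B}^{(1)}$) is conceptually viable, but as stated it omits the decisive step: since $\Omega_1(P)=\ln\lambda+\ln\omega_1+o(1)$, the constant $\omega_1^{(1)}$ is a regularized quantity, and extracting it from a theta-quotient requires the first-order expansion of the Abel map at $\infty^{\pm}$ in the local coordinate $1/\lambda$ (so leading Abel-map coefficients and derivative-of-theta terms enter), together with a careful treatment of the base point $P_0=(\lambda_3,0)$, which is what produces the $-\ln\lambda_3$. None of this is addressed in your sketch, and it is precisely where the closed form $\omega_1^{(1)}=-\vartheta_2(0,\tau^{(1)}_1)\vartheta_1(2\nu^{(1)},\tau^{(1)}_1)/\bigl(\lambda_3\vartheta_1(\nu^{(1)},\tau^{(1)}_1)\vartheta_2(\nu^{(1)},\tau^{(1)}_1)\bigr)$ must come from. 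So: sound in outline, same as the paper for the routine parameters, but incomplete at the two computations that actually constitute the lemma.
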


\begin{proof}
	By equation \eqref{eq:C_u0-define}, to obtain the explicit solution of the \ref{eq:mKdV} equation, we need to provide the values of parameters $U^{(1)}$, $V^{(1)}$, $\Delta^{(1)}$, $\omega_2^{(1)}$, and $\omega_3^{(1)}$.
	From the definition of $w_{1,2}$ in equation \eqref{eq:define-basis-B}, parameters $d_{ij}$ are given by $d_{11}=d_{21}= \pi\sqrt{\lambda_1^2-\lambda_3^2}/(2   K^{(1)}_2)$ and $d_{10}=-d_{20}=\pi \lambda_2\sqrt{\lambda_1^2-\lambda_3^2}/(2  K^{(1)}_1)$.
	Thus, we obtain $ \mathbf{B}^{(1)}$.
	Utilizing the equation \eqref{eq:U-V-v}, we get
		\begin{equation}\nonumber
			V_i^{(1)}=vU_i^{(1)}=2vd_{i1}=v \kappa^{(1)}, \qquad i=1,2,
		\end{equation}
		where $\kappa^{(1)}$ is defined in equation \eqref{eq:u-parameters-dn-B}.
	From the above results, parameters $\omega_2^{(1)}$ and $\omega_3^{(1)}$, defined in equation \eqref{eq:omega-2-3-define} as $\omega_2^{(1)}=\omega_3^{(1)}=0$.
	According to the definition of $\Omega_{1}^{(1)}(P)$, we set 
		\begin{equation}\label{eq:define-Omega-1-c}
			\dd\Omega_1^{(1)}=\sum_{i=0}^2 c_{1i}\frac{\lambda^i}{y}\dd \lambda.
		\end{equation}
		Considering $\lambda \rightarrow \infty$ and combining with equation \eqref{eq:define-int-w1}, we can obtain parameters $c_{12}=1$.
		Considering the case $\oint_{a_i}\dd \Omega_1^{(1)}=0$, $i=1,2$, we obtain $c_{11}=0$, and $c_{10}=-\lambda_1^2\Pi((\lambda_2^2-\lambda_1^2)/\lambda_2^2,  k^{(1)}_1)/  K^{(1)}_1$.
		Then, we get
		\begin{equation}\nonumber
			\begin{split}
				\Delta_1^{(1)}=&\oint_{b_1}\dd\Omega_1^{(1)}=\mathcal{Y}_{b_1}^{[2]}+c_{10}\mathcal{Y}_{b_1}^{[0]}
				=\ii \pi +2\nu^{(1)}, \quad 
				\Delta_2^{(1)}=\oint_{b_2}\dd\Omega_1^{(1)}
				=\mathcal{Y}_{b_2}^{[2]}+c_{10}\mathcal{Y}_{b_2}^{[0]}
				=\ii \pi -2\nu^{(1)} ,
			\end{split}
		\end{equation}
		through utilizing the formulas shown in \cite{ByrdF-54} and the definition of the parameter $\nu$ defined in equation \eqref{eq:u-parameters-dn}. 	
	In summary, we obtain parameters $U^{(1)}$, $V^{(1)}$, $\Delta^{(1)}$, $ \mathbf{B}^{(1)} $, $\omega_2^{(1)}$, and $\omega_3^{(1)}$.
	
	Considering the transformation (defined  in equation \eqref{eq:fs-2}) and the correspondence between $\lambda$ and $\nu$ shown in \Cref{prop:appendenx-f-1-2}, 
	we can obtain that when $\lambda=+\infty$, the corresponding parameter $\nu$ is $\nu= \nu^{(1)} _{\infty}$.
	By combining this with equation \eqref{eq:integral-dn-1}, it is easy to obtain 
	\begin{equation}\label{eq:define-nu-hat-infinity}
		\int_{\lambda_3}^{+\infty}\frac{c_{10}}{y}\dd \lambda=\frac{\ii c_{10}\left( \nu^{(1)}_{\infty}-  K^{(1)}_1\right)}{\lambda_2\sqrt{\lambda_1^2-\lambda_3^2}}
		, \qquad  
		\nu^{(1)} _{\infty}=F\left(\sqrt{\frac{\lambda_1^2-\lambda_3^2}{-\lambda_3^2}},  k^{(1)}_1\right).
	\end{equation}
	By combining equation \eqref{eq:integral-dn-2} with the properties of Jacobi elliptic functions, we obtain
	\begin{equation}\label{eq:int-infini-1}
		\begin{split}
			&\ \int_{\lambda_3}^{+\infty}\frac{\lambda^2}{y}-\frac{1}{\lambda}\dd \lambda
			\xlongequal[\eqref{eq:add-app}]{\eqref{eq:integral-dn-2}} \frac{\ii \lambda_1^2( \nu^{(1)} _{\infty}-  K^{(1)}_1) }{\lambda_2\sqrt{\lambda_1^2-\lambda_3^2}}
			-\int_{  K^{(1)}_1}^{ \nu^{(1)} _{\infty}}\frac{\sn( \nu^{(1)} _3+\nu,  k^{(1)}_1)}{\sn(\nu,  k^{(1)}_1)\sn(\nu^{(1)} _3,  k^{(1)}_1)}\dd \nu\\
			\!\!	\xlongequal[\eqref{eq:Jacobi-shift}]{\eqref{eq:add-app}}&\ \frac{\ii \lambda_1^2( \nu^{(1)} _{\infty}-  K^{(1)}_1) }{\lambda_2\sqrt{\lambda_1^2-\lambda_3^2}}
			-\int_{  K^{(1)}_1}^{ \nu^{(1)} _{\infty}}Z(\nu+\ii  K^{(1)\prime}_1,  k^{(1)}_1)+Z( \nu^{(1)} _3+\ii  K^{(1)\prime}_1,  k^{(1)}_1)-Z( \nu^{(1)} _3+\nu+2\ii  K^{(1)\prime}_1,  k^{(1)}_1)\dd \nu\\
			\xlongequal{\eqref{eq:Zeta-define}}&\ \frac{\ii \lambda_1^2( \nu^{(1)} _{\infty}-  K^{(1)}_1) }{\lambda_2\sqrt{\lambda_1^2-\lambda_3^2}}
			-\ln\left(\frac{\vartheta_4(\frac{\ii( \nu^{(1)} _{\infty}+\ii  K^{(1)\prime}_1)\pi}{  K^{(1)}_1},  \tau^{(1)}_1)\vartheta_4(\frac{\ii ( \nu^{(1)} _3+  K^{(1)}_1+2\ii  K^{(1)\prime}_1)\pi}{  K^{(1)}_1},  \tau^{(1)}_1)}{\vartheta_4(\frac{\ii(  K^{(1)}_1+\ii  K^{(1)\prime}_1)\pi}{  K^{(1)}_1},  \tau^{(1)}_1)\vartheta_4(\frac{\ii ( \nu^{(1)} _3+ \nu^{(1)} _{\infty}+2\ii  K^{(1)\prime}_1)\pi }{  K^{(1)}_1},  \tau^{(1)}_1)}\right)\\
			&\
			-Z( \nu^{(1)} _3+\ii  K^{(1)\prime}_1,  k^{(1)}_1)( \nu^{(1)} _{\infty}-  K^{(1)}_1), \qquad 
			\nu^{(1)} _3=F\left(\sqrt{\frac{-\lambda_2^2}{\lambda_1^2-\lambda_2^2}},  k^{(1)}_1\right).
		\end{split}	
	\end{equation}
	Since $  K^{(1)}_1- \nu^{(1)} _{\infty}= \nu^{(1)}   K^{(1)}_1/(\ii \pi)$ and $ \nu^{(1)} _3=  K^{(1)}_1- \nu^{(1)}   K^{(1)}_1/(\ii \pi)-\ii   K^{(1)\prime}_1$
	by the transformation of parameters in \cite{ByrdF-54} and equation \eqref{eq:Jacobi-shift}, we can simplify the above integration as 
	\begin{equation}\nonumber
		\begin{split}
			\int_{\lambda_3}^{+\infty}\dd \Omega_1^{(1)}-\frac{\dd \lambda}{\lambda}\xlongequal[\eqref{eq:integral-dn-2}]{\eqref{eq:int-infini-1}} &\ 
			\frac{\ii(c_{10}+ \lambda_1^2)(  K^{(1)} _1- \nu^{(1)} _{\infty}) }{-\lambda_2\sqrt{\lambda_1^2-\lambda_3^2}}
			\! -\! \int_{  K^{(1)}_1}^{ \nu^{(1)} _{\infty}}\frac{\sn( \nu^{(1)} _3+\nu,  k^{(1)}_1)}{\sn(\nu,  k^{(1)}_1)\sn( \nu^{(1)} _3,  k^{(1)}_1)}\dd \nu \\
			\xlongequal[\eqref{eq:Jacobi-Theta-K-iK}]{\eqref{eq:formula-trans-Pi-Zeta-theta}} &\ \! -\! \ln\left(\! \frac{\vartheta_1( \nu^{(1)} ,  \tau^{(1)} _1)\vartheta_2( \nu^{(1)} ,  \tau^{(1)} _1)}{\vartheta_2(0,  \tau^{(1)} _1)\vartheta_1(2 \nu^{(1)} ,  \tau^{(1)} _1)}\! \right)\!.
		\end{split}
	\end{equation}
	By equation \eqref{eq:omega-1-define}, we obtain
	\begin{equation}\nonumber
		\omega_1^{(1)}=\exp\left(-\ln(\lambda_3)+\lim_{ P\rightarrow \infty^{+}}\int_{\Gamma_p}\dd\Omega_1^{(1)}-\frac{\dd \lambda}{\lambda}+\ii \pi \right)=\frac{\vartheta_2(0,  \tau^{(1)} _1)\vartheta_1(2 \nu^{(1)} ,  \tau^{(1)} _1)}{\lambda_3\vartheta_1( \nu^{(1)} ,  \tau^{(1)} _1)\vartheta_2( \nu^{(1)} ,  \tau^{(1)} _1)}.
	\end{equation}
	Thus, we express the explicit expression of the $\omega_1^{(1)}$ in terms of Jacobi theta function.
\end{proof}

In summary, we obtain parameters we needed with three pairs of imaginary branch points. 
By the same way, we also obtain parameters with two pairs of complex branch points and a pair of imaginary branch point with the different conformal map we introduced.

\vspace{0.1cm}
\noindent $\bullet$\quad\textbf{\large \ref{case2}}

Without loss of generality, we set $\Re(\lambda_3)>\Re(\lambda_2)=0>\Re(\lambda_1)$.
Since $\lambda_2^*=-\lambda_2$, $\lambda_1^*=-\lambda_3$, $\lambda_3^*=-\lambda_1$, the definition of the parameter $y$ (in equation \eqref{eq:define-curve-algebro}) could be rewritten as $y^2=(\lambda^2-\lambda_1^2)(\lambda^2-\lambda_2^2)(\lambda^2-\lambda_3^2)$ with homology basis shown in \Cref{fig:genus-two-figure}. 
Then, we calculate the related hyperelliptic integrals. 
Similarly, we would introduce two different functions to obtain all explicit expressions of hyperelliptic integrals.
From the Propositions \ref{prop:elliptic-int-2}-\ref{prop:case-2-C}, 
the hyperelliptic integrals \eqref{eq:hyper} along the $a_{1,2}$-circle and $b_{1,2}$-circle could be expressed by elliptic integrals and branch points $\lambda_{1,2,3}$ as follows: 
\begin{lemma}\label{lemma:cn-1-int}
 	The hyperelliptic integrals defined in equation \eqref{eq:Y-define} with $j=0,2$, along the $a_{1,2}$-circle and $b_{1,2}$-circle are
	\begin{equation}\nonumber
		\begin{split}
			&\mathcal{Y}_{a_1}^{[0]}=\frac{2\ii   K^{(2)}_1}{\sqrt{AB}}, 
			\qquad
			\mathcal{Y}_{a_2}^{[0]}=-2\mathcal{Y}_{a_1}^{[0]}, 
			\qquad
			\mathcal{Y}_{a_1}^{[2]}=\frac{\ii\lambda_2^2 }{\sqrt{AB}}\left(  K^{(2)}_1 +\frac{B+A}{B-A}\Pi\left(\frac{(B-A)^2-\lambda_2^4}{(B-A)^2},  k^{(2)}_1\right)\right), \\
			& \mathcal{Y}_{b_1}^{[0]}=\mathcal{Y}_{a_1}^{[0]},\qquad \mathcal{Y}_{b_1}^{[2]}=\mathcal{Y}_{a_1}^{[2]}, \qquad
			\mathcal{Y}_{a_2}^{[2]}=\frac{2\ii \lambda_2^2}{\sqrt{AB}(A-B)}\left(2B  K^{(2)}_1-(A+B)\Pi\left(\frac{(A-B)^2}{-4AB},  k^{(2)}_1\right)\right),  \\
			& 
			\mathcal{Y}_{b_2}^{[0]}=\frac{K^{(2)\prime}_1-\ii  K^{(2)}_1}{\sqrt{AB}}, \qquad
			\mathcal{Y}_{b_2}^{[2]}=\frac{\mathcal{Y}_{b_1}^{[2]}+\ii \pi}{2}+\lambda_2^2\frac{2B  K^{(2)\prime}_1+(A-B)\Pi\left(\frac{(A+B)^2}{4AB},  k^{(2)\prime}_1\right)}{2(A+B)\sqrt{AB}},
		\end{split}
	\end{equation}
	where notations $\mathcal{Y}_{a_i}^{[j]}$, $\mathcal{Y}_{b_i}^{[j]}$ are defined in equation \eqref{eq:Y-define}, and parameters $k^{(2)}_{1,2}$, $A$ and $B$ are defined in equation \eqref{eq:u-parameters-cn}.
\end{lemma}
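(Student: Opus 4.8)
The plan is to follow the same four-step reduction used in the proof of \Cref{lemma:dn-1-int}, now carried out for the complex-branch-point configuration \ref{case2} with the homology basis depicted in \Cref{fig:genus-two-figure-c}. Since $j=0,2$ in \eqref{eq:Y-define}, the substitution $\lambda^2=\Lambda$ places every integrand in the class \eqref{eq:hyper-1}, so first I would invoke \Cref{prop:elliptic-int-2} to introduce the conformal map between $\Lambda$ and the auxiliary variable that reduces $\int[\Lambda(\Lambda-\lambda_1^2)(\Lambda-\lambda_2^2)(\Lambda-\lambda_3^2)]^{-1/2}\,\dd\Lambda$ to the Legendre normal form $\int[(1-z^2)((k^{\prime})^2+k^2z^2)]^{-1/2}\,\dd z$ that is appropriate here; this is the stage at which the moduli $k^{(2)}_{1,2}$ and the quantities $A=|\lambda_1^2-\lambda_2^2|$, $B=|\lambda_1^2|$ enter.

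Next I would apply the explicit substitutions of \Cref{prop:case-1-C} and \Cref{prop:case-2-C}, namely the transformations \eqref{eq:fs-4} and \eqref{eq:fs-3} that express $\lambda$ as a linear-fractional function of $\sn(\nu,k^{(2)}_1)$, $\cn$, $\dn$, and thereby rewrite $\lambda^{2n}/y\,\dd\lambda$ ($n=0,1$) as a rational function of $\sn^2(\nu,k^{(2)}_1)$. Along the $a_1$-cycle the endpoints map to real multiples of $K^{(2)}_1$, so the $n=0$ integral evaluates to $K^{(2)}_1$ via \eqref{eq:define-first-integral} and the $n=1$ integral to a third-kind integral $\Pi(\cdot,k^{(2)}_1)$ via \eqref{eq:define-third-integral}; the relations $\mathcal{Y}_{a_2}^{[0]}=-2\mathcal{Y}_{a_1}^{[0]}$ and $\mathcal{Y}_{b_1}^{[j]}=\mathcal{Y}_{a_1}^{[j]}$ then follow from how the $a_2$- and $b_1$-cycles decompose into $a_1$-type arcs in \Cref{fig:genus-two-figure-c}. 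For the $b_2$-cycle the image path in the $\nu$-plane runs partly along the real segment $[0,K^{(2)}_1]$ and partly along a vertical segment of length $K^{(2)\prime}_1$, which is precisely what produces the combination $K^{(2)\prime}_1-\ii K^{(2)}_1$ in $\mathcal{Y}_{b_2}^{[0]}$ and the half-integral plus $\ii\pi$ shift in $\mathcal{Y}_{b_2}^{[2]}$, the $\ii\pi$ being a residue contribution picked up when the contour passes near $\Lambda=0$.

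The delicate point, and the step I expect to be the main obstacle, is Step 3: pinning down the correct integration contour and the correct sheet of $y$ after the linear-fractional transformation. Because two of the six branch points are genuinely complex in \ref{case2}, the cycles $a_2$ and $b_2$ are not symmetric about the real axis, the image contours in the $\nu$-plane are broken paths, and one must track the branch of $y^{1/2}$ across each corner in order to fix the signs of $\ii$ and the additive constants (the $\ii\pi/2$ and the factor $1/2$ appearing in $\mathcal{Y}_{b_2}^{[2]}$). Once the contour and sheet are fixed for each cycle, the remaining work is the routine evaluation of complete and incomplete elliptic integrals of the first and third kinds through \eqref{eq:define-first-integral} and \eqref{eq:define-third-integral}, exactly as in \Cref{lemma:dn-1-int}: I would carry out $\mathcal{Y}_{a_1}^{[0]}$ and $\mathcal{Y}_{a_1}^{[2]}$ in full and then indicate that $\mathcal{Y}_{a_2}^{[j]}$ and $\mathcal{Y}_{b_i}^{[j]}$ are obtained by the same method, using the reduction identity $\lambda^2/y=(\lambda^2-c)/y+c/y$ with a suitably chosen constant $c$ whenever it is convenient to split off a first-kind piece.
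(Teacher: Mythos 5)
Your proposal is correct and follows essentially the same route as the paper's proof: reduce the even-power integrals via the Case 2 conformal map \eqref{eq:fs-4} (only this transformation is needed for $j=0,2$; \eqref{eq:fs-3} belongs to the odd-power \Cref{lemma:cn-2-int}), evaluate the image of the $a_1$-cycle in the $\nu$-plane using \eqref{eq:define-first-integral}, \eqref{eq:define-third-integral} and \eqref{eq:Pi-K-alpha} together with the recursion of \Cref{prop:case-1-C}, and handle the remaining cycles by the same contour-and-sheet bookkeeping. This is exactly what the paper does, computing $\mathcal{Y}_{a_1}^{[0]}$ and $\mathcal{Y}_{a_1}^{[2]}$ explicitly (dropping the odd $\cn$-term by symmetry) and obtaining the other entries ``similarly.''
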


\begin{proof}
	By  the equation \eqref{eq:Y-a-2n+1-cn-calculate-1}, it is easy to obtain 
	\begin{equation}\nonumber
		\begin{split}
			\mathcal{Y}_{a_1}^{[0]}=&\ 2\int_{\lambda_1^*}^{\lambda_1}\frac{\dd  \lambda}{y}
			\xlongequal{\eqref{eq:Y-a-2n+1-cn-calculate-1}} \int_{-3  K^{(2)}_1+\ii   K^{(2)\prime}_1}^{-  K^{(2)}_1+\ii   K^{(2)\prime}_1} \frac{\ii }{\sqrt{AB}} \dd \nu=\frac{2\ii K^{(2)}_1}{\sqrt{AB}}, \\
			\mathcal{Y}_{a_1}^{[2]}
			\xlongequal{\eqref{eq:Y-a-2n+1-cn-calculate-1}}
			&\  \int_{-3  K^{(2)}_1+\ii   K^{(2)\prime}_1}^{-  K^{(2)}_1+\ii   K^{(2)\prime}_1} \frac{\ii \lambda_2^2B}{\sqrt{AB}(B-A)} \left(1-\frac{2A(B+A)}{4AB+(B-A)^2\sn^2(\nu,  k^{(2)}_1)}\right) \dd \nu\\
			=&\  \frac{\ii\lambda_2^2 }{\sqrt{AB}}\left(  K^{(2)}_1 +\frac{B+A}{B-A}\Pi\left(\frac{(B-A)^2-\lambda_2^4}{(B-A)^2},  k^{(2)}_1\right)\right)
		\end{split}
	\end{equation}
	since $(\cn(\nu+2  K^{(2)}_1-\ii   K^{(2)\prime}_1,  k^{(2)}_1))/(a+b\sn^2(\nu+2  K^{(2)}_1-\ii   K^{(2)\prime}_1,  k^{(2)}_1))$ is odd.
	Combined with equation \eqref{eq:Pi-K-alpha}, it is easy to obtain $\mathcal{Y}_{a_1}^{[0]}$ and $\mathcal{Y}_{a_1}^{[2]}$.
	Together with the recursive formulas provided in \Cref{prop:case-1-C}, we obtain the $\mathcal{Y}_{a_1}^{[2n]}$, $n=0,1$.
	Similarly, we obtain the integration $\mathcal{Y}_{a_2}^{[j]}$ and $\mathcal{Y}_{b_i}^{[j]}$, $i=1,2$, $j=0,2$.
\end{proof}

\begin{lemma}\label{lemma:cn-2-int}
	The hyperelliptic integrals defined in equation \eqref{eq:Y-define} along the $a_i$-circle and $b_i$-circle are $	\mathcal{Y}_{a_2}^{[1]}=\mathcal{Y}_{a_2}^{[3]}=\mathcal{Y}_{a_2}^{[5]}=0$,
	\begin{equation}\nonumber
		\begin{split}
			&\mathcal{Y}_{a_1}^{[1]} =-\frac{2\ii   K^{(2)}_2}{\sqrt{A}}, \quad  
			\mathcal{Y}_{a_1}^{[3]} =-\frac{2\ii \left((\lambda_2^2-A)  K^{(2)}_2+2 A  E^{(2)}_2\right)}{\sqrt{A}},\quad
			\mathcal{Y}_{a_1}^{[5]} = \frac{v}{3}\mathcal{Y}_{a_1}^{[3]}- \frac{v_1}{3}\mathcal{Y}_{a_1}^{[1]},\\
			&\mathcal{Y}_{b_1}^{[1]}=\frac{2  K^{(2)\prime}_2}{\sqrt{A}}, \qquad	
			\mathcal{Y}_{b_1}^{[3]}=\frac{2(A+\lambda_2^2)  K^{(2)\prime}_2-4A E^{(2)\prime}_2}{\sqrt{A}}, \qquad 
			\mathcal{Y}_{b_1}^{[5]}=\frac{v}{3}\mathcal{Y}_{b_1}^{[3]}-\frac{v_1}{3}\mathcal{Y}_{b_1}^{[1]}, \\
			&\mathcal{Y}_{b_2}^{[1]}=\frac{\ii   K^{(2)}_2+  K^{(2)\prime}_2}{\sqrt{A}}, \qquad
			\mathcal{Y}_{b_2}^{[3]}=\ii\frac{2A  E^{(2)}_2-(A-\lambda_2^2)  K^{(2)}_2}{\sqrt{A}}+\frac{\mathcal{Y}_{b_1}^{[3]}}{2},\qquad \mathcal{Y}_{b_2}^{[5]}=\frac{v}{3}\mathcal{Y}_{b_2}^{[3]}-\frac{v_1}{3}\mathcal{Y}_{b_2}^{[1]},
		\end{split}
	\end{equation}
	where notations $\mathcal{Y}_{a_i}^{[j]}$ and $\mathcal{Y}_{b_i}^{[j]}$ are defined in equation \eqref{eq:Y-define}.
	The parameter $E_2^{(2)}$ represents $E(k_2^{(2)})$ with $E(\cdot)$ defined in \Cref{define:elliptic-function}.
\end{lemma}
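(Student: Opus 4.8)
The plan is to follow, almost verbatim, the template already used in the proofs of \Cref{lemma:dn-2-int} and \Cref{lemma:cn-1-int}, now specialized to the odd powers $j=1,3,5$ under \ref{case2}. First I would invoke \Cref{prop:case-2-C} together with the conformal map of Step~1 adapted to \ref{case2} (see \Cref{appendix:map}): after the substitution $\lambda^2=\Lambda$ of \eqref{eq:hyper-2}, which absorbs the factor $\lambda\,\dd\lambda$, the integrand $\lambda^{2n+1}y^{-1}\,\dd\lambda$ becomes, along the relevant arc, a polynomial of degree $n$ in $\dn^2(\nu,k^{(2)}_2)$ times $\dd\nu$ (and its complementary-modulus counterpart in $\dn^2(\nu,k^{(2)\prime}_2)$ on the $b$-arcs). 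Evaluating the base cases $n=0,1$ against the canonical first- and second-kind integrals \eqref{eq:define-first-integral} and \eqref{eq:define-second-integral} yields $\mathcal{Y}_{a_1}^{[1]}$ and $\mathcal{Y}_{a_1}^{[3]}$, and then the recursion for powers of $\dn^2$ — the \ref{case2} analogue of \eqref{eq:recursive-formula-dn} recorded in \Cref{prop:case-1-C} — gives $\mathcal{Y}_{a_1}^{[5]}=\tfrac{v}{3}\mathcal{Y}_{a_1}^{[3]}-\tfrac{v_1}{3}\mathcal{Y}_{a_1}^{[1]}$ with $v,v_1$ as in \Cref{lemma:dn-2-int}.

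For the $a_2$-cycle the point is that, after the Step~2 transformation, the $a_2$-contour is symmetric about the value of $\nu$ where the odd integrand changes sign (precisely as the odd contribution dropped out in the proof of \Cref{lemma:cn-1-int}), so $\mathcal{Y}_{a_2}^{[1]}=\mathcal{Y}_{a_2}^{[3]}=\mathcal{Y}_{a_2}^{[5]}=0$. On the $b_1$-cycle I would redo the base-case computation with the complementary modulus $k^{(2)\prime}_2$, obtaining the expressions in $K^{(2)\prime}_2$ and $E^{(2)\prime}_2$, and close the tower with the same recursion for $\mathcal{Y}_{b_1}^{[5]}$. The $b_2$-cycle is the delicate one: following \Cref{fig:genus-two-figure-c} I would split the $b_2$-contour as one half of $b_1$ plus a residual arc, evaluate the residual arc directly in the $\nu$-variable, and add back $\tfrac12\mathcal{Y}_{b_1}^{[j]}$; this is exactly what produces the terms $\mathcal{Y}_{b_1}^{[3]}/2$ appearing in the statement, and the $j=5$ case again follows from the recursion.

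The main obstacle I anticipate is Step~3 of the general scheme: fixing the precise integration paths and the correct sheet of $y$ after the conformal map and the linear-fractional substitution, and tracking the attendant signs — the minus signs in $\mathcal{Y}_{a_1}^{[1]},\mathcal{Y}_{a_1}^{[3]}$ and the $\ii$-prefactors encode which square root of $\Lambda$ is taken as $\lambda$. The $b_2$ computation in particular forces a contour deformation across a branch cut, and the bookkeeping of that deformation is what dictates the extra $\tfrac12\mathcal{Y}_{b_1}^{[3]}$ correction; once the paths are settled, everything reduces to routine use of the elliptic-integral identities of \Cref{define:elliptic-function} and the tables in \cite{ByrdF-54}, in complete parallel with \Cref{lemma:dn-2-int} and \Cref{lemma:cn-1-int}.
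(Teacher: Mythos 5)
Your proposal follows essentially the same route as the paper: apply the \ref{case2} transformation of \Cref{prop:case-2-C} (equations \eqref{eq:fs-3}, \eqref{eq:fs-3-y}, \eqref{eq:Y-a-2n+1-cn-calculate-1}) to turn $\lambda^{2n+1}y^{-1}\,\dd\lambda$ into a polynomial in $\dn^2(\nu,k_2^{(2)})$, evaluate the $n=0,1$ base cases against \eqref{eq:define-first-integral}--\eqref{eq:define-second-integral}, drop the odd pieces by symmetry, and close the $j=5$ cases with the recursion; the paper's proof does exactly this for $\mathcal{Y}_{a_1}^{[2n+1]}$ and treats the remaining cycles ``similarly'', which your $a_2$-symmetry, complementary-modulus $b_1$, and $b_2$ path-splitting remarks merely flesh out. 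One minor correction: the recursion needed here is \eqref{eq:recursive-formula-dn} as re-used in \Cref{prop:case-2-C}, not the $G_m$-type recursion of \Cref{prop:case-1-C}, which serves the even powers of \Cref{lemma:cn-1-int}.
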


\begin{proof}
	Utilizing the equation \eqref{eq:Y-a-2n+1-cn-calculate-1}, we obtain
	\begin{equation}
		\begin{split}
			\mathcal{Y}_{a_1}^{[2n+1]}=&\ \int_{\lambda_1^*}^{\lambda_1}\frac{2\lambda^{2n+1}}{y}\dd \lambda
			\xlongequal[\eqref{eq:fs-3-y}]{\eqref{eq:fs-3}} \int_{-  K^{(2)}_2-\ii   K^{(2)\prime}_2}^{  K^{(2)}_2-\ii   K^{(2)\prime}_2}-\frac{\ii}{\sqrt{A}} \left(\lambda_2^2-\frac{A+A\cn(\nu,  k^{(2)}_2)}{1-\cn(\nu, k^{(2)}_2)}\right)^n \dd \nu\\
			\xlongequal{\eqref{eq:Jacobi-shift}}&\ -\frac{\ii}{\sqrt{A}}\int_{-  K^{(2)}_2}^{  K^{(2)}_2}\left(\lambda_2^2 -A + 2A\dn^2(\nu,  k^{(2)}_2)\right)^n \dd \nu,
		\end{split}
	\end{equation}
	since $\sn(\nu,  k^{(2)}_2)\dn(\nu,  k^{(2)}_2)$ is odd.
	Applying the recursive formulas provided in \Cref{prop:case-2-C}, we obtain $\mathcal{Y}_{a_1}^{[2n+1]}$, $n=0,1,2$.
	Similarly, we obtain the integrals $\mathcal{Y}_{a_i}^{[j]}$ and $\mathcal{Y}_{b_i}^{[j]}$, $i=1,2$, $j=1,3,5$.
\end{proof}
	
\begin{lemma}\label{lemma:U-V-R-cn}
	Parameters $U^{(2)}$, $\Delta^{(2)}$, $\mathbf{B}^{(2)}$, $\omega_2^{(2)}$, could be expressed in equation \eqref{eq:u-parameters-cn}, and $V^{(2)}=vU^{(2)}$, $\omega_3^{(2)}=v\omega_2^{(2)}$,
	\begin{equation}\nonumber
		\omega_1^{(2)} = \frac{2\lambda_2\vartheta_1(\ii \tau^{(2)}_1+ \nu^{(2)},\tau^{(2)}_1)}{\ii(A-B)\vartheta_4(0,\tau^{(2)}_1)\ee^{\ii \tau^{(2)}_2\pi/4}}.
	\end{equation}
\end{lemma}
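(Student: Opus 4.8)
The plan is to follow, step by step, the same scheme used in the proof of \Cref{lemma:U-V-R-dn}, now with the conformal maps of \Cref{prop:elliptic-int-2} and the transformations \eqref{eq:fs-4}, \eqref{eq:fs-3}, \eqref{eq:fs-3-y} adapted to \ref{case2} (two pairs of complex branch points and one pair on the imaginary axis). First I would read off the normalization constants $d_{ij}$ of the holomorphic differentials $w_i=\sum_{k=0}^{1}d_{ik}\lambda^{k}y^{-1}$ from the $a$-period conditions $\oint_{a_j}w_i\,\dd\lambda=2\pi\ii\delta_{ij}$ in \eqref{eq:define-basis-B}, using the explicit values of $\mathcal{Y}_{a_i}^{[0]}$ and $\mathcal{Y}_{a_i}^{[1]}$ computed in \Cref{lemma:cn-1-int,lemma:cn-2-int}. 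Inverting the resulting linear system gives $d_{10},d_{11},d_{20},d_{21}$ in closed form in terms of $A$, $B$, $\lambda_2$ and the elliptic parameters $K^{(2)}_{1,2}$. Substituting these into $\oint_{b_j}w_i\,\dd\lambda=\mathbf{B}_{ij}$ together with $\mathcal{Y}_{b_i}^{[0]},\mathcal{Y}_{b_i}^{[1]}$ from the same lemmas yields the period matrix $\mathbf{B}^{(2)}$ in \eqref{eq:u-parameters-cn-B}. By \eqref{eq:U-v} one has $U_i^{(2)}=2d_{i1}$, which reproduces $U^{(2)}=-\kappa^{(2)}\mathbf{2}$, and by \eqref{eq:U-V-v} immediately $V^{(2)}=vU^{(2)}$.

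Next I would compute $\omega_2^{(2)}$ directly from \eqref{eq:omega-2-3-define}, that is $\omega_2^{(2)}=-\tfrac{\ii}{4\pi}\sum_i U_i^{(2)}\mathcal{Y}_{a_i}^{[2]}$, using the values of $\mathcal{Y}_{a_i}^{[2]}$ from \Cref{lemma:cn-1-int}; the Legendre relation and the $\Pi$--$K$ identities recorded alongside \Cref{prop:elliptic-int-2} (as in \Cref{prop:case-1-C}) should collapse this combination to $\kappa^{(2)}/2$, and since $V_i^{(2)}=vU_i^{(2)}$ the same computation gives $\omega_3^{(2)}=v\omega_2^{(2)}$. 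To obtain $\Delta^{(2)}$, I would set $\dd\Omega_1^{(2)}=\sum_{i=0}^{2}c_{1i}\lambda^{i}y^{-1}\,\dd\lambda$, fix $c_{12}=1$ from the prescribed logarithmic behaviour at $\infty^{\pm}$ in \eqref{eq:define-int-w1}, and solve $\oint_{a_i}\dd\Omega_1^{(2)}=0$, $i=1,2$, for $c_{11}$ and $c_{10}$ via \Cref{lemma:cn-1-int}. Evaluating $\Delta_i^{(2)}=\oint_{b_i}\dd\Omega_1^{(2)}=\mathcal{Y}_{b_i}^{[2]}+c_{11}\mathcal{Y}_{b_i}^{[1]}+c_{10}\mathcal{Y}_{b_i}^{[0]}$ and simplifying with the standard reduction formulas for the incomplete $\Pi$ then produces \eqref{eq:u-parameters-cn-U}.

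The remaining and most delicate step is the evaluation of $\omega_1^{(2)}$ from \eqref{eq:omega-1-define}, which requires computing $\lim_{P\to\infty^{+}}\bigl(\int_{\Gamma_p}\dd\Omega_1^{(2)}-\dd\lambda/\lambda\bigr)$. Here I would push the integral forward to the uniformizing $\nu$-plane by means of the transformation \eqref{eq:fs-3} and its $y$-counterpart \eqref{eq:fs-3-y}, carefully recording the image $\nu=\nu^{(2)}_\infty$ of $\lambda=+\infty$ and the sheet of $y$ selected along $\Gamma_p$; after splitting off the singular part $\dd\lambda/\lambda$, the integrand is rewritten through the Jacobi Zeta function and the shift identities \eqref{eq:Jacobi-shift} (exactly as in \eqref{eq:int-infini-1} for \ref{case1}), so the limit reduces to a combination of logarithms of $\vartheta$-quotients which, after the period translations by $K^{(2)}_1$ and $\ii K^{(2)\prime}_1$, telescopes into the logarithm of the $\vartheta$-quotient appearing in the claimed expression for $\omega_1^{(2)}$, up to the additive constants $-\ln(\lambda_3)$ and $\ii\pi$ prescribed in \eqref{eq:omega-1-define}. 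Exponentiating yields the stated formula. The main obstacle is precisely this last computation: keeping track of the integration contour through the complex branch points, the correct branch of the logarithm, and the half-period translations in the theta arguments, so that the many $\vartheta$-factors cancel cleanly; the algebra is otherwise routine.
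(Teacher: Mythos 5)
Your first two paragraphs follow the paper's own route essentially verbatim: the coefficients $d_{ij}$ are fixed by the $a$-normalization in \eqref{eq:define-basis-B} using the $a$-periods of \Cref{lemma:cn-1-int} and \Cref{lemma:cn-2-int}, which gives $\mathbf{B}^{(2)}$, then $U^{(2)}_i=2d_{i1}$ and $V^{(2)}=vU^{(2)}$ by \eqref{eq:U-v}--\eqref{eq:U-V-v}, $\omega_{2,3}^{(2)}$ from \eqref{eq:omega-2-3-define}, and $\Delta^{(2)}$ from the $b$-periods of $\dd\Omega_1^{(2)}=\sum_{i=0}^{2}c_{1i}\lambda^{i}y^{-1}\dd\lambda$ with $c_{12}=1$ and $c_{11},c_{10}$ determined by $\oint_{a_i}\dd\Omega_1^{(2)}=0$. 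That part is fine and matches the paper.

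The gap is in your third paragraph, i.e.\ precisely the step that carries the content of the formula for $\omega_1^{(2)}$. You propose to push $\int_{\Gamma_p}\bigl(\dd\Omega_1^{(2)}-\dd\lambda/\lambda\bigr)$ to the $\nu$-plane using only \eqref{eq:fs-3} and \eqref{eq:fs-3-y}. That map rationalizes only odd-power integrands: \eqref{eq:fs-3-y} expresses $\lambda\,\dd\lambda/y$ as a constant multiple of $\dd\nu$, so $\lambda^{2n}\,\dd\lambda/y=\lambda^{2n-1}\cdot\lambda\,\dd\lambda/y$ still carries an odd power of $\lambda$, which is not a rational function of $\sn,\cn,\dn(\nu,k^{(2)}_2)$ (only $\lambda^2$ is). Since $\dd\Omega_1^{(2)}$ contains the even-power terms $\lambda^{2}y^{-1}\dd\lambda$ and $c_{10}\,y^{-1}\dd\lambda$, the reduction you describe stalls exactly where you expect only routine algebra. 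The paper splits the integrand: the odd piece $c_{11}\lambda\,y^{-1}\dd\lambda$ goes through \eqref{eq:fs-3} and yields the elementary constant $-(\ii\pi+\ii\pi\tau^{(2)}_2)/4$, while the even pieces are uniformized by the companion map \eqref{eq:fs-4} (via \eqref{eq:elliptic-funda-1-cn}), whose modulus is $k^{(2)}_1$ and for which the image of $\lambda=\infty$ is $\nu^{(2)}_{\infty}=\nu^{(2)}K^{(2)}_1/(\ii\pi)$; the resulting incomplete $\Pi$-terms are then converted by \eqref{eq:formula-trans-Pi-Zeta-theta} into theta-quotients in $\tau^{(2)}_1$. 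This is why the stated $\omega_1^{(2)}$ involves $\vartheta_1(\cdot,\tau^{(2)}_1)$ and $\vartheta_4(0,\tau^{(2)}_1)$ with argument built from $\nu^{(2)}$ and $K^{(2)}_1$, a structure you cannot reach from the $k^{(2)}_2$ uniformization alone. To repair the argument, route the even-power part through \eqref{eq:fs-4}/\eqref{eq:elliptic-funda-1-cn} and the odd part through \eqref{eq:fs-3}, then combine the two contributions (as in the \ref{case1} computation \eqref{eq:int-infini-1}) before exponentiating according to \eqref{eq:omega-1-define}.
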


\begin{proof}
	We would calculate the values of parameters $U^{(2)}$, $V^{(2)}$, $\Delta^{(2)}$, $\mathbf{B}^{(2)}$, $\omega_2^{(2)}$, $\omega_3^{(2)}$,  and $\omega_1^{(2)}$, one by one.
	By the definition of $w_{1,2}$ shown in equation \eqref{eq:define-basis-B}, parameters $d_{ij}$ are $d_{11}=-\kappa^{(2)}$, $d_{10}=0$, $d_{21}=-\kappa^{(2)}/2$ and $d_{20}=\pi \sqrt{AB}/(2  K^{(2)}_1)$.
	Then, we get $
	\mathbf{B}^{(2)}_{11}=2\pi \ii \tau^{(2)}_2$, 
	$\mathbf{B}^{(2)}_{12}=\pi \ii (\tau^{(2)}_2-1)$, 
	$\mathbf{B}^{(2)}_{21}=\pi \ii (\tau^{(2)}_2-1)$, and
	$\mathbf{B}^{(2)}_{22}=\pi \ii (\tau^{(2)}_1+\tau^{(2)}_2)/2$.
	Together with the equation \eqref{eq:U-V-v}, it is easy to obtain 
	\begin{equation}\label{eq:define-U-1-2-C}
		V_1^{(2)}=vU_1^{(2)}=2vd_{11}
		=-2v\kappa^{(2)}, \qquad  
		V_2^{(2)}=vU_2^{(2)}=2vd_{21}
		=-v\kappa^{(2)}, 
	\end{equation}
	where $\kappa^{(2)}$ is defined in \Cref{theorem:solution-u}.
	Furthermore, we obtain parameters $\omega_2^{(2)} $ and $\omega_3^{(2)} $ as 
	\begin{equation}\label{eq:define-omega-2-3-C}
		\omega_2 ^{(2)}
		=-\frac{\ii}{4\pi} \sum_{i=1}^{2}U_i^{(2)} \oint_{a_i}\frac{\lambda^2}{y}\dd \lambda=\frac{\kappa^{(2)}_2}{2}, \qquad 
		\omega_3^{(2)} 
		=-\frac{\ii}{4\pi} \sum_{i=1}^{2}V_i^{(2)} \oint_{a_i}\frac{\lambda^2}{y}\dd \lambda=\frac{v\kappa^{(2)}_2}{2},
	\end{equation}
	by utilizing the equation \eqref{eq:omega-2-3-define}.
	Combining the definition of $\dd \Omega_{1}^{(2)}$ in equation \eqref{eq:define-Omega-1-c} with the conditions that all integrals on the $a_i$-circle are zero, 
	we get parameters $c_{ij}$.
	Considering the case $\oint_{a_i}\dd \Omega_1^{(2)}=0$, similarly, we can obtain $c_{12}=1$, $c_{11}=-\kappa^{(2)}/2$, and $c_{10}=\lambda_2^2(2B  K^{(2)}_1-(A+B)\Pi(-(A-B)^2/(4AB),  k^{(2)}_1))/(2(A-B)  K^{(2)}_1)$, which implies 
	\begin{equation}\nonumber
	\begin{split}
	\Delta_1^{(2)}&=\oint_{b_1}\dd\Omega_1^{(2)}=\sum_{i=0}^{2}c_{1i}\mathcal{Y}_{b_1}^{[i]}
	=\ii \pi -\ii \pi \tau^{(2)}_2 \! , \\
	\Delta_2^{(2)}&=\oint_{b_2}\dd\Omega_1^{(2)}
	=\sum_{i=0}^{2}c_{1i}\mathcal{Y}_{b_2}^{[i]}
	=\frac{ \ii \pi(\tau^{(2)}_1+\tau^{(2)}_2)+\nu^{(2)}}{2}, 
	\end{split}
	\end{equation}
	through utilizing the formulas shown in \cite{ByrdF-54}.
	In summary, we obtain parameters $U^{(2)}$, $V^{(2)}$, $D^{(2)}$, $\Delta^{(2)}$, $\omega_2^{(2)} $, and $\omega_3^{(2)}$.	
	
	Consider the limit, as $P\rightarrow \infty^{+}$. 
	The method we used is similar to the one we provided in \Cref{lemma:U-V-R-dn}. 
	Thus, we only list the differences between them as follows:
	\begin{equation}\nonumber
		\begin{split}
			\lim_{\lambda\rightarrow \infty}\int_{\lambda_3}^{\lambda}\frac{c_{11}\lambda}{y} \dd \lambda
			=& \int_{  K^{(2)}_2+\ii   K^{(2)\prime}_2}^{0}\frac{\ii \kappa^{(2)}}{4\sqrt{A}}\dd \nu
			=-\frac{\ii \pi+\ii \pi \tau^{(2)}_2}{4},\\
			\lim_{\lambda\rightarrow \infty} 
			\int_{\lambda_3}^{\lambda} \frac{\lambda^2}{y}\dd \lambda
			\xlongequal[\eqref{eq:define-third-integral}]{\eqref{eq:elliptic-funda-1-cn}} 
			& \ \frac{\ii \lambda_2^2\sqrt{B} \nu}{2\sqrt{A}(B-A)}+\frac{1}{4}\ln\left(\frac{2\sqrt{AB}\dn(\nu,  k^{(2)}_1)+\ii \lambda_2^2\sn(\nu,  k^{(2)}_1)}{2\sqrt{AB}\dn(\nu,  k^{(2)}_1)-\ii \lambda_2^2\sn(\nu,  k^{(2)}_1)}\right)\\
			&  -\left.\frac{\ii \lambda_2^2(A+B)}{4\sqrt{AB}(A-B)}\Pi\left(\nu,\frac{(B-A)^2}{-4AB},  k^{(2)}_1\right)\right|_{\nu=  K^{(2)}_1+\ii   K^{(2)\prime}_1}^{\nu^{(2)}_{1}K_1^{(1)}/(\ii \pi)},
		\end{split}
	\end{equation}
	and
	\begin{equation}\nonumber
		\begin{split}
			&\lim_{\lambda\rightarrow \infty}\int_{\lambda_3}^{\lambda} \dd \Omega_1-\frac{1}{\lambda}\dd \lambda 
			\xlongequal{\eqref{eq:formula-trans-Pi-Zeta-theta}} \ln\left(\frac{\vartheta_1(\ii( \nu^{(2)} _{\infty}/  K^{(2)}_1+ \tau^{(2)} _1)\pi,\tau_1^{(2)})}{\vartheta_4(0)}\right)-\frac{\ii \tau^{(2)}_2 \pi}{4} -\ln\left(\frac{(A-B)}{2\lambda_2\lambda_3}\right)-\ii\pi,\\
		\end{split}
	\end{equation}
	where $ \nu^{(2)} _{\infty}= \nu^{(2)}   K^{(2)}_1/(\ii \pi )$, $ \nu^{(2)} $ defined in equation \eqref{eq:u-parameters-cn} and $\nu^{(2)}_1$ is defined in equation \eqref{eq:define-Omega-1-c}.
	Similarly as the proof of the \Cref{lemma:U-V-R-dn}, we obtain 
	\begin{equation}\nonumber
		\omega_1=\exp\left(-\ln(\lambda_3)+\lim_{ P\rightarrow \infty^{+}}\int_{\Gamma_p}\dd\Omega_1-\frac{\dd \lambda}{\lambda}+\ii \pi \right)
		=\frac{2\lambda_2\vartheta_1(\ii \tau^{(2)} _1\pi+  \nu^{(2)} , \tau^{(2)} _1)}{\ii (A-B)\vartheta_4(0, \tau^{(2)} _1)}\ee^{-\ii  \tau^{(2)} _2 \pi/4}. 
	\end{equation}
\end{proof}

Using the results from Lemmas \ref{lemma:dn-1-int}-\ref{lemma:U-V-R-cn}, we substitute them into equation \eqref{eq:q1-q0} to obtain the two-phase solutions as given in equation \eqref{eq:solutions-dn}.

\newenvironment{proof-solution-dn}{\emph{\textbf{Proof of \Cref{theorem:solution-u}.}}}{\hfill$\Box$\medskip}
\begin{proof-solution-dn}
	Based on the explicit expression given in equation \eqref{eq:C_u0-define}, as well as \Cref{lemma:U-V-R-dn} and \Cref{lemma:U-V-R-cn}, the determination of the explicit form of the function $u(x,t)$ requires a detailed examination of the parameters $D$ and $C_{u_0}$. 
	When parameters $U$, $V$ and $\omega_{2,3}$ are all real, the real-valued solution $u(x,t)$ provided in equation \eqref{eq:q1-q0} can be rewritten accordingly as:
	\begin{equation}\label{eq:u-complex}
		u(x,t)\xlongequal[\eqref{eq:Riemann-Theta-prop}]{\eqref{eq:q1-q0}}C^*_{u_0}\frac{\Theta(-D^*-\Delta^*+\ii Ux+\ii V t)}{\Theta(-D^*+\ii Ux+\ii V t)}\ee^{-2\ii \omega_2 x-2\ii \omega_3  t}.
	\end{equation}
	A comparison between the above expression of solution $u(x,t)$ and equation \eqref{eq:C_u0-define}
	yields the following result:
	\begin{equation}\label{eq:u=u}
		C^*_{u_0}\frac{\Theta\left(-D^*-\Delta^*+\ii Ux+\ii V t\right)}{\Theta\left(-D^*+\ii Ux+\ii V t\right)}
		=\frac{-4\, \Theta\left(D-\Delta+\ii Ux+\ii Vt\right)}{C_{u_0}\omega_1^2\Theta\left(D+\ii Ux+\ii Vt\right)}.
	\end{equation}
	From equation \eqref{eq:Riemann-Theta-prop}, since the poles of the function must coincide, we obtain $-D^*+\ii Ux+\ii Vt+2\pi \mathbf{n}+\mathbf{Bm}=D+\ii Ux+\ii Vt$, which implies
	\begin{equation}\label{eq:define-D-value}
		D^*+D=2\pi \ii\mathbf{n}+\mathbf{B}\mathbf{m}, \qquad \mathbf{n}, \mathbf{m}\in \mathbb{Z}^2.
	\end{equation}
	Referring to the definition of the Riemann theta function in \Cref{define:Riemann-Theta-function} and  equation \eqref{eq:u=u},
	we deduce 
	\begin{equation}\label{eq:C_u_0-r-c}
		|C_{u_0}|^2=\frac{-4\, \Theta(D-\Delta+\ii U x+\ii V t)\exp\langle \mathbf{m}, \Delta^*\rangle}{\omega_1^2 \, \Theta(D-\Delta^*+\ii U x+\ii V t)},
	\end{equation}	
	where the parameter $\mathbf{m}$ is provided in equation \eqref{eq:define-D-value}.
	Furthermore, by equations \eqref{eq:q1-q0} and \eqref{eq:u-complex}, we obtain
	\begin{equation}\label{eq:C_u_0-r-c-1}
		C^*_{u_0}=C_{u_0}\frac{\Theta(D+\Delta+\ii Ux+\ii V t)\ee^{4\ii \omega_2 x+4\ii \omega_3  t}\exp\langle \mathbf{m}, \Delta^*\rangle}{\Theta(D-\Delta^*+\ii Ux+\ii V t)}.
	\end{equation}
	Since the solution $u(x,t)$ is a regular function, the parameter matrix $D$ must also satisfy the condition $\Theta(D+\ii Ux+\ii Vt)\neq0$ for all $(x,t)\in \mathbb{R}^2$. 
	The chosen parameters $C_{u_0}$ and $D$ must satisfy equations \eqref{eq:C_u_0-r-c} and \eqref{eq:C_u_0-r-c-1}.
	Guided by this constraint, we proceed to determine all the parameters $D$ and $C_{u_0}$ under different configurations of branch points.
	
	In the \ref{case1}, by \Cref{lemma:U-V-R-dn}, we have $U^{(1)},V^{(1)}\in \mathbb{R}^2$, $\omega_{2,3}^{(1)}=0$, and $\Delta^{(1)}\in \ii \mathbb{R}^2$, which further implies $(\Delta^{(1)})^*=-\Delta^{(1)}$.
	Since the right-hand side of equation \eqref{eq:C_u_0-r-c} must be independent of the variables $x$ and $t$, functions $\Theta(D^{(1)}-\Delta^{(1)}+\ii U^{(1)} x+\ii V^{(1)}t)$ and $\Theta(D^{(1)}+\Delta^{(1)}+\ii U^{(1)} x+\ii V^{(1)}t)$ must have the same poles and zeros.
	Combining the formula \eqref{eq:formula-Rieman-shift-1} with the definition of functions $\Delta^{(1)}$,
	we obtain 
	\begin{equation}\nonumber
		\begin{split}
			&\ \frac{\Theta(D^{(1)}-\Delta^{(1)}+\ii U^{(1)} x+\ii V^{(1)}t)}{\Theta(D^{(1)}+\Delta^{(1)}+\ii U^{(1)} x+\ii V^{(1)}t)}\\
			\xlongequal{\eqref{eq:formula-Rieman-shift-1}} &\ \frac{\vartheta_3(D^{(1)}_1-D^{(1)}_2-2\nu^{(2)},2\tau_1^{(1)})\frac{\vartheta_3(D^{(1)}_1+D^{(1)}_2+2\ii U^{(1)}_1x+2\ii V^{(1)}_1t ,2\tau_2^{(1)})}{\vartheta_2(D^{(1)}_1+D^{(1)}_2+2\ii U^{(1)}_1x+2\ii V^{(1)}_1t ,2\tau_2^{(1)})}-\vartheta_2(D^{(1)}_1-D^{(1)}_2-2\nu^{(2)},2\tau_1^{(1)})}{\vartheta_3(D^{(1)}_1-D^{(1)}_2+2\nu^{(2)},2\tau_1^{(1)})\frac{\vartheta_3(D^{(1)}_1+D^{(1)}_2+2\ii U^{(1)}_1x+2\ii V^{(1)}_1t ,2\tau_2^{(1)})}{\vartheta_2(D^{(1)}_1+D^{(1)}_2+2\ii U^{(1)}_1x+2\ii V^{(1)}_1t ,2\tau_2^{(1)})}-\vartheta_2(D^{(1)}_1-D^{(1)}_2+2\nu^{(2)},2\tau_1^{(1)})},
		\end{split}
	\end{equation}
	which implies that when $D_1^{(1)}=D_2^{(1)}$, functions $\Theta(D^{(1)}-\Delta^{(1)}+\ii U^{(1)} x+\ii V^{(1)}t)$ and $\Theta(D^{(1)}+\Delta^{(1)}+\ii U^{(1)} x+\ii V^{(1)}t)$ have the same poles and zeros. 
	By the equation \eqref{eq:define-D-value},
	we set $D^{(1)}= \ii n \pi \mathbf{1}+m/2 \mathbf{B}^{(1)} \mathbf{1}$,  with $m\in \mathbb{Z}$, $n\in \mathbb{R}$. Substituting into equations \eqref{eq:C_u_0-r-c} and \eqref{eq:C_u_0-r-c-1} yields $C_{u_0}^{(1)}=(C_{u_0}^{(1)})^*=2\ii /\omega_1^{(1)} \in \mathbb{R}$.
	The parameter $n$ in $D^{(1)}$ may take any real integers, which can be eliminated via a translation in either the $x$- or $t$-direction.
	Therefore, without loss of generality,	we set $n = 0$.
	From equation \eqref{eq:Riemann-Theta-prop}, when $m$ is even, the solution coincides with the case $m=0$, (i.e., $D^{(1)}=\mathbf{0}$).
	When $m$ is odd, the solution is equal to the case with $m=1$, i.e., $D^{(1)}= \mathbf{B}^{(1)} \mathbf{1}/2=\ii \pi   \tau^{(1)}_2 \mathbf{1}$.
	The parity of the parameter $m$ (even or odd) determines different pole configurations, indicating that the two cases yield distinct solutions.
	Thus, we conclude that the explicit expression for the solution $u(x,t)$ can be written in only two forms, corresponding to $D^{(1)}=\mathbf{0}$ and $D^{(1)}=\ii \pi   \tau^{(1)}_2 \mathbf{1}$.
		
	For the \ref{case2}, by \Cref{lemma:U-V-R-cn}, we have $U^{(2)},V^{(2)}\in \mathbb{R}^2$, $\omega_{2,3}^{(2)}\in \mathbb{R}$, and $(\Delta^{(2)})^*=\Delta^{(2)} \, \mod \, 2\pi \ii$.
	By equations \eqref{eq:Riemann-Theta-prop} and \eqref{eq:C_u_0-r-c-1}, we obtain 
	\begin{equation}\nonumber
		\begin{split}
			\frac{(C_{u_0}^{(2)})^*}{C_{u_0}^{(2)}}\xlongequal[\eqref{eq:u-parameters-cn}]{\eqref{eq:Riemann-Theta-prop}}
%
			&\ \frac{\Theta(D^{(2)}+\hat{\Delta}+\ii U^{(2)} x+\ii  V^{(2)} t)\ee^{-2D_2^{(2)}+\langle\mathbf{m},(\Delta^{(2)})^*\rangle}}{\Theta(D^{(2)}-\hat{\Delta}+\ii U^{(2)} x+\ii  V^{(2)} t)},
		\end{split}
	\end{equation}
	where $\hat{\Delta}=[0 \,\,\, \nu^{(2)}/2 ]^{\top}$.
	The right-hand side of the above equation must be independent of the variables $x$ and $t$, which implies that $\Theta(D^{(2)}+\hat{\Delta}+\ii U^{(2)} x+\ii  V^{(2)} t)/\Theta(D^{(2)}-\hat{\Delta}+\ii U^{(2)} x+\ii  V^{(2)} t)=\mathrm{const}$ and $-2D_2^{(2)}+\langle \mathbf{m}, (\Delta^{(2)})^* \rangle\in \ii \mathbb{R}$. Similarly, by equation \eqref{eq:formula-Rieman-shift-1} we obtain $2(D_1^{(2)}+2D_2^{(2)})=2\ii \pi (2n+1)$, $\mathbf{m}=\mathbf{0}$ and $D_2^{(2)}\in \ii \mathbb{R}$, which implies $(C_{u_0}^{(2)})^*=\ee^{-2D_2^{(2)}}C_{u_0}^{(2)}$.
	Plugging $\mathbf{m}=\mathbf{0}$ into equation \eqref{eq:C_u_0-r-c}, 
	we get $|C_{u_0}^{(2)}|^2=-4/\omega_1^2$.
	Thus, we get $D_2^{(2)}=\ii \pi n$, $n\in \mathbb{Z}$ and $C_{u_0}^{(2)}=2\ii /\omega_1^{(2)}$, which implies that $D_1^{(2)}=\ii \pi$.
	Using parameters $U^{(2)},V^{(2)}$, $\omega_{1,2,3}^{(2)}$, we consider the following two cases for the parameter $n$ selected in $D_2^{(2)}$: when $n$ is even (with $n=0$ taken as example), and when $n$ is odd--- the latter corresponds to the solution with an additional sign $``-"$.
	Therefore, we obtain $C_{u_0}^{(2)}=2\ii /\omega_1^{(2)}$ and $D^{(2)}=\ii \pi \mathbf{1}$. 
\end{proof-solution-dn}

\subsection{The elliptic form solution and fundamental solutions of Lax pair}\label{sec:effective-integration-method}
Another method to obtain the lower-genus solution is to use the effective integration method.
When we consider the explicit expressions of the matrix function $\mathbf{L}(x,t;\lambda)$, we can also deduce solutions expressed in terms of Jacobi elliptic functions.
The determinant of $\mathbf{L}(x,t;\lambda)$ given in equation \eqref{eq:L-elements} can also be rewritten as $\det\left(\mathbf{L}(x,t;\lambda)\right)
=\lambda^6+2\alpha_1\lambda^4
+s_2\lambda^2+s_0$, where
\begin{equation}\label{eq:det-L-s}
	s_2 = (u_x^2-2uu_{xx}-3u^4)/4+\alpha_1u^2+\alpha_1^2, \qquad 
	s_0 = (u_{xx}-4\alpha_1u+2u^3)^2/16.
\end{equation}
By the equation \eqref{eq:det-L-s}, we get $u_{xx}=4\sqrt{s_0}+4\alpha_1 u -2u^3$ and
\begin{equation}\label{eq:define-R-u}
	u_x^2=-R(u), \qquad R(u):=u^4-4\alpha_1 u^2+8u\sqrt{s_0}-4s_2+4\alpha_1^2.
\end{equation}
If there exists a root $u_1$ such that $R(u_1)=0$, the polynomials $\det(\mathbf{L}(x,t;\lambda))$ (defined in equation \eqref{eq:det-L-s}) can be decomposed into 
\begin{equation}\label{eq:det-L-poly}
	\begin{split}
		\det(\mathbf{L}(x,t;\lambda))=
		& \left(\lambda^3+\ii u_1\lambda^2+(\alpha_1-u_1^2/2-\ii 	u_1(\mu_1+\mu_2))\lambda+\ii u_1 \mu_1\mu_2\right)\\
		& \left(\lambda^3-\ii u_1\lambda^2+(\alpha_1-u_1^2/2+\ii 	u_1(\mu_1+\mu_2))\lambda-\ii u_1 \mu_1\mu_2\right).
	\end{split}
\end{equation}
And the solutions of the \ref{eq:mKdV} equation must satisfy
\begin{equation}\label{eq:solution-u-ux}
	u_x^2=-R(u)=-(u-u_1)(u-u_2)(u-u_3)(u-u_4), 
\end{equation} 
with $u_1+u_2+u_3+u_4=0$.
Together with Eq. \eqref{eq:L-elements}, 
we obtain the relations of functions $\mu_{1,2}$ and $\mu_{1,2}^*$ satisfy equations $u^2(\mu_1\mu_1^*\mu_2+\mu_1\mu_1^*\mu_2^*+\mu_1\mu_2\mu_2^*+\mu_1^*\mu_2\mu_2^*)=0$, $(u^2/2-\alpha_1)^2+u^2(\mu_1\mu_1^*+\mu_1\mu_2+\mu_1\mu_2^*+\mu_2\mu_1^*+\mu_2\mu_2^*+\mu_1^*\mu_2^*)=s_2$,  
$u^2(\mu_1+\mu_2+\mu_1^*+\mu_2^*)=0$, and $u^2\mu_1\mu_2\mu_1^*\mu_2^*=s_0$.
From the above equations, the following equations $\mu_1+\mu_1^*=-(\mu_2+\mu_2^*)$ and $(\mu_1+\mu_1^*)\mu_2\mu_2^*=-(\mu_2+\mu_2^*)\mu_1\mu_1^*$ hold,
which in turn implies $\mu_1\mu_1^*=\mu_2\mu_2^*$. 
Similarly, we obtain $\mu_1\mu_2=\mu_1^*\mu_2^*$ and $\mu_1\mu_2^*=\mu_1^*\mu_2$. 
Combining these results, we derive
\begin{equation}\label{eq:define-R-u-mu}
	\mu_1=\mu_2=-\mu_1^*=-\mu_2^*=\frac{\sqrt{R(u)}}{4u}. 
\end{equation}
Depending on different values of parameters $s_0$ and $s_2$, the roots of $R(u)=0$ could be classified into the following cases---each corresponding to a distinct form of the solution $u(x,t)$:
\begin{itemize}
	\item[i)] When $s_0=0$, parameters satisfy $u_4=-u_1\in \mathbb{R}$ and $u_3=-u_2 \in \mathbb{R}$. 
	Under these conditions, the solution $u(x,t)$ could be expressed as one-gap solutions, and these can be further simplified to $\cn$-type and $\dn$-type solutions. 
	\item[ii)] When $s_0 \neq 0$ and
	four roots of the equation $R(u)=0$ are all real, this case leads to two forms of solution $u(x,t)$ expressed by the Riemann theta functions under the \ref{case1}.
	\item[iii)] When $s_0 \neq 0$ and the equation $R(u)=0$ has real numbers with the remaining two being complex conjugates, this case yields only one solution $u(x,t)$ by the Riemann theta functions corresponding to the \ref{case2}.
\end{itemize}

Via the relation between roots and coefficients in equations \eqref{eq:det-L-lambda} and \eqref{eq:det-L-poly}, we obtain
\begin{equation}\label{eq:define-u-lambda}
	\begin{split}
		u_1&=-\ii(\lambda_1+\lambda_2+\lambda_3), \qquad
		u_2=\ii(\lambda_1+\lambda_2-\lambda_3), \\
		u_3&=\ii(\lambda_1-\lambda_2+\lambda_3),	\qquad
		u_4=\ii(-\lambda_1+\lambda_2+\lambda_3).
	\end{split}
\end{equation}
When $\lambda_{1,2,3}$ satisfy the \ref{case1},  we set $0<\Im(\lambda_1)<\Im(\lambda_2)<\Im(\lambda_3)$, which implies $u_i\in \mathbb{R}$, $i=1,2,3,4$, with $u_4<u_3<u_2<u_1$, $u_4<u_3<0$, and $0<u_1$, corresponding to the case ii).
When $\lambda_{1,2,3}$ satisfy the \ref{case2}, we set $-\Re(\lambda_1)=\Re(\lambda_3)>0$, $\Im(\lambda_1)=\Im(\lambda_3)>0$ and $\Im(\lambda_2)>0$, which deduce $u_{1,3}\in \mathbb{R}$ with $u_1>u_3$ and $u_{2,4}\in \mathbb{C}\backslash(\ii \mathbb{R}\cup\mathbb{R})$ with $\Im(u_2)<0<\Im(u_4)$, corresponding to the case iii).

Considering solutions satisfying hyperelliptic integrals provided by the equation \eqref{eq:solution-u-ux}, we express the above solutions 
into the following forms.
When $u_{1,2,3,4}$ satisfy the case ii), by the elliptic integrals, we get
\begin{subequations}\label{eq:u1-elliptic}
\begin{align}
	& u(x,t)  =  u_2  +  \frac{(u_3-u_2)(u_4-u_2)}{(u_4-u_2) +(u_3-u_4)\sn^2(\alpha (x+vt),k)}\in [u_4,u_3], \label{eq:u1-elliptic-1}\\
	& u(x,t)  =  u_4 +  \frac{(u_1-u_4)(u_2-u_4)}{(u_2-u_4)+(u_1-u_2)\sn^2(\alpha (x+vt),k)}\in [u_2,u_1], \label{eq:u1-elliptic-2}
\end{align}
\end{subequations}
with $k^2=(u_1-u_2)(u_3-u_4)/((u_1-u_3)(u_2-u_4))$ and $\alpha^2=(u_1-u_3)(u_2-u_4)/4$.
The above solutions correspond to \ref{case1}.

When $u_{1,2,3,4}$ satisfy the case iii) with $u_1>u_3\in \mathbb{R}$ and $u_4=u_2^*\in \mathbb{C}\backslash(\ii \mathbb{R}\cup \mathbb{R})$, 
the solution $u(x,t)$ could be expressed as 
\begin{equation}\label{eq:u2-elliptic}
	u(x,t)=u_1+\frac{(u_3-u_1)(1-\cn(\alpha(x+vt),k))}{1+\delta+(\delta-1)\cn(\alpha(x+vt),k)}\in[u_3,u_1],\qquad \delta=\left|\frac{u_3-u_2}{u_1-u_2}\right|,
\end{equation} 	
$\alpha^2=\left|(u_3-u_2)(u_1-u_2)\right|$ and $k^2=(\alpha^2-(u_1-\Re(u_2))(u_3-\Re(u_2))-\Im^2(u_2))/(2\alpha^2)$.
This solution corresponds to the \ref{case2}.
These two group of solutions \eqref{eq:u1-elliptic} and \eqref{eq:u2-elliptic} are presented in the previous literature by the nonlinearization method \cite{Chen-19-mKdV}.

When the function $y^2(\lambda)$ with respect to $\lambda$ has three pairs of complex conjugate roots $\lambda_{1,2,3},\lambda_{1,2,3}^*$ and all of them are not zeros, which corresponds to the cases $s_0\neq 0$.  Combining with the equation \eqref{eq:define-u-lambda}, we get $2\ii \lambda_1=u_1+u_2$, $2\ii \lambda_2=u_1+u_4$, and $2\ii \lambda_3=u_1+u_3$.	
When $u_{1,2,3,4}\in \mathbb{R}$, the spectral parameters $\lambda_{1,2,3}$ must satisfy $\lambda_{1,2,3}\in \ii \mathbb{R}$. 
When $u_{1,4}\in \mathbb{R}$ and $u_{2,3}\in \mathbb{C}\backslash(\ii \mathbb{R}\cup \mathbb{R})$, only the parameter $\lambda_2\in \ii \mathbb{R}$ and the rest parameters $\lambda_{1,3}\in \mathbb{C}\backslash(\ii \mathbb{R}\cup\mathbb{R})$.
Then, the corresponding solution $u(x,t)$ could be expressed by the Riemann theta functions.
	
When one pair of branch points is zero, i.e., $\lambda_2=\lambda_2^*=0$, the determinant of the matrix function $\mathbf{L}$ expressed in equation \eqref{eq:det-L-lambda} could be rewritten as $\lambda^2\left(\lambda^4+2\alpha_1 \lambda^2+s_2\right)$, which corresponds to the cases $s_0=0$ and $s_2\neq 0$.
Excepting zero, the function $y^2(\lambda)$ with respect to $\lambda$ have only two pairs of complex conjugate roots $\lambda_{1,3},\lambda_{1,3}^*$ satisfying $\lambda_1=-\lambda_3^*\in \mathbb{C}$. 
Then, the corresponding solution $u(x,t)$ could be expressed by the dimension-$1$ Riemann theta functions via the algebro-geometric approach, which corresponds to the $\cn$-type solutions studied in our previous work \cite{LingS-23-mKdV-stability}.  When $\lambda_1=\lambda_1^*=0$ and $\lambda_{2,3}=-\lambda_{2,3}^*\in \ii \mathbb{R}$,  the corresponding solution $u(x,t)$ could be expressed by the Riemann theta functions related to the genus-one curves via the algebro-geometric approach, which corresponds to the $\dn$-type solution \cite{LingS-23-mKdV-stability}.

Based on the above relationships, we aim to demonstrate that solutions expressed in terms of Riemann theta functions in equation \eqref{eq:solutions-dn} are equivalent to those expressed in terms of elliptic functions in equations \eqref{eq:u1-elliptic} and \eqref{eq:u2-elliptic}. 
To this end, we seek to derive explicit expressions for certain parameters, which will facilitate a direct comparison between two different formulations.

\begin{lemma}\label{lemma:parameter-hat}
	Utilizing equations \eqref{eq:define-u-lambda}, \eqref{eq:tilde-v-1-2}, \eqref{eq:formula-trans-theta-elliptic}, \eqref{eq:Jacobi-double} and \eqref{eq:formula-theta-2tau-1},
	we obtain the following equations:
	\begin{subequations}\label{eq:equ-1}
		\begin{align}
			\frac{\vartheta_2(2\tilde{\nu}_2,2  \tau^{(1)}_2)}{\vartheta_3(2\tilde{\nu}_2,2  \tau^{(1)}_2)} & \  =
			\frac{\vartheta_3(2 \nu^{(1)} ,2  \tau^{(1)}_1)}{\vartheta_2(2 \nu^{(1)} ,2  \tau^{(1)}_1)},  \label{eq:equ-1-a}\\
			 \frac{\vartheta_2(2\tilde{\nu}_1,2  \tau^{(1)}_2)}{\vartheta_3(2\tilde{\nu}_1,2  \tau^{(1)}_2)}& \  =  -\frac{\vartheta_3(0,2  \tau^{(1)}_1)}{\vartheta_2(0,2  \tau^{(1)}_1)}, \label{eq:equ-1-b}\\
			C_{u_0}^{(1)}\frac{\vartheta_2(2 \nu^{(1)} ,2  \tau^{(1)}_2)}{\vartheta_2(0,2  \tau^{(1)}_2)}& \  =-\frac{u_3(u_2-u_4)  k^{(1)\prime}_2+u_4(u_3-u_2)}{(u_2-u_4)  k^{(1)\prime}_2+(u_3-u_2)}, \label{eq:equ-1-c}
		\end{align}
	\end{subequations}
	where the parameter $\nu^{(1)}$ is defined in equation \eqref{eq:u-parameters-dn}, parameters $\tilde{\nu}_{1,2}$ are defined as
	\begin{equation}\label{eq:tilde-v-1-2}
		\tilde{\nu}_1=\frac{\ii \pi}{  K^{(1)}_2} F\left(\frac{(u_2-u_4)^{1/2}}{(u_3-u_4)^{1/2}},  k^{(1)}_2\right),\qquad
		\tilde{\nu}_2=\frac{\ii \pi}{  K^{(1)}_2} F\left(\frac{(u_3(u_2-u_4))^{1/2}}{(u_2(u_3-u_4))^{1/2}},  k^{(1)}_2\right),
	\end{equation}
	and modulus $  k^{(1)}_{1,2}$ and the parameter $ C_{u_0}^{(1)}$ are defined in equation \eqref{eq:u-parameters-dn}.
\end{lemma}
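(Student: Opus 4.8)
The plan is to establish the three identities \eqref{eq:equ-1-a}--\eqref{eq:equ-1-c} by translating everything into a common parametrization on the same elliptic curve and then matching principal parts (zeros and poles) of meromorphic functions on that curve. First I would set up the comparison carefully: in \ref{case1} the two-phase solution \eqref{eq:solutions-dn} must coincide with one of the Jacobi-elliptic representations \eqref{eq:u1-elliptic-1}--\eqref{eq:u1-elliptic-2}, and by \Cref{prop:solutions-equivalent} such an equivalence is guaranteed. The strategy is to use the reduction of the genus-two Riemann theta function under the symmetric period matrix $\mathbf{B}^{(1)}$ in \eqref{eq:u-parameters-dn-B} — namely that $\Theta(\mathbf{z})$ with this $\mathbf{B}^{(1)}$ factors as a product of two Jacobi theta functions in the variables $z_1\pm z_2$ with moduli $\tau^{(1)}_1$ and $\tau^{(1)}_2$ — so that the left-hand sides of \eqref{eq:equ-1}, which are ratios of Jacobi theta constants at shifted arguments, become ratios of Jacobi elliptic functions of the arguments $\nu^{(1)}$ and $\tilde\nu_{1,2}$. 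On the right-hand sides, using the definitions \eqref{eq:tilde-v-1-2} of $\tilde\nu_{1,2}$ as incomplete elliptic integrals of the first kind with modulus $k^{(1)}_2$, I would invert the Abel map to express $\sn$, $\cn$, $\dn$ of $\tilde\nu_{1,2}$ directly in terms of the roots $u_{1,2,3,4}$ via the standard substitutions underlying \eqref{eq:u1-elliptic}.

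The key steps, in order, are: (i) apply the theta-factorization (the Landen-type identities already invoked in the paper, \eqref{eq:formula-trans-theta-elliptic}, \eqref{eq:Jacobi-double}, \eqref{eq:formula-theta-2tau-1}) to rewrite $\vartheta_2/\vartheta_3$ of a doubled argument at modulus $2\tau^{(1)}_i$ in terms of $\sn$ or $\cd$ at modulus $\tau^{(1)}_i$; (ii) use the explicit branch-point-to-root dictionary \eqref{eq:define-u-lambda} together with the modulus formulas $k^{(1)}_2 = (\lambda_1^2-\lambda_2^2)^{1/2}/(\lambda_1^2-\lambda_3^2)^{1/2}$ and $\alpha^2=(u_1-u_3)(u_2-u_4)/4$ to convert the $\lambda_i$-expressions into $u_i$-expressions; (iii) for \eqref{eq:equ-1-a} and \eqref{eq:equ-1-b}, recognize that $2\tilde\nu_2$ and $2\tilde\nu_1$ are, respectively, the images under the Abel map of the points where the elliptic solution takes the special values matching $u_3$ (resp.\ a pole/branch configuration), so that the Jacobi elliptic function of $2\tilde\nu_i$ evaluates to an explicit algebraic expression in the $u_j$, which one checks equals the corresponding combination of $\vartheta$'s in $\nu^{(1)}$ by the same factorization applied in reverse; (iv) for \eqref{eq:equ-1-c}, combine the value of $C_{u_0}^{(1)}$ from \eqref{eq:u-parameters-dn-B} — which, as shown in the proof of \Cref{theorem:solution-u}, equals $2\ii/\omega_1^{(1)}$ with $\omega_1^{(1)}$ from \Cref{lemma:U-V-R-dn} — with the normalization of the elliptic solution \eqref{eq:u1-elliptic} at a reference point (say $x+vt=0$, where $\sn=0$), matching the amplitude $C_{u_0}^{(1)}\vartheta_2(2\nu^{(1)},2\tau^{(1)}_2)/\vartheta_2(0,2\tau^{(1)}_2)$ against the value $-[u_3(u_2-u_4)k^{(1)\prime}_2+u_4(u_3-u_2)]/[(u_2-u_4)k^{(1)\prime}_2+(u_3-u_2)]$ that the closed-form elliptic solution produces there.

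The main obstacle will be step (iii): identifying precisely which auxiliary point on the curve corresponds to the doubled arguments $2\tilde\nu_1$ and $2\tilde\nu_2$, and handling the doubling of the modulus consistently. The subtlety is that $\tilde\nu_{1,2}$ are defined as integrals against the modulus $k^{(1)}_2$ (attached to the $b$-cycle reduction), whereas $\nu^{(1)}$ lives at modulus $k^{(1)}_1$, and the factorization of $\Theta$ mixes the two; one must track the half-period shifts carefully so that $\vartheta_2$ versus $\vartheta_3$ (and the sign in \eqref{eq:equ-1-b}) come out correctly. I expect this to be manageable but bookkeeping-heavy: the cleanest route is to treat both sides of each identity as meromorphic functions of a single free complex parameter (obtained by promoting a constant in the $u_i$ to a variable, or equivalently sliding the base point along the curve), verify that they have the same divisor of zeros and poles on the torus $\mathbb{C}/(\mathbb{Z}+\tau^{(1)}_2\mathbb{Z})$, and then fix the remaining multiplicative constant by a single evaluation at a convenient point such as the branch point $u=u_3$ or $u=u_4$. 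Once the divisors match, Liouville's theorem on the torus forces equality, and the constant is pinned down by the normalization already computed in \Cref{lemma:U-V-R-dn} and the proof of \Cref{theorem:solution-u}.
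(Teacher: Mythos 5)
Your steps (i)--(iii) for \eqref{eq:equ-1-a} and \eqref{eq:equ-1-b} are essentially the paper's own argument: apply the bilinear identities \eqref{eq:formula-theta-2tau-1} to descend from modulus $2\tau^{(1)}_2$ (resp.\ $2\tau^{(1)}_1$) to $\tau^{(1)}_2$ (resp.\ $\tau^{(1)}_1$), pass to Jacobi elliptic functions via \eqref{eq:formula-trans-theta-elliptic}, use the double-argument formulas \eqref{eq:Jacobi-double} together with the definitions of $\nu^{(1)}$ and $\tilde\nu_{1,2}$ as incomplete integrals, and reduce both sides to the same algebraic expression in $\lambda_{1,2,3}$, $k^{(1)}_1$ through the dictionary \eqref{eq:define-u-lambda}. (A small imprecision: the genus-two theta with $\mathbf{B}^{(1)}$ does not factor into a single product of Jacobi thetas; by \Cref{prop:Riemann-2-1} it is a \emph{sum} of two such products. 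This does not affect the computations for the lemma itself.)

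The genuine gap is in your treatment of \eqref{eq:equ-1-c} and in the opening framing. You invoke \Cref{prop:solutions-equivalent} to guarantee that the Riemann-theta solution coincides with an elliptic-function representation, and in step (iv) you propose to prove \eqref{eq:equ-1-c} by matching the amplitude of the theta-form solution against the value of the closed-form elliptic solution at a reference point. But in the paper's logical order, \Cref{prop:solutions-equivalent} is \emph{proved from} this lemma: identities \eqref{eq:equ-1-a}--\eqref{eq:equ-1-b} supply the coincidence of poles and zeros, and \eqref{eq:equ-1-c} fixes the remaining constant at $x+vt=\pi/(2\kappa^{(1)})$. Using the equivalence of the two solution forms (or equality of their values at a common point) to derive \eqref{eq:equ-1-c} is therefore circular; a priori the two representations could differ by a translation, and ruling that out is exactly what the lemma is for. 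The fix is to prove \eqref{eq:equ-1-c} by direct theta-identity manipulation with no reference to the solutions: insert the explicit $C_{u_0}^{(1)}$ from \eqref{eq:u-parameters-dn-B}, apply \eqref{eq:formula-theta-2tau-1} and \eqref{eq:formula-theta-2-1}, convert to elliptic functions via \eqref{eq:formula-trans-theta-elliptic}, and then use \eqref{eq:define-u-lambda}, which yields $\bigl(k^{(1)\prime}_2(\lambda_1^2-\lambda_3^2)+\lambda_1\lambda_2\bigr)/(\ii\lambda_3)$ and hence the stated right-hand side. Your fallback ``match divisors and apply Liouville on the torus'' does not repair this as written, because the three identities are equalities of \emph{constants} determined by the branch points: you would first have to specify in which complex variable both sides are meromorphic (the arguments live at different moduli $\tau^{(1)}_1$ versus $\tau^{(1)}_2$, so simply promoting the argument does not produce two functions on the same torus), verify the divisors in that variable, and only then normalize --- none of which is set up in the proposal.
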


\begin{proof}
		Consider the left-hand side of \eqref{eq:equ-1-a}.
		Firstly, we apply the formula \eqref{eq:formula-theta-2tau-1} to transform the parameter $\vartheta_i(\cdot,2  \tau^{(1)}_2)$ into the combination of Jacobi theta functions $\vartheta_i( \cdot,\tau^{(1)}_2)$. 
		Secondly, using the relationship between Jacobi theta functions and Jacobi elliptic functions provided in equation \eqref{eq:formula-trans-theta-elliptic}, we express this ratio in terms of elliptic functions.
		Then, based on the definition of the elliptic integral $F(u,k)$ ($\sn(F(u,k),k)=u$ in \Cref{define:elliptic-function}), we derive the explicit expressions in terms of parameters $u_{1,2,3,4}$.
		Finally, substituting the equation \eqref{eq:define-u-lambda} into it and simplifying, we obtain that the desired expressions in terms of $\lambda_{1,2,3}$ and $  k^{(1)}_{1,2}$.
		The detailed calculation is given as follows:
		\begin{equation}\nonumber
			\begin{split}
				\frac{\vartheta_2(2\tilde{\nu}_2,2  \tau^{(1)}_2)}{\vartheta_3(2\tilde{\nu}_2,2  \tau^{(1)}_2)}
				\xlongequal{\eqref{eq:formula-theta-2tau-1}}& \ 
				\frac{\vartheta_2(2\tilde{\nu}_2,  \tau^{(1)}_2)\vartheta_2(0,  \tau^{(1)}_2)-\vartheta_1(2\tilde{\nu}_2,  \tau^{(1)}_2)\vartheta_1(0,  \tau^{(1)}_2)}{\vartheta_3(2\tilde{\nu}_2,  \tau^{(1)}_2)\vartheta_3(0,  \tau^{(1)}_2)+\vartheta_4(2\tilde{\nu}_2,  \tau^{(1)}_2)\vartheta_4(0,  \tau^{(1)}_2)}\\
				\xlongequal{\eqref{eq:formula-trans-theta-elliptic}}&\
				\frac{\cn(2\tilde{\nu}_2  K^{(1)}_2/(\ii\pi),  k^{(1)}_2)  k^{(1)}_2}{\dn(2\tilde{\nu}_2  K^{(1)}_2/(\ii\pi),  k^{(1)}_2)+  k^{(1)\prime}_2}\\
				\xlongequal[\eqref{eq:define-u-lambda}]{\eqref{eq:Jacobi-double},\eqref{eq:tilde-v-1-2}} & \ 
				\frac{(\lambda_{1}^{2}-\lambda_{2}^{2}-\lambda_{3}^{2}) k^{(1)\prime}_1-(\lambda_{1}^{2}-\lambda_{2}^{2}+\lambda_{3}^{2})}{  k^{(1)}_1 \left(\lambda_{1}^{2}+\lambda_{2}^{2}-\lambda_{3}^{2}\right)}.
			\end{split}
		\end{equation}
		By using equations \eqref{eq:define-u-lambda}, \eqref{eq:tilde-v-1-2}, \eqref{eq:formula-trans-theta-elliptic}, \eqref{eq:Jacobi-double} and \eqref{eq:formula-theta-2tau-1}, we obtain that the right-hand side of equation \eqref{eq:equ-1-a} also satisfies the above result.
		Therefore, the equation \eqref{eq:equ-1-a} holds.

		Similarly, we obtain the equation \eqref{eq:equ-1-b}.
			For equation \eqref{eq:equ-1-c}, we obtain
			\begin{equation}\nonumber
				\begin{split}
					C_{u_0}^{(1)}\frac{\vartheta_2(2 \nu^{(1)} ,2  \tau^{(1)}_2)}{\vartheta_2(0,2  \tau^{(1)}_2)}\xlongequal[\eqref{eq:formula-theta-2tau-1},\eqref{eq:formula-theta-2-1}]{\eqref{eq:u-parameters-dn-B}}&\
					\frac{\ii \lambda_3 \vartheta_3(0 ,  \tau^{(1)}_1)\vartheta_4( 0,  \tau^{(1)}_1)(\vartheta_3^2( \nu^{(1)} ,  \tau^{(1)}_2)-\vartheta_4^2( \nu^{(1)} ,  \tau^{(1)}_2))}{\vartheta_3( \nu^{(1)} ,  \tau^{(1)}_1)\vartheta_4( \nu^{(1)} ,  \tau^{(1)}_1)(\vartheta_3^2(0,  \tau^{(1)}_2)-\vartheta_4^2(0,  \tau^{(1)}_2))}\\
					\xlongequal[\eqref{eq:define-second-integral}]{\eqref{eq:formula-trans-theta-elliptic}}&\ \frac{  k^{(1)\prime}_2(\lambda_1^2-\lambda_3^2)+\lambda_1\lambda_2}{\ii\lambda_3}.
				\end{split}	
			\end{equation}
			 Combining the above equations with the relationship between parameters $\lambda_{1,2,3}$ and $u_{1,2,3,4}$, we obtain the equation \eqref{eq:equ-1-c}. 
\end{proof}

\begin{lemma}\label{lemma:parameter-check}
		Through utilizing equations \eqref{eq:u-parameters-cn}, \eqref{eq:define-u-lambda},\eqref{eq:u2-elliptic}, \eqref{eq:tilde-v-1-2}, \eqref{eq:formula-trans-theta-elliptic}, \eqref{eq:Jacobi-double} and  \eqref{eq:formula-theta-2tau-1},
		we obtain the following equations:
	\begin{subequations}\label{eq:equ-2}
		\begin{align}
			&\sqrt{\frac{  k^{(2)\prime}_2  k^{(2)\prime}_1}{k^{(2)}_2  k^{(2)}_1}}=
			\frac{1+\delta}{1-\delta}, \label{eq:equ-2-a}
			\\
			&\frac{u_1\delta+u_3}{u_1\delta-u_3}\xlongequal[\eqref{eq:define-u-lambda}]{\eqref{eq:u2-elliptic}}\frac{(\lambda_2^2-A-B)(A-B)}{\lambda_2(\lambda_1+\lambda_3)(A+B)},  \label{eq:equ-2-b}\\
			&\frac{C_{u_0}^{(2)}\vartheta_2( \nu^{(2)} , \tau^{(2)} _1)}{\vartheta_2(0, \tau^{(2)} _1)}\ee^{\frac{-\ii ( \tau^{(2)} _1+ \tau^{(2)} _2)\pi}{4}- \frac{\nu^{(2)}}{2} }
			=\frac{u_1\delta -u_3}{\delta-1}, \label{eq:equ-2-c}
		\end{align}
	\end{subequations}
	where $C_{u_0}^{(2)}$ is defined in equation \eqref{eq:u-parameters-cn}.
\end{lemma}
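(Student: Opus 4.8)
The plan is to verify \eqref{eq:equ-2-a}--\eqref{eq:equ-2-c} by the same strategy used in the proof of \Cref{lemma:parameter-hat}: rewrite each left-hand side, which is built from Jacobi theta functions with half-modulus $\tau^{(2)}_{1,2}$ and from the constant $C_{u_0}^{(2)}$ of \eqref{eq:u-parameters-cn}, first in terms of theta functions of modulus $\tau^{(2)}_{1,2}$ via the duplication-type identity \eqref{eq:formula-theta-2tau-1}, then convert the resulting theta quotients into Jacobi elliptic functions $\sn,\cn,\dn$ of argument $2\nu^{(2)}K^{(2)}_1/(\ii\pi)$ (respectively $2\nu^{(2)}K^{(2)}_2/(\ii\pi)$) through \eqref{eq:formula-trans-theta-elliptic}, and finally evaluate those elliptic functions at the explicit argument using the definition of the incomplete integral $F(\cdot,k)$ appearing in $\nu^{(2)}$ (equation \eqref{eq:u-parameters-cn-U}) together with the addition/duplication formula \eqref{eq:Jacobi-double}. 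At that stage each side becomes a rational expression in $\lambda_1,\lambda_2,\lambda_3$ (using $A=|\lambda_1^2-\lambda_2^2|$, $B=|\lambda_1^2|$), and the right-hand sides are brought to the same form by substituting \eqref{eq:define-u-lambda}, i.e. $2\ii\lambda_1=u_1+u_2$, $2\ii\lambda_2=u_1+u_4$, $2\ii\lambda_3=u_1+u_3$, into $\delta=|(u_3-u_2)/(u_1-u_2)|$ and into $u_1\delta\pm u_3$; equality is then a polynomial identity to be checked directly.

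First I would dispose of \eqref{eq:equ-2-a}. Its left-hand side involves only complementary moduli, so after writing $k^{(2)}_1,k^{(2)}_2$ from \eqref{eq:u-parameters-cn-k} in terms of $A,B,\lambda_2$ (and $\lambda_1^2+\lambda_3^2$), $\sqrt{k^{(2)\prime}_1 k^{(2)\prime}_2/(k^{(2)}_1 k^{(2)}_2)}$ becomes an explicit algebraic function of the branch points; on the other side $\delta$ is expressed through \eqref{eq:define-u-lambda}, so $(1+\delta)/(1-\delta)$ also becomes algebraic in $\lambda_{1,2,3}$, and the identity reduces to matching radicals — here one must be careful to fix the branch of each square root consistently using the sign conventions $\Re(\lambda_1)<0<\Re(\lambda_3)$, $\lambda_2\in\ii\mathbb{R}$ adopted in \Cref{theorem:solution-u}. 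Next \eqref{eq:equ-2-b}: the stated right-hand side is already written as a function of $\lambda_{1,2,3}$, so only the left-hand side $\frac{u_1\delta+u_3}{u_1\delta-u_3}$ needs to be reduced via \eqref{eq:define-u-lambda}; I expect this to collapse cleanly once $u_1\delta = u_1|(u_3-u_2)/(u_1-u_2)|$ is rationalized using $u_2=\ii(\lambda_1+\lambda_2-\lambda_3)$, $u_3=\ii(\lambda_1-\lambda_2+\lambda_3)$, $u_1=-\ii(\lambda_1+\lambda_2+\lambda_3)$, whereupon $|u_3-u_2|$ and $|u_1-u_2|$ turn into $2|\lambda_2-\lambda_3|$-type and $2|\lambda_1+\lambda_2|$-type quantities that repackage into $A\pm B$ and $\lambda_2(\lambda_1+\lambda_3)$.

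Finally \eqref{eq:equ-2-c} is the genuine bookkeeping step, mirroring \eqref{eq:equ-1-c}: I would insert $C_{u_0}^{(2)}$ from \eqref{eq:u-parameters-cn-B}, so that the left side becomes $\frac{(A-B)\vartheta_4(0,\tau^{(2)}_1)\ee^{\ii\tau^{(2)}_2\pi/4}}{\lambda_2\,\vartheta_1(\ii\tau^{(2)}_1\pi+\nu^{(2)},\tau^{(2)}_1)}\cdot\frac{\vartheta_2(\nu^{(2)},\tau^{(2)}_1)}{\vartheta_2(0,\tau^{(2)}_1)}\,\ee^{-\ii(\tau^{(2)}_1+\tau^{(2)}_2)\pi/4-\nu^{(2)}/2}$, and then use the quasi-periodicity of $\vartheta_1$ under the shift by $\ii\tau^{(2)}_1\pi$ (to absorb $\vartheta_1(\ii\tau^{(2)}_1\pi+\nu^{(2)},\tau^{(2)}_1)$ into $\vartheta_4(\nu^{(2)},\tau^{(2)}_1)$ up to an exponential factor that should cancel the $\ee^{-\ii\tau^{(2)}_1\pi/4-\nu^{(2)}/2}$ terms), reducing everything to a quotient of $\vartheta_j(\nu^{(2)},\tau^{(2)}_1)$ that \eqref{eq:formula-trans-theta-elliptic} identifies with $\sn,\cn,\dn$ at argument $\nu^{(2)}K^{(2)}_1/(\ii\pi)=F\big(2\ii(AB)^{1/2}/(A-B),k^{(2)}_1\big)$; evaluating at that argument gives an algebraic expression in $A,B,\lambda_2$, which one then matches against $\frac{u_1\delta-u_3}{\delta-1}$ after the substitution \eqref{eq:define-u-lambda}. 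The main obstacle is precisely this last step: tracking the half-integer characteristics and the accumulated exponential prefactors through the modular relation \eqref{eq:formula-theta-2tau-1} and the quasi-periodicity shifts without sign or phase errors, and — throughout all three identities — pinning down the correct branch of every square root (in $A^{1/2}$, $(AB)^{1/2}$, $k^{(2)\prime}_{1,2}$, and in the $F$-arguments) so that the algebraic right-hand sides come out with the signs stated in \eqref{eq:equ-2}; once the branches are fixed consistently with the ordering of $\lambda_{1,2,3}$, each identity is a finite rational computation.
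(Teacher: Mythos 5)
Your proposal is correct and follows essentially the same route as the paper: identities \eqref{eq:equ-2-a} and \eqref{eq:equ-2-b} are verified by reducing both sides to algebraic expressions in $\lambda_{1,2,3}$ (using $u_4=u_2^*$ and \eqref{eq:define-u-lambda}, with $(1+\delta)/(1-\delta)=(\lambda_2^2-A-B)/(\lambda_2(\lambda_1+\lambda_3))$ matching the modulus expression from \eqref{eq:u-parameters-cn-k}), while \eqref{eq:equ-2-c} is handled exactly as you describe --- inserting $C_{u_0}^{(2)}$ from \eqref{eq:u-parameters-cn-B}, absorbing $\vartheta_1(\ii\tau^{(2)}_1\pi+\nu^{(2)},\tau^{(2)}_1)$ into $\vartheta_4(\nu^{(2)},\tau^{(2)}_1)$ via the shift formula \eqref{eq:Jacobi-Theta-K-iK}, then converting to elliptic functions by \eqref{eq:formula-trans-theta-elliptic} and evaluating at the $F$-argument of $\nu^{(2)}$, which gives $(A+B)/(\ii\lambda_2)$. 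The only cosmetic difference is that the duplication-type identities \eqref{eq:formula-theta-2tau-1} and \eqref{eq:Jacobi-double} you front-load are not actually needed in this lemma, since all theta functions already carry modulus $\tau^{(2)}_1$ and the shift plus conversion formulas suffice.
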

\begin{proof}
	Based on the definition of the parameter $\delta$ defined in equation \eqref{eq:u2-elliptic}, we obtain 
	\begin{equation}\label{eq:1-1}
		\begin{split}
			& \frac{1+\delta}{1-\delta}
			\xlongequal[\eqref{eq:u-parameters-cn}]{\eqref{eq:u2-elliptic},\eqref{eq:define-u-lambda}}
			\frac{\lambda_2^2-A-B}{\lambda_2(\lambda_1+\lambda_3)},\\
		\end{split}
	\end{equation}
	since $u_4=u_2^*$. 
    By the equation \eqref{eq:u-parameters-cn}, it is easy to obtain that the left-hand side of equation \eqref{eq:equ-2-a} also satisfy the above result. Thus, we obtain the equation \eqref{eq:equ-2-a}.
	Similarly, we obtain the equation \eqref{eq:equ-2-b}.
	By utilizing the method provided in \Cref{lemma:parameter-hat} for calculating the equation \eqref{eq:equ-1-a}, we obtain the equation \eqref{eq:equ-2-c}, which we would not repeat anymore.
	For equation \eqref{eq:equ-2-c},
	we deduce 
	\begin{equation}
		\begin{split}
			\frac{C_{u_0}^{(2)}\vartheta_2( \nu^{(2)} , \tau^{(2)} _1)}{\vartheta_2(0, \tau^{(2)} _1)}\ee^{\frac{-\ii ( \tau^{(2)} _1+ \tau^{(2)} _2)\pi}{4}-\frac{ \nu^{(2)}}{2} }
			\xlongequal[\eqref{eq:Jacobi-Theta-K-iK}]{\eqref{eq:u-parameters-cn-B}}&\ \frac{(A-B)\vartheta_4(0, \tau^{(2)} _1)\vartheta_2( \nu^{(2)} , \tau^{(2)} _1)}{\ii\lambda_2\vartheta_4(\nu^{(2)},\tau^{(2)} _1)\vartheta_2(0, \tau^{(2)} _1)}
			\xlongequal[\eqref{eq:define-first-integral}]{\eqref{eq:formula-trans-theta-elliptic}}  
			\frac{A+B}{\ii \lambda_2}.
		\end{split}	
	\end{equation}
\end{proof}

\newenvironment{proof-prop-equivalent}{\emph{\textbf{Proof of \Cref{prop:solutions-equivalent}.}}}{\hfill$\Box$\medskip}
\begin{proof-prop-equivalent}
	We focus on demonstrating the equivalence between solutions represented by Riemann theta functions and those expressed via elliptic functions.
	By invoking Liouville’s Theorem and verifying that both forms share the same poles, zeros, and initial values, we establish that the two types of solutions are indeed identical.
	
	Consider the solution $u(x,t)$ with parameters given in \Cref{theorem:solution-u} and $D^{(1)}=\mathbf{0}$, as follows
	\begin{equation}\label{eq:u-convert-p}
		\begin{split}
				u(x,t)
			\xlongequal[\eqref{eq:formula-Rieman-shift-1}, \eqref{eq:Jacobi-Theta-K-iK}]{\eqref{eq:solutions-dn}, \eqref{eq:u-parameters-dn}} 
			&\ C_{u_0}^{(1)}\frac{\vartheta_2(2 \nu^{(1)} ,2  \tau^{(1)}_1)}{\vartheta_2(0 ,2  \tau^{(1)}_1)} \cdot \frac{\frac{\vartheta_3(2 \nu^{(1)} ,2  \tau^{(1)}_1)}{\vartheta_2(2 \nu^{(1)} ,2  \tau^{(1)}_1)}-\frac{\vartheta_2(2\ii  \kappa^{(1)}  (x+vt),2  \tau^{(1)}_2)}{\vartheta_3(2\ii  \kappa^{(1)}  (x+vt),2  \tau^{(1)}_2)}}{\frac{\vartheta_3(0,2  \tau^{(1)}_1)}{\vartheta_2(0,2  \tau^{(1)}_1)}+\frac{\vartheta_2(2\ii  \kappa^{(1)}  (x+vt),2  \tau^{(1)}_2)}{\vartheta_3(2\ii  \kappa^{(1)}  (x+vt),2  \tau^{(1)}_2)}},\\
		\end{split}
	\end{equation}
	by utilizing the formulas provided in \Cref{prop:Riemann-2-1}.
	According to equations \eqref{eq:equ-1-a} and \eqref{eq:equ-1-b}
	and considering $2\ii  \kappa^{(1)} (x+vt)$ as a whole, we deduce that the poles and zeros of the solution are $2\ii  \kappa^{(1)} (x+vt)=2\tilde{\nu}_1+4\ii n\pi +4\ii m  \tau^{(1)}_2\pi $, $n,m\in \mathbb{Z}$ and $2\ii  \kappa^{(1)} (x+vt)=2\tilde{\nu}_2+4\ii n\pi +4\ii m  \tau^{(1)}_2\pi $, $n,m\in \mathbb{Z}$, respectively.
	Substituting equation \eqref{eq:define-u-lambda} into the definitions of $\alpha,k$ in equation \eqref{eq:u1-elliptic}, we find $k=  k^{(1)}_2$ and $\alpha= \kappa^{(1)}   K^{(1)}_2/\pi$. 
	Then, we obtain 
	\begin{equation}\label{eq:u-convert-p-2}
		\begin{split}
			u(x,t)\xlongequal{\eqref{eq:u1-elliptic}} 
			&\ u_2\frac{\sn^2(\tilde{\nu}_2 K_2^{(1)}/(\ii \pi) ,k^{(1)}_2)-\sn^2(\alpha (x+vt),k^{(1)}_2)}{\sn^2(\tilde{\nu}_1 K_2^{(1)}/(\ii \pi) ,k^{(1)}_2)-\sn^2(\alpha (x+vt),k^{(1)}_2)},
		\end{split}
	\end{equation}
	where $\tilde{\nu}_{1,2}$ are defined in equation \eqref{eq:tilde-v-1-2}.
	By the properties of Jacobi elliptic functions, it is easy to get that the zeros and poles of this solution $u(x,t)$ are
	$\alpha (x+vt)= \kappa^{(1)} K^{(1)}_2 (x+vt)/\pi=\tilde{\nu}_2 K_2^{(1)}/(\ii \pi)+2nK^{(1)}_2+2\ii m K^{(1)\prime}_2$ and $\alpha (x+vt)= \kappa^{(1)} K^{(1)}_2 (x+vt)/\pi=\tilde{\nu}_1 K_2^{(1)}/(\ii \pi)+2nK^{(1)}_2+2\ii m K^{(1)\prime}_2$, $n,m\in \mathbb{Z}$, respectively.	
	It follows that the solution $u(x,t)$ in equation \eqref{eq:u-convert-p-2} possesses the same poles and zeros as in equation \eqref{eq:u-convert-p}. 
	By substituting $x+vt=\pi/(2\kappa^{(1)})$ into both representations of $u(x,t)$ and applying \Cref{lemma:parameter-hat} again, we conclude that the two forms of the solution are indeed identical, through utilizing the Liouville Theorem and the equation \eqref{eq:equ-1}.
	
	Similar results can be obtained for other values of the parameter $D^{(1)}=\ii   \tau^{(1)}_2 \mathbf{1}$, and the details are omitted here for brevity.
	Therefore, the solutions expressed in terms of Riemann theta functions with two different values of $D^{(1)}$, as given in equations \eqref{eq:solutions-dn} and \eqref{eq:u-parameters-dn}, are shown to be equivalent to the solution expressed in terms of elliptic functions in equation \eqref{eq:u1-elliptic}.
	
	Next, we consider the solution $u(x,t)$ under the \ref{case2} as follows:
	\begin{equation}\nonumber
		\begin{split}
			 &\ u(x,t)\\
			\xlongequal[\eqref{eq:formula-Rieman-shift-1}, \eqref{eq:Jacobi-Theta-K-iK}]{\eqref{eq:solutions-dn}, \eqref{eq:u-parameters-cn}}&\ \frac{C_{u_0}\left(\vartheta_4( \nu^{(2)} , \tau^{(2)} _1)\vartheta_4(2\ii  \kappa^{(2)} (x+vt), \tau^{(2)} _2)-\vartheta_2( \nu^{(2)} , \tau^{(2)} _1)\vartheta_2 (2\ii  \kappa^{(2)} (x+vt), \tau^{(2)} _2)\right)}{(\vartheta_4(0, \tau^{(2)} _1)\vartheta_4(2\ii  \kappa^{(2)} (x+vt), \tau^{(2)} _2)-\vartheta_2(0, \tau^{(2)} _1)\vartheta_2(2\ii  \kappa^{(2)} (x+vt), \tau^{(2)} _2))\ee^{\frac{\ii ( \tau^{(2)} _1+ \tau^{(2)} _2)\pi+2\nu^{(2)}}{4}}}\\
			\xlongequal{\eqref{eq:formula-trans-theta-elliptic}}&\ \frac{C_{u_0}\vartheta_3( \nu^{(2)} , \tau^{(2)} _1)}{\vartheta_2(0, \tau^{(2)} _1)}\frac{\cn( \alpha^{(2)} (x+vt),  k^{(1)}_2)-(B-A)(  k^{(1)\prime}_1  k^{(1)\prime}_2/(  k^{(1)}_1  k^{(1)}_2))^{1/2}/(A+B)}{\left(\cn(\alpha^{(2)} (x+vt),  k^{(1)}_2)-(  k^{(1)\prime}_1  k^{(1)\prime}_2/(  k^{(1)}_1  k^{(1)}_2))^{1/2}\right)\ee^{\frac{\ii ( \tau^{(2)} _1+ \tau^{(2)} _2)\pi+2\nu^{(2)}}{4}}},
		\end{split}
	\end{equation}
	where $ \alpha^{(2)} = 2\kappa^{(2)}   K^{(2)}_2/\pi$ and $\vartheta_4(0, \tau^{(2)} _1)/\vartheta_2(0, \tau^{(2)} _1)=(  k^{(2)\prime}_1/  k^{(2)}_1)^{1/2}$.
	By \Cref{lemma:parameter-check}, we find that the above solution and the one given in equation \eqref{eq:u2-elliptic} share the same poles and zeros. Setting $(x+vt)=\pi/(2 \kappa^{(2)} )$ and applying equation \eqref{eq:equ-2}, we further conclude that two forms of the solution are identical. Therefore the function $u(x,t)$ expressed in these two different forms is indeed consistent.
\end{proof-prop-equivalent}

Next, we are going to calculate the explicit expressions of the solutions for the Lax pair with the above mentioned two-phase solutions $u(x,t)$ in equation \eqref{eq:solutions-dn}. 
Before providing the explicit expressions of solutions, we would provide the explicit expressions of the Abel integrals.

\begin{lemma}\label{lemma:Abel-integral}
	Under the different cases of the branch points, the Abel map $\mathcal{A}^{(i)}_{\infty^{-}}(P)$ and functions $\Omega_{1,2,3}^{(i)}(P)$, $i=1,2$ could be expressed by elliptic functions as follows.
\end{lemma}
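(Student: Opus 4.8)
The plan is to recognize that both the Abel map $\mathcal{A}^{(i)}_{\infty^{-}}(P)$ and each normalized Abelian integral $\Omega_{1,2,3}^{(i)}(P)$ are, by construction, fixed linear combinations of the elementary hyperelliptic integrals $\int_{P_0}^{P}\lambda^{j}y^{-1}\,\dd\lambda$, with $j=0,1,2$ for $\mathcal{A}^{(i)}_{\infty^{-}}$ and for $\Omega_{1,2}^{(i)}$, and $j=0,1,\dots,4$ for $\Omega_3^{(i)}$. Thus the task reduces to evaluating these elementary integrals with a \emph{variable} upper endpoint $P=(\lambda,y)$, which is precisely the indefinite-integral version of the closed-loop computations already carried out for the periods $\mathcal{Y}^{[j]}_{a_i},\mathcal{Y}^{[j]}_{b_i}$ in Lemmas \ref{lemma:dn-1-int}-\ref{lemma:cn-2-int}. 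Concretely, in each of \ref{case1} and \ref{case2} I would first apply the substitution $\lambda^{2}=\Lambda$ to split the integrals into the two families \eqref{eq:hyper-1}--\eqref{eq:hyper-2}, then invoke the conformal maps of \Cref{prop:elliptic-int-1} and \Cref{prop:elliptic-int-2} together with the explicit fractional substitutions \eqref{eq:fs-2}, \eqref{eq:fs-1} in \ref{case1} and \eqref{eq:fs-4}, \eqref{eq:fs-3} in \ref{case2}, which express $\lambda$ as a rational function of $\sn,\cn,\dn$ of a new variable $\nu$ (this is exactly the setup of \Cref{prop:case-1-P}, \Cref{prop:case-2-P}, \Cref{prop:case-1-C}, \Cref{prop:case-2-C}). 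Each integrand then becomes a rational function of Jacobi elliptic functions, which integrates via the standard reduction tables of \cite{ByrdF-54} into the variable $\nu$, incomplete elliptic integrals of the third kind, and the Jacobi zeta function $Z(\nu,k)$; the latter two are then rewritten as quotients of Jacobi theta functions using the addition-type identities already employed in the proofs of \Cref{lemma:U-V-R-dn} and \Cref{lemma:U-V-R-cn} (in particular \eqref{eq:Zeta-define} and \eqref{eq:formula-trans-Pi-Zeta-theta}).

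For the Abel map the coefficients are already in hand: the holomorphic differentials $w_{1,2}\,\dd\lambda$ of \eqref{eq:define-basis-B} have the scalars $d_{ij}$ computed in \Cref{lemma:U-V-R-dn} and \Cref{lemma:U-V-R-cn}, so $\mathcal{A}^{(i)}_{\infty^{-}}(P)$ is the corresponding combination of $\int_{\infty^{-}}^{P}\lambda^{j}y^{-1}\,\dd\lambda$ and the elliptic substitution above yields it explicitly as a function of the Jacobi variable attached to $P$. For $\Omega_{1,2,3}^{(i)}$ I would start from the ansatz $\dd\Omega_j^{(i)}=\sum_k c_{jk}\lambda^{k}y^{-1}\,\dd\lambda$ as in \eqref{eq:define-Omega-1-c}, fix the leading coefficients from the prescribed principal parts at $\infty^{\pm}$ in \eqref{eq:define-int-w1}--\eqref{eq:define-int-w3} using the local expansion $y=\pm(\lambda^{3}-v\lambda/4)+\mathcal{O}(\lambda^{-1})$ of \eqref{eq:y-limitation}, and fix the remaining coefficients from the normalization $\oint_{a_i}\dd\Omega_j^{(i)}=0$ by inserting the period values from Lemmas \ref{lemma:dn-1-int}-\ref{lemma:cn-2-int}; the same substitution then evaluates $\Omega_j^{(i)}(P)$, and the additive constants $\omega_{1,2,3}$ are read off precisely as in \Cref{lemma:U-V-R-dn} and \Cref{lemma:U-V-R-cn}, with $\omega_1$ coming from \eqref{eq:omega-1-define}.

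The one genuinely delicate part — this is exactly \textbf{Step 3} of the outline in \Cref{section:exact-solution} — will be the bookkeeping of the integration path and of the sheet of $y$ after the substitution. Since hyperelliptic integrals are path dependent and the fractional maps \eqref{eq:fs-2}--\eqref{eq:fs-3} are many-to-one, one has to track which interval of the Jacobi variable the base point $P_0=(\lambda_3,0)$ and the general point $P=(\lambda,y)$ fall into, and which branch of $y\sim\pm\lambda^{3}$ near $\infty^{\pm}$ (cf. \eqref{eq:define-standard-projection}) is in play; this is what selects the branch of the logarithms of theta functions and the correct shifts by $K$, $\ii K'$ in the theta arguments, and hence what guarantees that the finitely many constants of integration assemble consistently. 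Once that path-and-sheet analysis is settled in each case, the rest is the routine reduction of rational functions of $\sn,\cn,\dn$ and the collection of constants, which produces the claimed elliptic-function expressions for $\mathcal{A}^{(i)}_{\infty^{-}}(P)$ and $\Omega^{(i)}_{1,2,3}(P)$.
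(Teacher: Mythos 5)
Your proposal follows essentially the same route as the paper: there too one writes the ansatz $\dd\Omega_j^{(i)}=\sum_k c_{jk}\lambda^k y^{-1}\dd\lambda$, fixes the top coefficients by matching the prescribed behavior at $\infty^{\pm}$ via $y\sim\pm(\lambda^3-v\lambda/4)$, determines the remaining ones from the vanishing $a$-periods using the values of Lemmas \ref{lemma:dn-1-int}--\ref{lemma:cn-2-int}, and then evaluates $\mathcal{A}^{(i)}_{\infty^-}(P)$ and $\Omega^{(i)}_{1,2,3}(P)$ (with $\Omega_3^{(i)}=4y+v\,\Omega_2^{(i)}$) through the substitutions \eqref{eq:fs-2}, \eqref{eq:fs-1}, \eqref{eq:fs-4}, \eqref{eq:fs-3}, the reduction formulas of \cite{ByrdF-54}, and the Zeta/theta identities \eqref{eq:Zeta-define}, \eqref{eq:formula-trans-Pi-Zeta-theta}, with the same path-and-sheet bookkeeping you flag. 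The only slip is your degree count: matching $\Omega_2\sim\pm\lambda$ and $\Omega_3\sim\pm4\lambda^3$ against $y\sim\pm\lambda^3$ forces $\dd\Omega_2$ to contain a $\lambda^3/y$ term and $\dd\Omega_3$ a $\lambda^5/y$ term (the paper takes $k\le 3$ and $k\le 5$ in \eqref{eq:define-Omega-23-c}), a miscount that your own asymptotic-matching step would immediately correct.
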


\begin{proof}
	For the \ref{case1}, we would like to study the vector $\mathcal{A}_{\infty^{-}}^{(1)}(P)$ and the functions $\Omega_{1,2,3}^{(1)}(P)$, where $P$ is a point on the hyperelliptic curve lying over $\lambda$.
	For the definition of the function $\mathcal{A}_{\infty^{-}}(P)$ in \eqref{eq:abel_map} and the quantities $w_{1,2}$ listed in equation \eqref{eq:define-basis-B}, we obtain the first component of the Abel map:
	\begin{equation}\label{eq:define-AP-P}
		\begin{split}
			\left(\mathcal{A}^{(1)}_{\infty^{-}}(P)\right)_{1,2}=&\ \int_{\infty^{-}}^P w_{1,2} \dd \lambda 
			\xlongequal[\eqref{eq:fs-1}]{\eqref{eq:fs-2}}
			\frac{ -\nu^{(1)} _2\pm(\nu^{(1)} _1-\ii\pi+ \nu^{(1)}) }{2},
		\end{split}
	\end{equation}
	with $K^{(1)}_1-\nu^{(1)}_{\infty}= \nu^{(1)} K^{(1)}_1/(\ii \pi)$,
	where $ \nu^{(1)} _{\infty}$ is defined in equation \eqref{eq:define-nu-hat-infinity} and $ \nu^{(1)} _{1,2}$ are defined as 
	\begin{equation}\label{eq:define-nu-1-12}\nonumber
		\nu^{(1)} _1=\frac{\ii\pi }{  K^{(1)}_1} F\left(\frac{\lambda(\lambda_1^2-\lambda_3^2)^{1/2}}{\lambda_3(\lambda_1^2-\lambda^2)^{1/2}},  k^{(1)}_1\right), \quad 
		\nu^{(1)} _2=\frac{\ii \pi }{  K^{(1)}_2} F\left(\frac{(\lambda_1^2-\lambda_3^2)^{1/2}}{(\lambda_1^2-\lambda^2)^{1/2}},  k^{(1)}_2\right).
	\end{equation}
	Considering the definition of $\Omega_{i}^{(1)}(P)$, $i=1,2,3$, we set 
	\begin{equation}\label{eq:define-Omega-23-c}
		\dd\Omega_2^{(1)}=\sum_{i=0}^3c_{2i}\frac{\lambda^i}{y}\dd \lambda,
		\qquad 
		\dd\Omega_3^{(1)}=\sum_{i=0}^5 c_{3i}\frac{\lambda^i}{y}\dd \lambda.
	\end{equation}
	Considering $\lambda \rightarrow \infty$ together with equations \eqref{eq:define-int-w2}-\eqref{eq:define-int-w3}, we can obtain parameters $c_{23}=1$, $c_{22}=c_{34}=c_{32}=0$, $c_{35}=12$, $c_{33}=-3v$, where $v$ is defined in equation \eqref{eq:solutions-dn}.
	Combining with the conditions that all integrals over the $a_i$-circle vanish, 
	we determine all the values of coefficients $c_{ij}$.
	From the relations $\oint_{a_i}\dd \Omega_2^{(1)}=0$ and $\oint_{a_i}\dd \Omega_3^{(1)}=0$, for $i=1,2$, in equations \eqref{eq:define-int-w2} and \eqref{eq:define-int-w3}, we obtain $c_{20}=c_{30}=0$, $c_{21}=(\lambda_3^2-\lambda_1^2) E^{(1)}_2/  K^{(1)}_2-\lambda_3^2$, and $c_{31}=4v_1-\lambda_3^2v-v(\lambda_1^2-\lambda_3^2) E^{(1)}_2/  K^{(1)}_2$,  where $v_1$ is defined in \Cref{lemma:dn-2-int}.
	Using the definition of $\Omega_{1,2,3}^{(1)}(P)$ in equations \eqref{eq:define-Omega-1-c} and \eqref{eq:define-Omega-23-c}, together with the elliptic integrals, we obtain
	\begin{equation}\label{eq:define-Omega-123-P}
		\begin{split}
			\Omega_2^{(1)}(P)
			\xlongequal[\eqref{eq:elliptic-funda-2}]{\eqref{eq:define-Omega-23-c}}&\ 
			-\ii \sqrt{\lambda_1^2-\lambda_3^2} \int_{   K^{(1)}_2}^{\frac{ \nu^{(1)} _2  K^{(1)}_2}{\ii \pi}}  \frac{k^2\sn^4(\nu,  k^{(1)}_2)-1}{\sn^2(\nu,  k^{(1)}_2)}+\dn^2(\nu,  k^{(1)}_2)-\frac{ E^{(1)}_2}{  K^{(1)}_2}\dd \nu\\
			\xlongequal[\eqref{eq:define-first-integral}]{\eqref{eq:Zeta-define}}&\ 
			 -\ii \sqrt{\lambda_1^2-\lambda_3^2}\, Z\left(\frac{ \nu^{(1)} _2  K^{(1)}_2}{\ii\pi},  k^{(1)}_2\right)-\ii \sqrt{\frac{(\lambda_3^2-\lambda^2)(\lambda_2^2-\lambda^2)}{(\lambda_1^2-\lambda^2)}},\\
			\Omega_3^{(1)}(P)\xlongequal{\eqref{eq:define-Omega-23-c}}&\ 
			\int_{P_0}^{P}\frac{12\lambda^5-4v\lambda^3+4v_1\lambda}{y}\dd \lambda+v\Omega_2^{(1)}(P)=
			4y+v\Omega_2^{(1)}(P),\\
			\Omega_1^{(1)}(P)
			\xlongequal[\eqref{eq:fs-2}]{\eqref{eq:fs-1}}&\, \frac{-\ii }{\lambda_2\sqrt{\lambda_1^2-\lambda_3^2}}	\int_{  K^{(1)}_1}^{\frac{ \nu^{(1)} _1  K^{(1)}_1}{\ii \pi}}\frac{\lambda_1^2\lambda_3^2\sn^2(\nu,  k^{(1)}_1)}{\lambda_1^2-\lambda_3^2\cn^2(\nu,  k^{(1)}_1)}-\frac{\lambda_1^2}{  K^{(1)}_1}\Pi\left( \frac{\lambda_2^2-\lambda_1^2}{\lambda_2^2},  k^{(1)}_1 \right) \dd \nu\\
			\xlongequal[\eqref{eq:add-app}]{\eqref{eq:formula-trans-Pi-Zeta-theta}}&\, \frac{1}{2}\ln\left(\frac{\vartheta_2( \nu^{(1)}+ \nu^{(1)} _1 ,  \tau^{(1)}_1)}{\vartheta_2( \nu^{(1)} - \nu^{(1)}_1 ,  \tau^{(1)}_1)}\right)+\frac{\ii \pi}{2}.	
		\end{split}
	\end{equation}
	
	For the \ref{case2}, from the definition of the Abel map $\mathcal{A}_{\infty^-}^{(2)}(P)$,
	we get the first component:
	\begin{equation}\label{eq:define-AP-C}
			\!\!\!\! (\mathcal{A}^{(2)}_{\infty^-}(P))_{1}
			=\int_{\infty^{-}}^{P}w_{1} \dd \lambda
			= \Delta_1^{(2)}+ \frac{ \nu^{(2)} _2}{2}, \,
			(\mathcal{A}^{(2)}_{\infty^-}(P))_{2}
			=\int_{\infty^{-}}^{P}w_{2} \dd \lambda
			= \Delta_2^{(2)}+ \frac{ \nu^{(2)}_1- \nu^{(2)}- \nu^{(2)}_2}{4}, 
	\end{equation}
	where $\nu^{(2)}_1 =\ii \pi F\left(\frac{2(AB)^{1/2}(\lambda_2^2-\lambda^2)^{1/2}\lambda}{\lambda^2(A-B)+\lambda_2^2B},  k^{(2)}_1\right)/K^{(2)}_1$,
	 $ \nu^{(2)}_2=\ii \pi F\left(\frac{2A^{1/2}(\lambda_2^2-\lambda^2)^{1/2}}{\lambda_2^2+A-\lambda^2},  k^{(2)}_2\right)/ K^{(2)}_2.$
	From the definition of integrals $\Omega_{1,2,3}^{(2)}(P)$ defined in equations \eqref{eq:define-Omega-1-c} and \eqref{eq:define-Omega-23-c}, together with the formulas of elliptic integrals, we get 
	\begin{equation}\label{eq:define-Omega-123-C}
		\begin{split}
			\Omega_2^{(2)}(P) \xlongequal[\eqref{eq:define-U-1-2-C}]{\eqref{eq:fs-3}}& \int_{  K^{(2)}_2+\ii   K^{(2)\prime}_2}^{\frac{  K^{(2)}_2 \nu^{(2)} _2}{\ii \pi}} \ii \sqrt{A} \left(\frac{\cn(\nu,  k^{(2)}_2)}{1-\cn(\nu,  k^{(2)}_2)} +\frac{ E^{(2)}_2}{  K^{(2)}_2}\right)\dd \nu
			 \\=& \! \left. -\ii \sqrt{A} \left(\frac{\sn(\nu,  k^{(2)}_2)\dn(\nu,  k^{(2)}_2)}{1-\cn(\nu,  k^{(2)}_2)} +Z(\nu,  k^{(2)}_2)\right)\right|_{  K^{(2)}_2+\ii   K^{(2)\prime}_2}^{\frac{  K^{(2)}_2 \nu^{(2)} _2}{\ii \pi}}\\
			=& - \ii \sqrt{A}\, Z\left(\frac{  K^{(2)}_2 \nu^{(2)} _2}{\ii \pi },  k^{(2)}_2\right)- \frac{y}{\left( \lambda^2_2+A-\lambda^2\right)} +\frac{ \sqrt{A} \pi }{2  K^{(2)}_2},\\
			\Omega_1^{(2)}(P)\xlongequal[\eqref{eq:define-U-1-2-C},\eqref{eq:formula-trans-Pi-Zeta-theta}]{\eqref{eq:fs-4},\eqref{eq:fs-3}}&\ \frac{1}{4}\ln\left(\frac{\vartheta_1( \nu^{(2)} _1+ \nu^{(2)} ,\tau_1^{(2)})}{\vartheta_1( \nu^{(2)} _1- \nu^{(2)} ,\tau_1^{(2)})}\right)
			+\frac{\ii (2- \tau^{(2)} _1- \tau^{(2)} _2)\pi- \nu^{(2)} - \nu^{(2)} _1+ \nu^{(2)} _2}{4}\\&-\frac{1}{4}\ln\left(\frac{\sqrt{(\lambda_1^2-\lambda^2)(\lambda_3^2-\lambda^2)}+\ii \sqrt{(\lambda_2^2-\lambda^2)\lambda^2}}{\sqrt{(\lambda_1^2-\lambda^2)(\lambda_3^2-\lambda^2)}-\ii \sqrt{(\lambda_2^2-\lambda^2)\lambda^2}}\right) \! ,
			\\
		\end{split}
	\end{equation}
	and $\Omega_3^{(2)}(P) = 4y + v \Omega_2^{(2)}(P)$.
\end{proof}

\newenvironment{proof-solution-dn-Phi}{\emph{\textbf{Proof of \Cref{theorem:solution-Phi}.}}}{\hfill$\Box$\medskip}
\begin{proof-solution-dn-Phi}
	For the \ref{case1} and \ref{case2}, we consider $\Phi_{1}(x,t;P)$ and $\Phi_{2}(x,t;P)$ in equation \eqref{eq:Phi-expression}. By virtue of Lemmas \ref{lemma:U-V-R-dn} and \ref{lemma:Abel-integral}, it follows readily that the vector solutions to the Lax pair of the mKdV equation are given by \eqref{eq:Phi-solution}
	after eliminating the constant $\Theta(D^{(i)}+\mathcal{A}^{(i)}_{\infty^-}(P))/\Theta(D^{(i)})$ and using the relation $C_{u_0}^{(i)}=2\ii/\omega_1^{(i)}$.
\end{proof-solution-dn-Phi}

\section{The spectral stability of two-phase solutions }\label{section:spectral-stability}

In this section, we investigate the spectral stability of the two-phase solutions for the mKdV equation.  
To carry out the stability analysis, we consider the bounded function $W(\xi;\Omega)$, whose exponential factor is required to have zero real part. 
According to equations \eqref{eq:define-W} and \eqref{eq:M}, the corresponding parameter $\lambda$ must lie in the set $Q$ defined in equation \eqref{eq:set-Q}.
The analysis is further divided into two cases, depending on the configuration of the branch points.

By considering the stationary periodic traveling wave solutions  under the transformation \eqref{eq:transformation-xi-eta},  and utilizing the modified squared eigenfunction method together with  \Cref{theorem:solution-Phi}, we obtain the eigenvalue and eigenfunctions of the linearized spectral problem of the \ref{eq:mKdV} equation as
\begin{equation}\label{eq:define-W}
	\begin{split}
	\!\!	&\ W(\xi;\Omega)\\
	=&\ \left(\Phi_{1}^2-\Phi_{2}^2\right)\ee^{-2\ii (\Omega_3^{(i)}(P)-v\Omega_2^{(i)}(P)) \eta}\\
	\xlongequal{\eqref{eq:Phi-solution}}& \left(\frac{\Theta^2(D^{(i)}+\mathcal{A}_{\infty^{-}}^{(i)}(P)+\ii U^{(i)}\xi)}{\Theta^2(D^{(i)}+\ii U^{(i)}\xi)}-\frac{\Theta^2(D^{(i)}+\mathcal{A}_{\infty^{-}}^{(i)}(P)+\ii U^{(i)}\xi -\Delta^{(i)})}
		{\Theta^2(D^{(i)}+\ii U^{(i)}\xi)\ee^{-4\ii \omega_{2}^{(i)}\xi-2\Omega_1^{(i)}(P)}}\right) \!
		\ee^{2\ii \left(\Omega_2^{(i)}(P) + \omega_2^{(i)} \right)\xi},\!\!
	\end{split}
\end{equation}
with the eigenvalue defined as 
\begin{equation}\label{eq:define-Omega}
		\Omega(\lambda)=2\ii (\Omega_3^{(i)}(P)-v\Omega_2^{(i)}(P))=8\ii y.
\end{equation}
The method for constructing these solutions are provided in previous work \cite{LingS-23-mKdV-stability}
and will not be repeated here for brevity.

For the Floquet theorem (see \cite{DeconinckK-06,Floquet-83}), the solution $W(\xi;\Omega)$ of the linear homogeneous differential equation \eqref{eq:linearized-mKdV} is of the form
$W(\xi;\Omega)=\ee^{\ii \hat{\eta} \xi} \hat{W}(\xi;\Omega), \hat{W}(\xi+2T;\Omega)=\hat{W}(\xi;\Omega),\hat{\eta}\in \mathbb{C}$,
where $2T$ is the period of the function $\hat{W}(\xi ;\Omega )$ and $\hat{\eta}$ is defined in equation \eqref{eq:M}.
According to \Cref{define:spect-stable}, studying the spectral stability of the genus-two periodic traveling solution is equivalent to
examining all values of $\Omega(\lambda)$ for any spectral parameter $\lambda$ satisfying $\Im(\mathcal{I}(\lambda))=0$ defined in equation \eqref{eq:M} and determining whether these eigenvalues $\Omega(\lambda)$ are purely imaginary.
Therefore, it is crucial to analyze the curve $\Im(\mathcal{I}(\lambda))=0$, i.e., the set $Q$ defined in equation \eqref{eq:set-Q}.

If we get a point $\lambda_0$ satisfying $\Im(\mathcal{I}(\lambda_0))=0$, we obtain the curve $\Im(\mathcal{I}(\lambda))=0$ which goes through the point $\lambda_0$ along the tangent vector.
Differentiating with respect to ${\lambda}_{R},{\lambda}_{I}$ on the curve ${\Im}\left(\mathcal{I}(\lambda) \right)=0$, we get the tangent vector 
\begin{equation}\label{eq:tangent-vector}
	\left(-\frac{\dd {\Im} (\mathcal{I})}{\dd \lambda_{I}}, \frac{\dd {\Im} (\mathcal{I})}{\dd \lambda_{R}} \right)= \left(-{\Re}(\mathcal{I}'(\lambda)), {\Im}(\mathcal{I}'(\lambda)) \right), \qquad \mathcal{I}'(\lambda):=\frac{\dd \mathcal{I}(\lambda)}{\dd \lambda},
\end{equation}   
where $\lambda_{R},\lambda_{I}$ denote the real and imaginary part of $\lambda$ respectively. 
Then, we are going to study the spectral stability of the genus-two solution under two different cases of branch points $\lambda_i$, $i=1,2,3$.

\subsection{The spectral stability analysis for \ref{case1}}\label{subsec:spectral-dn}
On the basis of the above analysis, to study the spectral stability of aforementioned periodic solutions, we merely need to consider the set $Q$ defined in equation \eqref{eq:set-Q}. 
By the equation \eqref{eq:M} and the \Cref{theorem:solution-Phi}, we get 
\begin{equation}\label{eq:I-define-P-dn}
	\begin{split}
		\mathcal{I}^{(1)}(\lambda)\xlongequal[\eqref{eq:define-Omega-123-P}]{\eqref{eq:M}}&-2\ii \sqrt{\lambda_1^2-\lambda_3^2}\, Z\left(\frac{ \nu^{(1)} _2  K^{(1)}_2}{\ii \pi},  k^{(1)}_2\right)-2\ii \sqrt{\frac{(\lambda_3^2-\lambda^2)(\lambda_2^2-\lambda^2)}{(\lambda_1^2-\lambda^2)}}.
	\end{split}
\end{equation}

\begin{lemma}\label{lemma:dn-bounded}
	The set $Q$ is equivalent to the set $ Q^{(1)} := Q^{(1)} _R\cup  Q^{(1)} _{P_1}\cup  Q^{(1)} _{P_2}$, defined in equation \eqref{eq:define-hat-Q}.
	
\end{lemma}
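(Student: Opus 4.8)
\textbf{Proof proposal for \Cref{lemma:dn-bounded}.}

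The plan is to analyze the function $\mathcal{I}^{(1)}(\lambda)$ given in \eqref{eq:I-define-P-dn} and determine precisely where its imaginary part vanishes. The first term of $\mathcal{I}^{(1)}(\lambda)$, namely $-2\ii\sqrt{\lambda_1^2-\lambda_3^2}\,Z(\nu_2^{(1)}K_2^{(1)}/(\ii\pi),k_2^{(1)})$, involves the Jacobi zeta function $Z$ evaluated along the argument $\nu_2^{(1)}$ defined in \Cref{lemma:Abel-integral}, which itself is the elliptic integral $F((\lambda_1^2-\lambda_3^2)^{1/2}/(\lambda_1^2-\lambda^2)^{1/2},k_2^{(1)})$; the second term is the algebraic radical $-2\ii\sqrt{(\lambda_3^2-\lambda^2)(\lambda_2^2-\lambda^2)/(\lambda_1^2-\lambda^2)}$. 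First I would establish that on the real axis $\lambda\in\mathbb{R}$, the quantity $\lambda^2$ is real and lies outside the intervals between the $\lambda_i^2$ (recall all $\lambda_i\in\ii\mathbb{R}$, so $\lambda_i^2<0$), so both the radical and the $F$-argument are purely imaginary times a real quantity in a way that makes $\mathcal{I}^{(1)}$ purely imaginary; hence $Q^{(1)}_R\subset Q$. For the two imaginary segments $Q^{(1)}_{P_1}$ and $Q^{(1)}_{P_2}$, I would parametrize $\lambda=\ii s$ with $s$ real in the appropriate range, so that $\lambda^2=-s^2$ runs over $(-\infty,0]$, and check that the combination $(\lambda_3^2-\lambda^2)(\lambda_2^2-\lambda^2)/(\lambda_1^2-\lambda^2)$ is real and non-negative (respectively the relevant sign pattern) exactly when $|s|\le\Im(\lambda_1)$ or $\Im(\lambda_2)\le|s|\le\Im(\lambda_3)$, using the ordering $0<\Im(\lambda_1)<\Im(\lambda_2)<\Im(\lambda_3)$; simultaneously I would verify that the elliptic-integral argument inside $Z$ stays real (or on a period lattice line where $Z$ is real), so that the first term also contributes no real part. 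This gives the inclusion $Q^{(1)}_R\cup Q^{(1)}_{P_1}\cup Q^{(1)}_{P_2}\subset Q$.

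The harder direction is the reverse inclusion $Q\subset Q^{(1)}_R\cup Q^{(1)}_{P_1}\cup Q^{(1)}_{P_2}$, i.e.\ showing there are no other points where $\Im(\mathcal{I}^{(1)})=0$. My approach here is to use the tangent-vector computation \eqref{eq:tangent-vector}: the level set $\{\Im(\mathcal{I}^{(1)})=0\}$ is a union of smooth curves whose tangent direction is $(-\Re(\mathcal{I}^{(1)\prime}(\lambda)),\Im(\mathcal{I}^{(1)\prime}(\lambda)))$, and these curves can only start/end/branch at critical points of $\mathcal{I}^{(1)}$ (zeros of $\mathcal{I}^{(1)\prime}$) or at the branch points $\lambda=\pm\ii\Im(\lambda_i)$ where the radical degenerates. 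I would compute $\mathcal{I}^{(1)\prime}(\lambda)$ explicitly — it should reduce, after cancellation of the $Z$-derivative contributions against the derivative of the algebraic term, to something proportional to a rational function of $\lambda$ times $1/y$, so that $\mathcal{I}^{(1)\prime}(\lambda)=0$ only at a finite explicitly-locatable set — and then trace each component of the level set from its endpoints. Using the symmetries $\lambda\mapsto-\lambda$, $\lambda\mapsto\lambda^*$ (inherited from the curve via \eqref{eq:symmety-U-V}), together with the fact that $\mathcal{I}^{(1)}$ is real on the already-identified pieces and strictly monotone along them (monotonicity following from the sign of $\mathcal{I}^{(1)\prime}$ restricted to those segments), I would argue that no level-set component can leave the real axis or the imaginary axis into the open quadrants without creating a contradiction with the count of critical points.

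The main obstacle I anticipate is controlling the Jacobi zeta term $Z(\nu_2^{(1)}K_2^{(1)}/(\ii\pi),k_2^{(1)})$ off the natural axes: unlike the algebraic radical, $Z$ is a transcendental function of $\lambda$ through the incomplete elliptic integral $\nu_2^{(1)}$, and one must pin down which period-parallelogram translate of the argument is relevant as $\lambda$ varies, i.e.\ track the analytic continuation of $F((\lambda_1^2-\lambda_3^2)^{1/2}/(\lambda_1^2-\lambda^2)^{1/2},k_2^{(1)})$ and the sheet of $y$. I would handle this by invoking the explicit correspondence between $\lambda$ and the Jacobi-elliptic variable set up in \Cref{prop:case-1-P} and \Cref{appendix:map} (Step 3 of the integration scheme, on fixing the integration path and the sheet of $y$), which already records how the elliptic argument moves as $\lambda$ traverses the relevant contours; with that dictionary in hand the $Z$-term becomes as explicit as the algebraic term, and the level-set analysis closes. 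A secondary technical point is to confirm $\mathbf{B}^{(1)}$ from \eqref{eq:u-parameters-dn-B} is indeed a Riemann matrix (negative-definite real part after the $\ii\pi$ factor) so that $\Theta$, and hence $W(\xi;\Omega)$, is well-defined and bounded exactly on $Q$; this is routine given $k_i^{(1)}\in(0,1)$ but should be stated.
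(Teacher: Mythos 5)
Your overall route coincides with the paper's: the forward inclusion is the same direct reality check (on $Q^{(1)}_R\cup Q^{(1)}_{P_1}$ it follows from $\nu_2^{(1)}$ being real so that the Zeta term is purely imaginary; on $Q^{(1)}_{P_2}$ the ``period-lattice'' fact you gesture at is made precise in the paper via the addition formulas \eqref{eq:add-first} and \eqref{eq:add-app}, shifting the argument by $K_2^{(1)}$ — you would still have to carry this computation out), and for the reverse inclusion the paper also computes $\mathcal{I}^{(1)\prime}$ explicitly, finding that it collapses to $2\lambda\bigl((\lambda^2-\lambda_3^2)-E_2^{(1)}(\lambda_1^2-\lambda_3^2)/K_2^{(1)}\bigr)/y$, with zeros only at $\lambda=0$ and at a pair of points on the imaginary axis outside $Q^{(1)}$, and half-order poles at $\pm\lambda_{1,2,3}$, and then rules out extra components of $\{\Im\,\mathcal{I}^{(1)}=0\}$.

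The gap is in how you rule them out. Your assertion that level-set components ``can only start/end/branch at critical points of $\mathcal{I}^{(1)}$ or at the branch points'' is false as stated, and the promised ``contradiction with the count of critical points'' is never specified: a hypothetical extra component could also be a closed loop inside a quadrant, or run off to infinity, and neither possibility is excluded by counting zeros of $\mathcal{I}^{(1)\prime}$. The paper needs three further ingredients that are absent from your sketch: (i) at each half-order pole $\pm\lambda_i$ exactly one ray of the level set emanates, and that ray is already occupied by $Q^{(1)}$, while at the zero $\lambda=0$ exactly two curves cross and both (the real and imaginary axes) are already in $Q^{(1)}$ — so no new component can terminate at these points; (ii) as $\lambda\to\infty$ off the real axis, \eqref{eq:I-define-P-dn} gives $\mathcal{I}^{(1)}(\lambda)\notin\mathbb{R}$, so no component escapes to infinity; (iii) a closed component is impossible by the maximum principle for the harmonic function $\Im\,\mathcal{I}^{(1)}$: it would force $\Im\,\mathcal{I}^{(1)}\equiv 0$ and hence $\mathcal{I}^{(1)\prime}\equiv 0$ on its interior, contradicting that $\mathcal{I}^{(1)\prime}$ has only finitely many zeros. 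Without (i)–(iii) the inclusion $Q\subseteq Q^{(1)}$ does not close; your appeal to monotonicity and symmetry alone does not substitute for them. (The side condition you raise about $\mathbf{B}^{(1)}$ being a Riemann matrix is not needed for this lemma.)
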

\begin{proof}
	The proof of the present Lemma will be divided into the following two steps. 
	We want to prove $ Q^{(1)} \subseteq Q$.
	For any $\lambda\in 
	 Q^{(1)} _{P_1} \cup  Q^{(1)} _{R}$, since $((\lambda_1^2-\lambda_3^2)/(\lambda_1^2-\lambda^2))^{1/2}\in \ii \mathbb{R}$, the function $ \nu^{(1)} _2$ is real, which implies $ \nu^{(1)} _2  K^{(1)}_2/(\ii \pi)\in \ii  \mathbb{R}$.
	Combined with the definition of the Zeta function, it is easy to obtain that $Z( \nu^{(1)} _2  K^{(1)}_2/(\ii \pi),  k^{(1)}_2)\in \ii \mathbb{R}$.
	Thus, we obtain that for any $\lambda\in 
	 Q^{(1)} _{P_1} \cup  Q^{(1)} _{R}$, $\mathcal{I}^{(1)}(\lambda)\in \mathbb{R}$, the equation $\Im(\mathcal{I}^{(1)}(\lambda))=0$ holds.
	Then, we consider the case $\lambda \in  Q^{(1)} _{P_2}$.
	It is well known that the second term on the right-hand side of the equation \eqref{eq:I-define-P-dn} is real.
	Utilizing the addition formula \eqref{eq:add-first}, we get
	\begin{equation}\label{eq:F-trans-case1-2}
		\frac{\nu^{(1)} _2  K^{(1)}_2}{\ii \pi}\xlongequal{\eqref{eq:add-first}}K^{(1)}_2-F\left(\sqrt{\frac{\lambda^2-\lambda_3^2}{\lambda^2-\lambda_2^2}},  k^{(1)}_2\right), \quad \text{with} \quad F\left(\sqrt{\frac{\lambda^2-\lambda_3^2}{\lambda^2-\lambda_2^2}},  k^{(1)}_2\right)\in \ii \mathbb{R}, \quad \lambda\in  Q^{(1)} _{P_2}.
	\end{equation}
	Thus, for any $\lambda\in  Q^{(1)} _{P_2}$, we obtain
	\begin{equation}\nonumber
		\begin{split}
			Z\left(\frac{ \nu^{(1)} _2  K^{(1)}_2}{\ii \pi},  k^{(1)}_2\right)\xlongequal{\eqref{eq:F-trans-case1-2}} &\ Z \left(K^{(1)}_2-F\left(\sqrt{\frac{\lambda^2-\lambda_3^2}{\lambda^2-\lambda_2^2}},  k^{(1)}_2\right),  k^{(1)}_2\right) \\
			\xlongequal{\eqref{eq:add-app}}&\ Z\left( F\left(\sqrt{\frac{\lambda^2-\lambda_3^2}{\lambda^2-\lambda_2^2}},  k^{(1)}_2\right),k^{(1)}_2\right)\\
			&\ +  (k^{(1)}_2)^2\sn\left(\frac{ \nu^{(1)} _2  K^{(1)}_2}{\ii \pi},  k^{(1)}_2\right)\sn\left( F\left(\sqrt{\frac{\lambda^2-\lambda_3^2}{\lambda^2-\lambda_2^2}},  k^{(1)}_2\right),  k^{(1)}_2\right)\in \ii \mathbb{R}.
		\end{split}
	\end{equation}
	Therefore, we obtain $\Im(\mathcal{I}^{(1)}(\lambda))=0$. Furthermore, we get $ Q^{(1)} \subseteq Q$.

	Conversely, we want to prove $Q\subseteq Q^{(1)}$.
	For the definition of the Zeta function (\Cref{define:zeta}) and the elliptic function (\Cref{define:elliptic-function}), the derivative of the function $\mathcal{I}^{(1)}(\lambda)$ with respect to the spectral parameter $\lambda$ is
	\begin{equation}\nonumber
		\begin{split}
			(\mathcal{I}^{(1)}(\lambda))^{\prime}
			 \xlongequal[\eqref{eq:diff-elliptic},\eqref{eq:define-first-integral}]{\eqref{eq:I-define-P-dn}, \eqref{eq:Zeta-define}}&\ \frac{2\left(\dn^2(\frac{ \nu^{(1)} _2  K^{(1)}_2}{\ii \pi},  k^{(1)}_2)-\frac{ E^{(1)}_2}{  K^{(1)}_2}\right)\lambda(\lambda_1^2-\lambda_3^2)}{\sqrt{(\lambda^2-\lambda_1^2)(\lambda^2-\lambda_2^2)(\lambda^2-\lambda_3^2)}}\\
			 &\ +\frac{2\lambda\left((\lambda_3^2-\lambda_1^2)(\lambda^2-\lambda_2^2)+(\lambda^2-\lambda_1^2)(\lambda^2-\lambda_3^2)\right)}{(\lambda^2-\lambda_1^2)\sqrt{(\lambda^2-\lambda_1^2)(\lambda^2-\lambda_2^2)(\lambda^2-\lambda_3^2)}}\\
			\xlongequal{\eqref{eq:define-first-integral}}&\ \frac{2 \lambda \left((\lambda^2-\lambda_3^2)- E^{(1)}_2(\lambda_1^2-\lambda_3^2)/  K^{(1)}_2
				\right) }{\sqrt{(\lambda^2-\lambda_1^2)(\lambda^2-\lambda_2^2)(\lambda^2-\lambda_3^2)}}.
		\end{split}
	\end{equation}
	The zeros of $(\mathcal{I}^{(1)}(\lambda))^{\prime}$ are $\lambda=0\in  Q^{(1)} $ and $\lambda=\pm ( E^{(1)}_2\lambda_1^2/  K^{(1)}_2+\lambda_3^2(1- E^{(1)}_2/  K^{(1)}_2))^{1/2}\in \ii \mathbb{R}\notin  Q^{(1)} $ since $\lambda^2-\lambda_2^2>0$ and $\lambda^2-\lambda_1^2<0$.
	The poles of the function $(\mathcal{I}^{(1)}(\lambda))^{\prime}$ are $\pm \lambda_{1,2,3}$ with the order $1/2$. 
	Then, we want to prove that excepting the curve listed in set $ Q^{(1)} $ without any other curves such that $\Im(\mathcal{I}^{(1)}(\lambda))=0$. 
	Assuming $\lambda_0 \in Q$ but $\lambda_0 \notin  Q^{(1)} $, we get a curve $l_1$ which goes through $\lambda_0$ and satisfies ${\Im}(\mathcal{I}^{(1)}(\lambda))=0$ by the tangent vector. 
	This curve would end at the poles or zeros of the function $(\mathcal{I}^{(1)}(\lambda))^{\prime}$ or $\lambda \rightarrow \infty$. If not, it must be a closed region.
	Since $\pm \lambda_{1,2,3}$  are the half-order poles of the function $(\mathcal{I}^{(1)}(\lambda))^{\prime}$, which implies that only one ray satisfying ${\Im}(\mathcal{I}^{(1)}(\lambda))=0$ goes through this poles. 
	Since within the set $ Q^{(1)} $, there exists a curve ending at these poles, the curve $l_1$ does not end at these branch points.
	Considering the zero points, we know $\lambda=0\in  Q^{(1)} $ and there exist two curves across this point.
	However in set $ Q^{(1)} $, there are two curves across the zero point, which deduce that the curve $l_1$ will not end at the point $\lambda=0$.
	By the equation \eqref{eq:I-define-P-dn}, as $\lambda \not \in \mathbb{R}\rightarrow \infty$, we get the function $\mathcal{I}^{(1)}(\lambda) \not \in \mathbb{R}$, which implies that the curve $l_1$ could not contain the point $\lambda =\infty$.
	Thus, this curve $l_1$ is a closed one. In the interior of a closed curve, by the maximum principle of the harmonic function, we know that all points $\lambda$ satisfy ${\Im}(\mathcal{I}^{(1)}(\lambda))=0$, so $(\mathcal{I}^{(1)}(\lambda))^{\prime}=0$ in this closed region. However, there are only two points such that $(\mathcal{I}^{(1)}(\lambda))^{\prime}=0, \lambda\in Q$, so we get the contradiction. Therefore, $Q \subseteq  Q^{(1)} $.
\end{proof}

Utilizing the above Lemma, we obtain the spectral stability results of the genus-two periodic traveling wave solutions  for the \ref{case1}.

\newenvironment{proof-spec-dn}{\emph{Proof of \Cref{theorem:spectral-image}.}}{\hfill$\Box$\medskip}
\begin{proof-spec-dn}
	For any $\lambda \in Q= Q^{(1)} $, we obtain $ (\lambda^2-\lambda_1^2)(\lambda^2-\lambda_2^2)(\lambda^2-\lambda_3^2)\ge 0$.
	Combining with the definition of $\Omega$ provided in equation \eqref{eq:define-Omega}, we obtain $\Omega\in \ii \mathbb{R}$. 
	The \Cref{lemma:dn-bounded} claims that the set corresponding to all bounded spectral functions of the mKdV equation with the genus-two periodic traveling wave solutions  for the \ref{case1} is $ Q^{(1)} $, and all elements of $ Q^{(1)} $ satisfy $\Omega(\lambda) \in \ii \mathbb{R}$. 
	By \Cref{define:spect-stable}, the two-phase solutions of the \ref{eq:mKdV} equation are spectrally stable. 
\end{proof-spec-dn}

We present a special example to illustrate the set $Q^{(1)}$ and the corresponding eigenvalue $\Omega$.
By selecting the parameters  $\lambda_1=2\ii/5$, $\lambda_2=4\ii /5$, $\lambda_3=7\ii/5$, we display the set  $Q^{(1)}$ and eigenvalues $\Omega(\lambda),\lambda\in Q^{(1)}$ in \Cref{fig:imag-spectral}.
The left panel of \Cref{fig:imag-spectral} shows the values of  $\lambda$ satisfying $\Im(\mathcal{I}^{(1)})=0$, which includes the entire real axis and three bands on the imaginary axis.
The right panel of  \Cref{fig:imag-spectral} depicts the spectrum $\Omega(\lambda)$ corresponding to those values of  $\lambda$. Notably,  the spectrum $\Omega(\lambda)$ always lies around the imaginary axis whether  $\lambda\in \mathbb{R}$ or it lies within three bands on the imaginary axis.

\begin{figure}[h]
	\centering
	{\includegraphics[width=0.8\textwidth]{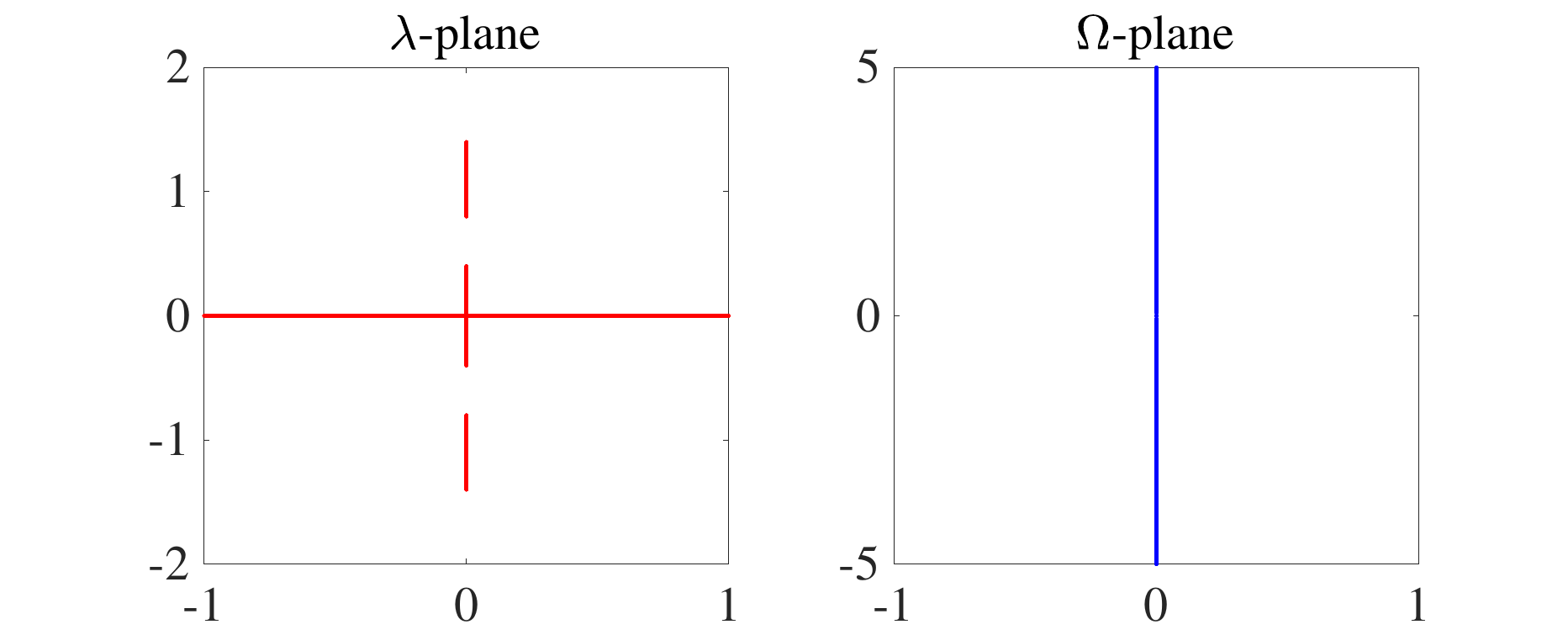}}
	\caption{The set $Q^{(1)}$ and the eigenvalue $\Omega$
		with branch points $\lambda_1=2\ii/5$, $\lambda_2=4\ii /5$, $\lambda_3=7\ii/5$.}
	\label{fig:imag-spectral}
\end{figure}

\subsection{The spectral stability analysis for \ref{case2}} \label{subsec:cn}

In this subsection, we would like to study the spectral stability of the two-phase solutions for the \ref{case2}.
Based on the fundamental solutions of the related Lax pair \eqref{eq:Lax-pair} and combined with equations \eqref{eq:M} and \eqref{eq:define-Omega-123-C}
we obtain  
\begin{equation}\label{eq:I-def-C}
	\mathcal{I}^{(2)}(\lambda)\xlongequal[\eqref{eq:define-Omega-123-C}]{\eqref{eq:M}}-2\ii \sqrt{A}\left(Z\left(\frac{ \nu^{(2)}_2 K^{(2)}_2}{\ii \pi},  k^{(2)}_2\right)-\frac{\sqrt{\left(\lambda_1^2-\lambda^2\right)\left(\lambda_2^2-\lambda^2\right)\left(\lambda_3^2-\lambda^2\right)}}{\sqrt{A}\left(\lambda_2^2-\lambda^2+A\right)}\right)+ 2\kappa^{(2)}.
\end{equation}
And, the derivative of function $\mathcal{I}^{(2)}(\lambda)$ with respect to the spectral parameter $\lambda$ is
\begin{equation}\label{eq:I-def-C-deriv}
	\begin{split}
		(\mathcal{I}^{(2)}(\lambda))^{\prime}
		\xlongequal[\eqref{eq:Zeta-define}]{\eqref{eq:I-def-C}}&-2 \frac{K^{(2)}_2 \sqrt{A}}{\pi}\left( \dn^2\left(\frac{ \nu^{(2)}_2 K^{(2)}_2}{\ii \pi},  k^{(2)}_2\right)-\frac{ E^{(2)}_2}{  K^{(2)}_2} \right)\frac{\dd  \nu^{(2)} _2}{\dd\lambda}\\
		-&2\ii \lambda\frac{(\lambda_2^2-\lambda^2)(\lambda^4-2\lambda_2^2\lambda^2+\lambda_3^2(\lambda_2^2-\lambda_1^2)+\lambda_1^2\lambda_2^2)+(v_1-v\lambda^2+3\lambda^4)A}{\sqrt{(\lambda_1^2-\lambda^2)(\lambda_2^2-\lambda^2)(\lambda_3^2-\lambda^2)}(\lambda_2^2-\lambda^2+A)^2}\\
		\xlongequal[\eqref{eq:define-first-integral}]{\eqref{eq:diff-elliptic}}&-2\frac{\ii \lambda \left(2A E^{(2)}_2/  K^{(2)}_2+\left(\lambda_2^2-A-\lambda^2\right)\right)}{\sqrt{(\lambda_1^2-\lambda^2)(\lambda_2^2-\lambda^2)(\lambda_3^2-\lambda^2)}}.
	\end{split}
\end{equation}

Define the set $ Q^{(2)} _{I}$, $ Q^{(2)} _{R}$, and $ Q^{(2)} _{P}$ in equation \eqref{eq:define-Q-C-IRp}.

\begin{prop}\label{prop:Q-M-cn}
	The branch points $\lambda_{i}$ and $\lambda_{i}^*$, $i=1,2,3$, satisfy the following conditions:
	\begin{itemize}
		\item[(a)] $\lambda_i,\lambda_i^*\in Q^{(2)}$, i.e., $\mathcal{I}^{(2)}(\lambda_i)\in \mathbb{R}$, $i=1,2,3$; for any $\lambda\in  Q^{(2)}_{I}\cup  Q^{(2)} _{R}\cup  Q^{(2)} _{P}$, $\mathcal{I}^{(2)}(\lambda)$ satisfies $\mathcal{I}^{(2)}(\lambda)\in \mathbb{R}$, i.e., $\Im(\mathcal{I}^{(2)}(\lambda))=0$;
		\item[(b)] For any $\lambda\in  Q^{(2)}_{I}\cup  Q^{(2)} _{R}\cup  Q^{(2)} _{P}$, 
		 the corresponding function $\Omega(\lambda)$ satisfies $\Omega(\lambda)\in \ii \mathbb{R}$;
		\item[(c)] $M(\lambda_i)=\pi \mod 2\pi $, $i=1,3$ and $M(\lambda_2)=0 \mod 2\pi $.
	\end{itemize}
\end{prop}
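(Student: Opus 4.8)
\textbf{Proof proposal for Proposition \ref{prop:Q-M-cn}.}

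The plan is to verify the three claims (a), (b), (c) by exploiting the explicit elliptic-function representation of $\mathcal{I}^{(2)}(\lambda)$ in \eqref{eq:I-def-C} together with the reality/sheet structure of the underlying curve $y^2=\prod_{i=1}^{3}(\lambda^2-\lambda_i^2)$ under \ref{case2}. For part (a), I would first treat the ``easy'' pieces $Q^{(2)}_I$ and $Q^{(2)}_R$: for $\lambda\in\ii\mathbb{R}$ with $|\Im(\lambda)|\le\Im(\lambda_2)$ and for $\lambda\in\mathbb{R}$, one checks that the argument $2\ii\sqrt{A}(\lambda_2^2-\lambda^2)^{1/2}/(\cdots)$ and the quantity $\nu^{(2)}_2 K^{(2)}_2/(\ii\pi)$ are purely imaginary, hence $Z(\nu^{(2)}_2 K^{(2)}_2/(\ii\pi),k^{(2)}_2)\in\ii\mathbb{R}$ by the imaginary-argument behaviour of the Jacobi Zeta function (exactly as in the proof of \Cref{lemma:dn-bounded}); combined with $\kappa^{(2)}\in\mathbb{R}$ this yields $\mathcal{I}^{(2)}(\lambda)\in\mathbb{R}$. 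For the branch points themselves, and for $\lambda\in Q^{(2)}_P=\{\lambda_1,\lambda_1^*,\lambda_3,\lambda_3^*\}$, I would argue that at a branch point $y=0$, so $\mathcal{I}^{(2)}(\lambda_i)=-2\ii\sqrt{A}\,Z(\nu^{(2)}_2 K^{(2)}_2/(\ii\pi),k^{(2)}_2)|_{\lambda=\lambda_i}+2\kappa^{(2)}$ with the $F$-argument landing at a half-period or quarter-period point of the elliptic integral, where $Z$ is known explicitly (a real multiple of $E/K$ or a pure imaginary constant), making $\mathcal{I}^{(2)}(\lambda_i)$ real — here one uses the addition formulas \eqref{eq:add-first}, \eqref{eq:add-app} and the tabulated special values in \cite{ByrdF-54}. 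This is where the bookkeeping is heaviest: one must pin down, for each of $\lambda_1,\lambda_2,\lambda_3$, which special lattice point $\nu^{(2)}_2 K^{(2)}_2/(\ii\pi)$ degenerates to, using the explicit modulus $k^{(2)}_2$ in \eqref{eq:u-parameters-cn-k} and $A=|\lambda_1^2-\lambda_2^2|$, $B=|\lambda_1^2|$.

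For part (b), I would use the identity $\Omega(\lambda)=8\ii y$ from \eqref{eq:define-Omega}: it suffices to show $y^2=\prod_{i=1}^3(\lambda^2-\lambda_i^2)\le 0$ on $Q^{(2)}_I\cup Q^{(2)}_R\cup Q^{(2)}_P$, so that $y\in\ii\mathbb{R}$ and hence $\Omega\in\ii\mathbb{R}$ (trivially $\Omega=0$ on $Q^{(2)}_P$). Under the normalization $\Re(\lambda_3)>0=\Re(\lambda_2)>\Re(\lambda_1)$ with $\lambda_2\in\ii\mathbb{R}$ and $\lambda_1=-\lambda_3^*$, one computes: for $\lambda\in\mathbb{R}$, $(\lambda^2-\lambda_1^2)(\lambda^2-\lambda_3^2)=|\lambda^2-\lambda_1^2|^2\ge0$ is a modulus squared (since $\lambda_3^2=(\lambda_1^2)^*$), while $\lambda^2-\lambda_2^2=\lambda^2+|\lambda_2|^2>0$, so the product is $\ge 0$ — wait, that gives $y^2\ge0$ and $y\in\mathbb{R}$, $\Omega\in\ii\mathbb{R}$ still holds because $\Omega=8\ii y$; for $\lambda\in\ii\mathbb{R}$, $|\Im\lambda|\le\Im(\lambda_2)$ one checks the sign of each real factor directly. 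Either way the conclusion $\Omega(\lambda)\in\ii\mathbb{R}$ follows from $y^2\in\mathbb{R}$, which is immediate from the conjugation symmetry of the branch points; this part is short.

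For part (c), recall $M(\lambda)=2T\,\mathcal{I}(\lambda)=4(\Omega_2(P)+\omega_2)T$, and that $M$ increases along the allowed bands. At $\lambda_2\in\ii\mathbb{R}$ (the ``interior'' branch point on the imaginary segment $Q^{(2)}_I$) the value should be the base value $0\bmod 2\pi$, consistent with $\lambda_2$ being an endpoint of $Q^{(2)}_I$ reached from $\lambda=0$ where $M(0)=0$; at $\lambda_1,\lambda_3$ (the complex branch points, endpoints in the $Q^{(2)}_P$ direction) the value jumps by half a period, giving $M(\lambda_i)=\pi\bmod 2\pi$ for $i=1,3$. Concretely I would evaluate $M(\lambda_i)=4(\Omega_2(\lambda_i)+\omega_2)T$ using the explicit $\Omega_2^{(2)}$ from \eqref{eq:define-Omega-123-C} (with $y=0$ at the branch point) and the periods $\mathbf{B}^{(2)}$ in \eqref{eq:u-parameters-cn-B}, tracking the $a_i$- and $b_i$-cycle contributions from \Cref{fig:genus-two-figure-c}; the half-integer shift at $\lambda_{1,3}$ is precisely the contribution of the $b$-period crossing encoded in $\Delta^{(2)}$ and the $\ii\pi$ terms appearing in \Cref{lemma:cn-1-int}. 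The main obstacle throughout is part (a) for the branch points and part (c): correctly identifying the degenerate lattice positions and the discrete half-period shifts requires careful use of the homology basis in \Cref{fig:genus-two-figure-c} and the special-value tables, and a sign error in the choice of sheet of $y$ (cf. \textbf{Step 3} in the introduction) would propagate into a wrong residue class for $M(\lambda_i)$. I would therefore devote most of the write-up to a clean, case-by-case tabulation of $\nu^{(2)}_2$ and $\Omega_2^{(2)}$ at $\lambda=0,\lambda_1,\lambda_2,\lambda_3$.
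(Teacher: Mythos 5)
Your handling of (a) on $Q^{(2)}_R\cup Q^{(2)}_I$ and of (b) is essentially the paper's own argument: reality of $\mathcal{I}^{(2)}$ on those sets follows from the purely imaginary value of $\nu^{(2)}_2K^{(2)}_2/(\ii\pi)$ (hence of the Zeta term) together with the sign of the radicand, and $\Omega(\lambda)=8\ii y\in\ii\mathbb{R}$ follows from $y^2=\prod_{i}(\lambda^2-\lambda_i^2)\ge 0$ there (the paper writes this as $\Omega^2\le 0$, using $\lambda_3^2=(\lambda_1^2)^*$), with $\Omega=0$ at the branch points. Your plan for (a) at the branch points is also the route the paper takes: via \eqref{eq:fs-3} one identifies $\nu^{(2)}_2K^{(2)}_2/(\ii\pi)=K^{(2)}_2\mp\ii K^{(2)\prime}_2$ at $\lambda=\lambda_1,\lambda_3$, and the shift formulas \eqref{eq:add-app} give a purely imaginary constant for $Z$ there, whence $\mathcal{I}^{(2)}(\lambda_1)=3\kappa^{(2)}$, $\mathcal{I}^{(2)}(\lambda_3)=\kappa^{(2)}$, while trivially $\mathcal{I}^{(2)}(\lambda_2)=2\kappa^{(2)}$; you defer this bookkeeping, but the plan is correct.

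The genuine gap is in (c). The residues of $M(\lambda_i)$ modulo $2\pi$ require two inputs: the explicit branch-point values of $\mathcal{I}^{(2)}$ just mentioned, and the explicit period $2T$ of the wave. The paper closes (c) in one line by reading off $2T=\pi/\kappa^{(2)}$ from the theta-function form \eqref{eq:solutions-dn} via the reduction \eqref{eq:formula-Rieman-shift-1}, so that $M(\lambda_1)=3\pi$, $M(\lambda_2)=2\pi$, $M(\lambda_3)=\pi$. Your proposal never fixes $T$; instead you appeal to ``tracking $a_i$- and $b_i$-cycle contributions'', to $\Delta^{(2)}$ and to $\mathbf{B}^{(2)}$, but the spatial period of $u$ is set by the wavenumber $\kappa^{(2)}$ in $U^{(2)}$ (i.e.\ by how the two-dimensional theta quotient reduces to Jacobi theta functions), not by the $b$-periods of $\dd\Omega_2$, so that mechanism cannot by itself produce the residue classes. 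Your two heuristic anchors are also unreliable: monotonicity of $M$ along $Q^{(2)}_I$ would not force $M(\lambda_2)\equiv 0 \bmod 2\pi$ even if the base value were known, and the claim $M(0)=0$ is false in general --- in case (ii) of \Cref{prop: zr0 zi0} one has $\lambda_0=0$ and the paper's formula \eqref{eq:M-lambda_0} gives $M(0)=M(\lambda_0)\in[\pi,2\pi]$. A sign slip in the sheet of $y$, which you rightly worry about, is avoided precisely by the direct route: compute $\mathcal{I}^{(2)}(\lambda_{1,2,3})=3\kappa^{(2)},2\kappa^{(2)},\kappa^{(2)}$ as in your part (a), establish $2T=\pi/\kappa^{(2)}$ from \eqref{eq:solutions-dn} and \eqref{eq:formula-Rieman-shift-1}, and multiply.
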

\begin{proof}
	(a): Firstly, we consider three pairs of the spectral numbers $\lambda_i$, $i=1,2,3$. It is easy to obtain $\mathcal{I}^{(2)}(\lambda_2)=2\kappa^{(2)}\in \mathbb{R}$.
	Combining the equation \eqref{eq:fs-3} with the definition of elliptic functions $F(\nu,k)$ in \eqref{eq:define-first-integral}, we get that when $\lambda=\lambda_1$, the equation $\nu_{2}^{(2)}K_{2}^{(2)}/(\ii \pi)=K_{2}^{(2)}-\ii K_{2}^{(2)\prime}$ holds; when $\lambda=\lambda_3$, the equation $\nu_{2}^{(2)}K_{2}^{(2)}/(\ii \pi)=K_{2}^{(2)}+\ii K_{2}^{(2)\prime}$ holds.
	Thus, we get
	\begin{equation}\nonumber
		\mathcal{I}^{(2)}(\lambda_1)\xlongequal{\eqref{eq:I-def-C}}-2\ii \sqrt{A}\left(Z\left(K^{(2)}_2-\ii K^{(2)\prime}_2,  k^{(2)}_2\right)\right)+2\kappa^{(2)}
		\xlongequal[\eqref{eq:u-parameters-cn-k}]{\eqref{eq:add-app}}3 \kappa^{(2)}\in \mathbb{R}.
	\end{equation}
	 Similarly, we obtain
	 \begin{equation}\nonumber
	 	\mathcal{I}^{(2)}(\lambda_3)\xlongequal{\eqref{eq:I-def-C}}-2\ii \sqrt{A}\left(Z\left(K^{(2)}_2+\ii K^{(2)\prime}_2,  k^{(2)}_2\right)\right)+2\kappa^{(2)}
	 	\xlongequal[\eqref{eq:u-parameters-cn-k}]{\eqref{eq:add-app}} \kappa^{(2)}\in  \mathbb{R}.
	 \end{equation}
	Considering $\lambda\in Q_{R}^{(2)}$, it is easy to obtain $\nu_2^{(2)}\in \mathbb{R}$, which deduce that $Z(\nu_2^{(2)}K_2^{(2)}/(\ii \pi ),k_2^{(2)})\in \ii \mathbb{R}$. Since $\left(\lambda_1^2-\lambda^2\right)\left(\lambda_2^2-\lambda^2\right)\left(\lambda_3^2-\lambda^2\right)<0$, we know $\sqrt{\left(\lambda_1^2-\lambda^2\right)\left(\lambda_2^2-\lambda^2\right)\left(\lambda_3^2-\lambda^2\right)}\in \ii \mathbb{R}$. Thus, we get $\mathcal{I}^{(2)}(\lambda)\in \mathbb{R}$, for $\lambda\in Q_{R}^{(2)}$.
	When $\lambda\in Q_{I}^{(2)}$, for the definition of $\nu_2^{(2)}$, we get $\nu_2^{(2)}\in \mathbb{R}$, i.e., $Z(\nu_2^{(2)}K_2^{(2)}/(\ii \pi ),k_2^{(2)})\in \ii \mathbb{R}$. Moreover, we deduce $\left(\lambda_1^2-\lambda^2\right)\left(\lambda_2^2-\lambda^2\right)\left(\lambda_3^2-\lambda^2\right)<0$. Thus, we obtain $\mathcal{I}^{(2)}(\lambda)\in \mathbb{R}$, for $\lambda\in Q_{I}^{(2)}$.
		
	(b): Based on the definition of the parameter $\Omega(\lambda)$ defined in equation \eqref{eq:define-Omega}, it is easy to obtain $\Omega(\lambda_i)=\Omega(\lambda_i^*)=0$.
	Moreover, we also obtain
	\begin{equation}\nonumber
		\begin{split}
			\Omega^2
			=&-64 \left(\lambda^2-\lambda_1^2\right)\left(\lambda^2-\lambda_2^2\right)\left(\lambda^2-\lambda_3^2\right)\\
			=&-64\left(\lambda^2-\lambda_2^2\right) \left(\lambda^4-2\lambda^2\left(\Re(\lambda_1^2)\right)^2+\left(\Re(\lambda_1^2)\right)^2+\left(\Im(\lambda_1^2)\right)^2\right)\\
			=&-64\left(\lambda^2-\lambda_2^2\right) \left(\left(\lambda^2-\left(\Re(\lambda_1^2)\right)^2\right)^2+\left(\Im(\lambda_1^2)\right)^2\right).
		\end{split}
	\end{equation}
	It is easy to obtain $\Omega^2\in \mathbb{R}$ and $\Omega^2<0$, when $\lambda \in \mathbb{R}$.
	Thus, for any $\lambda \in  Q^{(2)} _{I}$, we also could obtain $\left(\lambda^2-\lambda_2^2\right)>0$, so we also could get $\Omega^2<0$, which implies $\Omega\in \ii \mathbb{R}$.
	
	(c): The period of the solution \eqref{eq:solutions-dn} is $2T=2\pi\ii/(2\ii \kappa^{(2)})=\pi/\kappa^{(2)}$, based on the 
	definition of Riemann theta function in \Cref{define:Riemann-Theta-function} and the transformation formula \eqref{eq:formula-Rieman-shift-1}.
	 By the proof of (a), we obtain $M(\lambda_i)=2T\mathcal{I}^{(2)}(\lambda_i)=\pi\mathcal{I}^{(2)}(\lambda_i)/\kappa^{(2)}=\pi \mod 2\pi $, $i=1,3$ and $M(\lambda_2)=0 \mod 2\pi $.
\end{proof}

\begin{remark}\label{remark:Omega-sym}
	The curve ${\Re}(\Omega(\lambda))=0$ is also symmetric about the origin point, lines ${\Im}(\lambda)=0$ and ${\Re}(\lambda)=0$, since $\Omega(-\lambda)=\Omega(\lambda)$ and $\Omega(\lambda^*)=\Omega^*(\lambda).$
	Thus, if $\lambda$ satisfies ${\Re}(\Omega(\lambda))=0$, points $\pm\lambda^*,-\lambda$ also satisfy this equation. 
	For the function $\mathcal{I}^{(2)}(\lambda)$ defined in equation \eqref{eq:I-def-C}, we also get that 
	if $\Im(\mathcal{I}^{(2)}(\lambda))=0$, equations $\Im(\mathcal{I}^{(2)}(\lambda^{*}))=0$ and $\Im(\mathcal{I}^{(2)}(-\lambda))=0$ also holds.
	Therefore,
	the curve $\Im(\mathcal{I}^{(2)}(\lambda))=0$ is also symmetric about the origin point, lines ${\Im}(\lambda)=0$ and ${\Re}(\lambda)=0$.
\end{remark}

\begin{prop}\label{prop: zr0 zi0} 
	By equations \eqref{eq:I-def-C} and \eqref{eq:I-def-C-deriv}, the following properties hold (see the Fig. \ref{fig3}):
	\begin{itemize}
		\item[(i)]
		If $A(2E^{(2)}_2/K^{(2)}_2-1)>-\lambda_2^2$, the set $Q$ not only includes sets $ Q^{(2)} _{R}$ and $ Q^{(2)} _{I}$, but also contains two curves starting at points $\lambda_1,\lambda_3$ intersecting with the real axis at point $\pm\lambda_0 \in  Q^{(2)} _{R}$ and ending at points $\lambda_1^*,\lambda_3^*$;
		\item[(ii)]
		If $A(2E^{(2)}_2/K^{(2)}_2-1)=-\lambda_2^2$, the set $Q$ not only includes sets $ Q^{(2)} _{R}$ and $ Q^{(2)} _{I}$, but also contains two curves starting at points $\lambda_1,\lambda_3$ intersecting with the real axis at point $\lambda_0=0$ and ending at points $\lambda_1^*,\lambda_3^*$;
		\item[(iii)]
		If $0 < A(2E^{(2)}_2/K^{(2)}_2-1) <-\lambda_2^2$, the set $Q$ not only includes sets $ Q^{(2)} _{R}$ and $ Q^{(2)} _{I}$, but also contains two curves starting at points $\lambda_1,\lambda_1^*$ intersecting with the imaginary axis at point $\pm\lambda_0\in  Q^{(2)} _{I}\backslash\{\lambda_2,\lambda_2^*\}$ and ending at points $\lambda_3,\lambda_3^*$;
		\item[(iv)] 
		If $0=A(2E^{(2)}_2/K^{(2)}_2-1)$, the set $Q$ not only includes sets $ Q^{(2)} _{R}$ and $ Q^{(2)} _{I}$, but also contains two curves starting at points $\lambda_1,\lambda_1^*$ intersecting with the imaginary axis at point $\pm\lambda_2$ and ending at points $\lambda_3,\lambda_3^*$;
		\item[(v)] 
		 If $A(2E^{(2)}_2/K^{(2)}_2-1)<0$, the set $Q$ not only includes sets $ Q^{(2)} _{R}$ and $ Q^{(2)} _{I}$, but also contains two curves starting at points $\lambda_1,\lambda_1^*$ intersecting with the imaginary axis at point $\pm\lambda$ satisfying $\Im(\lambda)>\Im(\lambda_2)$ and ending at points $\lambda_3,\lambda_3^*$.
	\end{itemize} 
\end{prop}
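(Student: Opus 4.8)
\textbf{Proof proposal for \Cref{prop: zr0 zi0}.}

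The plan is to analyze the level set $Q=\{\lambda:\Im(\mathcal{I}^{(2)}(\lambda))=0\}$ by the same harmonic-function argument used in the proof of \Cref{lemma:dn-bounded}, but now tracking how the critical points of $\mathcal{I}^{(2)}$ move as the sign of the quantity $A(2E^{(2)}_2/K^{(2)}_2-1)+\lambda_2^2$ changes. First I would record the symmetry established in \Cref{remark:Omega-sym}: the curve $\Im(\mathcal{I}^{(2)}(\lambda))=0$ is invariant under $\lambda\mapsto-\lambda$ and $\lambda\mapsto\lambda^*$, so it suffices to understand the picture in one quadrant and reflect. Then, from \Cref{prop:Q-M-cn}(a), the sets $Q^{(2)}_R$, $Q^{(2)}_I$ and the six branch points all lie in $Q$; the task is to locate the \emph{extra} components of $Q$ emanating from $\lambda_1,\lambda_3,\lambda_1^*,\lambda_3^*$ and to determine whether their further intersection with the coordinate axes sits on the real axis or on the imaginary axis (and, on the imaginary axis, whether below, at, or above $\lambda_2$).

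The decisive computation is the factored derivative
\begin{equation}\nonumber
	(\mathcal{I}^{(2)}(\lambda))^{\prime}=-2\frac{\ii\lambda\left(2AE^{(2)}_2/K^{(2)}_2+(\lambda_2^2-A-\lambda^2)\right)}{\sqrt{(\lambda_1^2-\lambda^2)(\lambda_2^2-\lambda^2)(\lambda_3^2-\lambda^2)}}
\end{equation}
from \eqref{eq:I-def-C-deriv}. Its zeros are $\lambda=0$ and $\lambda^2=\lambda_2^2-A+2AE^{(2)}_2/K^{(2)}_2=\lambda_2^2-A(1-2E^{(2)}_2/K^{(2)}_2)$; call the latter $\pm\lambda_0$. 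I would check that $\lambda_0^2$ is always real (all quantities in it are real), so $\lambda_0$ is either real or purely imaginary, and that the sign of $\lambda_0^2$ is governed precisely by the dichotomy $A(2E^{(2)}_2/K^{(2)}_2-1)\gtrless-\lambda_2^2$: when $A(2E^{(2)}_2/K^{(2)}_2-1)>-\lambda_2^2$ we get $\lambda_0^2>0$, hence $\lambda_0\in\mathbb{R}$; when equality holds $\lambda_0=0$; and when $A(2E^{(2)}_2/K^{(2)}_2-1)<-\lambda_2^2$ we get $\lambda_0^2<0$, so $\lambda_0\in\ii\mathbb{R}$, and within that range one further compares $\lambda_0^2$ with $\lambda_2^2$ to see whether $\lambda_0$ lies below, at, or above $\lambda_2$ on the imaginary axis — this produces sub-cases (iii), (iv), (v). The half-order poles of $(\mathcal{I}^{(2)})^{\prime}$ are the six branch points $\pm\lambda_{1,2,3}$, and near a half-order pole exactly one arc of $\{\Im(\mathcal{I}^{(2)})=0\}$ emanates, as in \Cref{lemma:dn-bounded}.

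With these data in hand, the argument runs as follows. An extra component of $Q$ is a curve on which $\Im(\mathcal{I}^{(2)})=0$; being a level curve of a harmonic function it cannot bound a region on which $\mathcal{I}^{(2)}$ is constant unless $(\mathcal{I}^{(2)})^{\prime}$ vanishes on an open set, which it does not, so every such curve must terminate at a pole of $(\mathcal{I}^{(2)})^{\prime}$, at a zero of $(\mathcal{I}^{(2)})^{\prime}$, or run to $\lambda=\infty$; the behavior of $\mathcal{I}^{(2)}$ as $\lambda\to\infty$ off the real axis (from \eqref{eq:I-def-C}, $\mathcal{I}^{(2)}(\lambda)\notin\mathbb{R}$ there) rules out the last option. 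Since one arc from each of $\lambda_1,\lambda_3$ already accounts for the local picture, I would trace the arc leaving $\lambda_1$ (respectively $\lambda_3$): it must reach a zero of $(\mathcal{I}^{(2)})^{\prime}$. Using the symmetry, when $\lambda_0\in\mathbb{R}$ the natural continuation passes through $\pm\lambda_0$ on the real axis and then, by reflection across $\Im(\lambda)=0$, connects down to $\lambda_1^*$ (respectively $\lambda_3^*$), giving (i)–(ii); when $\lambda_0\in\ii\mathbb{R}$ the continuation instead crosses the imaginary axis at $\pm\lambda_0$ and, by reflection across $\Re(\lambda)=0$, connects $\lambda_1$ to $\lambda_3$ (and $\lambda_1^*$ to $\lambda_3^*$), giving (iii)–(v); the degenerate cases $\lambda_0=0$ and $\lambda_0=\lambda_2$ are exactly the boundary configurations in (ii) and (iv). Finally, one must rule out any \emph{further} components: a putative extra closed curve would force $(\mathcal{I}^{(2)})^{\prime}\equiv0$ in its interior by the maximum principle for harmonic functions, contradicting that $(\mathcal{I}^{(2)})^{\prime}$ has only the finitely many zeros $0,\pm\lambda_0$, and any open non-closed arc has already been accounted for at the poles $\pm\lambda_{1,2,3}$ and zeros $0,\pm\lambda_0$.

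The main obstacle I expect is the global bookkeeping of arcs: knowing that an arc leaves $\lambda_1$ and must end at a zero of $(\mathcal{I}^{(2)})^{\prime}$ does not by itself say \emph{which} zero, nor whether it first wanders through $Q^{(2)}_R$ or $Q^{(2)}_I$. Pinning this down rigorously requires either a continuity/monotonicity argument for $\Im(\mathcal{I}^{(2)})$ along suitable rays (e.g. showing $\Im(\mathcal{I}^{(2)})$ is strictly monotone across each candidate crossing segment, so there is exactly one crossing there and none elsewhere), or an explicit evaluation of $\mathcal{I}^{(2)}$ at the crossing points via the $\sn$/$\cn$-parametrizations \eqref{eq:fs-3}, \eqref{eq:fs-4} to verify $\mathcal{I}^{(2)}(\pm\lambda_0)\in\mathbb{R}$ and to fix the topology of the connection. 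I would carry this out quadrant by quadrant using \Cref{remark:Omega-sym}, which cuts the casework by a factor of four, and lean on the half-order-pole local analysis to guarantee uniqueness of the emanating arc at each branch point.
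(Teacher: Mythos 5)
Your skeleton matches the paper's: read off the zeros $0,\pm\lambda_0$ of $(\mathcal{I}^{(2)})^{\prime}$ from \eqref{eq:I-def-C-deriv}, split into the five cases according to the sign of $A(2E^{(2)}_2/K^{(2)}_2-1)+\lambda_2^2$, use the half-order poles at the branch points plus the maximum principle to exclude closed components, and reduce to one quadrant via \Cref{remark:Omega-sym}. However, there is a genuine error in your treatment of case (v). You assert that whenever $\lambda_0\in\ii\mathbb{R}$ the extra arcs cross the imaginary axis at $\pm\lambda_0$, ``giving (iii)--(v)''. In case (v) one has $\Im(\lambda_0)>\Im(\lambda_2)$, so $\pm\lambda_0$ lies outside $Q^{(2)}_I$, and, as the paper's proof notes, $\Im(\mathcal{I}^{(2)}(\pm\lambda_0))\neq 0$ there; hence $\pm\lambda_0\notin Q$ and no level curve of $\Im(\mathcal{I}^{(2)})$ can pass through these points. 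This is precisely why the statement of (v) says the crossing occurs at some $\pm\lambda$ with $\Im(\lambda)>\Im(\lambda_2)$ rather than at $\pm\lambda_0$. Your intermediate claim that every arc leaving $\lambda_1$ ``must reach a zero of $(\mathcal{I}^{(2)})^{\prime}$'' is also too strong: an arc may terminate at another half-order pole, and in case (v) the arc from $\lambda_1$ runs to $\lambda_3$, crossing the imaginary axis at a regular (non-critical) point of the level set.

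Secondly, the global bookkeeping that you flag as ``the main obstacle'' is left open, whereas it is the substance of the paper's argument. The paper decides which arc connects to which endpoint by a symmetry/criticality argument: in case (i), for instance, if the arc through $-\lambda_0$ crossed the imaginary axis, its mirror image under the symmetries of \Cref{remark:Omega-sym} would give a second, distinct level curve through the crossing point $\lambda_c$, forcing $(\mathcal{I}^{(2)})^{\prime}(\lambda_c)=0$ and contradicting that the only critical points are $0,\pm\lambda_0$. Moreover, the degenerate cases (ii) and (iv) require the local expansions you skip: in case (ii) one has $\mathcal{I}^{(2)\prime}(0)=\mathcal{I}^{(2)\prime\prime}(0)=\mathcal{I}^{(2)\prime\prime\prime}(0)=0$ with $\mathcal{I}^{(2)\prime\prime\prime\prime}(0)\neq 0$, so exactly four arcs meet at the origin, while in case (iv) the expansion $\mathcal{I}^{(2)}(\lambda)-\mathcal{I}^{(2)}(\lambda_2)=\mathcal{O}\bigl((\lambda-\lambda_2)^{3/2}\bigr)$ yields exactly three rays emanating from $\lambda_2$. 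These orders fix the arc counts and hence the stated topology; they cannot be inferred from the generic second-order picture at $\pm\lambda_0$ that your proposal relies on.
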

\begin{proof}	
	By equation \eqref{eq:I-def-C-deriv}, it is easy to obtain that the zeros of $\mathcal{I}^{\prime}(\lambda)$ are $0$ and $\pm \lambda_0$, where
	\begin{equation}\label{eq:define-lambda-0}
		\lambda_0=(2E^{(2)}_2A/K^{(2)}_2-A+\lambda_2^2)^{1/2}.
	\end{equation}
	In virtue of above results, the roots of $\mathcal{I}^{(2)\prime}(\lambda)=0$ are divided into the following five cases.
	\begin{itemize}
		\item[(i)] When $2E^{(2)}_2A/K^{(2)}_2-A>-\lambda_2^2$, excepting zero point, the roots of $\mathcal{I}^{(2)\prime}(\lambda)=0$ are real numbers, i.e., $\mathcal{I}^{(2)\prime}(\pm\lambda_0)=0$, with $\lambda_0\in \mathbb{R}\neq 0$;
		\item[(ii)] When $2E^{(2)}_2A/K^{(2)}_2-A=-\lambda_2^2$, $\lambda_0=0$ is the third order zero point $\mathcal{I}^{(2)\prime}(0)=0$;
		\item[(iii)] When $0 < 2E^{(2)}_2A/K^{(2)}_2-A<-\lambda_2^2$, excepting zero point, the roots of $\mathcal{I}^{(2)\prime}(\lambda)=0$ are imaginary, satisfying $\mathcal{I}^{(2)\prime}(\pm\lambda_0)=0$, $\lambda_0 \in \ii\mathbb{R}$ and $0<|\Im(\lambda_0)|< \Im(\lambda_2)$;
		\item[(iv)] When $0=2E^{(2)}_2A/K^{(2)}_2-A$, excepting zero point, the roots of $\mathcal{I}^{(2)\prime}(\lambda)=0$ are imaginary, satisfying $\lambda_0=\lambda_2\in \ii \mathbb{R}$ and $\mathcal{I}^{(2)\prime}(\lambda_2)=0$;
		\item[(v)] When $2E^{(2)}_2A/K^{(2)}_2-A<0$, excepting zero point, the roots of $\mathcal{I}^{(2)\prime}(\lambda)=0$ are imaginary, satisfying $\mathcal{I}^{(2)\prime}(\pm\lambda_0)=0$, $\lambda_0 \in \ii\mathbb{R}$ and $\Im(\lambda_2)<|\Im(\lambda_0)|$.
	\end{itemize}
	
On the basis of the above cases, we proceed to examine all possibilities for the components of the set $Q^{(2)}$. 
	The curve $l_1\in Q$ ends at the point $\lambda$ satisfying $\mathcal{I}^{(2)\prime}(\lambda)=\infty$ or crosses to another component at the point $\lambda$ with $\mathcal{I}^{(2)\prime}(\lambda)=0$. 
	If the spectrum contains a closed curve, the cross point satisfies ${\Im}(\mathcal{I}^{(2)}(\lambda))=0$. 
	In the interior of a closed curve, by the maximum value principle of the harmonic function, we have ${\Im}(\mathcal{I}^{(2)}(\lambda))=0$. 
	Then $\mathcal{I}^{(2)}(\lambda)$ is a constant in this closed region.
	However, this is impossible, since there are only three points satisfying $\mathcal{I}^{(2)\prime}(\lambda)=0$ (allowing for the repeated roots).
	Thus there is no closed curve with ${\Im}(\mathcal{I}^{(2)}(\lambda))=0$. Furthermore, by \Cref{prop:Q-M-cn}, we know ${\Im}(\mathcal{I}^{(2)}(\lambda_i))={\Im}(\mathcal{I}^{(2)}(\lambda_i^*))=0, i=1,2,3$. 
	By the implicit function theorem, we know that there exist six curves with ${\Im}(\mathcal{I}^{(2)}(\lambda))=0$ to the harmonic function ${\Im}(\mathcal{I}^{(2)}(\lambda))$ ending at the points $\lambda_i,\lambda_i^*, i=1,2,3$, since $(\mathcal{I}^{(2)\prime}(\lambda_i))=\infty$ and $\mathcal{I}^{(2)\prime}(\lambda_i^*)=\infty$.
	
	(i): When $2E^{(2)}_2A/K^{(2)}_2-A>-\lambda_2^2$, we know $\lambda_0\in \mathbb{R}$. From \Cref{prop:Q-M-cn}, we get ${\Im}(\mathcal{I}^{(2)}(\lambda))=0$, for $\lambda \in Q_R^{(2)} \cup Q_I^{(2)} \cup Q_P^{(2)}$.
	Furthermore, by $(\mathcal{I}^{(2)}(\lambda))^{\prime}|_{\lambda=\pm \lambda_0}=0$ and $(\mathcal{I}^{(2)}(\lambda))^{\prime \prime}|_{\lambda=\pm \lambda_0}\neq 0$, then in the neighborhood of $\lambda=\pm \lambda_0$, we have Taylor expansions $\mathcal{I}^{(2)}(\lambda)=\mathcal{I}^{(2)}(\pm \lambda_0)+\mathcal{I}^{(2)\prime\prime}(\pm \lambda_0)(\lambda\pm \lambda_0)^2+\mathcal{O}((\lambda-\lambda_0)^3)$. 
	By the localized analysis and implicit function theorem, we find two curves ${\Im}(\mathcal{I}^{(2)}(\lambda))=0$ departing from the point $\lambda=\pm \lambda_0$. 
	And we know that the real axis and $l_1$ goes through them. 
	Furthermore, the curve $l_1$ going through $-\lambda_0$ does not across the imaginary axis.
	If not, by the symmetry of the curve $\Im(\mathcal{I}^{(2)}(\lambda))=0$, there must exist a point $\lambda_c$ such that $\Im(\mathcal{I}^{(2)}(\lambda_c))=0$ and two symmetric curves $l_1$ and $l_2$ going through point $\lambda_c$, which implies $\mathcal{I}^{(2)\prime}(\lambda_c)=0$. 
	This is a contradiction.
	Therefore, we conclude that the curve departing from the point $\lambda=\lambda_1$ goes across $\lambda=-\lambda_0$ and ends with $\lambda_1^*$, and another curve departing from the point $\lambda=\lambda_3$ goes across $\lambda=\lambda_0$ and ends with $\lambda_3^*$.
	
	(ii): When $2E^{(2)}_2A/K^{(2)}_2-A=-\lambda_2^2$, we know $\lambda_0=0$. 
	By equation \eqref{eq:I-def-C-deriv}, we get
	$\mathcal{I}^{(2)\prime}(0)=\mathcal{I}^{(2)\prime\prime}(0)=\mathcal{I}^{(2)\prime\prime\prime}(0)=0$ and $\mathcal{I}^{(2)\prime\prime\prime\prime}(0)\neq 0$, then in the neighborhood of $\lambda=0$, we have Taylor expansions $\mathcal{I}(\lambda)=\mathcal{I}(0)+\mathcal{I}^{(2)\prime\prime\prime\prime}(0)\lambda^4+\mathcal{O}(\lambda^5)$. 
	Thus, we can conclude that there are four curves passing through the zero point and terminating at the points $\lambda_{1,2,3}$ and $\lambda_{1,2,3}^*$. Specifically: One is the real axis; one is the imaginary axis with $|\Im(\lambda)|\le \Im(\lambda_2)$; one is the curve connecting branch points $\lambda_1,\lambda_3^*$ and go through the zero point; the last one is the curve connecting branch points $\lambda_3,\lambda_1^*$ and go through the zero point.
		
	(iii): When $0<2E^{(2)}_2A/K^{(2)}_2-A<-\lambda_2^2$, we know $\lambda_0\in \ii \mathbb{R}$ and $0<\Im(\lambda_0)<\Im(\lambda_2)$.
	The proof of this case is similar to the above case (i).
	Thus we will not repeat the details here.
	
	(iv): When $0=2E^{(2)}_2A/K^{(2)}_2-A$, we know $\lambda_0=\lambda_2$. 
	By the exact derivative formulas of $\mathcal{I}^{\prime}(\lambda)$ in equation \eqref{eq:I-def-C-deriv}, we have the expansion $\mathcal{I}(\lambda)=\mathcal{O}((\lambda-\lambda_2)^{3/2})$, in the neighborhood of $\lambda=\lambda_2$, which implies that there are only three radial departing from the point $\lambda_2$ ending at points $0,\lambda_1,\lambda_3$.
		
	(v): When $2E^{(2)}_2A/K^{(2)}_2-A<0$, we know $\Im(\lambda_0)>\Im(\lambda_2)$ but $\Im(\mathcal{I}(\lambda_0))\neq 0$.  
	Similar to the above analysis, we conclude that there are two curves emitting from $\lambda=\lambda_{1}, \lambda_1^*$ that go across the imaginary axis (not the point $\pm\lambda_0$) and end with $\lambda=\lambda_3, \lambda_3^*$, respectively, since $\Im(\mathcal{I}(\pm\lambda_0))\neq 0$.
\end{proof}

The derivative of the function $\Omega(\lambda)$ is
\begin{equation}\label{eq:diff-Omega}
	\Omega^{\prime}(\lambda)=8\ii \lambda\left((\lambda^2-\lambda_1^2)(\lambda^2-\lambda_2^2)+(\lambda^2-\lambda_1^2)(\lambda^2-\lambda_3^2)+(\lambda^2-\lambda_2^2)(\lambda^2-\lambda_3^2)\right)/y.
\end{equation}
Therefore, we get 
\begin{equation}\label{eq:derivative-1}
	\begin{split}
		\frac{\mathcal{I}^{(2)\prime}(\lambda_1)}{\Omega^{\prime}(\lambda_1)}
		\xlongequal[\eqref{eq:diff-Omega}]{\eqref{eq:I-def-C-deriv}}&\frac{A(K^{(2)}_2-2E^{(2)}_2)-\left(\lambda_2^2-\lambda_1^2\right)K^{(2)}_2}{4(\lambda_1^2-\lambda_2^2)(\lambda_1^2-\lambda_3^2)K^{(2)}_2}
		=\frac{(K^{(2)}_2-2E^{(2)}_2)(\lambda_3^2-\lambda_2^2)+AK^{(2)}_2}{4A(\lambda_1^2-\lambda_3^2)K^{(2)}_2},\\
		\frac{\mathcal{I}^{(2)\prime}(\lambda_3)}{\Omega^{\prime}(\lambda_3)}
		\xlongequal[\eqref{eq:diff-Omega}]{\eqref{eq:I-def-C-deriv}}&\frac{(K^{(2)}_2-2E^{(2)}_2)(\lambda_1^2-\lambda_2^2)+AK^{(2)}_2}{4A(\lambda_3^2-\lambda_1^2)K^{(2)}_2},
	\end{split}
\end{equation}
and $\mathcal{I}^{(2)\prime}(\lambda_i)/\Omega^{\prime}(\lambda_i)$, $i=1,3$.

Due to the symmetry of functions $\Im(\mathcal{I}^{(2)}(\lambda))$ and $\Re(\Omega(\lambda))$ in \Cref{remark:Omega-sym}, we just study the case that the parameter $\lambda$ lies in the first quadrant, i.e., $\Re(\lambda)> 0$ and $\Im(\lambda)> 0$. 
For ease of analyzing and studying, we introduce the function $\Upsilon=\Upsilon(\lambda)$ and define it as
\begin{equation}\label{eq:Lambda}
	\Upsilon(\lambda)
	=\frac{\lambda^2-\lambda_2^2}{A}, 
	\qquad  \text{or} \qquad 
	\lambda^2=A\Upsilon +\lambda_2^2.
\end{equation}
We will consider the parameter $\Upsilon$ on upper-half complex plane for any $\lambda$ on the first quadrant.
Then, functions $\mathcal{I}^{(2)}(\lambda)$, $\Omega(\lambda)$ by the parameter $\Upsilon$  and the derivative of $\mathcal{I}^{(2)}(\lambda)$  with respect to $\Upsilon$ could be rewritten as follows:
\begin{equation}\label{eq:I-Lambda-C}
	\begin{split}
		\mathcal{I}^{(2)}(\lambda)\equiv &\  \hat{\mathcal{I}}
		 (\Upsilon)\\ 
		 \xlongequal{\eqref{eq:I-def-C}} &\  -2\ii \sqrt{A}\left(Z\left(F\left(\frac{\sqrt{-4\Upsilon }}{1-\Upsilon },  k^{(2)}_2\right),  k^{(2)}_2\right)-\frac{\sqrt{-\left(\Upsilon_1-\Upsilon \right)\Upsilon \left(\Upsilon_3-\Upsilon \right)}}{\left(1-\Upsilon\right)}\right)+ 2\kappa^{(2)},\\
		\frac{\dd  \hat{\mathcal{I}} (\Upsilon)}{\dd \Upsilon}\xlongequal{\eqref{eq:I-def-C-deriv}} &\  -\frac{\ii\sqrt{A} }{\sqrt{-(\Upsilon_1-\Upsilon)\Upsilon(\Upsilon_3-\Upsilon)}}\left(\frac{2 E^{(2)}_2}{  K^{(2)}_2}-1-\Upsilon\right),\\
		\Omega(\lambda)  \equiv  &\  \hat{\Omega}(\Upsilon) \xlongequal{\eqref{eq:define-Omega}} 8 \ii \sqrt{A^3(\Upsilon-\Upsilon_1)\Upsilon(\Upsilon-\Upsilon_3)}
	\end{split}
\end{equation}
where 
\begin{equation}\nonumber
	\Upsilon_i:=\Upsilon_{iR}+\ii \Upsilon_{iI}=\frac{\lambda_i^2-\lambda_2^2}{A}, \qquad 
	\Upsilon_{iR},\,\,\, \Upsilon_{iI}\in \mathbb{R}, \qquad 
	\Upsilon_3=\Upsilon_1^*,\qquad
	i=1,3.
\end{equation}
Furthermore, we get 
\begin{equation}\label{eq:Lambda-i-1}
	|\Upsilon_i|^2=\Upsilon_{iR}^2+\Upsilon_{iI}^2=\frac{1}{A^2}\left(\lambda_i^2-\lambda_2^2\right)\left(\lambda_i^{*2}-\lambda_2^2\right)=1, \qquad i=1,3.
\end{equation}

\newenvironment{proof-spec-cn}{\emph{Proof of \Cref{theorem:spectral-complex}.}}{\hfill$\Box$\medskip}
\begin{proof-spec-cn}
	We aim to prove that for any $\lambda \in Q \backslash ( Q^{(2)} _R \cup  Q^{(2)} _I \cup  Q^{(2)} _P)$, it holds that $\Omega(\lambda) \notin \ii  \mathbb{R}$, which indicates that the two-phase solutions are spectrally unstable.
	Without loss of generality, we consider the spectral parameter $\lambda$ located in the first quadrant of the $\lambda$-plane. 
	Owing to the symmetry of the curve ${\Re}(\Omega(\lambda))=0$ and the set $Q$, as stated in Remark \ref{remark:Omega-sym} and \Cref{prop:Q-M-cn}, respectively, the computation for $\lambda$ in the second, third and fourth quadrants is identical to that in the first quadrant.
	 We divide the proof into two cases:
    (1): $2E^{(2)}_2A/K^{(2)}_2-A \ge 0$, corresponding to cases (i)-(iv) listed in \Cref{prop: zr0 zi0};
	(2): $2E^{(2)}_2A/K^{(2)}_2-A<0$, corresponding to case (v) listed in \Cref{prop: zr0 zi0}.

	Under the transformation \eqref{eq:Lambda}, the function $\hat{\Omega}(\Upsilon)$ in equation \eqref{eq:I-Lambda-C} could be expressed as
	\begin{equation}\nonumber
		\hat{\Omega}^2(\Upsilon)=-64A^3 \left(\Upsilon -\Upsilon_1\right)\Upsilon\left(\Upsilon -\Upsilon_3\right).
	\end{equation}
	Combining this with equation \eqref{eq:Lambda-i-1}, the real and imaginary parts of the function $\hat{\Omega}^2$ are given by
	\begin{equation}\label{eq:Omega-Im_Re}
		\left\{\begin{aligned}
			{\Im}(\hat{\Omega}^2)&=-64A^3\Upsilon_I\left(3\Upsilon_R^2-4\Upsilon_{3R}\Upsilon_R-\Upsilon_I^2+1\right),\\
			{\Re}(\hat{\Omega}^2)&=-64A^3\left(\Upsilon_R(\Upsilon_R^2-2\Upsilon_{3R}\Upsilon_R-3\Upsilon_I^2+1)+2\Upsilon_{3R}\Upsilon_I^2\right),
		\end{aligned} \right.
	\end{equation}
	where $\Upsilon=\Upsilon_{R}+\ii \Upsilon_I$, $\Upsilon_{3R}$ and $\Upsilon_{3I}$ are defined in \eqref{eq:Lambda-i-1}.
	The necessary and sufficient conditions for ${\Re}(\hat{\Omega})=0$ are ${\Im}(\hat{\Omega}^2)=0$ and ${\Re}(\hat{\Omega}^2)\le 0$. 
	From equation \eqref{eq:Omega-Im_Re}, the curve ${\Re}(\hat{\Omega})=0$ satisfying $\Upsilon\notin \mathbb{R}$ is equivalent to
	\begin{equation}\label{eq:Omega-condition}
		\Upsilon_I^2=3\Upsilon_R^2-4\Upsilon_{3R}\Upsilon_R+1, \qquad \Upsilon_R\le \Upsilon_{3R}, \qquad \Upsilon_I>0.
	\end{equation}
	In the following, 
	we prove that curves ${\Im}(\hat{\mathcal{I}}(\Upsilon))=0$ and ${\Re}(\hat{\Omega}(\Upsilon))=0$ in the $\Upsilon$-plane do not intersect, excepting on the point $\Upsilon_3$.
	Firstly, we introduce several formulas useful for the subsequent analysis. Secondly, we examine the variation of the curve ${\Im}( \hat{\mathcal{I}} (\Upsilon))=0$. Finally, we compare the curve $\Re(\hat{\Omega})=0$ and $\Im(\hat{\mathcal{I}})=0$ to demonstrate that they do not intersect except on the point $\Upsilon_3$.

	Consider the case (1): $2E^{(2)}_2A/K^{(2)}_2-A \ge 0$.
	Along the curve ${\Im}(\hat{\mathcal{I}} (\Upsilon))=0$, the tangent vector in equation \eqref{eq:tangent-vector} can be rewritten as
	\begin{equation}\label{eq:tan-vec-Lambda-I}
		\left(-\frac{\dd {\Im}( \hat{\mathcal{I}} )}{\dd \Upsilon_I},\frac{\dd {\Im}(\hat{\mathcal{I}} )}{\dd \Upsilon_R} \right)=\left(-{\Re}\left(\hat{\mathcal{I}}^{\prime}(\Upsilon) \right),{\Im}\left(\hat{\mathcal{I}}^{\prime}(\Upsilon)\right)\right).
	\end{equation}
Since
	\begin{equation}\label{eq:define-a-b}
		\frac{\Re(\sqrt{\Upsilon_{3R}-\ii \Upsilon_I})}{\Im(\sqrt{\Upsilon_{3R}-\ii \Upsilon_I})}
		=\frac{-\sqrt{\Upsilon_{3R}^2+\Upsilon_I^2}-\Upsilon_{3R}}{\Upsilon_I},  
	\end{equation}
	when $0<\Upsilon_I<\Upsilon_{3I}$, i.e., $\hat{\Upsilon}\in \ii \mathbb{R}$, $\hat{\Upsilon}:=((\Upsilon_I^2-\Upsilon_{3I}^2)(\Upsilon_{3R}^2+ \Upsilon_I^2))^{1/2}$, we obtain
	\begin{equation}\nonumber
		\begin{split}
			\left.\Re\left(\hat{\mathcal{I}}^{\prime}(\Upsilon)\right)\right|_{\Upsilon=\Upsilon_{3R}+\ii \Upsilon_I} \!\!
			\xlongequal[\eqref{eq:define-a-b}]{\eqref{eq:I-Lambda-C}}&\,
			\frac{\sqrt{A}\Im(\sqrt{\Upsilon_{3R}-\ii \Upsilon_I}) }{\ii \hat{\Upsilon}}\cdot\frac{H(\Upsilon_I)}{\Upsilon_I},\\
			\left.\Im\left(\hat{\mathcal{I}}^{\prime}(\Upsilon)\right)\right|_{\Upsilon=\Upsilon_{3R}+\ii \Upsilon_I} \!\!
			\xlongequal[\eqref{eq:define-a-b}]{\eqref{eq:I-Lambda-C}} &\, 	\frac{ \sqrt{A}\Im(\sqrt{\Upsilon_{3R}-\ii \Upsilon_I})}{\ii  \hat{\Upsilon}}\left(\frac{2 E^{(2)}_2}{  K^{(2)}_2}-1 
			+\sqrt{\Upsilon_{3R}^2+\Upsilon_I^2}\right),
		\end{split} 
	\end{equation} 
	where $H(\Upsilon_I):= \Upsilon_I^2+\Upsilon_{3R}^2-(2 E^{(2)}_2/K^{(2)}_2-1-\Upsilon_{3R})\sqrt{\Upsilon_{3R}^2+\Upsilon_I^2}-(2 E^{(2)}_2/K^{(2)}_2-1)\Upsilon_{3R}$.
	When $\Upsilon_{I}>\Upsilon_{3I}$, $\hat{\Upsilon}\in \mathbb{R}$, we have
	\begin{equation}\label{eq:I-U}
		\begin{split}
			\left.\Re\left(\hat{\mathcal{I}}^{\prime}(\Upsilon)\right)\right|_{\Upsilon=\Upsilon_{3R}+\ii \Upsilon_I} \!\!\!
			=\left.\Im\left(\hat{\mathcal{I}}^{\prime}(\Upsilon)\right)\right|_{\Upsilon=\Upsilon_{3R}+\ii \Upsilon_I},\quad 
			\left.\Im\left(\hat{\mathcal{I}}^{\prime}(\Upsilon)\right)\right|_{\Upsilon=\Upsilon_{3R}+\ii \Upsilon_I} \!\!\!
			=	\left.\Re\left(\hat{\mathcal{I}}^{\prime}(\Upsilon)\right)\right|_{\Upsilon=\Upsilon_{3R}+\ii \Upsilon_I}.
		\end{split}
	\end{equation}	
	This implies that 
	\begin{equation}\nonumber
		\lim\limits_{\Upsilon_I\rightarrow \Upsilon_{3I}^{-}}\left.\frac{-\Re\left( \hat{\mathcal{I}}^{\prime} (\Upsilon)\right)}{\Im\left( \hat{\mathcal{I}}^{\prime} (\Upsilon)\right)}\right|_{\Upsilon=\Upsilon_{3R}+\ii \Upsilon_I}
		\xlongequal{\eqref{eq:Lambda-i-1}}\frac{(K^{(2)}_2-E^{(2)}_2)(1+\Upsilon_{3R})}{-\Upsilon_{3I} E^{(2)}_2}<0,
	\end{equation}
	by the second formula in equation \eqref{eq:inequality}.
	
	Then, we aim to show that the curve $\Im(\hat{\mathcal{I}})=0$ lies entirely on the right-hand side of the line $\Upsilon_R=\Upsilon_{3R}$.
	Since $2E^{(2)}_2/K^{(2)}_2-1-\Upsilon_{3R}=2E^{(2)}_2/K^{(2)}_2-1-(1-2(k^{(2)}_2)^2)>0$ by the first formula in \eqref{eq:inequality}, the point $(2E^{(2)}_2/K^{(2)}_2-1,0)$ is located to the right of the line $\Upsilon_R=\Upsilon_{3R}$. Hence, it suffices to show that the curve $\Im(\hat{\mathcal{I}})=0$ does not cross this line.
	Along the line $\Upsilon_R=\Upsilon_{3R}$,
	the derivative of $\Im(\hat{\mathcal{I}})$ with respect to $\Upsilon_I$ is 
	$\dd  {\Im}( \hat{\mathcal{I}})/\dd \Upsilon_I|_{\Upsilon=\Upsilon_{3R}+\ii \Upsilon_I}
	\!
	={\Re}\left( \hat{\mathcal{I}}^{\prime}\right)|_{\Upsilon=\Upsilon_{3R}+\ii \Upsilon_I}$.
	When $\Upsilon_I>\Upsilon_{3I}$, since $2E^{(2)}_2/K^{(2)}_2-1>0$, we have $\Re(\dd \hat{\mathcal{I}}/\dd \Upsilon) \neq 0$, which implies that $\Im(\hat{\mathcal{I}})$ is monotonic. 
	Because $\Im(\hat{\mathcal{I}}(\Upsilon_{1}))=0$, it follows that $\Im(\hat{\mathcal{I}})\neq 0$ for any $\Upsilon_R=\Upsilon_{3R}$ and $\Upsilon_I>\Upsilon_{3I}$.
	When $0<\Upsilon_I<\Upsilon_{3I}$, the zeros of $\Re( \hat{\mathcal{I}}^{\prime})$ correspond to the zeros of $H(\Upsilon_I)$.
	It is straightforward to compute that 
	$H(\Upsilon_{3I})=2(1+\Upsilon_{3R})(1-E^{(2)}_2/K^{(2)}_2)>0$. 
	If $\Upsilon_{3R}>0$, then $H(0)=-2(2 E^{(2)}_2/K^{(2)}_2-1-\Upsilon_{3R})\Upsilon_{3R}<0$; while if $\Upsilon_{3R}\le 0$, we have $H(0)=0$. 
	As $\Upsilon_I$ varies from $0$ to $\Upsilon_{3I}$, the function $(\Upsilon_{3R}^2+\Upsilon_I^2)^{1/2}$ is monotonically increasing.
	Treating $H(\Upsilon_I) $ as a quadratic function in $(\Upsilon_{3R}^2+\Upsilon_I^2 )^{1/2}$, it follows that within the interval $\Upsilon_I\in (0,\Upsilon_{3I})$, there can be at most one points where $\dd H/\dd (\Upsilon_{3R}^2+\Upsilon_I^2)^{1/2}=0$, that is,  at most one point where $H^{\prime}(\Upsilon)=0$.
	If exist a point $ \Upsilon_{I_1}\in (0,\Upsilon_{3I} )$ such that $H^{\prime}(\Upsilon_{I_1})=0$, then combining $H(0)\le 0$ and $H(\Upsilon_{3I})>0$, we obtain that there exist a unique point $\Upsilon_{I_0}$ such that $H(\Upsilon_{I_0})=0$.
	Otherwise, if $H(\Upsilon)$ is monotonic, the same conclusion follows: there exists at most one $\Upsilon_{I_0}$ such that $H(\Upsilon_{I_0})=0$.
	We now show that the curve $\Im(\hat{\mathcal{I}})=0$ dose not intersect the line $\Upsilon_R=\Upsilon_{3R}$, $0<\Upsilon_I<\Upsilon_{3I}$. 
	If it did, there would exist a point on this line satisfying $\Im(\hat{\mathcal{I}})=0$.
	Thus, along $\Upsilon_R=\Upsilon_{3R}$ with $0\le \Upsilon_I \le \Upsilon_{3I}$, there would be three points where $\Im(\hat{\mathcal{I}})=0$, since $\Im(\hat{\mathcal{I}}(0))=0$ and $\Im(\hat{\mathcal{I}}(\Upsilon_3))=0$.
	By Rolle's theorem, this would imply the existence of two points where $H^{\prime}(\Upsilon)=0$, which contradicts with the fact that there can be at most one $\Upsilon_{I_0}$ such that $H(\Upsilon_{I_0})=0$.
	From \Cref{prop: zr0 zi0}, we know that the curve $\Im(\hat{\mathcal{I}})=0$ passes through the point $ (2E^{(2)}_2/K^{(2)}_2-1,0)$ and	its other endpoint is $\Upsilon_3$. 
	Therefore, no other point on the line $\Upsilon_R=\Upsilon_{3R}$ satisfies $\Im(\hat{\mathcal{I}})=0$. 
	Consequently, the curve ${\Im}(\hat{\mathcal{I}})=0$ on the $\Upsilon$-plane satisfies the condition $\Upsilon_R\ge \Upsilon_{3R}$. 
	
	The curve  ${\Re}(\hat{\Omega})=0$ satisfies $\Upsilon_R\le \Upsilon_{3R}$, and on the line $\Upsilon_R=\Upsilon_{3R}$ there exists only one point $(\Upsilon_{3R},\Upsilon_{3I})$ such that ${\Re}(\hat{\Omega})=0$. 
	Similarly, the curve $\Im(\hat{\mathcal{I}})=0$ satisfies $\Upsilon_R\ge \Upsilon_{3R}$, and on the same line $\Upsilon_R=\Upsilon_{3R}$ there exists only one point $(\Upsilon_{3R},\Upsilon_{3I})$ such that ${\Im}(\hat{\mathcal{I}})=0$. 
	Therefore, in the $\Upsilon$-plane, except for the point $\Upsilon_3$, there are no other intersection points satisfying ${\Re}(\hat{\Omega})=0$ and ${\Im}( \hat{\mathcal{I}})=0$ simultaneously. 
	Consequently, in the $\lambda$-plane, except for $\lambda_3$, there are no other intersection points satisfying ${\Re}(\Omega(\lambda))=0$ and ${\Im}( \mathcal{I}^{(2)} (\lambda))=0$. 
	Similar conclusions hold in the second, third and fourth quadrants. 
	Thus, when $2E^{(2)}_2A/K^{(2)}_2-A>0$, for any $\lambda \in Q \backslash (Q^{(2)}_{R}\cup Q^{(2)}_{I}\cup Q^{(2)}_P)$, we have $\Omega(\lambda) \notin \ii  \mathbb{R}$.

	Consider case (2): $2E^{(2)}_2A/K^{(2)}_2-A<0$.
	We employ the same method as in the case (1). 
	Consider the line $\Upsilon_{I}=\Upsilon_{3I}$ with $\Upsilon_R<\Upsilon_{3R}$.
	Let $F(\Upsilon_R)=\Upsilon_R(\Upsilon_R-\Upsilon_{3R})-2\Upsilon_{3I}^2+\ii\Upsilon_{3I}(\Upsilon_{3R}-3\Upsilon_R)$ with $\Im(F(\Upsilon_R))>0$. Then, we get
	\begin{equation}\nonumber
		\frac{\Re(\sqrt{F(\Upsilon_R)})}{\Im(\sqrt{F(\Upsilon_R)})}=\frac{|F(\Upsilon_R)|+\Re(F(\Upsilon_R))}{\Im(F(\Upsilon_R))}>0.
	\end{equation}
	From equation \eqref{eq:I-Lambda-C}, we obtain
	\begin{equation}\nonumber
		\begin{split}
			\Im\left(\hat{\mathcal{I}}^{\prime}(\Upsilon) \right)|_{\Upsilon=\Upsilon_R+\ii \Upsilon_{3I}} \!
			=\frac{-\sqrt{A}\Im(\sqrt{F(\Upsilon_R)})}{(\Upsilon_{3R}-\Upsilon_R)^{1/2}|F(\Upsilon_R)|}\left(\left(\frac{2E_2^{(2)}}{K_2^{(2)}}-1-\Upsilon_{R}\right)\frac{|F(\Upsilon_R)|+\Re(F(\Upsilon_R))}{\Im(F(\Upsilon_R))}+\Upsilon_{3I}\right).
		\end{split}
	\end{equation}
	Since $2E_2^{(2)}/K_2^{(2)}-1-\Upsilon_{R}\ge 2E_2^{(2)}/K_2^{(2)}-1-\Upsilon_{3R}>0$ and $\Upsilon_{3I}>0$, we have $\Im(\dd \hat{\mathcal{I}}/ \dd \Upsilon)|_{\Upsilon=\Upsilon_R+\ii \Upsilon_{3I}}\neq 0$, which implies that along the line $\Upsilon_R<\Upsilon_{3R}$, $\Upsilon_{3I}>0$, the function $\Im(\hat{\mathcal{I}})$ is monotonic.
	Since $\Im(\hat{\mathcal{I}}(\Upsilon_{3R}+\ii \Upsilon_{3I}))=0$, it follows that $\Im(\hat{\mathcal{I}}(\Upsilon_{R}+\ii \Upsilon_{3I}))\neq 0$ for $\Upsilon_R<\Upsilon_{3R}$. 
	When $\Upsilon_{I}>\Upsilon_{3I}$ and $\Upsilon_R=\Upsilon_{3R}$, the condition $(\Upsilon_{3R}^2+\Upsilon_I^2)^{1/2}-1>(\Upsilon_{3R}^2+\Upsilon_{3I}^2)^{1/2}-1=0$ ensures $\Re(\dd \hat{\mathcal{I}}/ \dd \Upsilon)\neq 0$ from equation \eqref{eq:I-U}, implying that $\Im(\hat{\mathcal{I}})$ is also monotonic.
	Hence $\Im(\hat{\mathcal{I}})\neq 0$.
	Since the curve $\Im(\hat{\mathcal{I}})=0$ starts at the point $(\Upsilon_{3R},\Upsilon_{3I})$ and terminates at a point on the real line, we conclude that this curve does not cross the lines $\Upsilon_R=\Upsilon_{3R}$, $\Upsilon_{I}>\Upsilon_{3I}$ or  $\Upsilon_R<\Upsilon_{3R}$, $\Upsilon_{I}=\Upsilon_{3I}$.
	Moreover, the curve $\Re(\hat{\Omega})=0$ coincides with that defined by equation \eqref{eq:Omega-condition}. For any $\Upsilon_R<\Upsilon_{3R}<0$, equation \eqref{eq:Omega-condition} gives $\Upsilon_I>\Upsilon_{3I}$. 
	Hence, the curve $\Re(\hat{\Omega})=0$ lies entirely within the region $\Upsilon_I>\Upsilon_{3I}$ and $\Upsilon_R<\Upsilon_{3R}$, and thus does not intersect with the curve $\Im(\hat{\mathcal{I}})=0$.	
    Therefore, we conclude that $\Omega(\lambda)\notin \ii \mathbb{R}$ for any $\lambda \in Q\backslash(Q^{(2)}_{R} \cup Q^{(2)}_{I} \cup  Q^{(2)}_{P})$.
\end{proof-spec-cn}

In virtue of the above Lemma, we know that the related periodic solutions are spectrally unstable.
Therefore, we would like to study the subharmonic perturbation of this solution and to explore the subharmonic perturbation stability.
The set $Q^{(2)}_P$ could also be divided into two subsets $Q^{(2)}_{sub}=Q^{(2)}_{sub,R}\cup Q^{(2)}_{sub,C}$, where 
\begin{equation}\label{eq:Q_Psub}
	Q^{(2)}_{sub,R}:=\{\lambda\in Q^{(2)}_{sub}|\lambda\in  Q^{(2)} _{R}\cup  Q^{(2)} _{I}\}, \qquad 
	Q^{(2)}_{sub,C}:=\{\lambda\in Q^{(2)}_{sub}|\lambda\notin  Q^{(2)} _R\cup  Q^{(2)} _{I}\}.
\end{equation}

\begin{prop}\label{prop:I-M-increase}
	Along the curve ${\Im}(\mathcal{I}(\lambda))=0$, the value of $M(\lambda)$ increases (decreases) in the upper half-plane, and it decreases (increases) in the lower half-plane.
\end{prop}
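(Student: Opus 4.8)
The plan is to reduce the statement to the elementary fact that a holomorphic function varies strictly monotonically along a level curve of its imaginary part, and then to fix the direction of that variation using the reflection symmetries of $Q$. Recall from \eqref{eq:M} that $M(\lambda)=2T\,\mathcal{I}(\lambda)$, that $\mathcal{I}$ is holomorphic away from the branch points $\pm\lambda_{1,2,3}$, and that $M(\lambda)$ is real on the curve $\{{\Im}\,\mathcal{I}(\lambda)=0\}$. Parametrizing an arc of this curve by arclength $s$, the unit tangent is forced to be $\lambda'(s)=\pm\,\overline{\mathcal{I}'(\lambda)}\,/\,|\mathcal{I}'(\lambda)|$, since this is the only unit direction making $\mathcal{I}'(\lambda)\,\lambda'(s)$ real; hence
\begin{equation}\nonumber
	\frac{\dd M}{\dd s}=2T\,\mathcal{I}'(\lambda)\,\lambda'(s)=\pm\,2T\,|\mathcal{I}'(\lambda)| ,
\end{equation}
which is nonzero except at the zeros $0,\pm\lambda_0$ of $\mathcal{I}'$ in \eqref{eq:I-def-C-deriv}, and which blows up as $\lambda\to\pm\lambda_{1,2,3}$, where the arc terminates. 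Thus $M$ is strictly monotone along every arc of $Q$ lying between consecutive critical points of $\mathcal{I}$, the direction being locally constant. (Equivalently: ${\Re}\,\mathcal{I}$ and ${\Im}\,\mathcal{I}$ are harmonic conjugates, so the level curves of ${\Im}\,\mathcal{I}$ are the trajectories of $\nabla({\Re}\,\mathcal{I})$.)

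It remains to pin down the sign. The critical points $0,\pm\lambda_0$ all lie on $\mathbb{R}\cup\ii\mathbb{R}$ (see \eqref{eq:define-lambda-0}), and by \Cref{prop: zr0 zi0} every complex arc of $Q$ meets $\mathbb{R}\cup\ii\mathbb{R}$. Consider first an arc that crosses the real axis (or passes through $\lambda=0$) at a point $\lambda_0$: near $\lambda_0$ the function $\mathcal{I}$ is real on a real interval, hence continues by Schwarz reflection, $\mathcal{I}(\bar\lambda)=\overline{\mathcal{I}(\lambda)}$. Since $Q$ is invariant under $\lambda\mapsto\bar\lambda$ (\Cref{remark:Omega-sym}), the arclength parametrization through $\lambda_0$ may be taken with $\lambda(-s)=\overline{\lambda(s)}$, the branch $s>0$ lying in the upper half-plane and $s<0$ in the lower one. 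Then $M(\lambda(-s))=2T\,{\Re}\,\mathcal{I}(\overline{\lambda(s)})=2T\,{\Re}\,\overline{\mathcal{I}(\lambda(s))}=M(\lambda(s))$, so $s\mapsto M(\lambda(s))$ is an even function; its derivative is odd, and therefore $M$ increases on the upper-half-plane branch exactly when it decreases on the lower-half-plane branch (and vice versa) — this is the asserted dichotomy, and it automatically accommodates the local extremum of $M$ that occurs at $\lambda_0$ when $\lambda_0$ is a critical point. An arc confined to the open upper half-plane (as in cases (iii),(v) of \Cref{prop: zr0 zi0}) is paired instead with its mirror image in the lower half-plane under $\lambda\mapsto\bar\lambda$; tracking how $\mathcal{I}'$ and the arclength direction transform under this reflection along the appropriate sheet of $y$ shows that $\dd M/\dd s$ reverses sign, again giving opposite monotonicity on the two arcs.

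I expect the principal difficulty to be bookkeeping rather than analysis. First, $\mathcal{I}$ is multivalued — through the incomplete elliptic integral $\nu_2^{(2)}$ appearing in \eqref{eq:I-def-C} — so ``the value of $M$'' must be read as a continuous branch fixed along each connected arc of $Q$; jumps by a period occur only at $\pm\lambda_{1,2,3}$ and are compatible with \Cref{prop:Q-M-cn}(c). Second, one must run through the arcs of $Q$ case by case as enumerated in \Cref{prop: zr0 zi0} and use the symmetries $\lambda\mapsto\bar\lambda$, $\lambda\mapsto-\lambda$, $\lambda\mapsto-\bar\lambda$ of $Q$ to attach a consistent sign to each, being careful with the sheet of $y$ (equivalently, the branch of the square root) in \eqref{eq:I-def-C-deriv}. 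A purely computational alternative, which avoids the reflection discussion altogether, is to evaluate $\mathcal{I}^{(2)\prime}(\lambda)$ of \eqref{eq:I-def-C-deriv} against the tangent vector \eqref{eq:tangent-vector} at one convenient point of each arc — for instance where it crosses $\mathbb{R}$ or $\ii\mathbb{R}$, where both $\mathcal{I}'$ and the tangent are explicit — and then propagate the sign along the arc by continuity of $\dd M/\dd s=\pm2T|\mathcal{I}'(\lambda)|$.
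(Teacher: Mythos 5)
Your proposal is correct and follows essentially the same route as the paper: the paper's proof is exactly the computation that the directional derivative of $M(\lambda)$ along the tangent vector \eqref{eq:tangent-vector} of the curve ${\Im}(\mathcal{I}(\lambda))=0$ equals a nonzero multiple of $|\mathcal{I}'(\lambda)|^2$, hence $M$ is strictly monotone along each arc, which is your $\dd M/\dd s=\pm 2T|\mathcal{I}'(\lambda)|$ in arclength normalization. Your additional Schwarz-reflection argument fixing the opposite monotonicity on the upper and lower half-planes is a refinement of the sign bookkeeping that the paper leaves implicit, not a different method.
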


\begin{proof}
By the equation \eqref{eq:M}, the directional derivative of $M(\lambda)$ along the curve ${\Im}(\mathcal{I}(\lambda))=0$ is given by:
	\begin{equation}\label{eq:muz}
		\left( \frac{\dd  M(\lambda)}{\dd  \lambda_R},\frac{\dd  M(\lambda)}{\dd  \lambda_I} \right) \cdot \left(- {\Re} (\mathcal{I}'(\lambda)),{\Im}(\mathcal{I}'(\lambda)) \right) =-2\alpha T  \left( \left({\Im}(\mathcal{I}'(\lambda))\right)^2+\left({\Re}(\mathcal{I}'(\lambda)) \right)^2 \right),
	\end{equation}
	where $\lambda=\lambda_R+\ii \lambda_I, \lambda_I,\lambda_R\in \mathbb{R}$ and $\lambda\in Q$ in equation \eqref{eq:set-Q}. Since the directional derivative of $M(\lambda)$ with respect to $\lambda$ is nonzero along the curve ${\Im}(\mathcal{I}(\lambda))=0$, the value of $M(\lambda)$ is increasing or decreasing along the curve $\Im(\mathcal{I}(\lambda))=0$.
\end{proof}

\newenvironment{proof-spec-cn-P}{\emph{Proof of \Cref{theorem:spectral-complex-P}.}}{\hfill$\Box$\medskip}
\begin{proof-spec-cn-P}
	By the \Cref{define:spect-P}, to prove the spectral stability of the two-phase solutions under the \ref{case2} with the $P$-subharmonic perturbation, we should get the value of $P$ for all $\lambda\in Q$, $\Omega(\lambda)\in \ii \mathbb{R}$.
	By \Cref{prop:Q-M-cn}, we get $\Omega(\lambda)\in \ii \mathbb{R}$ for any $\lambda\in Q^{(2)}_{sub,R}$. 
	Combining with \Cref{theorem:spectral-complex}, we know that for any $\lambda\in Q \backslash ( Q^{(2)} _{R}\cup  Q^{(2)} _{I}\cup  Q^{(2)} _{P})$,  eigenvalues are not pure imaginary number, i.e., $\Omega(\lambda)\notin \ii \mathbb{R}$.  Thus, the spectral stability is converted into prove $Q^{(2)}_{sub,C}=  Q^{(2)} _{P}$. 
	By the symmetry of the set $Q$ and the function $\Omega(\lambda)$, we just study the spectral parameter $\lambda$ in the first quadrant of the $\lambda$-plane.
	We divided the proof into the following five categories for different conditions of the set $Q$ in \Cref{prop: zr0 zi0} (see the Fig. \ref{fig3}).
	
	When $2 E^{(2)}_2A/  K^{(2)}_2-A>-\lambda_2^2$, since along the curve ${\Im}( \mathcal{I}^{(2)} (\lambda))=0$ from $\lambda=\lambda_0$ to $\lambda=\lambda_1$, the value of $M(\lambda)$ is increasing by \Cref{prop:I-M-increase}. 
	From \Cref{prop:Q-M-cn}, we get that $M(\lambda_1)=-\pi$. 
	We must ensure that no other point in $Q^{(2)}_{sub}$ intersects with the curve ${\Im}( \mathcal{I}^{(2)} (\lambda))=0$ between $\lambda=\lambda_0$ and $\lambda=\lambda_1$. 
	Therefore, when $P\le \frac{4\pi}{\pi+ M(\lambda_0)}$, for any $\lambda\in Q^{(2)}_{sub}$, we get $\Omega(\lambda)\in \ii \mathbb{R}$. The solutions with two pairs of complex branch points are spectrally stable with respect to perturbations of period $2PT, \ P\in \mathbb{N}$.
	When $2 E^{(2)}_2A/  K^{(2)}_2-A=-\lambda_2^2$, the analysis is similar to the case $2 E^{(2)}_2A/  K^{(2)}_2-A>-\lambda_2^2$, so we would not repeat them here anymore.
	
	When $0 < 2 E^{(2)}_2A/  K^{(2)}_2-A<-\lambda_2^2$, since $ \mathcal{I}^{(2)} (\lambda_2)=\pi \sqrt{A}/(2  K^{(2)}_2)=\kappa^{(2)}/2$. Combined with the definition of the function $M(\lambda)$, it is easy to obtain $M(\lambda_2)=\pi/2$. 
	And for any $\lambda\in  Q^{(2)} _I$, along the curve $\Im( \mathcal{I}^{(2)} (\lambda))=0$, the function $M(\lambda)$ is monotonous.
	Similar to the above, we conclude that when $P\le \frac{4\pi}{\pi+ M(\lambda_0)}$, for any $\lambda\in Q^{(2)}_{sub}$, we get $\Omega(\lambda)\in \ii \mathbb{R}$. 
	
When $0=2 E^{(2)}_2A/  K^{(2)}_2-A$, we get $M(\lambda_2)=2\pi$. 
It is easy to know that the solution is $2$-subharmonic perturbation spectrally stable.
Then, we would like to study whether there exists a parameter $P>2$, such that the solution is $P$-subharmonic perturbation spectrally stable.
Due to the monotonicity, if have, there must exist a point $\lambda_0 \in Q \backslash ( Q^{(2)} _R \cup  Q^{(2)} _I \cup  Q^{(2)} _P)$ such that $\Omega(\lambda_0)\in \ii \mathbb{R}$, which is contradict with the \Cref{theorem:spectral-complex}. 
So, we obtain that the solution is $2$-subharmonic perturbation spectrally stable.
	
When $2 E^{(2)}_2A/  K^{(2)}_2-A<0$, we would consider the  upper half-plane since the lower half-plane can be obtained similarly. From \Cref{prop: zr0 zi0}, we know that there exists a curve connecting $\lambda_1$ to $\lambda_3$, satisfying ${\Im}( \mathcal{I}^{(2)} (\Upsilon))=0$. Since $M(\lambda_3)=\pi$, $M(\lambda_1)=3\pi$ (see \Cref{prop:Q-M-cn}) and $M(\lambda)$ is continuous and monotonous, only when $P=1$, the set $Q^{(2)}_{sub,C}=Q^{(2)}_P$ holds. So if $2 E^{(2)}_2/  K^{(2)}_2<1$, solutions are spectrally stable with respect to co-periodic perturbations but no other subharmonic perturbation. 	
\end{proof-spec-cn-P}

\begin{figure}[h]
	\centering
	\subfigure[$\lambda_1=-0.8+0.3\ii$, $\lambda_2=0.5\ii$, $\lambda_3=0.8+0.3\ii$, $k^{(2)}_2\approx0.2669$ ]{\includegraphics[trim=0.5cm 0cm 1.1cm 0.5cm,width=0.45\textwidth]{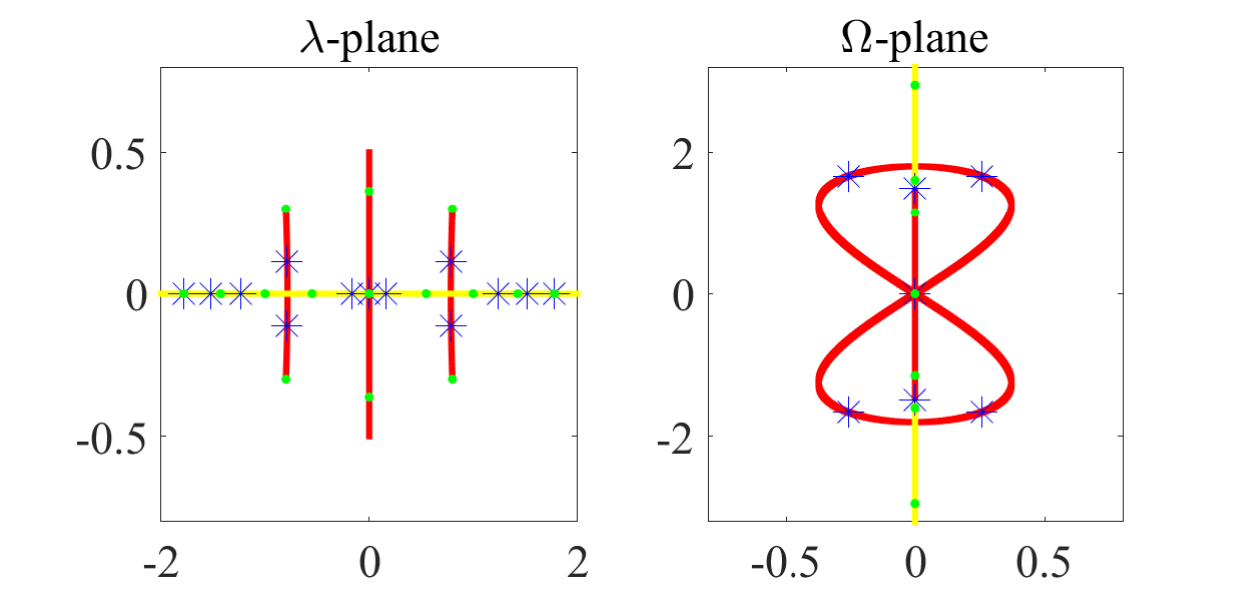}}
	\subfigure[$\lambda_1=-0.5+0.6\ii$, $\lambda_2=0.8961\ii$, $\lambda_3=0.5+0.6\ii$, $k^{(2)}_2\approx0.3493$]{\includegraphics[trim=0.5cm 0cm 1.1cm 0.5cm,width=0.45\textwidth]{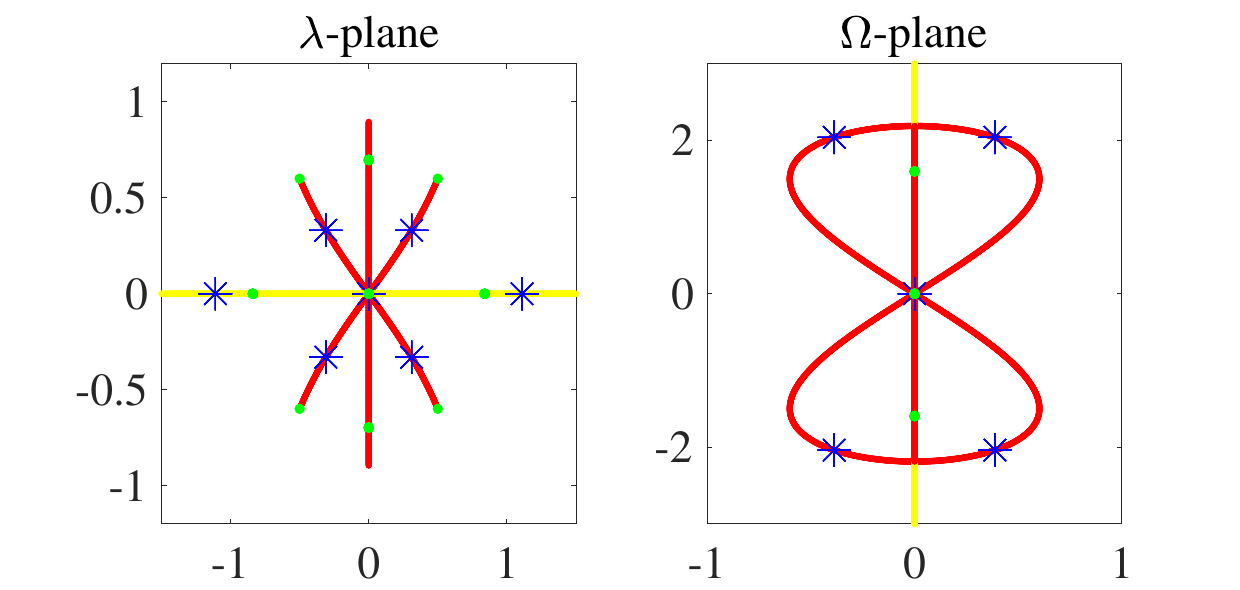}}
	\subfigure[$\lambda_1=-0.4+0.5\ii$, $\lambda_2=1.05\ii$, $\lambda_3=0.4+0.5\ii$, $k^{(2)}_2\approx  0.1870 $]{\includegraphics[trim=0.5cm 0cm 1.1cm 0.5cm,width=0.45\textwidth]{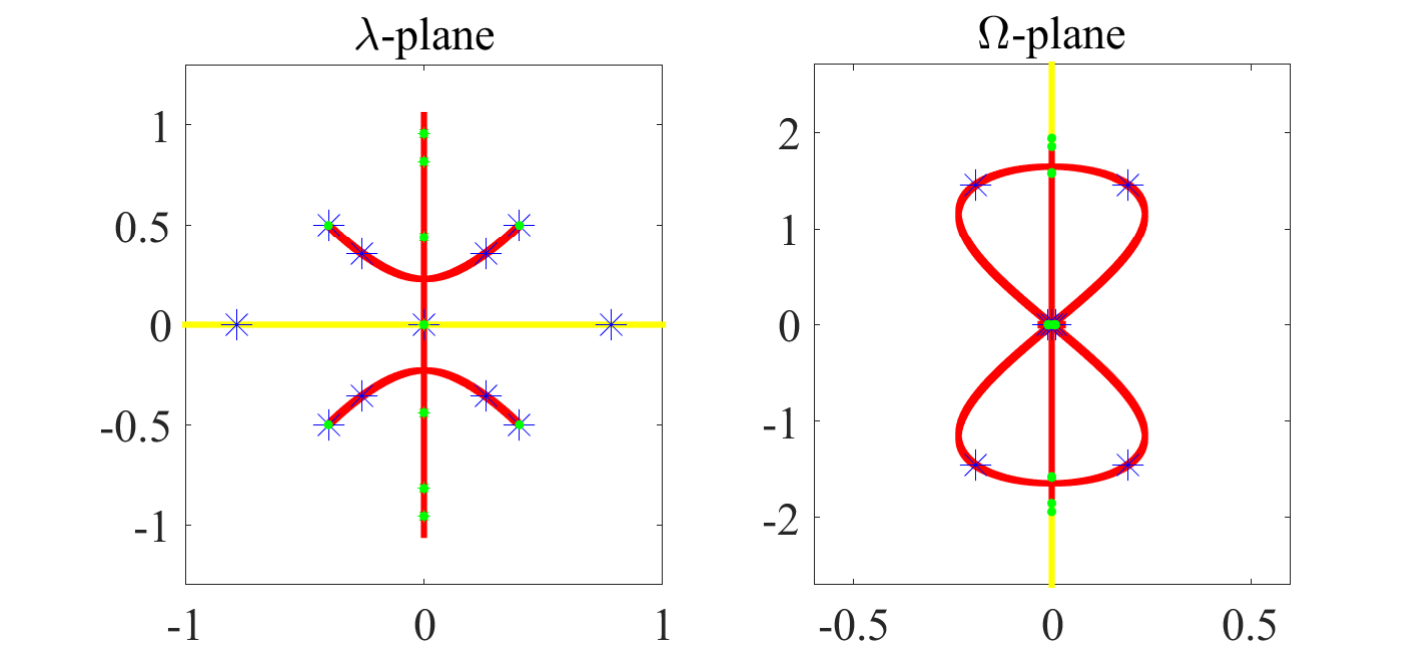}}
	\subfigure[$\lambda_1=-0.5+1.3\ii$, $\lambda_2=0.5669\ii$, $\lambda_3=0.5+1.3\ii$, $k^{(2)}_2\approx0.9089$]{\includegraphics[trim=0.5cm 0cm 1.1cm 0.5cm,width=0.45\textwidth]{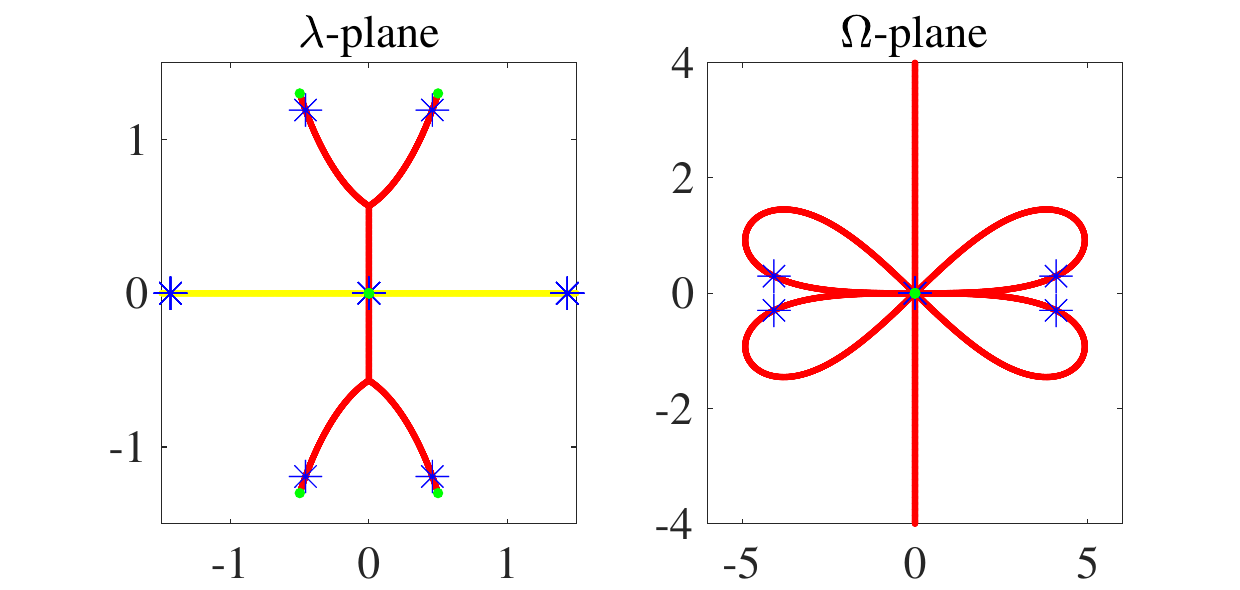}}
	\subfigure[$\lambda_1=-0.5+1.3\ii$, $\lambda_2=0.3\ii$, $\lambda_3=0.5+1.3\ii$, $k^{(2)}_2\approx0.9274$]{\includegraphics[trim=0.5cm 0cm 1.1cm 0.5cm,width=0.45\textwidth]{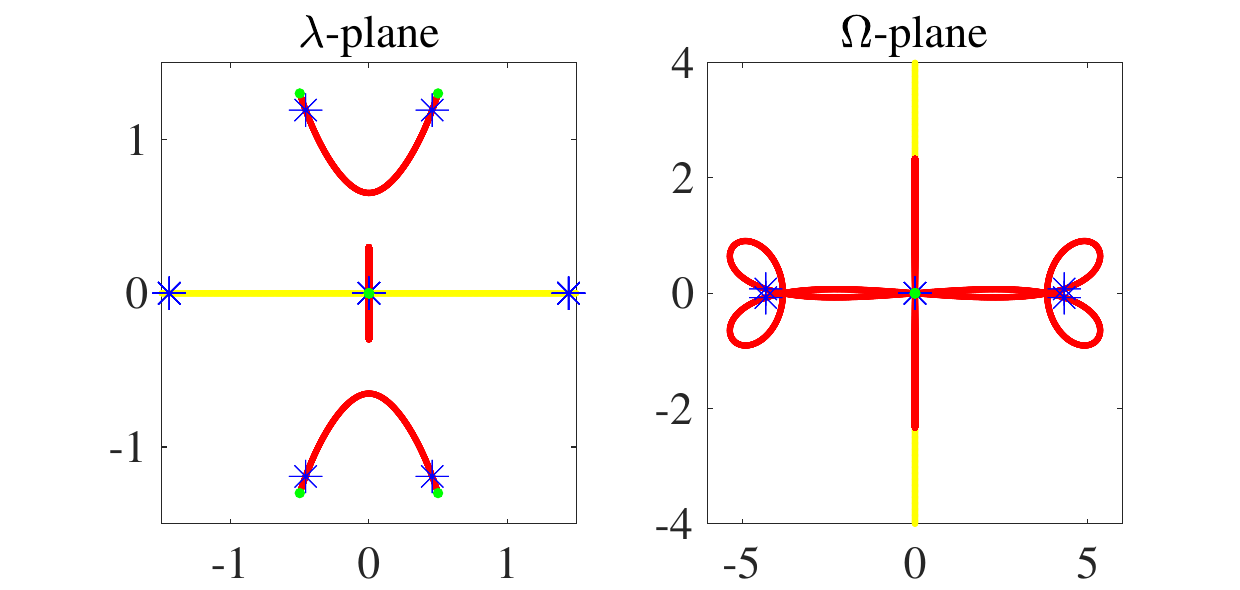}}
	\caption{The set $Q$ and the related eigenvalues $\Omega(\lambda)$ under the \ref{case2}. The related numerical spectrums are consistent with the recent work \cite{CuiP-2025}.}
	\label{fig3}
\end{figure}


Based on the above analysis, we would like to study the maximum value of the parameter $P$.
The function $M(\lambda_0)$ is 
\begin{equation}\label{eq:M-lambda_0}
	\begin{split}
		\!\!\!\!\! \ M(\lambda_0)
		&\ \xlongequal[\eqref{eq:u-parameters-cn-k}]{\eqref{eq:M},\eqref{eq:I-def-C}}-2\ii K^{(2)}_2  \sqrt{A}Z\left(F\left(\frac{(1-2 E^{(2)}_2/  K^{(2)}_2)^{1/2}}{1- E^{(2)}_2/  K^{(2)}_2},  k^{(2)}_2\right),  k^{(2)}_2\right)+2\pi\\
		+&\  2\ii((K^{(2)}_2- 2E^{(2)}_2)((K^{(2)}_2- E^{(2)}_2)^2- 4 (k^{(2)}_2)^2(K^{(2)}_2-2E^{(2)}_2)))^{1/2}/  ((K^{(2)}_2- E^{(2)}_2) (K^{(2)}_2)^{1/2}),
	\end{split}
\end{equation}
with $M(\lambda_0)\in [\pi,2\pi]$.
Thus, we obtain that the value of the parameter $\max(P)$ is dependent on the modulus $  k^{(2)}_2$. We provide the related figure as follows (see the Fig. \ref{fig:P}), by the above equation \eqref{eq:M-lambda_0}.

\begin{figure}[h]
	\centering
	\includegraphics[width=0.8\textwidth]{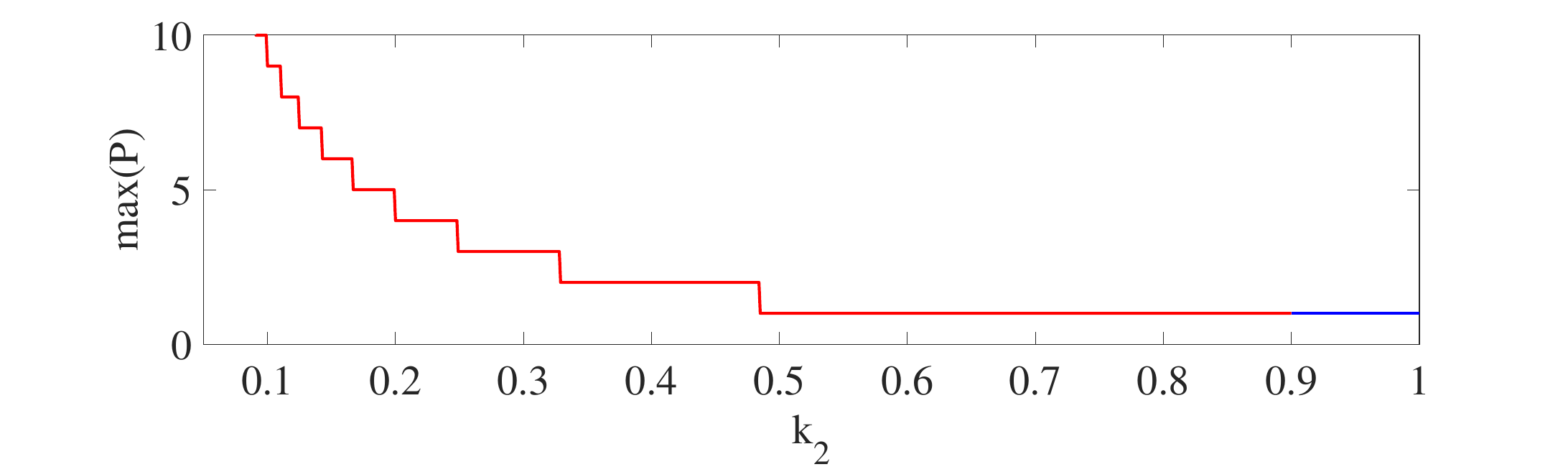}
	\caption{The maximum value of the parameter $P$ under the modulus $k_2^{(2)}$.}
	\label{fig:P}
\end{figure}

\section{The orbital stability of the two-phase solutions }\label{sec:orbital-stability}

The previous section provides the conditions for the spectral stability and P-subharmonic stability of two-phase solutions. Building on these results, we study the orbital stability of the above two-phase solutions in this section.
As we all know, the orbital stability, defined in \Cref{define:orbital stable}, is characterized in terms of the spectrum of the second variation. Since the Krein signature can evaluate the second variation, we convert studying the spectrum of the second variation to considering the Krein signature, which was used to establish the orbital stability of the periodic solutions in the defocusing mKdV equation \cite{Deconinck-10}, the cnoidal waves of the KdV equation \cite{DeconinckK-10} and the elliptic function solutions of the mKdV equation \cite{LingS-23-mKdV-stability}. 
To study the orbital stability, we elaborate on the corresponding Hamiltonian functionals and their variations. 

We consider some corresponding ordinary differential equations of the two-phase solutions.
If $u$ is the stationary solution of the equation $\mathcal{J}\hat{\mathcal{H}}_1^{\prime}(u)=0$ defined in equation \eqref{eq:H-mKdV}, combined with the solutions provided in \Cref{theorem:solution-u}, we obtain 
$\mathcal{JH}^{\prime}_3(u)-v\mathcal{JH}^{\prime}_1(u)=0$, $\mathcal{H}^{\prime}_3(u)-v\mathcal{H}^{\prime}_1(u)=\hat{c}_1$, 
where $\hat{c}_1=-32\lambda_1^2\lambda_2^2\lambda_3^2$.
Moreover, we obtain 
$\mathcal{F}(\mathcal{H}^{\prime}_3(u)-v\mathcal{H}^{\prime}_1(u)-\hat{c}_1)=0$, which deduce that
\begin{equation}\label{eq:hat-c}
	\mathcal{F}(\mathcal{H}^{\prime}_3(u)-v\mathcal{H}^{\prime}_1(u))=2\hat{c}_2u=2\hat{c}_2\mathcal{H}^{\prime}_1(u), \qquad \hat{c}_2=-4v_1,
\end{equation}
 by equations \eqref{eq:det-L-s} and \eqref{eq:define-R-u}.
Therefore, we obtain 
\begin{equation}\nonumber
	\begin{split}
		0=&\, \mathcal{J}\hat{\mathcal{H}}_3^{\prime}(u)=\mathcal{JH}_5^{\prime}(u)+c_{5,3}\mathcal{JH}_3^{\prime}(u)+c_{5,1}\mathcal{JH}_1^{\prime}(u)\\
		=&\, \mathcal{JFH}_3^{\prime}(u)+(c_{5,3}v+c_{5,1})\mathcal{JH}_1^{\prime}(u)\\
		=&\, \mathcal{JF}(\mathcal{H}_3^{\prime}(u)-v\mathcal{H}_1^{\prime}(u))+v\mathcal{JFH}_1^{\prime}(u)+(c_{5,3}v+c_{5,1})\mathcal{JH}_1^{\prime}(u)\\
		=&\, (2\hat{c}_2+v^2+c_{5,3}v+c_{5,1})\mathcal{JH}_1^{\prime}(u),
	\end{split}
\end{equation}
 which implies 
\begin{equation}\label{eq:c1c2}
	c_{5,1}=-v^2-c_{5,3}v-2\hat{c}_2, \qquad c_{5,3}\in \mathbb{R}.
\end{equation}
Similarly, the solution $u$ also satisfies the higher-order stationary equations $\mathcal{J}\hat{\mathcal{H}}'_n(u)=0, n=2,3,\cdots$. Their further details are provided in \cite{LingS-23-mKdV-stability}.

When $W(\xi;\Omega_1)$ satisfies $\Omega_1W(\xi;\Omega_1)=\mathcal{JL}_1W(\xi;\Omega_1)$ with $\Omega_1\in \ii\mathbb{R}$ (defined in equation \eqref{eq:spectral-Omega}), we consider the Krein signature $\mathcal{K}_1(\lambda)$. 
We first study a special case that $\Omega_1=0$, i.e., $\lambda=0,\lambda_{1,2,3},\lambda_{1,2,3}^*$. 
It is easy to know that when $\lambda=0$, the eigenfunction could be written as $W(\xi;0)=\partial_{\xi}u$, and the Krein signature is $\mathcal{K}_1=\left\langle\partial_{\xi}u,\mathcal{L}_1 \partial_{\xi}u \right\rangle_{L^2}=0$. 
When $\lambda=\lambda_{1,2,3},\lambda_{1,2,3}^*$, by analyzing the exponent part of functions $\Phi_{1}$ and $\Phi_{2}$, we know that the function $W(\xi;\Omega_1)$ is not a periodic one. 
Now, we consider the value of $\mathcal{K}_1(\lambda)$ when $\lambda\in \mathbb{R}$ and $\Omega_1\in \ii \mathbb{R}$. 
The function $W(\xi;\Omega_1)=2\lambda(\Phi_{1}^2-\Phi_{2}^2)\exp(-\Omega_1 \eta),\lambda\in \mathbb{R}\backslash\{0\}$ is the eigenfunction of the linearized spectral problem \eqref{eq:spectral-Omega} associated with the eigenvalue $\Omega_1$. 
From \cite{LingS-23-mKdV-stability} and $\Omega_1=8\ii y\in \ii \mathbb{R}$, we obtain
\begin{equation}\label{eq:WLW}
	\begin{split}
		W^*(\xi;\Omega_1)\mathcal{L}_1 W(\xi;\Omega_1)
		=2\ii \lambda\Omega_1 (\Phi_{1}^2+\Phi_{2}^2)(\Phi_{1}^{*2}-\Phi_{2}^{*2}).
	\end{split}
\end{equation}	
From the results of the previous section, we conclude that the exponential factors of the functions $\Phi_{1}(\xi,\eta)$ and $\Phi_{2}(\xi,\eta)$ are purely imaginary for all real variables $\xi,\eta\in \mathbb{R}$,
if and only if $\Re( \mathcal{I}^{(2)} (\lambda))=0$ and $\Re(\Omega)=0$. 
However, due to the lack of suitable additional formulas, it is not easy to obtain a simple form of the equation \eqref{eq:WLW} to proceeding the following calculations.
Then, we would like to study the function $(\Phi_{1}^2+\Phi_{2}^2)(\Phi_{1}^2-\Phi_{2}^2)^*$. 

\begin{lemma}\label{lemma:Phi+-}
	For any $\Omega(\lambda)\in \ii \mathbb{R}$ and $\lambda \in \mathbb{R}$ or $\lambda \in \ii \mathbb{R}$, we obtain
	\begin{equation}\label{eq:phi-11-21}
			\!\! (\Phi_{1}^2+\Phi_{2}^2)(\Phi_{1}^2-\Phi_{2}^2)^* \!
			=\! \pm4(y(2\lambda u^2-\beta_1-\beta_2)-8\lambda u_{\xi}(u-u_1)(u-u_2)(u-u_3)(u-u_4)/u),
	\end{equation}
	where $\beta_{1,2}=2\lambda^3 - v\lambda/2 \mp 2y$. 
	When $\lambda \in  \mathbb{R}$, choose ``$+$";
	when $\lambda \in  \ii\mathbb{R}$, choose ``$-$".
\end{lemma}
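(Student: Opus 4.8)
The plan is to verify the identity \eqref{eq:phi-11-21} by direct computation starting from the explicit theta-function representations of $\Phi_1$ and $\Phi_2$ in \eqref{eq:Phi-solution} (equivalently \eqref{eq:Phi-expression}), together with the relations $\Phi_{2i} = r_i \Phi_{1i}$ from \eqref{eq:relation-Phi-r} and the concrete formulas for $r_1$ in \eqref{eq:define-r-12}, $\mathbf{L}_{ij}$ in \eqref{eq:L-elements}, and the algebraic-curve identity $y^2 = \det(\mathbf{L}(\lambda)) = \prod_{i=1}^3(\lambda^2-\lambda_i^2)$. The key structural point is that both $\Phi_1^2+\Phi_2^2$ and $\Phi_1^2-\Phi_2^2$ are (up to a common scalar factor involving the exponentials and theta quotients) symmetric functions on the Riemann surface $\mathcal{R}_2$ in the two sheets over $\lambda$, so their product with a conjugate should reduce to a rational expression in $\lambda$, $y$, and the potential data $u, u_\xi$ (using $u_{xx}=4\sqrt{s_0}+4\alpha_1 u-2u^3$ from \eqref{eq:det-L-s} and $u_x^2=-R(u)=-(u-u_1)(u-u_2)(u-u_3)(u-u_4)$ from \eqref{eq:solution-u-ux}).

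First I would express $\Phi_1^2\pm\Phi_2^2$ in terms of $\Phi_{11},\Phi_{21}$ and then use $\Phi_{21}=r_1\Phi_{11}$ to write $\Phi_1^2\pm\Phi_2^2 = \Phi_{11}^2(1\pm r_1^2)$. From \eqref{eq:define-r-12}, $r_1 = (\ii y - \mathbf{L}_{11})/\mathbf{L}_{12}$, so $1+r_1^2 = (\mathbf{L}_{12}^2 + (\ii y-\mathbf{L}_{11})^2)/\mathbf{L}_{12}^2$ and $1-r_1^2 = (\mathbf{L}_{12}^2 - (\ii y-\mathbf{L}_{11})^2)/\mathbf{L}_{12}^2$; using $\det(\mathbf{L})=-\mathbf{L}_{11}^2-\mathbf{L}_{12}\mathbf{L}_{21}=y^2$ (recall $\mathbf{L}_{22}=-\mathbf{L}_{11}$) these simplify to rational functions of $\lambda,y,u,u_x,u_{xx}$ via \eqref{eq:L-elements}. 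The factor $|\Phi_{11}|^2$ (or rather $\Phi_{11}^2(\Phi_{11}^*)^2$ appropriately paired) must then be evaluated: here I would invoke the conjugation symmetry $\mathbf{L}^\dagger(\lambda^*)$ also solving the stationary zero-curvature equation, which for $\lambda\in\mathbb{R}$ or $\lambda\in\ii\mathbb{R}$ forces $\Phi_{11}(\xi;P)$ and $\Phi_{11}^*(\xi;P)$ to be related to the two sheets $P=(\lambda,\pm y)$, so that the product $\Phi_1^2 \overline{(\Phi_1^2-\Phi_2^2)}$ telescopes to something proportional to a product over the two sheets — which is exactly how $y$ (rather than $y^2$) appears linearly in \eqref{eq:phi-11-21}. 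The sign split ``$+$" for $\lambda\in\mathbb{R}$ versus ``$-$" for $\lambda\in\ii\mathbb{R}$ should come from whether $y$ is real or imaginary on the relevant bands, i.e. from tracking $\overline{y(\lambda^*)}=\pm y(\lambda)$.

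A cleaner route, which I would pursue in parallel, is to recognize that $\Phi_1^2-\Phi_2^2 = W(\xi;\Omega)\,e^{2\ii(\Omega_3^{(i)}-v\Omega_2^{(i)})\eta}$ is (up to the $\eta$-exponential) the squared eigenfunction, and that the combination $(\Phi_1^2+\Phi_2^2)(\Phi_1^2-\Phi_2^2)^*$ governs the Krein signature density $W^*\mathcal{L}_1 W$ via \eqref{eq:WLW}. Since $W$ solves a linear ODE with coefficients built from $u$, one can bypass the theta functions entirely: write $\Phi_1\Phi_1^* - \Phi_2\Phi_2^*$, $\Phi_1\Phi_2^*\pm\Phi_1^*\Phi_2$ etc. as components of the Hermitian/symmetric combination $\Phi\Phi^\dagger$ (or $\Phi\sigma_\bullet\Phi^\top$), which by the $x$-part zero-curvature equation satisfy closed first-order ODEs in $\xi$; matching the explicit solution of these ODEs against the right-hand side of \eqref{eq:phi-11-21} — whose $\xi$-derivative can be computed from $u_x^2=-R(u)$, $u_{xx}=4\sqrt{s_0}+4\alpha_1 u-2u^3$ — reduces the lemma to checking that both sides satisfy the same ODE and agree at one point. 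The main obstacle I anticipate is the bookkeeping of sheet choices and signs: correctly identifying which of $\pm y$ corresponds to $\Phi_{11}$ versus $\Phi_{11}^*$ on the real axis as opposed to the imaginary axis, and propagating this through the squaring, is delicate and is precisely what produces the $\pm$ dichotomy in the statement; the purely algebraic simplification of $\mathbf{L}_{12}^2\mp(\ii y-\mathbf{L}_{11})^2$ into the symmetric-function form in $u_1,\dots,u_4$ via \eqref{eq:define-u-lambda} and \eqref{eq:solution-u-ux} is routine but lengthy.
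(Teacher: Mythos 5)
Your reduction to $\Phi_1^2\pm\Phi_2^2=\Phi_{11}^2(1\pm r_1^2)$ and the subsequent algebra with \eqref{eq:define-r-12}, \eqref{eq:L-elements}, \eqref{eq:define-u-lambda}, \eqref{eq:solution-u-ux} is indeed the same algebra the paper performs, but the step you leave to a symmetry heuristic is exactly where the lemma lives, and as stated it does not work. For $\lambda\in\mathbb{R}$ the matrix $\mathbf{U}(\lambda;\mathbf{Q})$ is skew-Hermitian, so $\overline{\Phi_{11}(\xi;P)}$ solves the \emph{adjoint} system $\Psi_\xi=-\mathbf{U}^{\top}\Psi$, not the original Lax pair; it is therefore not a constant multiple of $\Phi_{11}(\xi;\sigma P)$ on the other sheet (their ratio is $\xi$-dependent, as one sees immediately from any explicit representation), and the product $(\Phi_1^2+\Phi_2^2)\overline{(\Phi_1^2-\Phi_2^2)}$ does not "telescope to a product over the two sheets." For $\lambda\in\ii\mathbb{R}$ the situation is worse, since $\lambda^*=-\lambda$ and one must additionally invoke the $\lambda\mapsto-\lambda$ transpose symmetry in \eqref{eq:symmety-U-V}; you mention that the sheet/sign bookkeeping is delicate and "produces the $\pm$ dichotomy," but that bookkeeping is precisely the content of the proof and is not supplied. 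In short: the conjugate $\overline{\Phi_{11}^2}$ cannot be evaluated by sheet symmetry alone, so \eqref{eq:phi-11-21} is not established by your primary route.

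What the paper does instead, and what your argument is missing, is an explicit \emph{elementary} (non-theta) representation of $\Phi_{1}$ in each case: for $\lambda\in\mathbb{R}$, $\Phi_1=\sqrt{\lambda u^2-\beta_1}\,\exp\bigl(\ii\lambda\xi+\int_0^\xi\tfrac{4\ii\lambda\beta_1-\ii L_0u}{2\lambda u^2-2\beta_1}\dd x+4\ii y\eta\bigr)$, and for $\lambda\in\ii\mathbb{R}$, $\Phi_1=\sqrt{2u(\lambda-\mu_1)(\lambda-\mu_2)}\,\exp\bigl(\int_0^\xi\tfrac{2\ii u y}{2\lambda^2u+\ii\lambda u_x-2|\lambda_1\lambda_2\lambda_3|}\dd x+4\ii y\eta\bigr)$, with $\Phi_2=r_1\Phi_1$. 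The whole point is that when $\Omega=8\ii y\in\ii\mathbb{R}$ (so $y\in\mathbb{R}$, $\beta_{1,2}$ real resp.\ $\beta_1^*=-\beta_2$) these exponents are manifestly purely imaginary, so complex conjugation acts only on the algebraic prefactor; the identity \eqref{eq:phi-11-21}, including the sign dichotomy, then follows by direct computation using \eqref{eq:define-R-u-mu} and \eqref{eq:solution-u-ux}. Your parallel "cleaner route" (closed ODEs for quadratic combinations plus matching at one point) faces the same obstruction: the mixed holomorphic/antiholomorphic combination $(\Phi_1^2+\Phi_2^2)(\Phi_1^2-\Phi_2^2)^*$ only closes under $\partial_\xi$ after the same reality information about the exponents (or the symmetries combined correctly for $\lambda\in\ii\mathbb{R}$) has been established, so it does not bypass the missing step.
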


\begin{proof}
	We will divide them into the following two cases. 
	One is $\lambda\in  \mathbb{R}$ and the other one is $\lambda\in  \ii \mathbb{R}$.
	 
	When $\lambda\in  \mathbb{R}$ and $\Omega(\lambda)\in \ii \mathbb{R}$, we get $y\in \mathbb{R}$, $\beta_{1,2}=2\lambda^3 - v\lambda/2 \mp 2y\in \mathbb{R}$ since $\Omega(\lambda)=8\ii y\in \ii \mathbb{R}$.
	Since $\pm \ii y$ are two eigenvalues of the matrix function $\mathbf{L}(\xi,\eta;\lambda)$, combining equations \eqref{eq:define-r-12} with \eqref{eq:relation-Phi-r}, we obtain that the vector solutions of the Lax pair could be expressed as
	\begin{equation}\nonumber
		\Phi_{1}= \sqrt{\lambda u^2(\xi)-\beta_1}\exp\left( \ii \lambda \xi+\int_{0}^{\xi}\frac{4\ii\lambda \beta_1- \ii L_0 u(x)}{2\lambda u^2(x)-2\beta_1}\dd x+4\ii y \eta\right), \qquad
		\Phi_{2}= r_1\Phi_{1},	
	\end{equation}
	where $L_0=u_{xx}+2u (u^2-2\alpha_1)=4\sqrt{s_0}$ is a constant and the function $r_1$ is given by the first equation in \eqref{eq:define-r-12}. 
	The detailed process of constructing functions $\Phi_{1}$ and $\Phi_{2}$ is provided in \cite{LingS-23-mKdV-stability}.
	Since $y\in \mathbb{R}$, $\beta_{1,2}\in \mathbb{R}$ and the solution $u(\xi)\in \mathbb{R}$, we obtain 
	\begin{equation}\nonumber
		\ii \lambda \xi+\int_{0}^{\xi}\frac{4\ii\lambda \beta_1- \ii L_0 u(x)}{2\lambda u^2(x)-2\beta_1}\dd x+4\ii y \eta\in \ii \mathbb{R}, \qquad \text{for any} \quad (x,t)\in \mathbb{R}^2.
	\end{equation}
	Therefore, from equations \eqref{eq:L-elements}, \eqref{eq:define-r-12} and \eqref{eq:define-R-u-mu},
we obtain 
	\begin{equation}\nonumber
		\begin{split} 
			(\Phi_{1}^2+\Phi_{2}^2)(\Phi_{1}^2-\Phi_{2}^2)^*
			\xlongequal{\eqref{eq:define-R-u-mu}} 4y(2\lambda u^2-\beta_1-\beta_2)-32\lambda u_{\xi}(u-u_1)(u-u_2)(u-u_3)(u-u_4)/u,
		\end{split}
	\end{equation}
  which implies that the equation \eqref{eq:phi-11-21} holds with the opposite sign.

	Then, we consider another case $\lambda \in  \ii \mathbb{R}$ and $\Omega(\lambda)\in \ii \mathbb{R}$.
	Combining equations \eqref{eq:define-r-12} with \eqref{eq:relation-Phi-r}, we obtain 
	\begin{equation}\nonumber
		\begin{split}
			\frac{\Phi_{1,\xi}}{\Phi_{1}}
			=&-\ii \lambda+\frac{\ii y +\ii \lambda^3+\ii \lambda(\alpha_1-u^2/2)}{\lambda^2+\ii u_{\xi}\lambda/(2u)-u_{\xi\xi}/(4u)+\alpha_1-u^2/2} 
			=\frac{2\lambda^2 u_{\xi}+\ii \lambda u_{\xi\xi}+4\ii uy }{4\lambda^2u+2\ii\lambda u_{\xi}-4|\lambda_1\lambda_2\lambda_3|}, 
		\end{split}
	\end{equation}
	and $\frac{\Phi_{1,\eta}}{\Phi_{1}}=4\ii y$,	 which implies that the fundamental solutions of the Lax pair could be expressed as
	\begin{equation}\nonumber
		\begin{split}
			\Phi_{1}
			=\sqrt{2 u(\lambda-\mu_1)(\lambda-\mu_2)}\exp\left(\int_{0}^{\xi}\frac{2\ii u(x)y \dd x}{2\lambda^2u(x)+\ii\lambda u_x(x)-2|\lambda_1\lambda_2\lambda_3|}+4\ii y \eta\right), 
		\end{split}
	\end{equation}
	and $\Phi_{2}=r_1\Phi_{1}$.
	Since $\lambda\in  Q^{(2)} _I$, we obtain $y\in \mathbb{R}$ and $\beta_1^*=-\beta_2$.
	So, it is easy to obtain 
	\begin{equation}\nonumber
		\int_{0}^{\xi}\frac{2\ii u(x)y \dd x}{2\lambda^2u(x)+\ii\lambda u_x(x)-2|\lambda_1\lambda_2\lambda_3|}+4\ii y \eta\in \ii \mathbb{R}, \qquad \text{for any} \quad (x,t)\in \mathbb{R}^2. 
	\end{equation}
Therefore, the equation \eqref{eq:phi-11-21} holds with the negative sign.
\end{proof}

\subsection{The orbital stability analysis of \ref{case2}} \label{subsec:orbital-cn}

\begin{lemma}\label{lemma:Krein}
	The Krein signature (\Cref{defin:Krein}) with respect to solution $u(x,t)$ defined in equation \eqref{eq:u2-elliptic} is 
	\begin{equation}\label{eq:Krein-value}
		\begin{split}
			\mathcal{K}_1(\lambda)
			=&\ 16|\lambda|^2 \Omega_1^2P  K^{(2)}_2 \left((\lambda_2^2-A-\lambda^2)+2A E^{(2)}_2/  K^{(2)}_2\right)/\alpha.
		\end{split}
	\end{equation} 
\end{lemma}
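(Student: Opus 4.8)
\textbf{Proof proposal for Lemma \ref{lemma:Krein}.}

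The plan is to compute $\mathcal{K}_1(\lambda)=\langle W_1,\mathcal{L}_1 W_1\rangle_{L^2}$ directly from the explicit squared-eigenfunction representation, using \Cref{lemma:Phi+-} to reduce the integrand to an algebraic expression in $u$ and $u_\xi$, and then integrating term by term over one period. First I would recall from the discussion preceding the lemma that $W_1=W(\xi;\Omega_1)=2\lambda(\Phi_1^2-\Phi_2^2)\exp(-\Omega_1\eta)$ with $\lambda\in\mathbb{R}$ (or $\lambda\in\ii\mathbb{R}$) and $\Omega_1=8\ii y\in\ii\mathbb{R}$, and that by equation \eqref{eq:WLW} we have $W_1^*\mathcal{L}_1W_1=2\ii\lambda\Omega_1(\Phi_1^2+\Phi_2^2)(\Phi_1^2-\Phi_2^2)^*$. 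The key move is then to substitute the closed form for $(\Phi_1^2+\Phi_2^2)(\Phi_1^2-\Phi_2^2)^*$ from \eqref{eq:phi-11-21} of \Cref{lemma:Phi+-}, which expresses this product as $\pm 4\bigl(y(2\lambda u^2-\beta_1-\beta_2)-8\lambda u_\xi(u-u_1)(u-u_2)(u-u_3)(u-u_4)/u\bigr)$, where $\beta_1+\beta_2=4\lambda^3-v\lambda$. This turns $\mathcal{K}_1(\lambda)$ into $\pm 8\ii\lambda\Omega_1$ times the integral over $[-PT,PT]$ of $y(2\lambda u^2-4\lambda^3+v\lambda) - 8\lambda u_\xi R(u)/u$, where I would use $u_\xi^2=-R(u)$ from \eqref{eq:define-R-u} and \eqref{eq:solution-u-ux} to identify the polynomial factor with $-u_\xi^2$.

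Next I would carry out the period integration. The term $\int u_\xi R(u)/u\,\dd\xi = -\int u_\xi u_\xi^2/u\,\dd\xi$; since $u$ is a genuine (nonconstant) periodic function, a careful choice of the constant of integration / parity argument, together with the fact that $u_\xi$ and $u_\xi^2/u$ combine into something whose period average can be evaluated using $u_\xi^2=-R(u)$ and the relation $u_{\xi\xi}=\tfrac12 R'(u)$, will reduce this to elementary period integrals of powers of $u$. The term $\int y(2\lambda u^2-4\lambda^3+v\lambda)\,\dd\xi$ contributes $y$ times $2\lambda\int u^2\,\dd\xi - (4\lambda^3-v\lambda)\cdot 2PT$, i.e.\ $y$ times a combination of the conserved quantity $\mathcal{H}_1$ (see \eqref{eq:H0}) and a multiple of the period. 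I would then express $\int u^2\,\dd\xi$ over the period of the genus-two/elliptic solution in terms of the elliptic parameters, using the explicit form \eqref{eq:u2-elliptic} and the standard elliptic-integral reductions already established in \Cref{section:exact-solution} (in particular the period $2T$ and the quantities $E^{(2)}_2$, $K^{(2)}_2$, $A$, $\lambda_2$). Collecting all contributions and using $\Omega_1=8\ii y$, $v=2(\lambda_1^2+\lambda_2^2+\lambda_3^2)$, and the relation $y^2=(\lambda^2-\lambda_1^2)(\lambda^2-\lambda_2^2)(\lambda^2-\lambda_3^2)$ to eliminate $y^2$, I expect the expression to collapse to $16|\lambda|^2\Omega_1^2 P K^{(2)}_2\bigl((\lambda_2^2-A-\lambda^2)+2AE^{(2)}_2/K^{(2)}_2\bigr)/\alpha$, where $\alpha$ is the elliptic scaling parameter from \eqref{eq:u2-elliptic}; note the factor $(\lambda_2^2-A-\lambda^2)+2AE^{(2)}_2/K^{(2)}_2$ is exactly the one appearing in $\mathcal{I}^{(2)\prime}(\lambda)$ in \eqref{eq:I-def-C-deriv} and in $\lambda_0$ via \eqref{eq:define-lambda-0}, so I would cross-check the computation against that derivative formula.

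The main obstacle I anticipate is the period integration of the $u_\xi R(u)/u$ term and the bookkeeping needed to show that all the non-matching pieces cancel, since this requires combining several elliptic-integral identities (period integrals of $\sn^2$, $\cn$, and rational functions of $\cn$ against the solution \eqref{eq:u2-elliptic}) and correctly tracking the sign ambiguity $\pm$ from \Cref{lemma:Phi+-} according to whether $\lambda\in\mathbb{R}$ or $\lambda\in\ii\mathbb{R}$. A secondary subtlety is that the integrand should a priori depend on the full $\xi$-primitive in $\Phi_1$, but—as already noted in the proof of \Cref{lemma:Phi+-}—the product $(\Phi_1^2+\Phi_2^2)(\Phi_1^2-\Phi_2^2)^*$ is manifestly $\xi$-periodic and independent of that primitive when $\Omega(\lambda)\in\ii\mathbb{R}$, so the integral is well defined; I would state this explicitly at the start. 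Once the algebra is organized, the final simplification to \eqref{eq:Krein-value} should follow by substituting the explicit values of $K^{(2)}_2$, $E^{(2)}_2$, $A$, and $\alpha$ in terms of $\lambda_{1,2,3}$ via \eqref{eq:u-parameters-cn} and \eqref{eq:define-u-lambda}.
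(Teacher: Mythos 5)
Your proposal is correct and follows essentially the same route as the paper: start from \eqref{eq:define-krein} and \eqref{eq:WLW}, substitute the closed form \eqref{eq:phi-11-21} from \Cref{lemma:Phi+-}, observe that the $u_\xi$-term contributes nothing over the period (the paper drops it silently; your parity argument, with $u_\xi^2=-R(u)$ making that term an odd function of $\xi$, is the justification), and then evaluate $\int_{-PT}^{PT}u^2\,\dd\xi$ via the elliptic form \eqref{eq:u2-elliptic} exactly as in \eqref{eq:u2-solution-int}, collecting with $\Omega_1=8\ii y$ and $\beta_1+\beta_2=4\lambda^3-v\lambda$. Your remark about tracking the $\pm$ sign for $\lambda\in\mathbb{R}$ versus $\lambda\in\ii\mathbb{R}$ is precisely how $\lambda^2$ becomes $|\lambda|^2$ in \eqref{eq:Krein-value}, matching the paper's treatment.
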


\begin{proof}
Considering \ref{case2} with the solution $u(x,t)$ expressed in equation \eqref{eq:u2-elliptic} under the transformation \eqref{eq:transformation-xi-eta},
we consider the integration 
\begin{equation}\label{eq:u2-solution-int}
	\begin{split}
		\int_{-PT}^{PT} u^2 \dd \xi 
		\xlongequal[\eqref{eq:equ-2}]{\eqref{eq:u2-elliptic}}
		&\ \frac{2P}{\alpha}\int_{0}^{2  K^{(2)}_2} \left(\frac{A+B}{\ii \lambda_2}-\frac{2A}{\ii \lambda_2}\frac{1-\alpha_1\cn(\xi,  k^{(2)}_2)}{1-\alpha_1^2+\alpha_1^2\sn^2( \xi,  k^{(2)}_2)}\right)^2 \dd \xi \\
		\xlongequal[\eqref{eq:u-parameters-cn-k},\eqref{eq:1-1}]{\eqref{eq:define-third-integral},\eqref{eq:integral}}
		& -\frac{4P}{\alpha}\left((\lambda_1^2-\lambda_2^2+\lambda_3^2+2A)  K^{(2)}_2-4A E^{(2)}_2\right), \qquad \alpha_1=\frac{\delta-1}{1+\delta}.
	\end{split}
\end{equation}
When $\lambda\in  Q^{(2)} _{R}$, we obtain 
\begin{equation}\nonumber
	\begin{split}
		\mathcal{K}_1(\lambda)
		\xlongequal[\eqref{eq:phi-11-21}]{\eqref{eq:define-krein}}&\int_{-PT}^{PT}2\ii \lambda \Omega_1 4y\left(2\lambda u^2(\xi)-\beta_1-\beta_2\right)\dd \xi\\
	\xlongequal[\eqref{eq:define-Omega}]{\eqref{eq:u2-solution-int}}
		& 16\lambda^2 \Omega_1^2P  K^{(2)}_2 \left((\lambda_2^2-A-\lambda^2)+2A E^{(2)}_2/  K^{(2)}_2\right)/\alpha.
	\end{split}  
\end{equation}
Similarly, when $\lambda \in  Q^{(2)} _I$,  we can also obtain the equation \eqref{eq:Krein-value}.
\end{proof}
From the \Cref{lemma:Krein}, we obtain the Krein signature $\mathcal{K}_1(\lambda)$.
Then, by the integrability we deduce the Krein signature $\mathcal{K}_{2}(\lambda)$.

\newenvironment{proof-kerin}{\emph{Proof of \Cref{prop:kerin}.}}{\hfill$\Box$\medskip}
\begin{proof-kerin}
From \Cref{prop: zr0 zi0} and \Cref{lemma:Krein}, the value of $\mathcal{K}_1(\lambda)$ could be classified into the following cases:
\begin{itemize}
	\item[(i)] When $2 E^{(2)}_2A/  K^{(2)}_2-A>-\lambda_2^2$, there exists a point $0<\lambda_0\in \mathbb{R}\backslash\{0\}$ such that $\mathcal{K}_1(\pm\lambda_0)=0$. And then for any $\lambda\in  Q^{(2)} _R$, it is easy to obtain that: for
	$\lambda \in (-\infty,-\lambda_0)\cup (\lambda_0,+\infty)$, $\mathcal{K}_1(\lambda)>0$; for $\lambda \in (-\lambda_0,0)\cup (0,\lambda_0)$, $\mathcal{K}_1(\lambda)<0$ and for $\lambda =0,\pm\lambda_0,$  $\mathcal{K}_1(\lambda)=0$. 
	For $\lambda\in  Q^{(2)} _I$, it is found that: for $\lambda =\pm\lambda_2$,  $\mathcal{K}_1(\lambda)=0$; while for $\lambda\in  Q^{(2)} _I\backslash\{0,\pm\lambda_2\}$, $\mathcal{K}_1(\lambda)<0$.
	\item[(ii)] When $2 E^{(2)}_2A/  K^{(2)}_2-A=-\lambda_2^2$, we obtain that for any $\lambda\in  Q^{(2)} _R\backslash\{0\}$, $\mathcal{K}_1(\lambda)>0$ and for $\lambda\in  Q^{(2)} _I\backslash\{0,\pm \lambda_{2}\}$, $\mathcal{K}_1(\lambda)<0$. 
	If and only if $\lambda=0$ or $\lambda=\lambda_2$, we have $\mathcal{K}_1(\lambda)=0$.
	
	\item[(iii)] When $0<2 E^{(2)}_2A/  K^{(2)}_2-A<-\lambda_2^2$, there exists a point $\lambda_0\in \ii \mathbb{R}\backslash\{0\}$ such that $\mathcal{K}_1(\pm\lambda_0)=0$. 
	Then for any $\lambda\in  Q^{(2)} _R$, it is easy to obtain 
	$\mathcal{K}_1(\lambda)\ge 0$.  
	When $\lambda\in  Q^{(2)} _I$, we obtain if $\Im(\lambda) \in (-\Im(\lambda_0),0)\cup(0,\Im(\lambda_0))$, then $\mathcal{K}_1(\lambda)> 0$ and if $\Im(\lambda) \in (-\Im(\lambda_2),-\Im(\lambda_0))\cup(\Im(\lambda_0),\Im(\lambda_2))$, then $\mathcal{K}_1(\lambda)<0$. If and only if $\lambda=0,\pm \lambda_0,\pm \lambda_2$, we have $\mathcal{K}_1(\lambda)=0$.
	\item[(iv)] When $2 E^{(2)}_2A/  K^{(2)}_2-A \le 0$, the same as above, for any $\lambda\in  Q^{(2)} _R$, it is easy to obtain 
	$\mathcal{K}_1(\lambda) \ge 0$. And  
	for any $\lambda\in  Q^{(2)} _I$, $\mathcal{K}_1(\lambda)\ge 0$.
\end{itemize} 
Thus, when $2 E^{(2)}_2A/  K^{(2)}_2-A\le 0$, we obtain that for any $\lambda \in  Q^{(2)} _R\cup  Q^{(2)} _I$, $\mathcal{K}_1(\lambda)\ge 0$, and when $2 E^{(2)}_2A/  K^{(2)}_2-A> 0$, not all $\lambda \in  Q^{(2)} _R\cup  Q^{(2)} _I$ such that $\mathcal{K}_1(\lambda)\ge 0$.

We invoke to calculate the value of $\mathcal{K}_2(\lambda)$. By equation \eqref{eq:spectral-Omega}, we get
\begin{equation} \nonumber
	\mathcal{K}_n(\lambda)
	=\left\langle W,\mathcal{L}_nW \right\rangle_{L^2}
	=\left\langle W,\Omega_n \mathcal{J}^{-1} W \right\rangle_{L^2}
	=\frac{\Omega_n}{\Omega_1}\left\langle W,\mathcal{L} W \right\rangle_{L^2}
	=\frac{\Omega_n}{\Omega_1} \mathcal{K}_1(\lambda).
\end{equation}
The relationship between $c_{5,3}$ and $c_{5,1}$ is obtained in equation \eqref{eq:c1c2}. 
By the AKNS hierarchy in equation \eqref{eq:H-mKdV}, we derive the $n$-th order mKdV equation, where $n$ is odd.
The related Lax pair could be expressed as $\Phi_{\eta_{n}}=\mathbf{\hat{V}}_{n}\Phi$, where $\hat{\mathbf{V}}_n=\mathbf{V}_{2n+1}+\sum_{i=0}^{2n} c_{n,i} \mathbf{V}_i$, are defined in equation \eqref{eq:Theta-i-expression-diag}.
The corresponding Lax pair of the \ref{eq:mKdV} equation could be rewritten as $\hat{\mathbf{V}}_1=4\mathbf{V}_3-v\mathbf{V}_1$, where $\mathbf{V}_i$ are defined in equation \eqref{eq:Theta-i-expression-diag}.
The eigenvalue $\Omega$ is determined by the solution $\Phi(\xi,\eta;\lambda)$ of the Lax pair \eqref{eq:Lax-pair} under the transformation \eqref{eq:transformation-xi-eta}. 
Furthermore, we obtain $\det(\hat{\mathbf{V}}_1-\Omega_1/2)=0$.
Considering the Lax pair $\Phi_{\xi}=\mathbf{U}\Phi,\Phi_{\eta}=\hat{\mathbf{V}}_1\Phi$,
	since the solution $u(x,t)$ under the transformation \eqref{eq:transformation-xi-eta} could be rewritten as $u(\xi)$, which is independent with respect to the variable $\eta$, the solution of the $\Phi_{\eta_2}=\hat{\mathbf{V}}_2\Phi$ must satisfy $\Phi_{\eta_2}=(\Omega_2/2)\Phi=\hat{\mathbf{V}}_2\Phi$, where $\hat{\mathbf{V}}_2=16\mathbf{V}_5+4c_{5,3}\mathbf{V}_3+c_{5,1}\mathbf{V}_1$ and $\Omega_2/2$ is the eigenvalue of the matrix $\hat{\mathbf{V}}_2$,
	i.e.,
	$\det(\hat{\mathbf{V}}_5-\Omega_2/2)=0$.
	For the higher-order Hamiltonian $\hat{\mathcal{H}}_n(u)$, defined in equation \eqref{eq:H-mKdV} satisfying $\mathcal{J}\hat{\mathcal{H}}_n^{\prime}(u)=0$, together with the associated Lax pair $\Phi_{\xi}=\mathbf{U}\Phi,\Phi_{\eta_n}=\hat{\mathbf{V}}_n\Phi$. Assume $m>n$, it is also a stationary solution of the $m$-th mKdV flow, i.e. $\mathcal{J}\hat{\mathcal{H}}_m^{\prime}(u)=0$. 
	We consider the eigenvalues $\Omega_n/2$ and $\Omega_m/2$ of the matrices $\hat{\mathbf{V}}_n$ and $\hat{\mathbf{V}}_m$.
	These eigenvalues must satisfy the following relationship:
	\begin{lemma}\label{lemma:Omega-n-m}
		The eigenvalues of $\Omega_m=\Omega_m(\lambda)$ and $\Omega_n=\Omega_n(\lambda)$ must satisfy $\Omega_m^2=p_m^2(\lambda)\Omega_n^2$, where $p_m(\lambda)$ is a polynomial of degree $m-n>0$ with respect to the spectral parameter $\lambda$.
	\end{lemma}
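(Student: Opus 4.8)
\textbf{Proof proposal for Lemma \ref{lemma:Omega-n-m}.}

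The plan is to exploit the fact that all the higher-order flows in the mKdV hierarchy share the same Lax operator $\mathbf{U}(\lambda;\mathbf{Q})$ in the $\xi$-direction, while the $\eta_n$-evolutions are governed by the polynomial-in-$\lambda$ matrices $\hat{\mathbf{V}}_n(\lambda;\mathbf{Q})$. Since $u(\xi)$ is a stationary solution of every flow with $m>n$ (i.e.\ $\mathcal{J}\hat{\mathcal{H}}_\ell'(u)=0$ for $\ell=n$ and $\ell=m$), the matrices $\hat{\mathbf{V}}_n$ and $\hat{\mathbf{V}}_m$ both commute with $\mathbf{U}$ up to an $x$-derivative, and more importantly the associated curves $\det(\hat{\mathbf{V}}_n-\Omega_n/2)=0$ and $\det(\hat{\mathbf{V}}_m-\Omega_m/2)=0$ are defined over the \emph{same} spectral parameter $\lambda$. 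First I would observe that $\hat{\mathbf{V}}_n$ and $\hat{\mathbf{V}}_m$ are both polynomials in $\lambda$ with matrix coefficients built from $\mathbf{Q}$ and its $\xi$-derivatives; being $2\times2$ traceless matrices, each satisfies $\hat{\mathbf{V}}_\ell^2 = -\det(\hat{\mathbf{V}}_\ell)\,\mathbb{I}_2$, so $\Omega_\ell^2/4 = -\det(\hat{\mathbf{V}}_\ell)$ is a scalar polynomial in $\lambda$ whose coefficients are, a priori, functions of $\xi$.

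The key step is to show that these determinants are in fact $\xi$-independent and proportional. Both $\hat{\mathbf{V}}_n$ and $\hat{\mathbf{V}}_m$ are linear combinations of the building blocks $\mathbf{V}_i(\lambda;\mathbf{Q}) = -\ii\lambda^i\sum_{j=0}^i \Psi_j \lambda^{-j}$, i.e.\ up to the scalar $-\ii\lambda^{i}$ they are truncations of the single generating matrix $\Psi(x,\mathbf{t};\lambda)$, which satisfies the zero-curvature equations \eqref{eq:zero-curve-equation} and the algebraic identity $\Psi^2=\mathbb{I}_2$. I would use \Cref{prop:L-matrix}: on the stationary solution, any such truncation $\mathbf{L}(\lambda)=\sum \alpha_i \mathbf{V}_i$ has $\det(\mathbf{L}(\lambda))$ independent of $\xi$ and $\eta$ (this is exactly the statement that the spectral curve of the stationary problem is conserved, cf.\ equations \eqref{eq:det-L-lambda}--\eqref{eq:define-curve-algebro}). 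Applying this with the coefficients of $\hat{\mathbf{V}}_n$ gives $\Omega_n^2 = -4\det(\hat{\mathbf{V}}_n) =: -c_n\,y^2$ for a constant $c_n$, where $y^2=\prod_{i=1}^3(\lambda^2-\lambda_i^2)$ is the same genus-two curve; likewise $\Omega_m^2 = -c_m\, r_m(\lambda)^2\, y^2$, where $r_m(\lambda)$ collects the extra polynomial factor of degree $m-n$ coming from the higher powers of $\lambda$ in $\hat{\mathbf{V}}_m$ relative to $\hat{\mathbf{V}}_n$ (all of the additional branch points must cancel, since the \emph{same} six branch points $\pm\lambda_{1,2,3}$ determine the stationary solution $u$ and hence its Lax spectrum). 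Dividing, $\Omega_m^2/\Omega_n^2 = (c_m/c_n) r_m(\lambda)^2$, and absorbing $\sqrt{c_m/c_n}$ into the polynomial yields $\Omega_m^2 = p_m(\lambda)^2\Omega_n^2$ with $\deg p_m = m-n>0$.

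Concretely, the cleanest route is to work from the asymptotics $\Omega_\ell(\lambda) = 2\ii(\Omega_{2\ell+1}(P) - \dots)$-type expansions near $\infty^\pm$: from \eqref{eq:define-Omega} we already have $\Omega(\lambda)=8\ii y$ for the mKdV flow ($n=1$), and the leading behaviour of $\hat{\mathbf{V}}_m$ as $\lambda\to\infty$ is $-\ii\lambda^{2m+1}\sigma_3 + \cdots$, whose determinant leading term is $-\lambda^{2(2m+1)}$; comparing with $-\lambda^{2(2n+1)}$ for $\hat{\mathbf{V}}_n$ pins down $\deg p_m = (2m+1)-(2n+1) = 2(m-n)$ in $\lambda$, consistent with a polynomial of degree $m-n$ in $\lambda^2$, or equivalently degree $m-n>0$ after accounting for the evenness forced by the $\lambda\mapsto-\lambda$ symmetry \eqref{eq:symmety-U-V}. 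The main obstacle I anticipate is not the algebra but the bookkeeping: verifying rigorously that \emph{all} the apparent extra branch points of $\det(\hat{\mathbf{V}}_m)$ beyond $\pm\lambda_{1,2,3}$ actually cancel, so that $\det(\hat{\mathbf{V}}_m)/\det(\hat{\mathbf{V}}_n)$ is a perfect square polynomial rather than merely a rational function. This follows from the fact that $\mathbf{L}(\lambda)=\hat{\mathbf{V}}_m$ and $\hat{\mathbf{V}}_n$ have the same kernel eigenvectors $[1,r_{1,2}(P)]^\top$ on the common Riemann surface $\mathcal{R}_2$ (the construction in \Cref{sec:algebro-geometric-approach} uses only $\mathbf{U}$ and one flow), so $\pm\ii y$ being an eigenvalue of $\hat{\mathbf{V}}_n$ forces a corresponding eigenvalue of $\hat{\mathbf{V}}_m$ that is a polynomial multiple of $y$; I would make this precise by evaluating both $\hat{\mathbf{V}}$'s on the shared eigenvector and tracking the multiplicity of $y$ as a factor.
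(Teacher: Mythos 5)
Your proposal assembles the right raw ingredients (tracelessness of $\hat{\mathbf{V}}_\ell$, constancy of the determinants on the stationary solution, simultaneous eigenvectors of $\hat{\mathbf{V}}_n$ and $\hat{\mathbf{V}}_m$), but the central step of the lemma is asserted rather than proved, and the justification you offer for it is not correct. You claim $\Omega_n^2=-c_n y^2$ and that ``all of the additional branch points must cancel, since the same six branch points $\pm\lambda_{1,2,3}$ determine the stationary solution $u$ and hence its Lax spectrum.'' First, $\Omega_n^2/4=-\det(\hat{\mathbf{V}}_n)$ is a polynomial of degree $2(2n+1)$ in $\lambda$, so for the lemma as stated (general $n$, with $u$ stationary for the $n$-th flow) it is \emph{not} proportional to the genus-two sextic $y^2$; the paper writes it as $s_n\prod_{i=1}^{4n+2}(\lambda-\lambda_i)$. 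Second, and more importantly, the extra zeros of $\Omega_m$ do \emph{not} cancel: e.g.\ $\Omega_2=(v+4\lambda^2+c_{5,3})\Omega_1$ genuinely vanishes at $4\lambda^2=-(v+c_{5,3})$, points which are not branch points of the curve. The actual content of the lemma is that every zero of $\Omega_n$ is a zero of $\Omega_m$, and that every zero of $\Omega_m$ which is \emph{not} a zero of $\Omega_n$ occurs to even order, so that the ratio is a perfect square $p_m^2(\lambda)$. Your closing sentence (``evaluating both $\hat{\mathbf{V}}$'s on the shared eigenvector and tracking the multiplicity of $y$'') gestures at the right mechanism but never carries it out, and without it the ``perfect square'' conclusion does not follow.

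The paper's proof supplies exactly the missing multiplicity argument, and it is short: since $u$ is $\eta_n$- and $\eta_m$-independent, the Lax solutions factor as $[\psi_1(\lambda)\ \psi_2(\lambda)]\,\ee^{\sigma_3\Omega_\ell(\lambda)\eta_\ell/2}$ with $\psi_{1,2}$ common eigenvectors of $\hat{\mathbf{V}}_n$ and $\hat{\mathbf{V}}_m$ for the eigenvalues $\pm\Omega_n/2$ and $\pm\Omega_m/2$ respectively. (a) At a point $\lambda_0$ with $\Omega_n(\lambda_0)=0$ the two eigenvectors of $\hat{\mathbf{V}}_n$ coalesce, $\psi_1=\psi_2$; this single vector is an eigenvector of $\hat{\mathbf{V}}_m$ for both $+\Omega_m(\lambda_0)/2$ and $-\Omega_m(\lambda_0)/2$, forcing $\Omega_m(\lambda_0)=0$, hence $\Omega_m^2=g_m(\lambda)\Omega_n^2$ with $g_m$ polynomial. (b) At a point where $\Omega_m(\lambda_0)=0$ but $\Omega_n(\lambda_0)\neq0$, the eigenvectors $\psi_1\neq\psi_2$ are independent and both lie in the kernel of the traceless matrix $\hat{\mathbf{V}}_m(\lambda_0)$, so that zero eigenvalue has multiplicity two and $\lambda_0$ is at least a double zero of $\det\hat{\mathbf{V}}_m$; hence $g_m=p_m^2$. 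Your degree bookkeeping at $\lambda\to\infty$ (leading term $-\ii\lambda^{2m+1}\sigma_3$, evenness under $\lambda\mapsto-\lambda$) is fine and matches the example $p_2=v+4\lambda^2+c_{5,3}$, but it only fixes the degree of $p_m$ once the square structure has been established by an argument of type (a)--(b).
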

	\begin{proof}
		In virtue of the definition of matrices 
		\begin{equation}\nonumber
			\hat{\mathbf{V}}_{n}=\begin{bmatrix}
				A_n & B_n \\ C_n & -A_n
			\end{bmatrix}, \quad \text{and} \quad
			\hat{\mathbf{V}}_{m}=\begin{bmatrix}
				A_m & B_m \\ C_m & -A_m
			\end{bmatrix},
		\end{equation}
		we know that eigenvalues $\pm\Omega_{n,m}(\lambda)/2$ of matrices $\hat{\mathbf{V}}_{n}$ and $\hat{\mathbf{V}}_{m}$ satisfy
		\begin{equation}\label{eq:eigenvalue}\nonumber
			\begin{split}
					&\ \det(\hat{\mathbf{V}}_n\mp \Omega_n(\lambda)/2)=\Omega_n^2(\lambda)/4-(A_n^2+B_nC_n)=0, \\
				&\ \det(\hat{\mathbf{V}}_m\mp \Omega_m(\lambda)/2)=\Omega_m^2(\lambda)/4-(A_m^2+B_mC_m)=0.
			\end{split}
		\end{equation}
		Since $\Phi_{\eta_n}=(\Omega_{n}(\lambda)/2)\Phi=\hat{\mathbf{V}}_n\Phi$ and the matrix $\hat{\mathbf{V}}_n$ is independent of the variable $\eta_n$, the solution of this ordinary differential equation could be expressed as $[\psi_1(\lambda) \,\, \psi_2(\lambda)]\ee^{\sigma_3\Omega_{n}(\lambda)/2\eta_n}$, where $\psi_{1,2}(\lambda)$ are two linearly independent functions and do not depend on the variables $\eta_n$.
		Therefore, it follows that $\psi_{1,2}(\lambda)$ are eigenfunctions of the matrix $\hat{\mathbf{V}}_n$ with respect to the eigenvalue $\pm\Omega_{n}(\lambda)/2$, respectively.
		Moreover, when $m>n$, $[\psi_1(\lambda) \,\, \psi_2(\lambda)]\ee^{\sigma_3\Omega_{m}(\lambda)/2\eta_m}$ is the solution of the equation  $\Phi_{\eta_m}=(\Omega_{m}(\lambda)/2)\Phi=\hat{\mathbf{V}}_m\Phi$, which implies $\psi_{1,2}(\lambda)$ are also eigenfunctions of the matrix $\hat{\mathbf{V}}_m$ with respect to eigenvalues $\pm\Omega_{m}(\lambda)/2$.
		Without loss of generality,	we set $\Omega_n^2(\lambda)/4=A_n^2+B_nC_n=s_n\prod_{i=1}^{4n+2} (\lambda-\lambda_i)$, where $s_n$ is independent of $\lambda$. 
		If $ \Omega_n(\lambda_0)=0$, we get $\psi_1(\lambda)=\psi_2(\lambda)$.
		Since $\psi_1(\lambda)=\psi_2(\lambda)$ is also the eigenfunction of the matrix $\hat{\mathbf{V}}_m$ with respect to eigenvalue $\pm \Omega_{m}(\lambda_0)$, 
		it follows that $\Omega_m(\lambda_0)=0$. So  the eigenvalue could be rewritten as $\Omega_m^2=g_m(\lambda)\Omega_n^2$.
		If $\Omega_m(\lambda_0)=0$ and $\Omega_n(\lambda_0) \neq 0$, 
		there must exist two linearly independent eigenfunctions, corresponding to an eigenvalue of multiplicity two.
		So, the order of $\lambda_0$ such that $\Omega_m(\lambda_0)=0$ is two. Therefore, we conclude $\Omega_m^2(\lambda)=p_m^2(\lambda) \Omega_n^2(\lambda)$.
		\end{proof}
For the KdV equation, the above related theorem was given in the previous literature \cite{NivalaD-2010periodic}. Here we provide an alternative proof to the mKdV case. In what follows, we will provide the required explicit polynomials $p_m(\lambda)$.
Through the compatibility condition of the linear system: $\Phi_{\xi\eta}=\Phi_{\eta\xi}$, we obtain the zero-curvature equation with respect to $\hat{\mathbf{V}}_{1}$: $\mathbf{U}_{\eta}-\hat{\mathbf{V}}_{1,\xi}+[\mathbf{U},\hat{\mathbf{V}}_{1}]=0$. 
Collecting the coefficients of the spectral parameter $\lambda$ for this zero-curvature equation, we obtain that all coefficients vanish.
For the constant term, we get
\begin{equation}\label{eq:stationary-2}
	   \begin{split}
		0=&\ 4\ii \Psi_{3,\xi}-v\ii \Psi_{1,\xi} -[\Psi_1,4\Psi_{3}]\\	
		\xlongequal[\eqref{eq:Theta-i-expression-diag}]{\eqref{eq:Theta-i-expression-off}}&\ 
		 8\sigma_3\Psi_4^{\off}-2v\sigma_3\Psi_2^{\off}-[\Psi_1,4\Psi_{3}^{\off}]-2\ii \sigma_3(\Psi_{1,\xi}\Psi_2^{\off}+\Psi_2^{\off}\Psi_{1,\xi})-2\ii \sigma_3(\Psi_{2,\xi}^{\off}\Psi_1+\Psi_1\Psi_{2,\xi}^{\off}) \\
		\xlongequal{\eqref{eq:Theta-i-expression-off}}&\ 2\sigma_3\left(4\Psi_4^{\off}-v\Psi_2^{\off}\right),
	\end{split}
\end{equation}
where $[\Psi_1,4\Psi_3]=0$ by equation \eqref{eq:Psi-i}.
Then we consider the equation 
\begin{equation}\label{eq:stationary-4}
	\begin{split}
		-(8\Psi_5-2v\Psi_3)^{\diag}\xlongequal{\eqref{eq:Theta-i-expression-diag}}&\ 4\sigma_3(\Psi_4\Psi_1+\Psi_1\Psi_4+\Psi_2\Psi_3+\Psi_3\Psi_2)^{\diag}-v\sigma_3(\Psi_2\Psi_1+\Psi_1\Psi_2)^{\diag}\\
		=&\ \sigma_3(4\Psi_4^{\off}-v\Psi_2^{\off})\Psi_1+\sigma_3\Psi_1(4\Psi_4^{\off}-v\Psi_2^{\off})\xlongequal{\eqref{eq:stationary-2}}0,
	\end{split}
\end{equation}
where $(\Psi_2\Psi_3+\Psi_3\Psi_2)^{\diag}=0$ by equation \eqref{eq:Psi-i}.

Moreover, we consider some stationary equations deduced by the Hamiltonian functional and the recursion operator $\mathcal{F}$ in equations \eqref{eq:H-mKdV} and \eqref{eq:c1c2}. 
Since $\mathcal{F}(\mathcal{H}^{\prime}_3(u)-v\mathcal{H}^{\prime}_1(u))-2\hat{c}_2u=0$ by equation \eqref{eq:hat-c} and $\mathcal{J}\mathcal{F}(\mathcal{H}^{\prime}_3(u)-v\mathcal{H}^{\prime}_1(u))-2\hat{c}_2\mathcal{J}u=0$, we obtain
\begin{equation}\label{eq:stationary-31}
	0= 16\ii \Psi_{5,\xi}^{\off}-4v\ii \Psi_{3,\xi}^{\off}-2\hat{c}_2\ii \Psi_{1,\xi}^{\off}-[\Psi_1,16 \Psi_{5}-4v \Psi_{3}]^{\off}
	\xlongequal{\eqref{eq:stationary-4}}16\ii \Psi_{5,\xi}^{\off}-4v\ii \Psi_{3,\xi}^{\off}-2\hat{c}_2\ii \Psi_{1,\xi}^{\off}.
\end{equation}
Together with equation \eqref{eq:stationary-31} and $\mathcal{F}(\mathcal{H}^{\prime}_3(u)-v\mathcal{H}^{\prime}_1(u))-2\hat{c}_2u=0$ in equation \eqref{eq:hat-c}, we obtain
\begin{equation}\label{eq:stationary-32}
	\begin{split}
		0=&\ 16\ii \Psi_{5}^{\off}-4v\ii \Psi_{3}^{\off}-2\hat{c}_2\ii \Psi_{1}^{\off}\\
		\xlongequal{\eqref{eq:Theta-i-expression-off}}&\  -8( \Psi_{4,x}^{\off}+\ii[\Psi_1,\Psi_4^{\diag}])+2v( \Psi_{2,x}^{\off}+\ii[\Psi_1,\Psi_2^{\diag}]) +\ii\hat{c}_2 [\Psi_1,\Psi_0]\\
		\xlongequal{\eqref{eq:stationary-2}} &\-\frac{\ii}{2} [\Psi_1,16\Psi_4^{\diag}-4v\Psi_2^{\diag}-2 \hat{c}_2 \Psi_0],
	\end{split}
\end{equation}
which implies $16\Psi_4^{\diag}-4v\Psi_2^{\diag}-2 \hat{c}_2 \Psi_0=0$.
Combining with equation \eqref{eq:stationary-2}, we deduce
\begin{equation}\label{eq:stationary-5}
	16\Psi_4-4v\Psi_2-2 \hat{c}_2 \Psi_0=0.
\end{equation}
By equations \eqref{eq:stationary-4} and \eqref{eq:stationary-32},
we get 
\begin{equation}\label{eq:stationary-6}
	16\Psi_5-4v\Psi_3-2 \hat{c}_2 \Psi_1=0.
\end{equation}
Based on equations \eqref{eq:H-mKdV} and \eqref{eq:c1c2}, we get
\begin{equation}\nonumber
	\begin{split}
		\hat{\mathbf{V}}_3
		\xlongequal{\eqref{eq:c1c2}} &\ 16\mathbf{V}_5+4c_{5,3}\mathbf{V}_3+c_{5,1}\mathbf{V}_1 \\
		\xlongequal[\eqref{eq:Theta-expand-lambda-infty}]{\eqref{eq:U-V-n}} &\ 
		 (4 \lambda^2+c_{5,3}+v) \hat{\mathbf{V}}_2 +(v^2+c_{5,1}+c_{5,3} v)\mathbf{V}_1+4\ii v\left( \lambda \Psi_2+\Psi_3\right)-16\ii (\lambda\Psi_4+\Psi_5)\\
		\xlongequal[\eqref{eq:c1c2}]{\eqref{eq:stationary-5},\eqref{eq:stationary-6}}
		&\  (4 \lambda^2+c_{5,3}+v) \hat{\mathbf{V}}_2.
	\end{split}
\end{equation} 
By the linear algebra, the eigenvalue $\Omega_2$ of the fifth-order mKdV equation demonstrates $\Omega_2
=(v+4\lambda^2+c_{5,3})\Omega_1$, which had also been proven in \cite{Deconinck-10}.
Therefore, choosing $c_{5,3}=4(A-\lambda_2^2)-v-8A E^{(2)}_2/  K^{(2)}_2$,  the Krein signature $\mathcal{K}_2(\lambda)$ is linearly related to the function $\mathcal{K}_1(\lambda)$ via the equation 
\begin{equation}\label{eq:K_2(z)}
	\mathcal{K}_2(\lambda)
	=-64|\lambda|^2 \Omega_1^2P  K^{(2)}_2 \left((\lambda_2^2-A-\lambda^2)+2A E^{(2)}_2/  K^{(2)}_2\right)^2/\alpha .
\end{equation}
We have $\mathcal{K}_2(\lambda)\ge 0$, for any $\lambda\in  Q^{(2)} _I\cup  Q^{(2)} _R$ with $\Omega_1\in \ii \mathbb{R}$, the equality is valid if and only if $\lambda=0$, $ \pm \lambda_2$ or $\pm \lambda_0$, where $\lambda_0$ is defined in equation \eqref{eq:define-lambda-0}.
\end{proof-kerin}

\begin{lemma}\label{lemma:alpha-0}
	If the two-phase solutions of the mKdV equation with branch points satisfying the \ref{case2} are spectrally stable with respect to perturbations of the period $2PT, P\in \mathbb{N}$, we get the following cases:
	\begin{enumerate}
		\item[\rm (a)] If $2 E^{(2)}_2\ge  K^{(2)}_2$ and $M(\lambda_0)<-\pi(P-2)/P+2\pi$, all $2PT$ periodic eigenfunctions except $\partial_{\xi}u$ satisfy 
		\begin{equation}\label{eq:bounded away}
			\left\langle \mathcal{L}_2 W, W \right\rangle_{L^2}\ge \alpha_0 \|W\|_{H^2([-PT,PT])}^2, \qquad \alpha_0>0;
		\end{equation} 
		\item[\rm(b)] If $2 E^{(2)}_2\ge  K^{(2)}_2$ with $M(\lambda_0)=-\pi(P-2)/P+2\pi$, all $2PT$ periodic eigenfunctions except $\partial_{\xi}u$ and $W(\xi; \Omega(\pm \lambda_0))$ satisfy the inequality \eqref{eq:bounded away}.	
		\item[\rm(c)] If $2 E^{(2)}_2<  K^{(2)}_2$, all $2T$ periodic eigenfunctions except $\partial_{\xi}u(\xi)$ satisfy
		\begin{equation}\label{eq:bounded away-1}
			\left\langle \mathcal{L}_1 W, W \right\rangle_{L^2} \ge \alpha_0 \|W\|_{H^1([-T,T])}^2.
		\end{equation} 
	\end{enumerate}
\end{lemma}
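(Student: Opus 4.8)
\textbf{Proof proposal for Lemma \ref{lemma:alpha-0}.}

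The plan is to reduce the coercivity estimates \eqref{eq:bounded away} and \eqref{eq:bounded away-1} to counting the number of non-positive directions of the self-adjoint operators $\mathcal{L}_2$ and $\mathcal{L}_1$ restricted to the space $L^2_{per}([-PT,PT])$, and to use the Krein-signature computation of \Cref{prop:kerin} together with the subharmonic-spectral-stability hypothesis to show this count is minimal. First I would recall, via the Floquet--Bloch decomposition of \cite{HaragusK-08} and the fact that $\mathcal{JL}_n$ generates the linearized flow \eqref{eq:spectral-Omega}, that the eigenfunctions $W(\xi;\Omega(\lambda))$ with $\lambda\in Q_{sub}$ form a complete set in $L^2_{per}([-PT,PT])$; the eigenvalue of $\mathcal{L}_n$ on $W(\xi;\Omega_n(\lambda))$ has the sign of the Krein signature $\mathcal{K}_n(\lambda)$ divided by $|\Omega_n(\lambda)|^2$ whenever $\Omega_n(\lambda)\neq 0$, while the kernel of $\mathcal{L}_1$ (hence of $\mathcal{L}_2$, since $\mathcal{L}_2 = p(\lambda)$-type image under the recursion/integrability relation established in the proof of \Cref{prop:kerin}) is spanned by $\partial_\xi u$. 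So the spectrum of $\mathcal{L}_n$ below/at zero is governed precisely by the zeros and the negative part of $\mathcal{K}_n(\lambda)$ on $Q^{(2)}_R\cup Q^{(2)}_I$.

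Next I would treat the three cases separately. For case (c), $2E^{(2)}_2 < K^{(2)}_2$: by the classification in the proof of \Cref{prop:kerin}, case (iv), when $2E^{(2)}_2 A/K^{(2)}_2 - A \le 0$ we have $\mathcal{K}_1(\lambda)\ge 0$ for all $\lambda\in Q^{(2)}_R\cup Q^{(2)}_I$, and by \Cref{theorem:spectral-complex-P} the solution is co-periodic ($P=1$) spectrally stable, so on $L^2_{per}([-T,T])$ the operator $\mathcal{L}_1$ has exactly one non-positive direction, namely $\partial_\xi u$ in its kernel, and is strictly positive on its orthogonal complement. Since $\mathcal{L}_1 = -\partial_\xi^2 - v - 6u^2$ is a bounded perturbation of $-\partial_\xi^2$, strict positivity on the codimension-one subspace upgrades to the $H^1$-coercivity \eqref{eq:bounded away-1} by a standard Poincaré-type argument (if it failed there would be a minimizing sequence weakly converging to an element of the kernel, contradicting orthogonality). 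For cases (a) and (b), $2E^{(2)}_2\ge K^{(2)}_2$: here \Cref{prop:kerin} (cases (i)--(iii)) shows $\mathcal{K}_1$ changes sign, so $\mathcal{L}_1$ has a genuine negative direction on $L^2_{per}([-PT,PT])$; this is why one must pass to $\mathcal{L}_2$. By \eqref{eq:K_2(z)}, $\mathcal{K}_2(\lambda)\ge 0$ on $Q^{(2)}_R\cup Q^{(2)}_I$ with equality only at $\lambda = 0,\pm\lambda_2,\pm\lambda_0$; the directions at $\lambda=0,\pm\lambda_2$ are either the kernel element $\partial_\xi u$ or correspond (via the exponential-factor analysis recalled in the proof of \Cref{prop:kerin}) to non-periodic eigenfunctions that do not enter $L^2_{per}([-PT,PT])$. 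The only remaining potential zero direction is $W(\xi;\Omega(\pm\lambda_0))$, and whether it is a genuine $2PT$-periodic eigenfunction is controlled by whether $M(\lambda_0)$ hits the admissible Floquet value $-\pi(P-2)/P + 2\pi$ from \eqref{eq:eta}: if $M(\lambda_0)<-\pi(P-2)/P+2\pi$ the point $\pm\lambda_0$ is not in $Q_{sub}$, so $\mathcal{L}_2$ is strictly positive off the kernel and \eqref{eq:bounded away} follows as in case (c) with $H^2$ replacing $H^1$ (now $\mathcal{L}_2$ is a fourth-order operator, a bounded perturbation of $\partial_\xi^4$); if $M(\lambda_0)=-\pi(P-2)/P+2\pi$ we must additionally exclude $W(\xi;\Omega(\pm\lambda_0))$, giving case (b).

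The main obstacle I anticipate is the bookkeeping at the degenerate points $\lambda=0,\pm\lambda_2,\pm\lambda_0$: one must verify carefully which of these produce honest elements of $L^2_{per}([-PT,PT])$ and which give rise to non-periodic (secularly growing or Jordan-block) solutions of \eqref{eq:spectral-Omega}, using the explicit exponential factors $\Omega_{1,2,3}^{(2)}(P)$ from \eqref{eq:define-Omega-123-C} and \Cref{theorem:solution-Phi}, and to check that no zero of $\mathcal{K}_2$ on $Q^{(2)}_R\cup Q^{(2)}_I$ is ``doubled'' in a way that would create an extra non-positive direction. A secondary technical point is justifying that strict positivity of $\mathcal{L}_n$ on the $L^2$-orthogonal complement of its kernel implies the quantitative $H^k$ bound; this is routine given that $\mathcal{L}_n$ is an elliptic operator of order $2k$ with periodic coefficients and compact resolvent, so the infimum of $\langle \mathcal{L}_n W, W\rangle / \|W\|_{H^k}^2$ over that complement is attained and positive, but it should be stated explicitly.
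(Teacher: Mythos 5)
Your overall route is the one the paper intends: the paper itself gives no details here beyond recording $M(\lambda_3)=\pi$, $M(\lambda_1)=3\pi$ and deferring to the genus-one analysis of \cite{LingS-23-mKdV-stability}, and that analysis proceeds exactly as you sketch --- expand an arbitrary admissible $2PT$-periodic perturbation over the complete family of squared eigenfunctions, reduce the quadratic form of $\mathcal{L}_n$ to the Krein signatures computed in \Cref{prop:kerin}, use the Floquet condition \eqref{eq:eta}/\eqref{eq:Q_P} to decide which eigenfunctions actually lie in $L^2_{per}([-PT,PT])$, and then upgrade strict positivity off the kernel to the $H^k$-coercivity \eqref{eq:bounded away}, \eqref{eq:bounded away-1} via the spectral gap of the elliptic operator. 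So in structure your proposal coincides with the paper's (cited) proof rather than offering a different method.

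Two steps in your write-up are, however, not correct as written. First, it is not true that ``the eigenvalue of $\mathcal{L}_n$ on $W(\xi;\Omega_n(\lambda))$ has the sign of $\mathcal{K}_n(\lambda)$'': $W$ is an eigenfunction of $\mathcal{J}\mathcal{L}_n$, not of $\mathcal{L}_n$. The mechanism that makes the Krein count decisive is the orthogonality relation $\langle W_i,\mathcal{L}_n W_j\rangle_{L^2}=0$ for eigenfunctions attached to distinct eigenvalues of $\mathcal{J}\mathcal{L}_n$ (a consequence of the Hamiltonian structure), so that in the eigenfunction expansion $\langle \mathcal{L}_n w,w\rangle$ diagonalizes with entries $|c_j|^2\mathcal{K}_n(\lambda_j)$; you need this formulation, not a spectral statement about $\mathcal{L}_n$ itself. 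Second, your disposal of the zero-signature points $\pm\lambda_2$ is wrong: you claim they give ``non-periodic eigenfunctions that do not enter $L^2_{per}$,'' but \Cref{prop:Q-M-cn}(c) gives $M(\lambda_2)=0 \bmod 2\pi$, so the associated squared eigenfunction is co-periodic and hence $2PT$-periodic for every $P$. What actually keeps the exceptional set down to $\partial_\xi u$ (and, in case (b), $W(\xi;\Omega(\pm\lambda_0))$) is that at a branch point $y=0$ the two sheets of the curve coalesce, so the squared eigenfunction there degenerates and does not furnish an independent direction beyond the translational kernel --- precisely the bookkeeping carried out in the genus-one case in \cite{LingS-23-mKdV-stability}. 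Without verifying this degeneracy (and the analogous statement at $\lambda=0$), your count of non-positive directions of $\mathcal{L}_2$ on $L^2_{per}([-PT,PT])$ could exceed the one allowed by the lemma, and the coercivity estimates would not follow. With those two repairs, the rest of your argument (the threshold comparison of $M(\lambda_0)$ with the admissible Floquet values, and the $L^2$-to-$H^k$ upgrade) is sound and matches the intended proof.
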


From \Cref{prop:Q-M-cn}, we get $M(\lambda_3)=2T\mathcal{I}^{(2)}(\lambda_3)=\pi$ and $M(\lambda_1)=2T\mathcal{I}^{(2)}(\lambda_1)=3\pi$, with $2T=\pi/\kappa^{(2)}$. 
In this Lemma, we consider the special case that $M(\lambda)=2\pi(-P+1)/P+(2n+1)\pi$.
The detailed proof process is provided in \cite{LingS-23-mKdV-stability}; hence, we omit the details here.
Furthermore, it is also easy to obtain the following Lemma.

\begin{lemma}\label{lemma:H<u}
	$\hat{\mathcal{H}}_2$ is continuous in $H^2_{per}([-PT,PT])$ on the bounded sets; in other words, for any $\epsilon>0$, there exist constants $M_1,\delta>0$, if $\| u-v\|_{H^2}\le\delta$ and $\| u\|_{H^2}\le M_1$, we have $|\hat{\mathcal{H}}_2(u)-\hat{\mathcal{H}}_2(v)|<\epsilon$.
\end{lemma}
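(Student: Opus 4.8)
\textbf{Proof proposal for Lemma \ref{lemma:H<u}.}

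The plan is to show that $\hat{\mathcal{H}}_2$ is a continuous functional on bounded subsets of $H^2_{per}([-PT,PT])$ by writing $\hat{\mathcal{H}}_2$ explicitly as a polynomial integral expression in $u$ and its derivatives up to second order, and then estimating the difference $\hat{\mathcal{H}}_2(u)-\hat{\mathcal{H}}_2(v)$ term by term using H\"older's inequality together with the Sobolev embedding $H^1_{per}([-PT,PT])\hookrightarrow L^\infty_{per}([-PT,PT])\hookrightarrow L^p_{per}([-PT,PT])$ for all $p\in[1,\infty]$.

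First I would recall from \eqref{eq:H0} and \eqref{eq:H-mKdV} that $\hat{\mathcal{H}}_2=\mathcal{H}_5+c_{5,3}\mathcal{H}_3+c_{5,1}\mathcal{H}_1$ is a fixed real linear combination of
\[
\mathcal{H}_1=\tfrac12\int_{-PT}^{PT}u^2\,\dd\xi,\quad
\mathcal{H}_3=\tfrac12\int_{-PT}^{PT}\bigl(u_\xi^2-u^4\bigr)\,\dd\xi,\quad
\mathcal{H}_5=\tfrac12\int_{-PT}^{PT}\bigl(u_{\xi\xi}^2-10u^2u_\xi^2+2u^6\bigr)\,\dd\xi .
\]
Since the coefficients $c_{5,1},c_{5,3}$ are constants determined by the branch points, it suffices to prove the continuity statement separately for each of the three integral functionals, and then combine the three estimates. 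Each integrand is a polynomial in $u$, $u_\xi$, $u_{\xi\xi}$ with terms of the form $u^a u_\xi^b u_{\xi\xi}^c$ where at most two derivatives occur and the total differential order in each monomial of $\mathcal{H}_5$ is at most $4$; the controlling term is $u_{\xi\xi}^2$, which requires the full $H^2$ norm, while all other terms involve only $u$ and $u_\xi$, hence only the $H^1$ norm.

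Next, for each monomial I would use the algebraic identity $x^m-y^m=(x-y)\sum_{j=0}^{m-1}x^j y^{m-1-j}$ applied factorwise: for example, for the term $\int u^6$ one writes $u^6-v^6=(u-v)\sum_{j=0}^{5}u^j v^{5-j}$, and estimates $\bigl|\int(u^6-v^6)\bigr|\le \|u-v\|_{L^\infty}\sum_{j}\|u\|_{L^\infty}^{j}\|v\|_{L^\infty}^{5-j}\cdot 2PT$; similarly $\int u^2u_\xi^2$ is handled by splitting $u^2u_\xi^2-v^2v_\xi^2=(u^2-v^2)u_\xi^2+v^2(u_\xi^2-v_\xi^2)$ and using $\|u_\xi\|_{L^4}\le C\|u\|_{H^1}$, and the top-order term $\int u_{\xi\xi}^2$ by $\int(u_{\xi\xi}^2-v_{\xi\xi}^2)=\int(u_{\xi\xi}-v_{\xi\xi})(u_{\xi\xi}+v_{\xi\xi})$ bounded by $\|u-v\|_{H^2}\,(\|u\|_{H^2}+\|v\|_{H^2})$. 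In every case the Sobolev embedding on the compact interval $[-PT,PT]$ turns each $L^\infty$ or $L^p$ norm into a constant multiple of the $H^2$ (indeed $H^1$, except for the one $u_{\xi\xi}^2$ term) norm. Collecting these, one obtains $|\hat{\mathcal{H}}_2(u)-\hat{\mathcal{H}}_2(v)|\le C(\|u\|_{H^2},\|v\|_{H^2})\,\|u-v\|_{H^2}$ with $C$ a continuous, increasing function of its arguments; then given $\epsilon>0$ and the bound $\|u\|_{H^2}\le M_1$, choosing $\delta=\delta(\epsilon,M_1)$ small enough (so that also $\|v\|_{H^2}\le M_1+1$) yields $|\hat{\mathcal{H}}_2(u)-\hat{\mathcal{H}}_2(v)|<\epsilon$, which is exactly the claim.

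The only mild subtlety — and the step I would be most careful about — is bookkeeping the mixed monomial $u^2u_\xi^2$ in $\mathcal{H}_5$ and confirming that no term genuinely requires control of $u_{\xi\xi}$ in an $L^\infty$ or $L^4$ norm (which would go beyond $H^2$); since $u_{\xi\xi}$ appears only quadratically and linearly-paired with itself, the Cauchy–Schwarz pairing $\int(u_{\xi\xi}-v_{\xi\xi})(u_{\xi\xi}+v_{\xi\xi})$ keeps everything inside $L^2$, so $H^2$ regularity is exactly enough and no higher norm is needed. Everything else is the routine multilinear estimate sketched above; because the paper already uses this type of argument (the companion Lemmas \ref{lemma:alpha-0} and the orbital-stability machinery of \cite{GrillakisSS-87,GrillakisSS-90,KapitulaP-13}), I would state the estimate, indicate the factorwise telescoping and the Sobolev embedding, and omit the fully expanded constants for brevity.
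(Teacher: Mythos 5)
Your proposal is correct and supplies exactly the routine argument that the paper leaves implicit: in the paper this lemma is stated without proof (it is treated as an easy consequence of the polynomial structure of $\hat{\mathcal{H}}_2$, in the same spirit as the corresponding lemma in \cite{LingS-23-mKdV-stability}), and your term-by-term telescoping of $\mathcal{H}_1$, $\mathcal{H}_3$, $\mathcal{H}_5$ combined with the embedding $H^1_{per}([-PT,PT])\hookrightarrow L^\infty$ and the Cauchy--Schwarz pairing for the top-order term $\int(u_{\xi\xi}-v_{\xi\xi})(u_{\xi\xi}+v_{\xi\xi})\,\dd\xi$ is precisely the intended route, yielding the local Lipschitz bound $|\hat{\mathcal{H}}_2(u)-\hat{\mathcal{H}}_2(v)|\le C(\|u\|_{H^2},\|v\|_{H^2})\,\|u-v\|_{H^2}$ from which the $\epsilon$--$\delta$ statement follows.

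One small imprecision: the inequality $\|u_\xi\|_{L^4}\le C\|u\|_{H^1}$ that you invoke for the mixed term $\int u^2u_\xi^2\,\dd\xi$ is not true in general (on a bounded interval $L^2$ does not control $L^4$; e.g. $|\xi|^{-1/3}\in L^2\setminus L^4$). This is harmless here because you may either replace it by $\|u_\xi\|_{L^4}\le C\|u_\xi\|_{H^1}\le C\|u\|_{H^2}$, which is all the lemma requires, or avoid $L^4$ altogether by estimating $\bigl|\int (u^2-v^2)u_\xi^2\,\dd\xi\bigr|\le\|u^2-v^2\|_{L^\infty}\|u_\xi\|_{L^2}^2$ and $\bigl|\int v^2(u_\xi^2-v_\xi^2)\,\dd\xi\bigr|\le\|v\|_{L^\infty}^2\|u_\xi-v_\xi\|_{L^2}\|u_\xi+v_\xi\|_{L^2}$; with that adjustment the proof is complete as written.
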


From these results, we aim to conduct the orbital stability analysis of genus-two traveling wave solutions .

\newenvironment{proof-orbital-cn}{\emph{Proof of \Cref{theorem:orbital-cn}.}}{\hfill$\Box$\medskip}
\begin{proof-orbital-cn}
	The proof is similar to that in \cite{DeconinckyU-20,LingS-23-mKdV-stability}. 
	Here, we provide only a brief overview of the relevant results and highlight a different case. 
	First, the following conclusion ensures the global well-posedness of the periodic solutions we need.
	Kappeler and Topalov \cite{KappelerT-2005} proved that the \ref{eq:mKdV} equation is globally well-posed in $L^2(T)$.
	Colliander et al. \cite{CollianderKSTT-03} studied that the Cauchy problem for the mKdV equation with the periodic boundary condition is globally well-posed for the initial data $u(\xi,0)\in H^{s}(T),s>1/2$. 
		
	Second, we consider the disturbance
	\begin{equation}\label{eq:h-dis}
		h(\xi,\eta):=\hat{u}(\xi,\eta)-\mathcal{T}(\gamma(\eta))u, \qquad h(\xi,\eta)\in H^2([-PT,PT]),
	\end{equation}
	where $\mathcal{T}$ is in \Cref{define:orbital stable}. The perturbation $h(\xi,\eta)$ belongs to the nonlinear set $\mathcal{A}:=\{h\in H^2([-PT,PT])|\mathcal{H}_0(h(\xi,\eta)+u)=\mathcal{H}_0(u),\left\langle h(\xi,\eta),\partial_{\xi}u\right \rangle_{L^2}=0\}.$
	Define the linear admissible space 
	$\mathcal{A}_1:=\{h_1\in H^2([-PT,PT])| \left\langle h_1(\xi,\eta),\partial_{\xi}u\right \rangle_{L^2}=\left\langle u,h_1 (\xi,\eta)\right\rangle_{L^2}=0\}$. 
	For any $h(\xi,\eta)\in \mathcal{A}$ with $\|h\|_{H^2}$ sufficiently small, we can decompose $h(\xi,\eta)=h_1(\xi,\eta)+\hat{c}u(\xi),$ where $\hat{c}=\hat{c}(h)=-\|h\|_2^2/2\|u\|_2^2$ and $h_1\in \mathcal{A}_1$, $h\in \mathcal{A}$, (See \cite{LingS-23-mKdV-stability}).
		
	Then, we expand the function $\mathcal{\hat{H}}_2(u+h)-\mathcal{\hat{H}}_2(u)$ in powers of $h$.
	In combination with \Cref{lemma:alpha-0}, $\left\langle \mathcal{L}_2h_1,h_1 \right\rangle_{L^2} \ge \alpha_0 \|h_1\|_{H^2}^2$, we obtain that if $2 E^{(2)}_2\ge   K^{(2)}_2$ and $P<\frac{4\pi}{\pi+M(\lambda_0)}$. 
	Utilizing H\"{o}lder inequality, we get
	\begin{equation}\nonumber
		\begin{split}
			\left\langle \mathcal{L}_2(u)h,h \right\rangle_{L^2}
			\ge &\ \left\langle \mathcal{L}_2(u)h_1,h_1\right\rangle_{L^2}+2\hat{c}\left\langle \mathcal{L}_2(u)u,h_1\right\rangle_{L^2}+\hat{c}^2\left\langle \mathcal{L}_2(u)u,u\right\rangle_{L^2} \\	
			\ge &\ \alpha_0 \|h_1\|_{H^2}^2-|\hat{c}_3|\|h\|^2_{H^2}\|h_1\|_{H^2}/|\|u\|^2_{2}-|\hat{c}_3|\|h\|^4_{H^2}\|u\|_{\infty}/(4\|u\|^4_{2}),
		\end{split}
	\end{equation}
	where $\hat{c}_3:=\mathcal{L}_2(u)u=-4\hat{c}_1(A-\lambda_2^2-2A E^{(2)}_2/  K^{(2)}_2)$, and $\|h_1\|_{\infty}\le \|h_1\|_{H^1}\le\|h_1\|_{H^2}$.
	For the genus-2 cases, the parameter $\hat{c}_1$ is not zero (i.e., $\hat{c}_1\neq 0$). For the genus-1 traveling wave solutions , the parameter $\hat{c}_1$ is zero. This constitutes the main difference from the proof in \cite{LingS-23-mKdV-stability}. 
	Using Minkowski inequality, we know $\|h_1\|_{H^2}^2\ge \|h\|_{H^2}^2-\hat{c}^2\|u\|_{H^2}^2\ge \|h\|_{H^2}^2-c\|h\|_{H^2}^4$, $c=\|u\|_{H^2}^2/(4\|u\|_{2}^4)$. 
	For $\|h\|_{H^2}^2<1/(2c)$ sufficiently small, it follows that $\|h_1\|_{H^2}^2\ge \frac{1}{2}\|h\|_{H^2}^2$.
	Furthermore, $\|h_1\|_{H^2}^2\le \|h\|_{H^2}^2+\hat{c}^2\|u\|_{H^2}^2\le \|h\|_{H^2}^2+\|h\|_{H^2}^4\|u\|_{H^2}^2/\|u\|_2^4$, which implies $\|h_1\|_{H^2}\le  \|h\|_{H^2}+\|h\|_{H^2}^2\|u\|_{H^2}/\|u\|_2^2$.
	Therefore,
	\begin{equation}\nonumber
		\begin{split}
			\left\langle \mathcal{L}_2(u)h,h \right\rangle_{L^2}
			\ge 	
			&\ \alpha_0 \|h\|_{H^2}^2/2-|\hat{c}_3|\|h\|^3_{H^2}/|\|u\|^2_{2}-|\hat{c}_3|(\|u\|_{H_2}/|\|u\|^2_{2}+\|u\|_{\infty}/\|u\|^4_{2})\|h\|^4_{H^2}.
		\end{split}
	\end{equation}
	Since $\mathcal{\hat{H}}_2(u+h)-\mathcal{\hat{H}}_2(u)= \left\langle \mathcal{L}_2(u)h,h \right\rangle_{L^2}/2 +\mathcal{O}(\|h\|_{H^2}^3)$, 
	we obtain
	\begin{equation}\label{eq:H_2-3}
		\begin{split}
			|\mathcal{\hat{H}}_2(u+h)-\mathcal{\hat{H}}_2(u)|\ge \frac{\alpha_0}{4}\|h\|_{H^2}^2-\beta\|h\|_{H^2}^3, \qquad \beta>0.
		\end{split}
	\end{equation}
		
	At last, by a proof similar to that of \cite{LingS-23-mKdV-stability} (Theorem 5),
	we obtain that for any $\epsilon>0$, by choosing $\hat{\delta}(\epsilon)=-\beta\epsilon^3+\frac{\alpha_0^2}{4}\epsilon^2$, the inequality $\|h(\xi,0)\|_{H_2}<\epsilon$ holds, which further implies $\|h(\xi,\eta)\|\le\nu_2(\Delta)<\epsilon$.
	From \Cref{lemma:H<u}, we know that for the above fixed  $\hat{\delta}(\epsilon)>0$, there exists $\delta(\hat{\delta})$ with $\min\{\epsilon,\frac{1}{2c}\}>\delta(\hat{\delta})>0$, when $\|(u(\xi,0)+h(\xi,0))-u(\xi,0)\|_{H^2}\le \delta(\hat{\delta})$, the inequality
	$|\mathcal{\hat{H}}_2(u+h)-\mathcal{\hat{H}}_2(u)| 
	=|\mathcal{\hat{H}}_2(u(\xi,0)+h(\xi,0))-\mathcal{\hat{H}}_2(u(\xi,0))| \le  \hat{\delta}(\epsilon)$ holds.
	In summary, we obtain that for any $\epsilon>0$, there exists $\delta(\epsilon)>0$, such that if $\|v(\xi,0)-T(\gamma)u(\xi,0)\|_{H^2}\le \delta(\epsilon)$ and $t\in\mathbb{R}$, then $\inf_{\gamma\in \mathbb{R}} \|v(\xi,\eta)-T(\gamma)u(\xi,\eta)\|_{H^2}< \epsilon$, which implies
	\begin{equation}\nonumber
		\sup_{t\in\mathbb{R}} \inf_{\gamma\in \mathbb{R}} \|\hat{u}(\xi,\eta)-T(\gamma)u(\xi,\eta)\|_{H^2} < \epsilon. 
	\end{equation}
	By \Cref{define:orbital stable}, we conclude that the solution $u(\xi)$ in equation \eqref{eq:u-parameters-cn} is orbitally stable in the space $H^2([-PT,PT])$, $P<\frac{4\pi}{\pi+M(\lambda_0)}$.
\end{proof-orbital-cn}

	When $2 E^{(2)}_2<   K^{(2)}_2$, we know that for any $\lambda\in ( Q^{(2)} _I\cup  Q^{(2)} _R)$, the inequality $\mathcal{I}'(\lambda)\neq0$ holds. And $\mathcal{K}_1(\lambda)\ge 0$, only when $\Omega_1=0$, $\mathcal{K}_1(\lambda)=0$. Based on \Cref{lemma:alpha-0}, we use a similar proof as the condition $2 E^{(2)}_2<   K^{(2)}_2$ and obtain that the solution $u(\xi)$ in equation \eqref{eq:u-parameters-cn} is orbitally stable in the space $H^1([-T,T])$ when $2E^{(2)}_2<K^{(2)}_2$.

\subsection{The orbital stability for the \ref{case1}}\label{subsec:orbital-dn}

Similar to the proof in \Cref{subsec:orbital-cn}, we first consider the Krein signature $\mathcal{K}_{1,2}(\lambda)$ under the \ref{case1}. 
Based on the exact expressions of the function $u(\xi)$ defined in equation \eqref{eq:u1-elliptic}, we obtain 
	\begin{equation}\nonumber
		\begin{split}
			 \int_{-PT}^{PT}u^2(\xi)\dd \xi 
			\xlongequal[\eqref{eq:define-u-lambda}]{\eqref{eq:u1-elliptic-1}}	&\
			\frac{P}{\alpha}\int_{-K^{(1)}_2}^{K^{(1)}_2}\left(\ii(\lambda_1+\lambda_2-\lambda_3) + \frac{2\ii (\lambda_3-\lambda_2)}{1-\frac{\lambda_2-\lambda_1}{\lambda_3-\lambda_1}\sn^2(\xi,  k^{(1)}_2)}\right)^2\dd \xi\\
			\xlongequal[\eqref{eq:integral}]{\eqref{eq:define-first-integral}, \eqref{eq:define-third-integral}}&\ 
			-\frac{2P}{\alpha}\left((\lambda_1^2+\lambda_2^2-\lambda_3^2)  K^{(1)}_2+2(\lambda_3^2-\lambda_1^2) E^{(1)}_2\right).
		\end{split}
	\end{equation}
	For the solution in \eqref{eq:u1-elliptic-2}, the above integral result also holds. 
	Applying the method of the  \Cref{lemma:Phi+-}, we get
	$	\mathcal{K}_1(\lambda)	=-8|\lambda^2| \Omega_1^2P  K^{(1)}_2 ((\lambda^2+\lambda_1^2+\lambda_2^2)+(\lambda_3^2-\lambda_1^2) E^{(1)}_2/  K^{(1)}_2)/\alpha$,
	when $\lambda\in \mathbb{R}$. 
	On the other hand, when $\lambda \in \ii \mathbb{R}$, the above result holds.
	For all $\lambda\in  Q^{(1)} $ satisfying $\Omega_1(\lambda)\in \ii \mathbb{R}$, the inequality $\mathcal{K}_1(\lambda)\ge 0$ does not hold uniformly.
	Choosing $c_{5,3}=2(\lambda_1^2+\lambda_2^2-\lambda_3^2+2(\lambda_3^2-\lambda_1^2)E_2^{(1)}/K_2^{(1)})$, we obtain that for any $\lambda\in Q^{(1)}$, $\mathcal{K}_2(\lambda)=(v+4\lambda^2+c_{5,3})\mathcal{K}_1(\lambda)\ge 0$.

\newenvironment{proof-orbital-dn}{\emph{Proof of \Cref{theorem:orbital-dn}.}}{\hfill$\Box$\medskip}
\begin{proof-orbital-dn}
	Similar to Theorem \ref{theorem:orbital-cn}, we find that for all $\lambda\in  Q^{(1)} $ satisfying $\Omega_1(\lambda)\in \ii \mathbb{R}$, the statement $\mathcal{K}_1(\lambda)\ge 0$ does not always hold.  We therefore consider the Krein signature $\mathcal{K}_2(\lambda)$.
	Following the same arguments as in the proof of \Cref{lemma:alpha-0} and  \Cref{theorem:orbital-cn}, we conclude that the solution in \ref{case1} is orbitally stable in the space $H^2([-PT,PT])$, $P\in \mathbb{Z}_+$.
\end{proof-orbital-dn}

\section*{Acknowledgments}
The authors would like to express their sincere gratitude to Professor Dmitry Pelinovsky for his insightful discussions on this work.
Liming Ling is supported by the National Natural Science Foundation of China (No. 12471236), the Guangzhou Municipal Science and Technology Project (Guangzhou Science and Technology Plan, No. 2024A04J6245) and Guangdong Natural Science Foundation grant (No. 2025A1515011868).
Xuan Sun is sponsored by the National Natural Science Foundation of China (No. 12501328) and the Natural Science Foundation of Shanghai Basic Research Funding (No. 25ZR1402009).

\subsection*{Data availability statement}
Data sharing not applicable to this article as no datasets were generated or analyzed during the current study.

\appendix

\titleformat{\section}[display]
{\centering\LARGE\bfseries}{ }{11pt}{\LARGE}

\titleformat{\subsection}[display]
{\large\bfseries}{ }{10pt}{\large}

\renewcommand{\appendixname}{Appendix \, \Alph{section}}

\section{\appendixname. The definitions of Elliptic functions}\label{appendix:Elliptic-integrals-formulas}

\setcounter{equation}{0}
\setcounter{define}{0}
\setcounter{prop}{0}
\setcounter{lemma}{0}

\renewcommand\theequation{\Alph{section}.\arabic{equation}}
\renewcommand\thedefine{\Alph{section}.\arabic{define}}
\renewcommand\theprop{\Alph{section}.\arabic{prop}}
\renewcommand\thelemma{\Alph{section}.\arabic{lemma}}

\begin{define}\label{define:elliptic-function}
	The three standard canonical forms of elliptic integrals are given by:
	\begin{itemize}
		\item The normal elliptic integral of the first kind is defined by
		\begin{equation}\label{eq:define-first-integral}
			F(z,k) = \int_{0}^{z}\frac{\dd t}{\sqrt{(1-t^2)(1-k^2t^2)}}=\int_0^{u} \dd \nu, \qquad u = F(z,k).
		\end{equation}
		The associated complete elliptic integrals $K\equiv F(1,k)$ and $K'=K(k'), k'=\sqrt{1-k^2}$ are also commonly used. 
		\item The normal elliptic integral of the second kind is defined by:
		\begin{equation}\label{eq:define-second-integral}
			E(z,k) = \int_{0}^{z}\sqrt{\frac{1-k^2 t^2}{1-t^2}}\dd t=\int_0^{u}\dn^2(\nu,k)\dd \nu, \qquad 
			E\equiv E(1,k)=E(k).
		\end{equation}
		\item The normal elliptic integral of the third kind is defined by:
		\begin{equation}\label{eq:define-third-integral}
		 \!	\Pi(z,\alpha^2,k) \! =\! \int_{0}^{z}\frac{\dd t}{(1-\alpha^2 t^2)\sqrt{(1-t^2)(1-k^2t^2)}} \! =\! \int_0^{u}\frac{\dd \nu}{1-\alpha^2\sn^2(\nu,k)},\,\, \Pi(\alpha^2,k)\equiv \Pi(1,\alpha^2,k).	
		\end{equation}
	\end{itemize}
	The inverse function of the elliptic integral of the first kind \eqref{eq:define-first-integral} is denoted by $z=\sn(u,k)$, $u = F(z,k)$. 
	Based on this, two additional functions are introduced: $\cn(u,k)=\sqrt{1-z^2}$, $\dn(u,k)=\sqrt{1-k^2z^2}$, with the initial conditions $cn(0,k)=1$ and $\dn(0,k)=1$.
\end{define}

The formulas between the above normal elliptic integrals  \cite{ByrdF-54}:
\begin{itemize}
	\item Special addition formulas for the first kind of elliptic functions:
	\begin{equation}\label{eq:add-first}
		F(\theta,k)+F(\beta,k)=K, \qquad \cot(\beta)=k^{\prime} \tan(\theta);
	\end{equation}
	\item Differential equations with respect to $u$:
	\begin{equation}\label{eq:diff-elliptic}
		 \dn^{\prime}(u,k)=-k^2\sn(u,k)\cn(u,k),\quad
		 \sn^{\prime}(u,k)=\cn(u,k)\dn(u,k);
	\end{equation}
	\item Inequality \cite{LingS-23-mKdV-stability}:
	\begin{equation}\label{eq:inequality}
		\begin{split}
			&\ E(k) - (k^{\prime})^2 K(k)> \lim_{k\rightarrow 0} (E(k) - (k^{\prime})^2 K(k))=0;\\
			&\ K(k) - E(k)> \lim_{k\rightarrow 0} (K(k) - E(k))=0;\\
			&\  (1+(k^{\prime})^2 )K(k) -2 E(k)> \lim_{k\rightarrow 0} [(1+(k^{\prime})^2 )K(k) -2 E(k)]=0;\\
		\end{split}
	\end{equation}
	\item The transformation between the first and the third kind of elliptic functions: 
	\begin{equation}\label{eq:Pi-K-alpha}
		\Pi\left(\alpha^2,k\right)+\Pi\left(\frac{k^2}{\alpha^2},k\right)=K+\frac{\pi}{2}\sqrt{\frac{\alpha^2}{\left(1-\alpha^2\right)\left(\alpha^2-k^2\right)}},  \qquad K=F(1,k),
	\end{equation}
	with $0<k^2<\alpha^2<1$ or $0<-\alpha^2<\infty$.
\end{itemize}

\begin{define}[Jacobi theta functions {\cite[p.302]{Farkas-92-Riemann}}]\label{define:theta}
	Jacobi theta functions are defined as:
	\begin{equation}\nonumber
		\begin{split}
			&\vartheta_1(u, \tau):=\Theta\begin{bmatrix}
				1 \\ 1
			\end{bmatrix}(u,\tau), \qquad
			\vartheta_2(u, \tau):=\Theta\begin{bmatrix}
				1 \\ 0
			\end{bmatrix}(u,\tau), \\
			&\vartheta_3(u, \tau):=\Theta\begin{bmatrix}
				0 \\ 0
			\end{bmatrix}(u,\tau), \qquad
			\vartheta_4(u, \tau):=\Theta\begin{bmatrix}
				0 \\ 1
			\end{bmatrix}(u,\tau),
		\end{split}
	\end{equation}
	where $\tau=\ii K'/K$, parameters $K$ and $K'$ are first kind complete elliptic integrals, and
	\begin{equation}\nonumber
		\Theta\begin{bmatrix}
			\epsilon \\
			\epsilon'
		\end{bmatrix}(u,\tau)
		=\sum_{n=-\infty}^{+\infty}\exp\left\{ \left[\frac{1}{2}\left(n+\frac{\epsilon}{2}\right)^2\tau+\left(n+\frac{\epsilon}{2}\right)\left(u+\frac{\epsilon'}{2}\right)\right]\right\}.
	\end{equation}
\end{define}

\begin{define}\label{define:zeta}
	The Jacobi Zeta function is defined by 
	\begin{equation}\label{eq:Zeta-define}
		Z(u,k)\equiv \int_0^u \left( \dn^2(z,k)-\frac{E}{K} \right) \dd z, 
		\qquad \text{or} \qquad 
		Z(u,k)=\frac{\partial }{\partial u}\ln\left(\vartheta_4\left(\frac{\ii u\pi }{ K}\right)\right)\! ,
	\end{equation}
	where $E\equiv E(k)$, $K\equiv K(k)$ are the complete elliptic integrals defined in equations \eqref{eq:define-second-integral} and \eqref{eq:define-first-integral}, respectively.
\end{define}

Indeed, the above elliptic functions can be transformed into one another, enabling us to leverage the properties of their various forms to support our analytical work.
Here, we provide several useful formulas required in this work.
In combination with the definitions of Jacobi theta functions, the transformation between Jacobi theta functions and elliptic functions is defined as follows:
\begin{equation}\label{eq:formula-trans-theta-elliptic}
	\begin{split}
		&\sn(u,k)=\frac{\vartheta_3(0,\tau)\vartheta_1(\pi\ii u/K,\tau)}{\vartheta_2(0,\tau)\vartheta_4(\pi\ii u/K,\tau)},\qquad
		\cn(u,k)=\frac{\vartheta_4(0,\tau)\vartheta_2(\pi\ii u/K,\tau)}{\vartheta_2(0,\tau)\vartheta_4(\pi\ii u/K,\tau)}, \\
		&\dn(u,k)=\frac{\vartheta_4(0,\tau)\vartheta_3(\pi\ii u/K,\tau)}{\vartheta_3(0,\tau)\vartheta_4(\pi\ii u/K,\tau)}, \qquad 
		k=\frac{\vartheta_2^2(0,\tau)}{\vartheta_3^2(0,\tau)}, \qquad
		k^{\prime}=\frac{\vartheta_4^2(0,\tau)}{\vartheta_3^2(0,\tau)}.
	\end{split}
\end{equation}
The elliptic integral of the third kind can be expressed in terms of the Jacobi Zeta function, elliptic functions, and Jacobi theta functions as follows:
\begin{equation}\label{eq:formula-trans-Pi-Zeta-theta}
	\Pi(u,\alpha^2,k)=\frac{\sn(a,k)}{\cn(a,k)\dn(a,k)}\left(\frac{1}{2}\ln \frac{\vartheta_1(\ii(a+u)\pi/K,\tau)}{\vartheta_1(\ii(a-u)\pi/K,\tau)}-uZ(a,k)\right),\quad \alpha=\frac{1}{\sn(a,k)}.
\end{equation}

Some useful formulas about Jacobi elliptic function \cite{ByrdF-54}:
\begin{itemize}
	\item Shift formulas:
	\begin{equation}\label{eq:Jacobi-shift}
		\begin{split}
			&\sn(u+K)=\cd(u),\qquad \sn(u+\ii K')=\ns(u)/k, \qquad \quad  \sn(u+K+\ii K')=\dc(u)/k, \\
			&\cn(u+K)=-k'\sd(u), \quad \cn(u+\ii K')=-\ii \ds(u)/k,  \, \quad 
			\cn(u+K+\ii K')=-\ii k'\nc(u)/k, \\ 
			&\dn(u+K)=k'\nd(u), \qquad \dn(u+\ii K')=-\ii \cs(u),\quad \quad 
			\dn(u+K+\ii K')=\ii k' \tn(u),
		\end{split}
	\end{equation}
	where $\sn(\cdot)=\sn(\cdot,k)$, $\cn(\cdot)=\cn(\cdot,k)$, and $\dn(\cdot)=\dn(\cdot,k)$;
	\item Double arguments:
	\begin{equation}\label{eq:Jacobi-double}
		\cn(2u)=\frac{\cn^2(u)-\sn^2(u)\dn^2(u)}{1-k^2\sn^4(u)}, \quad 
		\dn(2u)=\frac{\dn^2(u)-k^2\sn^2(u)\cn^2(u)}{1-k^2\sn^4(u)};
	\end{equation}
	\item Addition formulas:
	\begin{equation}\label{eq:add-app}
		\begin{split}
			\sn( u \pm  v)= &\ \frac{\sn(u)\cn(v)\dn(v)\pm \sn(v)\cn(u)\dn(u)}{1-k^2\sn^2(u)\sn^2(v)},\\
			\sn(u+v)\sn(u-v)=&\ \frac{\sn^2(u)-\sn^2(v)}{1-k^2\sn^2(v)\sn^2(u)},\\
			\cn(u+v)\sn(u-v)=&\ \frac{\sn(u)\cn(u)\dn(v)-\sn(v)\cn(v)\dn(u)}{1-k^2\sn^2(v)\sn^2(u)},\\
			Z(u\pm v)=&\ Z(u)\pm Z(v) \mp k^2 \sn(u)\sn(v)\sn(u\pm v),\\
			Z(u+ \ii K^{\prime})=&\ Z(u)+\frac{\cn(\nu,k)\dn(\nu,k)}{\sn(\nu,k)}-\frac{\ii\pi}{2K},\\
			Z(u+ 2\ii K^{\prime})=&\ Z(u)-\frac{\ii\pi}{K},\\
			Z(u+2K)=&\ Z(u).
		\end{split}
	\end{equation}
	
\end{itemize}

\begin{prop}\label{prop:Riemann-2-1}
	The dimension-$2$ Riemann theta function with related to the mKdV genus-two algebraic curves could be expressed by the elliptic functions 
	\begin{equation}\label{eq:formula-Rieman-shift-1}
		\begin{split}
			&\Theta(\mathbf{z}, \mathbf{B}^{(2)})=\vartheta_3(z_1-2z_2, \tau^{(2)}_1)\vartheta_3(z_1, \tau^{(2)}_2)+\vartheta_1(z_1-2z_2, \tau^{(2)}_1)\vartheta_1(z_1, \tau^{(2)}_2), \quad
			\mathbf{z}=[z_1,z_2]^{\top}, \\
			& \Theta(\mathbf{z}, \mathbf{B}^{(1)} )=\vartheta_3(z_1-z_2,2  \tau^{(1)}_1)\vartheta_3(z_1+z_2,2  \tau^{(1)}_2)+\vartheta_2(z_1-z_2,2  \tau^{(1)}_1)\vartheta_2(z_1+z_2,2  \tau^{(1)}_2),
		\end{split}
	\end{equation}
	where the matrix $\mathbf{B}^{(1)}$ and $ \mathbf{B}^{(2)}$ are defined in equations \eqref{eq:u-parameters-dn-B} and \eqref{eq:u-parameters-cn-B}, respectively.
\end{prop}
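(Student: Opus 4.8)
\textbf{Proof proposal for Proposition \ref{prop:Riemann-2-1}.}

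The plan is to verify both identities directly by expanding the two-dimensional Riemann theta series into its constituent one-dimensional sublattices, using the explicit form of the period matrices $\mathbf{B}^{(1)}$ and $\mathbf{B}^{(2)}$. Recall from \Cref{define:Riemann-Theta-function} that $\Theta(\mathbf{z},\mathbf{B})=\sum_{\mathbf{n}\in\mathbb{Z}^2}\exp\!\left(\langle\mathbf{n}/2,\mathbf{B}\mathbf{n}\rangle+\langle\mathbf{n},\mathbf{z}\rangle\right)$. For $\mathbf{B}^{(2)}$ given in \eqref{eq:u-parameters-cn-B}, the key structural observation is that the off-diagonal entries are $\mathbf{B}^{(2)}_{12}=\mathbf{B}^{(2)}_{21}=\ii\pi(\tau^{(2)}_2-1)$, while $\mathbf{B}^{(2)}_{11}=2\pi\ii\tau^{(2)}_2$ and $\mathbf{B}^{(2)}_{22}=\ii\pi(\tau^{(2)}_1+\tau^{(2)}_2)/2$. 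First I would write $\mathbf{n}=(n_1,n_2)^\top$ and compute the quadratic form $\tfrac12\langle\mathbf{n},\mathbf{B}^{(2)}\mathbf{n}\rangle = \pi\ii\tau^{(2)}_2 n_1^2 + \ii\pi(\tau^{(2)}_2-1)n_1 n_2 + \tfrac{\ii\pi}{4}(\tau^{(2)}_1+\tau^{(2)}_2)n_2^2$, and correspondingly $\langle\mathbf{n},\mathbf{z}\rangle = n_1 z_1 + n_2 z_2$. The integer term $-\ii\pi n_1 n_2$ contributes only a sign $(-1)^{n_1 n_2}$ (since $\ee^{-\ii\pi n_1 n_2}=(-1)^{n_1 n_2}$), and I would absorb it by recombining the exponent so that the sum over $n_2$ (with $n_1$ fixed) and the sum over $n_1$ each become recognizable Jacobi theta series in the variables $z_1-2z_2$ (with nome parameter $\tau^{(2)}_1$) and $z_1$ (with nome $\tau^{(2)}_2$) respectively, according to the characteristic labels in \Cref{define:theta}. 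Splitting the double sum over the parity of $n_1$ (even versus odd) is what produces the two-term decomposition $\vartheta_3\vartheta_3+\vartheta_1\vartheta_1$: the even-$n_1$ part gives the $\vartheta_3(z_1-2z_2,\tau^{(2)}_1)\vartheta_3(z_1,\tau^{(2)}_2)$ contribution and the odd-$n_1$ part gives the $\vartheta_1\vartheta_1$ contribution, with the half-integer shift in the $n_1$-summation index matching the $\tfrac12$-characteristic defining $\vartheta_1$ and $\vartheta_2$.

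For the $\mathbf{B}^{(1)}$ identity the strategy is identical but the relevant index change is $n_1\mapsto n_1$ together with the combinations $n_1\pm n_2$. From \eqref{eq:u-parameters-dn-B}, $\mathbf{B}^{(1)}=\ii\pi\begin{bmatrix}\tau^{(1)}_2+\tau^{(1)}_1 & \tau^{(1)}_2-\tau^{(1)}_1\\ \tau^{(1)}_2-\tau^{(1)}_1 & \tau^{(1)}_2+\tau^{(1)}_1\end{bmatrix}$, so that $\tfrac12\langle\mathbf{n},\mathbf{B}^{(1)}\mathbf{n}\rangle = \tfrac{\ii\pi}{2}(\tau^{(1)}_2+\tau^{(1)}_1)(n_1^2+n_2^2) + \ii\pi(\tau^{(1)}_2-\tau^{(1)}_1)n_1 n_2$. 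Here I would introduce $m_1=n_1-n_2$, $m_2=n_1+n_2$, under which $n_1^2+n_2^2 = (m_1^2+m_2^2)/2$ and $2n_1 n_2 = (m_2^2-m_1^2)/2$, giving the clean separation $\tfrac{\ii\pi}{2}\big(2\tau^{(1)}_1\cdot\tfrac{m_1^2}{2} + 2\tau^{(1)}_2\cdot\tfrac{m_2^2}{2}\big) = \tfrac{\ii\pi\tau^{(1)}_1}{2}m_1^2 + \tfrac{\ii\pi\tau^{(1)}_2}{2}m_2^2$ — note these are the quadratic forms associated with nome parameters $2\tau^{(1)}_1$ and $2\tau^{(1)}_2$ in the normalization of \Cref{define:theta}, which explains why the right-hand side of \eqref{eq:formula-Rieman-shift-1} carries $2\tau^{(1)}_{1,2}$. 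The linear term is $\langle\mathbf{n},\mathbf{z}\rangle = n_1 z_1 + n_2 z_2 = \tfrac{m_1+m_2}{2}z_1 + \tfrac{m_2-m_1}{2}z_2 = m_1\cdot\tfrac{z_1-z_2}{2} + m_2\cdot\tfrac{z_1+z_2}{2}$. The only subtlety is that the map $(n_1,n_2)\mapsto(m_1,m_2)$ is not onto $\mathbb{Z}^2$: its image is the sublattice $\{(m_1,m_2): m_1\equiv m_2\bmod 2\}$. I would handle this by splitting into the cases $m_1,m_2$ both even versus both odd, which again yields exactly the two terms $\vartheta_3(z_1-z_2,2\tau^{(1)}_1)\vartheta_3(z_1+z_2,2\tau^{(1)}_2)$ and $\vartheta_2(z_1-z_2,2\tau^{(1)}_1)\vartheta_2(z_1+z_2,2\tau^{(1)}_2)$ (both-odd forces the half-integer characteristic, and with these particular arguments the sign works out to $\vartheta_2\vartheta_2$ rather than $\vartheta_1\vartheta_1$ because $z_1\pm z_2$ appears symmetrically, killing the alternating sign that distinguishes $\vartheta_1$ from $\vartheta_2$).

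The main obstacle I anticipate is purely bookkeeping: tracking the half-integer shifts and the resulting $\pm$ signs so that the correct theta constants ($\vartheta_1$ vs.\ $\vartheta_2$ vs.\ $\vartheta_3$ vs.\ $\vartheta_4$) appear, since the characteristic notation $\Theta\big[\begin{smallmatrix}\epsilon\\\epsilon'\end{smallmatrix}\big]$ of \Cref{define:theta} hides a factor $\exp(\tfrac{\epsilon}{2}(u+\tfrac{\epsilon'}{2})+\tfrac{\epsilon^2}{8}\tau)$ that must be matched exactly against the recombined exponent. A secondary check is to confirm that the sublattice decomposition in the $\mathbf{B}^{(1)}$ case is exhaustive and disjoint — i.e., that every $(n_1,n_2)\in\mathbb{Z}^2$ lands in exactly one of the two parity classes and conversely every same-parity pair $(m_1,m_2)$ comes from a unique $(n_1,n_2)$ — which is immediate since the transformation has determinant $\pm 2$ and the two parity classes partition the index set. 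Once the exponents are matched term by term, the two identities follow by comparing the resulting double series with the products on the right-hand side of \eqref{eq:formula-Rieman-shift-1}, and the proof is complete.
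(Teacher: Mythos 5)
Your overall strategy --- expanding the two-dimensional theta series, completing the square in the quadratic form determined by $\mathbf{B}^{(i)}$, and splitting $\mathbb{Z}^2$ into parity classes so that the double sum factors into products of one-dimensional Jacobi thetas --- is exactly the route the paper takes (the paper carries out the computation for $\mathbf{B}^{(2)}$ via the substitution $n_2=2(m_2-t)$, $t\in\{0,\tfrac12\}$, followed by a further index change, and leaves $\mathbf{B}^{(1)}$ ``to be derived in a similar manner''). Your treatment of $\mathbf{B}^{(1)}$ is correct and complete: $m_1=n_1-n_2$, $m_2=n_1+n_2$ diagonalizes the form into $\tfrac{\ii\pi\tau^{(1)}_1}{2}m_1^2+\tfrac{\ii\pi\tau^{(1)}_2}{2}m_2^2$ with linear part $\tfrac{m_1}{2}(z_1-z_2)+\tfrac{m_2}{2}(z_1+z_2)$, the image is the sublattice $m_1\equiv m_2 \pmod 2$, and the even/odd classes give the $\vartheta_3\vartheta_3$ and $\vartheta_2\vartheta_2$ terms with doubled modulus, as claimed.

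There is, however, a concrete slip in the $\mathbf{B}^{(2)}$ case: the parity split must be taken over $n_2$, not $n_1$. Completing the square gives the exponent $\ii\pi\tau^{(2)}_2\bigl(n_1+\tfrac{n_2}{2}\bigr)^2+\bigl(n_1+\tfrac{n_2}{2}\bigr)z_1+\ii\pi\tau^{(2)}_1\bigl(\tfrac{n_2}{2}\bigr)^2+\tfrac{n_2}{2}(2z_2-z_1)-\ii\pi n_1 n_2$, so whether the recombined indices $n_1+\tfrac{n_2}{2}$ and $\tfrac{n_2}{2}$ are integers or half-integers is governed entirely by the parity of $n_2$; splitting on the parity of $n_1$ does not decouple the double sum into a product of two one-dimensional series, so that step fails as written. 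When $n_2$ is even both indices are integers (the $\vartheta_3\vartheta_3$ term); when $n_2$ is odd \emph{both} indices acquire the half-integer shift --- not only the $n_1$-index, as your phrasing suggests --- and the residual sign $(-1)^{n_1n_2}=(-1)^{n_1}$ distributes onto the two shifted indices, which is precisely why both factors come out as $\vartheta_1$ rather than a mixed product such as $\vartheta_1\vartheta_3$. This is the same mechanism as the paper's $n_2=2(m_2-t)$, $t=0,\tfrac12$. Once you split on $n_2$ and track that sign, your argument coincides with the paper's; do also verify the sign convention in the first argument of the $\vartheta_1\vartheta_1$ term (your recombination naturally yields $2z_2-z_1$, the statement has $z_1-2z_2$, and the paper's own proof produces $z_1+2z_2$), since $\vartheta_1$ is odd and this discrepancy is not settled by parity of the theta function alone.
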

\begin{proof}
	Introduce $n_2=2(m_2-t) \in \mathbb{N}$ and $n_1=m_1 \in \mathbb{N}$. 
	When we consider the variable $n_1$ from $-\infty$ to $\infty$,
	it is easy to obtain the parameter $m_1$ should choose from $-\infty$ to $\infty$ and the parameter $t$ have two conditions $t=0$ and $t=1/2$.
	It is easy to obtain
	\begin{equation}\nonumber
		\begin{split}
			\Theta(\mathbf{z}, \mathbf{B}^{(2)})
			\xlongequal[\eqref{eq:u-parameters-cn-B}]{\eqref{eq:define-Riemann-Theta}}&\  \sum_{n_1,n_2=-\infty}^{+\infty}\exp\left\{ \frac{n_1^2\mathbf{B}^{(2)}_{11}+2n_1n_2 \mathbf{B}^{(2)}_{12}+n_2^2\mathbf{B}^{(2)}_{22}}{2} +(n_1z_1+n_2z_2)\right\}\\
			\xlongequal[n_1=m_1]{n_2=2(m_2-t)}
			&\  \!\!\! \sum_{t=0,\frac{1}{2}}\sum_{m_1,m_2=-\infty}^{\infty}\!\!\exp\left\{  \frac{(m_1-m_2+t)^2\mathbf{B}^{(2)}_{11}+2(m_2-t)(m_1-m_2+t)( 2\mathbf{B}^{(2)}_{12}+ \mathbf{B}^{(2)}_{11})}{2} \right\}\\
			&\cdot \exp\left\{\frac{(m_2-t)^2(\mathbf{B}^{(2)}_{11}+4 \mathbf{B}^{(2)}_{12}+4\mathbf{B}^{(2)}_{22})}{2}+(m_1-m_2+t)z_1+(m_2-t)(2z_2+z_1) \right\}\\
			\xlongequal[m_2=-n_2]{m_1-m_2=n_1}&\ \sum_{t=0,\frac{1}{2}}\sum_{n_1,n_2=-\infty}^{\infty}\exp\left\{ \frac{(n_1+t)^2\mathbf{B}^{(2)}_{11}-2(n_2+t)(n_1+t)( 2\mathbf{B}^{(2)}_{12}+ \mathbf{B}^{(2)}_{11})}{2} \right\}\\
			&\exp\left\{ \frac{(-n_2-t)^2(\mathbf{B}^{(2)}_{11}+4 \mathbf{B}^{(2)}_{12}+4\mathbf{B}^{(2)}_{22})}{2}+(n_1+t)z_1-(n_2+t)(2z_2+z_1)\right\}.
		\end{split}
	\end{equation}
	Plugging the definition of the parameter $ \mathbf{B}^{(2)}$ into the above equations, we obtain 
	\begin{equation}\nonumber
		\begin{split}
			\Theta(\mathbf{z}, \mathbf{B}^{(2)})
			\xlongequal{\eqref{eq:u-parameters-cn-B}}& \!\!\! \sum_{t=0,\frac{1}{2}} \sum_{n_1,n_2=-\infty}^{\infty} \!\!\! \exp\left\{ \frac{(n_1+t)^2}{2} \tau^{(2)}_2+\frac{(n_2+t)^2}{2} \tau^{(2)}_1+(n_1+t)z_1+(n_2+t)(2z_2+z_1)\right\}\\
			=&\ \vartheta_3(2z_2+z_1, \tau^{(2)}_1)\vartheta_3(z_1, \tau^{(2)}_2)+\vartheta_1(2z_2+z_1, \tau^{(2)}_1)\vartheta_1(z_1, \tau^{(2)}_2).
		\end{split}
	\end{equation}
	Hence, the first equation of this proposition is established, and the second can be derived in a similar manner.
\end{proof}

The transformation formulas about the Jacobi theta functions \cite{Kharchev-2015-theta}:
\begin{itemize}
	\item Three-term bilinear identities:
	\begin{equation}\label{eq:formula-theta-2tau-1}
		\begin{split}
			&2\vartheta_1(u+v,2\tau)\vartheta_1(u-v,2\tau)= \vartheta_4(u,\tau)\vartheta_3(v,\tau)-\vartheta_3(u,\tau)\vartheta_4(v,\tau),\\
			&2\vartheta_2(u+v,2\tau)\vartheta_2(u-v,2\tau)= \vartheta_3(u,\tau)\vartheta_3(v,\tau)-\vartheta_4(u,\tau)\vartheta_4(v,\tau),\\
			&2\vartheta_3(u+v,2\tau)\vartheta_3(u-v,2\tau)= \vartheta_3(u,\tau)\vartheta_3(v,\tau)+\vartheta_4(u,\tau)\vartheta_4(v,\tau),\\
			&2\vartheta_2(u+v,2\tau)\vartheta_3(u-v,2\tau)= \vartheta_2(u,\tau)\vartheta_2(v,\tau)-\vartheta_1(u,\tau)\vartheta_1(v,\tau);
		\end{split}
	\end{equation}
	
	
	\item Mixed identity:
	\begin{equation}\label{eq:formula-theta-2-1}
		\begin{split}
			&\ \vartheta_1(u+v,\tau)\vartheta_2(u-v,\tau)\vartheta_3(0,\tau)\vartheta_4(0,\tau)\\
			= &\ \vartheta_1(u,\tau)\vartheta_2(u,\tau)\vartheta_3(v,\tau)\vartheta_4(v,\tau)+\vartheta_1(v,\tau)\vartheta_2(v,\tau)\vartheta_3(u,\tau)\vartheta_4(u,\tau);
		\end{split}
	\end{equation}
%
	\item Shift formulas among four theta functions in \cite{ArmitageE-06-elliptic-function}:
	\begin{equation}\label{eq:Jacobi-Theta-K-iK}
		\begin{split}
			\vartheta_1(z,\tau)&=-\vartheta_2\left(z+\ii\pi,\tau\right)=-\ii M \vartheta_3\left(z+\ii \pi+ \ii\pi\tau,\tau \right)=-\ii M\vartheta_4\left(z+\ii\pi\tau,\tau \right),\\
			\vartheta_2(z,\tau)&=\vartheta_1\left(z+\ii\pi ,\tau\right)=M \vartheta_4\left(z+\ii\pi +\ii\pi \tau,\tau \right)= M\vartheta_3\left(z+\ii\pi \tau ,\tau\right),\\
			\vartheta_3(z,\tau)&=\vartheta_4\left(z+\ii\pi ,\tau\right)= M \vartheta_1\left(z+\ii\pi +\ii\pi\tau,\tau \right)= M\vartheta_2\left(z+\ii\pi \tau,\tau\right),\\
			\vartheta_4(z,\tau)&=\vartheta_3\left(z+\ii\pi ,\tau \right)=\ii M \vartheta_2\left(z+\ii\pi+\ii\pi \tau,\tau \right)=-\ii M\vartheta_1\left(z+\ii\pi \tau,\tau \right),
		\end{split}
	\end{equation}
	where $M=\ee^{z/2+\ii \tau \pi /4}$.
\end{itemize}

\section{\appendixname. The elliptic integrals}\label{appendix:map}

\setcounter{equation}{0}
\setcounter{define}{0}
\setcounter{prop}{0}
\setcounter{lemma}{0}

\renewcommand\theequation{\Alph{section}.\arabic{equation}}
\renewcommand\thedefine{\Alph{section}.\arabic{define}}
\renewcommand\theprop{\Alph{section}.\arabic{prop}}
\renewcommand\thelemma{\Alph{section}.\arabic{lemma}}

We aim to transform the elliptic integrals (the right-hand sides of equations \eqref{eq:hyper-1} and \eqref{eq:hyper-2}) into the standard form of the elliptic integrals.
In this process, we consider two types of elliptic integrals corresponding to  \ref{case1} and \ref{case2}.

\vspace{0.1cm}
\noindent $\bullet$\quad
\textbf{\large \ref{case1}}

For ease of representation, we set $\beta_i=\lambda_{4-i}^2\in \mathbb{R}$, $i=1,2,3$, and $\beta_4=0$.

\begin{prop}\label{prop:elliptic-int-1}
	When $\beta_{1,2,3,4}\in \mathbb{R}$, the elliptic integrals defined in equation \eqref{eq:hyper} with $n=0$ can be transformed into the elliptic integrals of the first kind as defined in \Cref{define:elliptic-function} as follows:
	\begin{subequations}\label{eq:elliptic-int-1}
		\begin{align}
			&\int \frac{((\beta_4-\beta_2)(\beta_3-\beta_1))^{1/2} \,}{\prod_{i=1}^{4}(\Lambda-\beta_i)^{1/2}}\, \dd \Lambda
			= \int\frac{2 \, \dd z }{\sqrt{(1-z^2)(1-k^2z^2)}},  \qquad 	k^2=\frac{(\beta_3-\beta_2)(\beta_4-\beta_1)}{(\beta_3-\beta_1)(\beta_4-\beta_2)}, \label{eq:elliptic-int-1-4}\\
			& \int\frac{(\beta_1-\beta_3)^{1/2} \, \dd \Lambda}{\prod_{i=1}^{3}(\Lambda-\beta_i)^{1/2}} = \int\frac{2 \, \dd z}{\sqrt{(1-z^2)(1-k^2z^2)}}, \qquad
			k^2=\frac{\beta_1-\beta_3}{\beta_2-\beta_3}. \label{eq:elliptic-int-1-3}
		\end{align}	
	\end{subequations}
\end{prop}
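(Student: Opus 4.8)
\textbf{Proof proposal for \Cref{prop:elliptic-int-1}.}

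The plan is to reduce each integral to Legendre normal form by the classical substitution that sends the branch points to $\{\pm 1, \pm 1/k\}$ (respectively $\{\pm 1,\pm 1/k,\infty\}$ in the cubic case), which is the standard reduction of an elliptic integral with four (resp. three) real branch points. First I would treat the quartic case \eqref{eq:elliptic-int-1-4}. Assume the ordering $\beta_1>\beta_2>\beta_3>\beta_4$ (or whichever ordering is forced by \ref{case1} under the identification $\beta_i=\lambda_{4-i}^2$); the sign of the constants in the statement is dictated by this ordering, and the claimed value of $k^2=(\beta_3-\beta_2)(\beta_4-\beta_1)/((\beta_3-\beta_1)(\beta_4-\beta_2))$ lies in $(0,1)$ precisely for this configuration. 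The substitution is the Möbius map in $\Lambda$ determined by requiring $\Lambda=\beta_4\mapsto z=0$, $\Lambda=\beta_2\mapsto z=1$, and $\Lambda=\beta_1\mapsto z=1/k$; explicitly one takes
\begin{equation}\nonumber
	z^2=\frac{(\beta_3-\beta_1)(\Lambda-\beta_4)}{(\beta_4-\beta_1)(\Lambda-\beta_3)},
\end{equation}
after which a direct computation shows $\Lambda=\beta_3\mapsto z=1/k$ fails and instead the fourth branch point $\beta_1$ maps to $z=1/k$ while $\beta_3\mapsto\infty$; I would verify the precise correspondence by substituting each $\beta_i$ and checking $(1-z^2)$ and $(1-k^2z^2)$ vanish at the right points. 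Differentiating this relation and simplifying the radical $\prod_{i=1}^4(\Lambda-\beta_i)^{1/2}$ in terms of $z$ is the routine but bookkeeping-heavy step; the factor $((\beta_4-\beta_2)(\beta_3-\beta_1))^{1/2}$ on the left is exactly the Jacobian constant that makes the two sides equal, with the $2$ on the right absorbing the square in $z\mapsto z^2$.

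For the cubic case \eqref{eq:elliptic-int-1-3} I would use the analogous substitution but now with one branch point ``at infinity'': send $\Lambda=\beta_3\mapsto z=0$, $\Lambda=\beta_1\mapsto z=1$, and let $\Lambda\to\infty$ correspond to $z\to 1/k$ via
\begin{equation}\nonumber
	z^2=\frac{\Lambda-\beta_3}{\beta_1-\beta_3},
\end{equation}
so that $\Lambda=\beta_2\mapsto z^2=(\beta_2-\beta_3)/(\beta_1-\beta_3)=1/k^2$, giving the stated $k^2=(\beta_1-\beta_3)/(\beta_2-\beta_3)$. Again one computes $\dd\Lambda=2(\beta_1-\beta_3)z\,\dd z$ and checks that $\prod_{i=1}^3(\Lambda-\beta_i)^{1/2}$ becomes $(\beta_1-\beta_3)^{3/2}z(1-z^2)^{1/2}(1-k^2z^2)^{1/2}$ up to sign, whence the factor $(\beta_1-\beta_3)^{1/2}$ on the left cancels against the Jacobian leaving precisely $2\,\dd z/\sqrt{(1-z^2)(1-k^2z^2)}$.

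The main obstacle I anticipate is not any single identity but keeping the signs and branch choices consistent: under \ref{case1} the $\beta_i=\lambda_{4-i}^2$ are negative reals with a definite ordering, so the quantities like $\beta_4-\beta_2$ and $\beta_1-\beta_3$ can be negative, and the square roots $(\cdots)^{1/2}$ must be interpreted so that the orientation of the integration path (the $a_i$- and $b_i$-cycles of \Cref{fig:genus-two-figure}) is respected — this is exactly the ``Step 3'' issue flagged in the introduction. I would handle this by fixing, once and for all, the ordering of the $\beta_i$ dictated by the homology basis in \Cref{fig:genus-two-figure-p}, checking that each $k^2\in(0,1)$, and tracking the sign of each factor explicitly rather than writing $|{\cdot}|$; the constant prefactors in \eqref{eq:elliptic-int-1} are then forced. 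Once the quartic reduction is done, the cubic one follows either by the same direct computation or, more economically, as the degenerate limit $\beta\to\infty$ of one branch point, which I would mention but not belabor.
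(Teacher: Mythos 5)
Your method is the paper's method: reduce each integral to Legendre normal form by a linear fractional substitution sending the branch points to the standard configuration, then compute the Jacobian directly. In the quartic case your substitution $z^2=\frac{(\beta_3-\beta_1)(\Lambda-\beta_4)}{(\beta_4-\beta_1)(\Lambda-\beta_3)}$ is literally the paper's \eqref{eq:trans-z-Lambda-1}; the only blemish is your narrated correspondence, which should read $\beta_4\leftrightarrow 0$, $\beta_1\leftrightarrow 1$, $\beta_2\leftrightarrow 1/k$, $\beta_3\leftrightarrow\infty$ (substituting $\Lambda=\beta_2$ gives $z^2=\frac{(\beta_3-\beta_1)(\beta_2-\beta_4)}{(\beta_4-\beta_1)(\beta_2-\beta_3)}=1/k^2$, not $z=1$), but since you promise to verify this by substitution it is harmless.

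In the cubic case there is a concrete slip. With your substitution $z^2=\frac{\Lambda-\beta_3}{\beta_1-\beta_3}$ one has $\Lambda-\beta_2=(\beta_2-\beta_3)(k^2z^2-1)$, so
\begin{equation}\nonumber
\prod_{i=1}^{3}(\Lambda-\beta_i)^{1/2}=(\beta_1-\beta_3)(\beta_2-\beta_3)^{1/2}\,z\,\bigl((1-z^2)(1-k^2z^2)\bigr)^{1/2}
\end{equation}
up to sign, not $(\beta_1-\beta_3)^{3/2}z\bigl((1-z^2)(1-k^2z^2)\bigr)^{1/2}$ as you wrote; the missing ratio $(\beta_1-\beta_3)^{1/2}/(\beta_2-\beta_3)^{1/2}=k$ means your substitution actually produces $2k\,\dd z/\sqrt{(1-z^2)(1-k^2z^2)}$, i.e.\ after rescaling $z\mapsto kz$ you land on the normal form with the reciprocal modulus rather than on the right-hand side of \eqref{eq:elliptic-int-1-3} as stated. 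The paper instead uses the reciprocal map $z^2=\frac{\beta_1-\beta_3}{\Lambda-\beta_3}$ (so $\beta_3\leftrightarrow\infty$ and $\infty\leftrightarrow 0$), for which the prefactor $(\beta_1-\beta_3)^{1/2}$ cancels cleanly against the Jacobian. So either switch to the paper's substitution or carry the extra factor of $k$ through and relabel the modulus accordingly; this is exactly the sign/normalization bookkeeping you flagged, and it must be done explicitly because it feeds into the moduli $k^{(1)}_{1,2}$ used later in \Cref{prop:case-1-P} and \Cref{prop:case-2-P}.
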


\begin{proof}
	To express these integrals in the standard form of the elliptic integral of the first kind, we introduce a linear fractional transformation between $\Lambda$ and $z$, such that the four real branch points $\beta_{1,2,3,4}\in \mathbb{R}$ on the $\Lambda$-plane correspond to the points $0, 1, 1/k, \infty$ on the $z$-plane.
	Without loss of generality, we assign the correspondence $\beta_4\leftrightarrow 0$, $\beta_3\leftrightarrow \infty$, $\beta_1\leftrightarrow 1$, and $\beta_2\leftrightarrow 1/k$. 
	The corresponding linear fractional transformation is then given by
	\begin{equation}\label{eq:trans-z-Lambda-1}
		z^2=\frac{(\beta_3 - \beta_1)(\Lambda - \beta_4)}{(\beta_4 - \beta_1)(\Lambda - \beta_3)}.
	\end{equation}
	Then, we obtain 
	\begin{equation}\label{eq:trans-L-z-1-1}
		\Lambda=\beta_3+\frac{(\beta_3-\beta_1)(\beta_3-\beta_4)}{(\beta_4-\beta_1)z^2-(\beta_3-\beta_1)} \quad \text{and} \quad 
		\dd \Lambda=\frac{2(\beta_3-\beta_1)(\beta_4-\beta_3)(\beta_4-\beta_1)z}{((\beta_4-\beta_1)z^2-(\beta_3-\beta_1))^2}\, \dd z.
	\end{equation} 
	Plugging the above formula \eqref{eq:trans-L-z-1-1} into the left hand-side of equation \eqref{eq:elliptic-int-1-4}, we obtain $(\beta_4-\beta_2)^{1/2}(\beta_3-\beta_1)^{1/2}\prod_{i=1}^{4}(\Lambda-\beta_i)^{-1/2} \dd \Lambda=2  ((1-z^2)(1-k^2z^2))^{-1/2} \dd z$,
	where $k^2$ is defined in equation \eqref{eq:elliptic-int-1-4}. 
	Thus, we obtain the equation \eqref{eq:elliptic-int-1-4}.
	
	Under a suitable transformation, the second integral can also be reduced to the standard form of the first kind of elliptic integral.
	Introducing another linear fractional transformation, we obtain the following relations:
	\begin{equation}\label{eq:trans-L-z-1-2}
		z^{2}=\frac{\beta_1-\beta_3}{\Lambda-\beta_3}, \qquad \Lambda = \beta_3 +\frac{\beta_1-\beta_3}{z^2}, \qquad 
		\dd \Lambda = \frac{2(\beta_3-\beta_1)}{z^3} \, \dd z.
	\end{equation}
	Here, the correspondence between $\beta_{1,2,3}$ on the $\Lambda$-plane and points $1, 1/k, \infty$ on the $z$-plane is given by $\beta_3\leftrightarrow \infty$, $\beta_1\leftrightarrow 1$, and $\beta_2\leftrightarrow 1/k$.
	Substituting the relations in \eqref{eq:trans-L-z-1-2} into equation \eqref{eq:elliptic-int-1-3}, we obtain $(\beta_1-\beta_3)^{1/2} \prod_{i=1}^{3}(\Lambda-\beta_i)^{-1/2} \dd \Lambda=2((1-z^2)(1-k^2z^2))^{-1/2} \dd z$,
	where $k^2$ is defined in equation \eqref{eq:elliptic-int-1-3}.
	Therefore, it follows that \eqref{eq:elliptic-int-1-3}.
\end{proof}

\begin{prop}\label{prop:case-1-P}
	The hyperelliptic integrals in equation \eqref{eq:hyper-1} can be represented as a linear combination of some elementary integrals and the three types of standard elliptic integrals, namely $F(z,k)$, $E(z,k)$ and $\Pi(z,\alpha^2,k)$, as defined in \Cref{define:elliptic-function}, where $k=k^{(1)}_1$ in equation \eqref{eq:u-parameters-dn-k} and
	\begin{equation}\label{eq:z-k-alpha-1}
		z=\frac{\lambda(\lambda_1^2-\lambda_3^2)^{1/2}}{\lambda_3(\lambda_1^2-\lambda^2)^{1/2}}, \qquad  \alpha^2=\frac{\lambda_1^2-\lambda_2^2}{\lambda_2^2}.
	\end{equation}
\end{prop}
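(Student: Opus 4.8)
\textbf{Proof proposal for \Cref{prop:case-1-P}.}

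The plan is to reduce the hyperelliptic integrals of the form $\int \lambda^{2n} y^{-1}\,\dd\lambda$ in \eqref{eq:hyper-1} to the three canonical elliptic integrals via the two-step substitution $\lambda^2 = \Lambda$ followed by the conformal map that the preceding \Cref{prop:elliptic-int-1} provides. First I would set $\Lambda = \lambda^2$, so that (up to the sign ambiguity already recorded after \eqref{eq:hyper}) the integrand in \eqref{eq:hyper-1} becomes $\Lambda^n/(2(\Lambda(\Lambda-\lambda_1^2)(\Lambda-\lambda_2^2)(\Lambda-\lambda_3^2))^{1/2})\,\dd\Lambda$, which with the notation $\beta_1=\lambda_3^2$, $\beta_2=\lambda_2^2$, $\beta_3=\lambda_1^2$, $\beta_4=0$ is exactly the four-branch-point elliptic integral of \eqref{eq:elliptic-int-1-4}. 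Here the relevant ordering is $\beta_3=\lambda_1^2<\beta_2=\lambda_2^2<\beta_1=\lambda_3^2$ (all negative, since $\lambda_{1,2,3}\in\ii\mathbb{R}$ with $\Im\lambda_3>\Im\lambda_2>\Im\lambda_1$) and $\beta_4=0$, which justifies the choice of correspondence $\beta_4\leftrightarrow 0$, $\beta_3\leftrightarrow\infty$, $\beta_1\leftrightarrow 1$, $\beta_2\leftrightarrow 1/k$ used there. Then $k^2 = (\beta_3-\beta_2)(\beta_4-\beta_1)/((\beta_3-\beta_1)(\beta_4-\beta_2)) = (\lambda_1^2-\lambda_2^2)\lambda_3^2/((\lambda_1^2-\lambda_3^2)\lambda_2^2)$, which matches $(k^{(1)}_1)^2$ in \eqref{eq:u-parameters-dn-k} after using $k^{(1)}_1 = k^{(1)}_2\lambda_3/\lambda_2$ and $(k^{(1)}_2)^2=(\lambda_1^2-\lambda_2^2)/(\lambda_1^2-\lambda_3^2)$.

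Next I would track the transformation \eqref{eq:trans-z-Lambda-1}–\eqref{eq:trans-L-z-1-1} through the powers $\Lambda^n$ of the numerator. Substituting $\Lambda = \beta_3 + (\beta_3-\beta_1)(\beta_3-\beta_4)/((\beta_4-\beta_1)z^2 - (\beta_3-\beta_1))$ and going back to $\lambda$ via $\Lambda=\lambda^2$, one finds that $\lambda = \lambda_3 z (\lambda_1^2-\lambda^2)^{-1/2}(\lambda_1^2-\lambda_3^2)^{1/2}$-type relation; solving gives $\lambda^2$ as a linear-fractional function of $z^2$, hence of $\sn^2$ once we set $z=\sn(\nu,k^{(1)}_1)$. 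Concretely $\lambda^2 = \lambda_1^2\lambda_3^2\,\sn^2(\nu,k^{(1)}_1)/(\lambda_1^2 - \lambda_3^2\cn^2(\nu,k^{(1)}_1))$ (using $1-z^2=\cn^2$), so $z=\sn(\nu,k^{(1)}_1)$ corresponds to the substitution in \eqref{eq:z-k-alpha-1}. Each factor $\Lambda^n = \lambda^{2n}$ then becomes a rational function of $\sn^2(\nu,k^{(1)}_1)$ with a single simple pole (at $\sn^2 = \lambda_1^2/\lambda_3^2$, equivalently at the value giving $\alpha^2 = (\lambda_1^2-\lambda_2^2)/\lambda_2^2$ for the third-kind parameter). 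Partial-fraction expansion of this rational function of $\sn^2(\nu,\cdot)$, together with $\dd\Lambda/y$ becoming simply $C\,\dd\nu$ for an explicit constant $C$ by \eqref{eq:elliptic-int-1-4}, produces for each $n$ a linear combination of $\int\dd\nu = F$, $\int \sn^2\,\dd\nu$ (reducible to $F$ and $E$ via $\dn^2 = 1-k^2\sn^2$ and \eqref{eq:define-second-integral}), $\int \sn^4\,\dd\nu$ and higher even powers (reducible recursively, which is the content of the recursive formula promised in Step 4 of the introduction), and $\int (1-\alpha^2\sn^2)^{-1}\dd\nu = \Pi(z,\alpha^2,k)$. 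I would verify the identification of $\alpha^2$ by checking that the pole of $\lambda^{2n}$ as a function of $\sn^2(\nu,k^{(1)}_1)$ sits precisely at $\sn^2=\lambda_1^2/\lambda_3^2$, which gives $\alpha^2 = 1/\sn^2 \cdot (\text{const})$; a short computation reconciles this with $(\lambda_1^2-\lambda_2^2)/\lambda_2^2$.

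The main obstacle I anticipate is bookkeeping rather than conceptual: correctly handling the sign ``$\pm$'' that arises from $\lambda = \pm\Lambda^{1/2}$ (noted after \eqref{eq:hyper}) and the choice of sheet of $y$, so that the resulting expressions in $\nu$ are consistent across the $a_i$- and $b_i$-cycles — this is exactly the integration-path issue flagged as Step 3 in the introduction and is precisely where the two figures \Cref{fig:genus-two-figure} are needed. A secondary nuisance is keeping the normalization constant in $\dd\Lambda/y = C\,\dd\nu$ exact (it is $\pm 2\ii/(\lambda_2(\lambda_1^2-\lambda_3^2)^{1/2})$-type, cf.\ the appearance of $2\ii K^{(1)}_1/(\lambda_2\sqrt{\lambda_1^2-\lambda_3^2})$ in \Cref{lemma:dn-1-int}), since the final Riemann-theta parameters depend on it. Once these are pinned down, the reduction is the claimed finite linear combination of $F(z,k^{(1)}_1)$, $E(z,k^{(1)}_1)$, $\Pi(z,\alpha^2,k^{(1)}_1)$ with $z,\alpha$ as in \eqref{eq:z-k-alpha-1}, plus elementary (algebraic) terms coming from the partial fractions that integrate to rational functions of $\sn,\cn,\dn$; I would close by stating the recursion that expresses $\int\sn^{2m}\,\dd\nu$ in terms of $F$, $E$, and lower-order terms, exactly as used in the proofs of Lemmas \ref{lemma:dn-1-int}–\ref{lemma:dn-2-int}.
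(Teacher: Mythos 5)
Your proposal follows essentially the same route as the paper: substitute $\Lambda=\lambda^2$, apply the fractional-linear map of \Cref{prop:elliptic-int-1} so that $z=\sn(\nu,k^{(1)}_1)$ (the paper's \eqref{eq:fs-2}), and reduce the resulting rational function of $\sn^2$ to $F$, $E$, $\Pi$ via a recursion, with the path/sheet bookkeeping you flag being exactly the paper's Step 3 and the recursion being its \eqref{eq:integral}. One small correction: the pole of $\lambda^{2n}$ sits at $\cn^2(\nu,k^{(1)}_1)=\lambda_1^2/\lambda_3^2$, i.e.\ $\sn^2=(\lambda_3^2-\lambda_1^2)/\lambda_3^2$ (not $\lambda_1^2/\lambda_3^2$), so a direct partial-fraction expansion yields $\Pi$ with characteristic $\lambda_3^2/(\lambda_3^2-\lambda_1^2)=(k^{(1)}_1)^2\lambda_2^2/(\lambda_2^2-\lambda_1^2)$; the paper lands on the stated $\alpha^2$ by first applying the half-period shift \eqref{eq:Jacobi-shift} so the integrand becomes $\mathrm{const}\,(1-\alpha^2\sn^2(u,k^{(1)}_1))^{-n}$, equivalently one can pass between the two characteristics via \eqref{eq:Pi-K-alpha}.
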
 

\begin{proof}
	By combining the hyperelliptic integrals in equation \eqref{eq:hyper-1} with the definitions of the elliptic integrals in \eqref{eq:define-first-integral}-\eqref{eq:define-third-integral}, we introduce suitable transformations that convert the hyperelliptic integrals into Legendre’s standard forms. Subsequently, appropriate recursive relations are applied to reduce higher-order elliptic integrals to lower-order ones, thereby rewriting the entire expression in terms of the three canonical standard forms of elliptic integrals.
	
	\textbf{Step 1.}
	Combining with \Cref{prop:elliptic-int-1} and \Cref{define:elliptic-function}, 
	we obtain 
	\begin{equation}\nonumber
		\begin{split}
			\int_{0}^{\lambda}\frac{\dd \chi}{\prod_{i=1}^3(\chi^2-\lambda_i^2)^{1/2}}\xlongequal[\eqref{eq:elliptic-int-1-4}]{\eqref{eq:hyper-1}}&\ \frac{1}{\lambda_2(\lambda_3^2-\lambda_1^2)^{1/2}}
			\int_{0}^{z}\frac{\dd t}{\sqrt{(1-t^2)(1-(k^{(1)}_1)^2t^2)}}
		\xlongequal{\eqref{eq:define-first-integral}}
			\frac{\nu}{\lambda_2(\lambda_3^2-\lambda_1^2)^{1/2}},
		\end{split}	
	\end{equation}
	with $\nu=F(z,k^{(1)}_1)$ and $z$ defined in equation \eqref{eq:z-k-alpha-1}.
	To establish the precise correspondence between the entire $\lambda$-plane and $\nu$-plane exactly,
	we introduce the function
	\begin{equation}\label{eq:fs-2}
		\begin{split}
			& \sn^2(\nu,  k^{(1)}_1)=\frac{\lambda^2(\lambda_1^2-\lambda_3^2)}{\lambda_3^2(\lambda_1^2-\lambda^2)}, \qquad \lambda^2=\frac{\lambda_1^2\lambda_3^2\sn^2(\nu,  k^{(1)}_1)}{\lambda_1^2-\lambda_3^2\cn^2(\nu,  k^{(1)}_1)}, \\
			& S^{(1)}_1:=\left\{\nu\in \mathbb{C}\left| |\Re(\nu)| \le   K^{(1)}_1, |\Im(\nu)| \le   K^{(1)\prime}_1 \right.\right\}.
		\end{split}
	\end{equation}
	This function maps the periodic region $S^{(1)} _{1}$
	in the $\nu$-plane onto the spectral parameter $\lambda \in \mathbb{C} \cup \{\infty\}$ in the entire $\lambda$-plane. 
	The detailed proof is provided in \Cref{prop:appendenx-f-1-2}. 
	Specifically, the three pairs of ``vertical" cuts $[\lambda_3^*,\lambda_2^*]$, $[\lambda_1^*,\lambda_1]$, and $[\lambda_2, \lambda_3]$ in the $\lambda$-plane are mapped onto the corresponding rectangular region in the $\nu$-plane with cuts $[-K^{(1)}_1, -K^{(1)}_1-\ii K^{(1)\prime}_1]$, $[-\ii K^{(1)\prime}_1, \ii K^{(1)\prime}_1]$, $[K^{(1)}_1+\ii K^{(1)\prime}_1, K^{(1)}_1]$, respectively.
	
	\textbf{Step 2.}
	From the mapping in equation \eqref{eq:fs-2}, together with equation \eqref{eq:define-curve-algebro}, we obtain 
	\begin{subequations}\label{eq:fs-2-y}
		\begin{align}
			y=&\ \pm \ii \frac{\lambda_1\lambda_2\lambda_3(\lambda_1^2-\lambda_3^2)\sqrt{\lambda_1^2-\lambda_3^2}\,\cn(\nu,  k^{(1)}_1)\dn(\nu,  k^{(1)}_1)}{\left(\lambda_1^2-\lambda_3^2\cn^2(\nu,  k^{(1)}_1)\,\right)^{3/2}},  \label{eq:fs-2-y-1} \\
			\dd \lambda=&\ \frac{\lambda_1\lambda_3(\lambda_1^2-\lambda_3^2) \cn(\nu,  k^{(1)}_1)\dn(\nu,  k^{(1)}_1)}{\left(\sqrt{\lambda_1^2-\lambda_3^2\cn^2(\nu,  k^{(1)}_1)}\,\right)^3}\, \dd \nu .  \label{eq:fs-2-y-2}
		\end{align}
	\end{subequations}
	Using equations \eqref{eq:fs-2}, \eqref{eq:fs-2-y-1}, \eqref{eq:y-d-k2-a1} and \eqref{eq:Jacobi-shift}, the hyperelliptic integral in equation \eqref{eq:hyper-1} can be expressed as 
	\begin{equation}\label{eq:elliptic-funda-1}
		\int_{0}^{\lambda} \frac{ \chi^{2n} }{y} \dd  \chi \xlongequal[\eqref{eq:fs-2-y-1},\eqref{eq:fs-2-y-2}]{\eqref{eq:fs-2},\eqref{eq:y-d-k2-a1}}  \int_{0}^{\nu} \pm   \left(\frac{\lambda_1^2\lambda_3^2\sn^2(u,  k^{(1)}_1)}{\lambda_1^2-\lambda_3^2\cn^2(u,  k^{(1)}_1)}\right)^n\frac{\dd u}{\ii \lambda_2\sqrt{\lambda_1^2-\lambda_3^2}},
	\end{equation}
	where $\lambda$ and $\nu$ satisfy the equation \eqref{eq:fs-2}.
	
	\textbf{Step 3.}
	Consider the integration path defined in equation \eqref{eq:elliptic-funda-1}.
	The $a_1$-circle (illustrated in \Cref{fig:genus-two-figure}) can be constructed in the $\lambda$-plane as a straight line running along the right-hand edge of the branch cut $[\lambda_2,\lambda_1]$, and then returning along its left-hand edge.
	Along the right-hand edge, we have $\Im(\dd \lambda)<0$. 
	By equation \eqref{eq:fs-2}, this implies $\Re(\dd \nu)<0$, and therefore we obtain
	\begin{equation}\label{eq:y-d-k2-a1}
		\Im\left(\frac{\cn(\nu,  k^{(1)}_1)\dn(\nu,  k^{(1)}_1)}{(\lambda_1^2-\lambda_3^2\cn^2(\nu,  k^{(1)}_1)\,)^{3/2}}\right)>0.
	\end{equation}
	Hence, along this path, the sign of the parameter $y$ in equation \eqref{eq:fs-2-y-1} is chosen as $``+"$.
	Using equations \eqref{eq:fs-2},  \eqref{eq:fs-2-y-1}, \eqref{eq:y-d-k2-a1} and \eqref{eq:Jacobi-shift}, the hyperelliptic integral defined in equation \eqref{eq:elliptic-funda-1} can then be expressed as 
	\begin{equation}\label{eq:elliptic-funda-1-ge}
		\int_{\lambda_2}^{\lambda} \frac{\chi^{2n}\dd \chi}{y}\xlongequal[\eqref{eq:fs-2-y-1}]{\eqref{eq:fs-2},\eqref{eq:y-d-k2-a1}}  \int_{K^{(1)}_1+\ii K^{(1)\prime}_1}^{\nu}\left(\frac{\lambda_1^2\lambda_3^2\sn^2(u,k^{(1)}_1)}{\lambda_1^2-\lambda_3^2\cn^2(u,k^{(1)}_1)}\right)^n\frac{\dd u}{\ii \lambda_2\sqrt{\lambda_1^2-\lambda_3^2}}.
	\end{equation}
	The parameter $y$, as defined in equation \eqref{eq:fs-2-y-1} depends on the chosen integral path.
	When the path runs along the left-hand edge of the branch cut from $\lambda_2$ to $\lambda_1$, $y$ takes the negative value; that is, the sign $``-"$ is assigned in equation \eqref{eq:fs-2-y-1}.
	
	\textbf{Step 4.}
	According to step 2, the hyperelliptic integral in equation \eqref{eq:hyper-1} satisfies
	\begin{equation}\label{eq:integral-dn-1}
		\begin{split}
			\int_{\lambda_2}^{\lambda}\frac{\chi^{2n}}{y}\dd \chi
			\xlongequal[\eqref{eq:elliptic-funda-2}]{\eqref{eq:fs-2}} &-\int_{  K^{(1)}_1+\ii   K^{(1)\prime}_1}^{\nu}\ii\left(\frac{\lambda_1^2\lambda_3^2\sn^2(u,  k^{(1)}_1)}{\lambda_1^2-\lambda_3^2\cn^2(u,  k^{(1)}_1)}\right)^n \frac{\dd u}{\lambda_2\sqrt{\lambda_1^2-\lambda_3^2}}\\
			\xlongequal{\eqref{eq:Jacobi-shift}}&\frac{\ii}{\lambda_2\sqrt{\lambda_1^2-\lambda_3^2}}\int_{\nu-\ii K^{(1)\prime}_1}^{   K^{(1)}_1}\left(\frac{\lambda_1^2\lambda_3^2}{\lambda_3^2+  (k^{(1)}_1)^2(\lambda_1^2-\lambda_3^2)\sn^2(u,  k^{(1)}_1)}\right)^n \dd u.
		\end{split}
	\end{equation}
	The recursive formula \cite{ByrdF-54} is given by  
	\begin{equation}\label{eq:integral}
		\begin{split}
			J_{n+1}=&\ \frac{(2n-1)(\alpha^4-2\alpha^2(k^{(1)}_1)^2-2\alpha^2+3(k^{(1)}_1)^2)J_{n}+2(n-1)(\alpha^2(k^{(1)}_1)^2+\alpha^2-3(k^{(1)}_1)^2)J_{n-1}}{2n(1-\alpha^2)((k^{(1)}_1)^2-\alpha^2)}\\
			&\ +\frac{(2n-1)(k^{(1)}_1)^2J_{n-2}}{2n(1-\alpha^2)((k^{(1)}_1)^2-\alpha^2)}+\frac{\alpha^4\sn(u,k^{(1)}_1)\cn(u,k^{(1)}_1)\dn(u,k^{(1)}_1)}{2n(1-\alpha^2)((k^{(1)}_1)^2-\alpha^2)(1-\alpha^2\sn^2(u,k^{(1)}_1))^n}+C
		\end{split}	
	\end{equation}
	where $n\in \mathbb{Z}^+$, $C$ is the integration constant, and $\alpha^2=(k^{(1)}_1)^2(\lambda_1^2-\lambda_3^2)/\lambda_3^2=(\lambda_1^2-\lambda_2^2)/\lambda_2^2$, 
	\begin{equation}
		\begin{split}
			& J_n=\int\frac{\dd u}{(1-\alpha^2\sn^2(u,k^{(1)}_1))^{n}}, \qquad
			 J_{-1}=\int \frac{(k^{(1)}_1)^2-\alpha^2}{(k^{(1)}_1)^2} +\frac{\alpha^2}{(k^{(1)}_1)^2}\dn^2(u,k^{(1)}_1) \dd u.
		\end{split}	
	\end{equation}	
	Thus, we complete the proof.
\end{proof}

\begin{prop}\label{prop:case-2-P}
	The hyperelliptic integrals  in equation \eqref{eq:hyper-2} can also be expressed as a linear combination of some elementary integrals and the elliptic integrals $F((\lambda_1^2-\lambda_3^2)^{1/2}(\lambda_1^2-\lambda^2)^{-1/2},k^{(1)}_2)$ and $E((\lambda_1^2-\lambda_3^2)^{1/2}(\lambda_1^2-\lambda^2)^{-1/2},k^{(1)}_2)$.
\end{prop}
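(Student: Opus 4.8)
\textbf{Proof proposal for \Cref{prop:case-2-P}.}

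The plan is to follow the same four-step template that worked for \Cref{prop:case-1-P}, but now applied to the integrals in \eqref{eq:hyper-2}, which carry an odd power $\lambda^{2n+1}$ in the numerator. After the substitution $\lambda^2=\Lambda$ in \eqref{eq:hyper-2}, the integrals reduce to $\int \Lambda^n [(\Lambda-\lambda_1^2)(\Lambda-\lambda_2^2)(\Lambda-\lambda_3^2)]^{-1/2}\dd\Lambda$, which is governed by a genuinely cubic radicand (no extra factor of $\Lambda$), so the relevant model is the elliptic integral of the first kind $\int 2\,\dd z/\sqrt{(1-z^2)(1-k^2z^2)}$ produced by \eqref{eq:elliptic-int-1-3} of \Cref{prop:elliptic-int-1}, with $k^2=(\beta_1-\beta_3)/(\beta_2-\beta_3)$; translating back through $\beta_i=\lambda_{4-i}^2$ this is exactly the modulus $k^{(1)}_2=(\lambda_1^2-\lambda_2^2)^{1/2}/(\lambda_1^2-\lambda_3^2)^{1/2}$ declared in \eqref{eq:u-parameters-dn-k}. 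So \textbf{Step 1} is: invoke \Cref{prop:elliptic-int-1} (equation \eqref{eq:elliptic-int-1-3}) to land the base integral $\int_{\lambda_2}^{\lambda} \chi/y\,\dd\chi$ in Legendre normal form, identify the inversion variable $z=(\lambda_1^2-\lambda_3^2)^{1/2}(\lambda_1^2-\lambda^2)^{-1/2}$, and introduce the conformal map $\sn^2(\nu,k^{(1)}_2)=(\lambda_1^2-\lambda_3^2)/(\lambda_1^2-\lambda^2)$, i.e. $\lambda^2=\lambda_1^2-(\lambda_1^2-\lambda_3^2)/\sn^2(\nu,k^{(1)}_2)$, together with a period rectangle $S^{(1)}_2$ in the $\nu$-plane analogous to $S^{(1)}_1$. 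Equivalently this is $\lambda^2=\lambda_3^2+(\lambda_1^2-\lambda_3^2)\dn^2(\nu,k^{(1)}_2)$ after a shift by $\ii K^{(1)\prime}_2$ via \eqref{eq:Jacobi-shift}, which is the form actually used in the proof of \Cref{lemma:dn-2-int} (equation \eqref{eq:integral-dn-2}); I will state the map in whichever of these two equivalent normalizations makes the subsequent algebra cleanest and cross-reference \eqref{eq:fs-1}.

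For \textbf{Step 2}, differentiate the map to get $\dd\lambda$ and compute $y$ from \eqref{eq:define-curve-algebro}, obtaining rational expressions in $\sn,\cn,\dn$ of modulus $k^{(1)}_2$; the key simplification is that for the odd-power integrand the factor $\lambda$ in $\lambda^{2n+1}=\lambda\cdot(\lambda^2)^n$ combines with $\dd\lambda/y$ to cancel the square roots, leaving $\int_{\lambda_2}^{\lambda}\lambda^{2n+1}/y\,\dd\chi = \pm\frac{2\ii}{\sqrt{\lambda_1^2-\lambda_3^2}}\int (\lambda_3^2+(\lambda_1^2-\lambda_3^2)\dn^2(u,k^{(1)}_2))^n\,\dd u$ (this is precisely line \eqref{eq:integral-dn-2} referenced inside \Cref{lemma:dn-2-int}). \textbf{Step 3} fixes the integration path and the sheet of $y$: I will take the $a_2$-circle (respectively $b_2$-circle) of \Cref{fig:genus-two-figure-p} running along the edges of the cut $[\lambda_2,\lambda_1]$ (resp.\ $[\lambda_1^*,\lambda_2]$ or the appropriate cut), determine from $\Im(\dd\lambda)$ the sign in the square root for $y$, and pin down the endpoints in the $\nu$-plane ($\nu=K^{(1)}_2$ corresponds to $\lambda=\lambda_2$, etc.). \textbf{Step 4} derives the recursion: since the integrand $(\lambda_3^2+(\lambda_1^2-\lambda_3^2)\dn^2(u,k^{(1)}_2))^n$ is a polynomial of degree $n$ in $\dn^2$, I reduce it using the standard recursion for $\int\dn^{2m}(u,k)\,\dd u$ obtained by integrating $\frac{\dd}{\dd u}[\sn\cn\dn^{2m-1}]$ and using \eqref{eq:diff-elliptic}; the base cases $\int 1\,\dd u = u = F(\cdot,k^{(1)}_2)$ and $\int\dn^2(u,k^{(1)}_2)\,\dd u = E(\cdot,k^{(1)}_2)$ by \eqref{eq:define-first-integral}–\eqref{eq:define-second-integral} give the claimed representation in terms of $F((\lambda_1^2-\lambda_3^2)^{1/2}(\lambda_1^2-\lambda^2)^{-1/2},k^{(1)}_2)$ and $E((\lambda_1^2-\lambda_3^2)^{1/2}(\lambda_1^2-\lambda^2)^{-1/2},k^{(1)}_2)$ plus elementary (rational-in-$\sn,\cn,\dn$) terms; I will display the recursion \eqref{eq:recursive-formula-dn} referenced in \Cref{lemma:dn-2-int} explicitly here.

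The main obstacle is \textbf{Step 3}, the path and sheet bookkeeping: unlike the even-power case of \Cref{prop:case-1-P}, the relevant conformal map sends $\lambda=\infty$ to a finite point of the $\nu$-rectangle and the endpoints of the cuts to the corners $K^{(1)}_2$, $\ii K^{(1)\prime}_2$, $K^{(1)}_2+\ii K^{(1)\prime}_2$, so I must carefully verify which branch of $y$ is consistent with the orientation of the $a_2$- and $b_2$-cycles in \Cref{fig:genus-two-figure-p}, and check that the $\pm$ ambiguity in $\lambda=\pm\Lambda^{1/2}$ flagged after \eqref{eq:hyper} is resolved consistently for the odd-power integrals (where the sign genuinely matters, in contrast to \eqref{eq:hyper-1}). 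This is exactly the type of computation already carried out implicitly in the proof of \Cref{lemma:dn-2-int}; everything else is routine manipulation of Jacobi elliptic identities from \Cref{appendix:Elliptic-integrals-formulas}, and the existence/mapping properties of the conformal map $S^{(1)}_2\to\mathbb{C}\cup\{\infty\}$ can be cited from the analogue of \Cref{prop:appendenx-f-1-2}.
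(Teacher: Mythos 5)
Your proposal is correct and follows essentially the same route as the paper's proof: the map $\sn^2(\nu,k^{(1)}_2)=(\lambda_1^2-\lambda_3^2)/(\lambda_1^2-\lambda^2)$ of \eqref{eq:fs-1}, the computation of $y$ and $\lambda\,\dd\lambda$ leading to \eqref{eq:elliptic-funda-2}, the sign/path analysis along the cut $[\lambda_2,\lambda_1]$, the Jacobi shift giving the $\bigl(\lambda_3^2+(\lambda_1^2-\lambda_3^2)\dn^2\bigr)^n$ form of \eqref{eq:integral-dn-2}, and the reduction via the recursion \eqref{eq:recursive-formula-dn} to $F$ and $E$. Only minor bookkeeping slips appear in your parentheticals: under this map $\lambda_2$ corresponds to $\nu=K^{(1)}_2+\ii K^{(1)\prime}_2$ (with $\lambda_3\leftrightarrow K^{(1)}_2$), the relevant cycle is $a_1$ as in the paper, the paper fixes the sheet via the sign of $\Re(\lambda\,\dd\lambda)$ rather than $\Im(\dd\lambda)$ in the odd-power case, and your prefactor $2\ii/\sqrt{\lambda_1^2-\lambda_3^2}$ corresponds to the full circuit rather than the single path from $\lambda_2$ to $\lambda$ — none of which affects the validity of the argument.
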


\begin{proof}
	The method we use here is similar to that employed in the proof of \Cref{prop:case-1-P}; therefore, we only highlight the differences. 
	
	\textbf{Step 1.}
	Combining with \Cref{prop:elliptic-int-1} and \Cref{define:elliptic-function}, 
	we obtain 
	\begin{equation}\nonumber
		\begin{split}
			\int_{0}^{\lambda}\frac{\chi \dd \chi}{\prod_{i=1}^3(\chi^2-\lambda_i^2)^{1/2}}\xlongequal[\eqref{eq:elliptic-int-1-3}]{\eqref{eq:hyper-2}}&\ 
			\int_{0}^{z}\frac{(\lambda_3^2-\lambda_1^2)^{-1/2}\dd t}{((1-t^2)(1-(k^{(1)}_2)^2t^2))^{1/2}}
			\xlongequal{\eqref{eq:define-first-integral}} 
			 \frac{F\left(\left(\frac{\lambda_1^2-\lambda_3^2}{\lambda_1^2-\lambda^2}\right)^{\frac{1}{2}}\! \! ,k^{(1)}_2\right)}{(\lambda_3^2-\lambda_1^2)^{1/2}} . 
		\end{split}	
	\end{equation}
	We introduce the function
	\begin{equation}\label{eq:fs-1}
		\begin{split}
			& \sn^2(\nu,  k^{(1)}_2)=\frac{\lambda_1^2-\lambda_3^2}{\lambda_1^2-\lambda^2}, \qquad \lambda^2=\frac{\lambda_3^2-\lambda_1^2\cn^2(\nu,  k^{(1)}_2)}{\sn^2(\nu,  k^{(1)}_2)}, \\
			& S^{(1)} _2:=\left\{\nu\in \mathbb{C}\left| |\Re(\nu)| \le   K^{(1)}_2,  |\Im(\nu)| \le   K^{(1)\prime}_2 \right.\right\},
		\end{split} 
	\end{equation}
	where $k^{(1)}_2$ is defined in equation \eqref{eq:u-parameters-dn-k}.
	The function \eqref{eq:fs-1} maps the periodic region $ S^{(1)} _{2}$ in the $\nu$-plane onto the spectral parameter $\lambda\in \mathbb{C}\cup \{\infty\}$ in the entire $\lambda$-plane. 
	A detailed proof is given in \Cref{prop:appendenx-f-1-2}. 
	Specifically, the three pairs of cuts $[\lambda_1, \lambda_1^*]$, $[\lambda_2^*, \lambda_2]$, and $[\lambda_3, \lambda_3^*]$ in the $\lambda$-plane are mapped onto the rectangular region in the $\nu$-plane corresponding to the cuts $[\ii K^{(1)\prime}_2, -\ii K^{(1)\prime}_2]$, $[K^{(1)}_2+\ii K^{(1)\prime}_2, -K^{(1)}_2-\ii K^{(1)\prime}_2 ]$, $[K^{(1)}_2, -K^{(1)}_2]$.
	
	\textbf{Step 2.}	
	By combining
	equations \eqref{eq:define-curve-algebro} and \eqref{eq:fs-1}, we obtain 
	\begin{equation}\label{eq:fs-1-y}
			y= \pm \ii \frac{(\lambda_1^2-\lambda_3^2)^{3/2}\cn(\nu,  k^{(1)}_2)\dn(\nu,  k^{(1)}_2)}{\sn^3(\nu,  k^{(1)}_2)}, \qquad
			\lambda\dd \lambda= \frac{(\lambda_1^2-\lambda_3^2)\cn(\nu,  k^{(1)}_2)\dn(\nu,  k^{(1)}_2)}{\sn^3(\nu,  k^{(1)}_2)}\, \dd \nu.
	\end{equation}
	Using equations \eqref{eq:fs-1} and \eqref{eq:fs-1-y}, the hyperelliptic integral defined in equation \eqref{eq:hyper-2} can then be expressed as 
	\begin{equation}\label{eq:elliptic-funda-2}
		\int \frac{\lambda^{2n+1} \dd \lambda}{y}\xlongequal[\eqref{eq:fs-1-y}]{\eqref{eq:fs-1}} \int \pm \left(\frac{\lambda_3^2-\lambda_1^2\cn^2(\nu,k^{(1)}_2)}{\sn^2(\nu,k^{(1)}_2)}\right)^n \frac{\dd \nu}{\ii \sqrt{\lambda_1^2-\lambda_3^2}}.
	\end{equation} 
	
	\textbf{Step 3.}
	The $a_1$-circle (shown in \Cref{fig:genus-two-figure}) can be constructed in the $\lambda$-plane as a straight line running along on the right-hand edge of the branch cut, from $\lambda_2$ to $\lambda_1$, and then returns along the left-hand edge.
	Along the right-hand edge, we get $\Re(\lambda \dd \lambda)>0$. 
	By equation \eqref{eq:fs-2}, this implies $\Re(\dd \nu)<0$, and consequently $\Re(\cn(\nu,  k^{(1)}_2)\dn(\nu,  k^{(1)}_2)/\sn^3(\nu,  k^{(1)}_2))<0$.
	Hence, along this path, the parameter $y$ in equation \eqref{eq:fs-1-y} takes the sign $``-"$.
	By equations \eqref{eq:Jacobi-shift}, \eqref{eq:fs-1} and \eqref{eq:fs-1-y}, the hyperelliptic integral defined in equation \eqref{eq:hyper-2} can be expressed as 
	\begin{equation}\label{eq:elliptic-funda-2-ge}
		\int_{\lambda_2}^{\lambda}\frac{\chi^{2n+1} \dd \chi}{y}\xlongequal[\eqref{eq:fs-1-y}]{\eqref{eq:fs-1}}\int_{K^{(1)}_2+\ii K^{(1)\prime}_2}^{\nu}-\left(\frac{\lambda_3^2-\lambda_1^2\cn^2(u,k^{(1)}_2)}{\sn^2(u,k^{(1)}_2)}\right)^n \frac{\ii\dd u}{ \sqrt{\lambda_1^2-\lambda_3^2}}.
	\end{equation}

	\textbf{Step 4.}
	The hyperelliptic integral defined in equation \eqref{eq:hyper-2} can be expressed as
	\begin{equation}\label{eq:integral-dn-2}
		\begin{split}
			\int_{\lambda_2}^{\lambda}\frac{\chi^{2n+1}}{y}\dd \chi
			\xlongequal[\eqref{eq:Jacobi-shift}]{\eqref{eq:elliptic-funda-2-ge}}&\frac{\ii}{\sqrt{\lambda_1^2-\lambda_3^2}}\int_{\nu-\ii K^{(1)}_2}^{   K^{(1)}_2}\left(\lambda_3^2+(\lambda_1^2-\lambda_3^2)\dn^2(u,  k^{(1)}_2)\right)^n \dd u.
		\end{split}
	\end{equation}
	Since integrals $I_{2n}$ satisfy the following results:
	\begin{equation}\label{eq:recursive-formula-dn}
		\begin{split}
			\!\! \! I_{2n+4}=&  \frac{2(1+n)(1+(k^{(1)}_2)^2)I_{2n+2}-(2n+1)I_{2n}+\sn^{2n+1}(u,k^{(1)}_2)\dn(u,k^{(1)}_2)\cn(u,k^{(1)}_2)+C}{(2n+3)(k^{(1)}_2)^2},\\	I_{2}=&\ \int \frac{1-\dn^{2}(u,k^{(1)}_2)}{k^2}\dd u
			,\qquad I_{2n}= \int \sn^{2n}(u,k^{(1)}_2)\dd u, \qquad n\ge 0.
			\\ 
		\end{split}
	\end{equation}
	Utilizing the above recursive formula, we obtain the final conclusion.	
\end{proof}

\begin{prop}\label{prop:appendenx-f-1-2}
	Functions defined in equations \eqref{eq:fs-2} and \eqref{eq:fs-1} map the entire complex $\lambda$-plane onto the periodic region $S^{(1)} _1$ (from equation \eqref{eq:fs-2}) and $S^{(1)}_2$ (from equation \eqref{eq:fs-1}) in the $\nu$-plane, respectively.
\end{prop}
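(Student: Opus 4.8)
\textbf{Proof proposal for Proposition~\ref{prop:appendenx-f-1-2}.}

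The plan is to treat the two maps \eqref{eq:fs-2} and \eqref{eq:fs-1} uniformly: each is a composition of the quadratic substitution $\lambda^2 = \Lambda$, a linear fractional transformation $\Lambda \mapsto z^2$ (the one recorded in \eqref{eq:trans-z-Lambda-1} or \eqref{eq:trans-L-z-1-2}), and the inverse of the elliptic integral of the first kind $z = \sn(\nu,k)$. I would argue that each factor is an invertible conformal map between the appropriate domains, so that the composite is a biholomorphism from $S^{(1)}_i$ (minus the branch cuts) onto $\mathbb{C}\cup\{\infty\}$ (minus the branch cuts), and that the boundary identifications recorded in \Cref{prop:case-1-P} (\textbf{Step 1}) and \Cref{prop:case-2-P} (\textbf{Step 1}) are compatible. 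First I would fix one case, say \eqref{eq:fs-2}, and verify that the rectangle $S^{(1)}_1 = \{|\Re\nu|\le K^{(1)}_1,\ |\Im\nu|\le K^{(1)\prime}_1\}$ is a fundamental domain for the lattice generated by $4K^{(1)}_1$ and $2\ii K^{(1)\prime}_1$ modulo the $\pm$ symmetry, since $\sn(\cdot,k^{(1)}_1)$ is elliptic with those periods and even about the lines $\Re\nu = \pm K^{(1)}_1$; thus $\sn^2$ takes each value in $\mathbb{CP}^1$ exactly once on $S^{(1)}_1$.

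The key steps, in order, would be: (i) identify the images under $\nu\mapsto\sn^2(\nu,k^{(1)}_1)$ of the four corners and the four edge–midpoints of $S^{(1)}_1$, namely the points $0$, $1$, $1/(k^{(1)}_1)^2$, $\infty$ (attained at $\nu=0$, $K^{(1)}_1$, $K^{(1)}_1+\ii K^{(1)\prime}_1$, $\ii K^{(1)\prime}_1$ respectively) together with the reflection symmetry; (ii) transport these through the rational map in \eqref{eq:fs-2}, i.e. $\lambda^2 = \lambda_1^2\lambda_3^2\sn^2/(\lambda_1^2-\lambda_3^2\cn^2)$, to locate the images of $\lambda^2 \in \{0,\lambda_1^2,\lambda_2^2,\lambda_3^2,\infty\}$ — here one uses that $\sn^2 = (\lambda_1^2-\lambda_2^2)/\lambda_2^2 \cdot (\text{something})$ hits the value forcing $\lambda^2=\lambda_2^2$ precisely at $\nu = K^{(1)}_1+\ii K^{(1)\prime}_1$, as already asserted in the proof of \Cref{prop:case-1-P}; (iii) check that the three ``vertical'' branch cuts in the $\lambda$-plane, $[\lambda_3^*,\lambda_2^*]$, $[\lambda_1^*,\lambda_1]$, $[\lambda_2,\lambda_3]$, pull back to the three segments $[-K^{(1)}_1,-K^{(1)}_1-\ii K^{(1)\prime}_1]$, $[-\ii K^{(1)\prime}_1,\ii K^{(1)\prime}_1]$, $[K^{(1)}_1+\ii K^{(1)\prime}_1,K^{(1)}_1]$ on the boundary of $S^{(1)}_1$; (iv) use the argument principle / degree count: the map $\nu\mapsto\lambda$ (chosen as the appropriate square-root branch) is proper of degree one between the cut regions, hence a bijection, and holomorphicity of both it and its inverse follows from the conformality of each factor away from the finitely many ramification points, which are exactly the corners and midpoints. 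The same four steps then rerun verbatim for \eqref{eq:fs-1} with $k^{(1)}_1$ replaced by $k^{(1)}_2$ and the roles of the corner values permuted, using $\sn^2(\nu,k^{(1)}_2) = (\lambda_1^2-\lambda_3^2)/(\lambda_1^2-\lambda^2)$; the boundary data needed there is precisely what is listed in \textbf{Step 1} of \Cref{prop:case-2-P}.

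The main obstacle I anticipate is the bookkeeping of orientations and sheets: because $\lambda = \pm\Lambda^{1/2}$, the rectangle $S^{(1)}_i$ in the $\nu$-plane actually double-covers the $\Lambda$-plane and one must check that the specific sign choices for $y$ made along the $a_1$-edge (the inequalities \eqref{eq:y-d-k2-a1} and its analogue in \eqref{eq:fs-1-y}) are globally consistent — i.e. that the two halves of $S^{(1)}_i$ glue correctly across the cut $[-\ii K^{(1)\prime}_i,\ii K^{(1)\prime}_i]$ to give a single-valued $\lambda$. This is a local winding-number computation near each branch point $\pm\lambda_{1,2,3}$, where the map behaves like a square root in both variables, so the ramification indices cancel and no extra branching is introduced; I would make this precise by a Schwarz-reflection argument, extending the biholomorphism from the upper half of $S^{(1)}_i$ (mapping to $\Im\lambda>0$, say) across the real edges. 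Everything else — the continuity, the explicit corner correspondences, and the edge identifications — reduces to the standard table of special values and shift formulas for $\sn,\cn,\dn$ in \eqref{eq:Jacobi-shift}, which may be quoted rather than rederived.
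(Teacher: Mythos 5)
Your overall strategy is sound and close in spirit to the paper's proof: both factor the map \eqref{eq:fs-2} through $\sn$ and the fractional linear relation between $\sn^2$ and $\lambda^2$, and both reduce the issue to the injectivity of $\sn$ on the rectangle plus a consistent choice of the square-root sign for $\lambda$. The paper gets there more quickly: it writes $\sn(\nu,k^{(1)}_1)=r\ee^{\ii\theta}$ with $(r,\theta)$ unique on $S^{(1)}_1$, observes that the upper half of the rectangle corresponds to the upper half-plane of $\sn$-values, and then matches upper half-planes through the real-coefficient relation $(r_1\ee^{\ii\theta_1})^2=\lambda_1^2\lambda_3^2(r\ee^{\ii\theta})^2/(\lambda_1^2-\lambda_3^2+\lambda_3^2(r\ee^{\ii\theta})^2)$, which simultaneously fixes the branch of $\lambda$. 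Your route via the corner/edge values, the cut correspondences of \Cref{prop:case-1-P} and \Cref{prop:case-2-P}, the argument principle, and Schwarz reflection is heavier but legitimate, and it makes explicit the sheet bookkeeping that the paper leaves implicit.

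There is, however, one concrete misstatement that, taken literally, breaks your degree count. The rectangle $S^{(1)}_1=\{|\Re\nu|\le K^{(1)}_1,\ |\Im\nu|\le K^{(1)\prime}_1\}$ is \emph{not} a fundamental domain for the lattice generated by $4K^{(1)}_1$ and $2\ii K^{(1)\prime}_1$ modulo $\nu\mapsto-\nu$, and $\sn^2$ does \emph{not} take each value of $\mathbb{CP}^1$ exactly once on it: the rectangle is symmetric under $\nu\mapsto-\nu$ and $\sn^2$ is even, so every generic value of $\sn^2$ (hence of $\Lambda=\lambda^2$) is attained twice, at $\pm\nu$. If $\sn^2$ were injective on $S^{(1)}_1$, then $\lambda^2$ would be too, and the image of $\nu\mapsto\lambda$ could cover only one sheet of the square root, i.e.\ only a half of the $\lambda$-plane up to the sign $\lambda\mapsto-\lambda$, contradicting the surjectivity asserted in the proposition and in your step (iv). The correct statement is that $\sn$ itself is injective on $S^{(1)}_1$ (up to the usual boundary identifications), that $\nu\mapsto\Lambda$ is $2$-to-$1$ with fibers $\{\pm\nu\}$, and that the odd branch choice $\lambda(-\nu)=-\lambda(\nu)$ — precisely the gluing you describe in your ``main obstacle'' paragraph — reduces this to a degree-one map $\nu\mapsto\lambda$. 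With that correction your argument goes through; as a minor point, note also that $\nu=0$ (where $\sn^2=0$) is the center of the rectangle, not a corner or an edge midpoint.
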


\begin{proof}
	Consider the function defined in equation \eqref{eq:fs-2}.
	From the definition of Jacobi elliptic functions, it follows that $\sn(\nu,  k_1^{(1)})$ maps the rectangular region $S^{(1)}_1$ (defined in equation \eqref{eq:fs-2}) onto the entire complex plane, including the point at infinity.
	Without loss of generality, we set $\sn(\nu,  k_1^{(1)})=r\ee^{\ii \theta}$, with $r\ge 0$ and $\theta\in (0,2\pi]$. 
	For any $\nu\in  S^{(1)} _1$, there exists a unique $r$ and $\theta$ such that $\sn(\nu,  k_1^{(1)})=r\ee^{\ii \theta}$.
	
	We consider the function $r\ee^{\ii \theta}$, $r\ge 0$ and $\theta\in (0,\pi]$. 
	From the definition of the elliptic function $\sn(\nu,  k_1)$, it follows that the corresponding region lies in the upper half part of the rectangular domain $ S^{(1)} _1$, namely $ S^{(1)u} _1:=\{\nu \in  S^{(1)} _1| \Im(\nu)\ge 0\}$.
	By applying a fractional linear transformation, we obtain a one-to-one correspondence between the parameter $\Lambda_1=r_1\ee^{\ii \theta_1}\in \mathbb{C}$ and $r\ee^{\ii \theta}$, given by 
	\begin{equation}\nonumber
		(r_1\ee^{\ii \theta_1})^2=\frac{\lambda_1^2\lambda_3^2(r\ee^{\ii \theta})^2}{\lambda_1^2-\lambda_3^2+\lambda_3^2(r\ee^{\ii \theta})^2}, \qquad r_1,r\ge 0, \quad \theta_1, \theta\in (0,\pi].
	\end{equation}
	For this, we deduce that the function defined in equation \eqref{eq:fs-2} maps the rectangular area $ S^{(1)} _1$ onto the entire complex plane, with the upper half plane of $ S^{(1)} _1$ mapping onto the upper half plane of $\mathbb{C}$.
	
	Similarly, the function defined in equation \eqref{eq:fs-1} maps the rectangular region $ S^{(1)} _2$ onto the whole complex plane.
\end{proof}
\Cref{fig:confromal-1} illustrates the correspondence between the entire $\lambda$-plane and the periodic region $ S^{(1)} _{1,2}$ of the $\nu$-plane.

\begin{figure}[h!]
	\centering
	\subfigure[The conformal map of the function defined in equation \eqref{eq:fs-2}]{\includegraphics[width=1\textwidth]{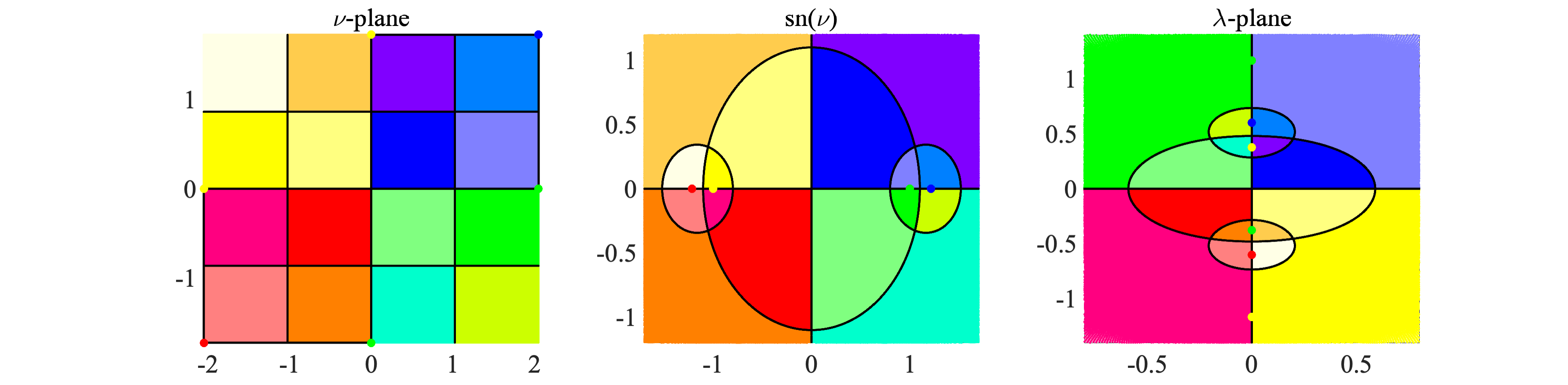}}
	\subfigure[The conformal map of the function defined in equation \eqref{eq:fs-1}]{\includegraphics[width=1\textwidth]{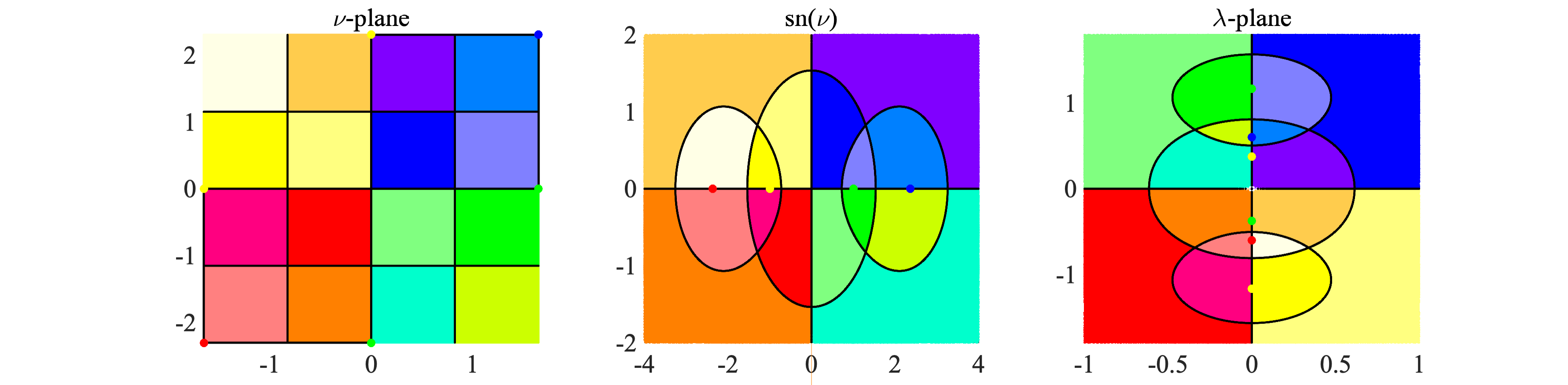}}
	\caption{The correspondence between the $\nu$-plane and $\lambda$-plane with branch points satisfying \ref{case1}. From top to bottom, the green, blue, yellow, green, red, yellow points in the $\lambda$-plane correspond to the branch points $\lambda_3$, $\lambda_2$, $\lambda_1$, $\lambda_1^*$, $\lambda_2^*$ and $\lambda_3^*$, respectively.}
	\label{fig:confromal-1}
\end{figure}

From \Cref{prop:appendenx-f-1-2}, we conclude that the conformal maps defined in equations \eqref{eq:fs-2} and \eqref{eq:fs-1} can transform the integrals into the canonical forms of elliptic integrals.
Together, propositions \ref{prop:elliptic-int-1}-\ref{prop:appendenx-f-1-2} establish the fundamental theorem of elliptic integrals. 
Applying these results to the hyperelliptic integral \eqref{eq:hyper}, we derive the following two propositions.

\vspace{0.1cm}
\noindent $\bullet$\quad
\textbf{\large \ref{case2}}

Consider the case where the parameters satisfy $\beta_{1,2}\in \mathbb{C}\backslash(\ii \mathbb{R}\cup \mathbb{R})$ and $\beta_{3,4}\in \mathbb{R}$, where $\beta_1=\lambda_1^2$, $\beta_2=\lambda_3^2$, $\beta_3=\lambda_2^2$ and $\beta_4=0$.
Since $\beta_{1,2}$ are not real numbers, we adopt an alternative approach to transform the elliptic integrals into the standard form of the first-kind elliptic integral given in equation \eqref{eq:define-first-integral}.
 
\begin{prop}\label{prop:elliptic-int-2}
	The elliptic integrals can be transformed into the first-kind elliptic integrals in \Cref{define:elliptic-function} as follows:
	\begin{subequations}\label{eq:elliptic-int-2}
		\begin{align}
			&\int \frac{\dd \Lambda }{\prod_{i=1}^{4}(\Lambda-\beta_i)^{1/2}}
			=\int \frac{- \ii \dd s }{(AB(1-s^2)(1-k^2s^2))^{1/2}},  \qquad 	k^2=\frac{\beta_2^2-(A-B)^2}{4AB}, \label{eq:elliptic-int-2-4}\\
			& \int\frac{\dd \Lambda }{\prod_{i=1}^{3}(\Lambda-\beta_i)^{1/2}} =\int \frac{  \dd s }{\sqrt{A(1-s^2)(1-k^2s^2)}}, \qquad
			k^2=\frac{2(A+\beta_3)-\beta_1-\beta_2}{4A}, \label{eq:elliptic-int-2-3}
		\end{align}	
	\end{subequations}
	where $\beta_{3,4}\in \mathbb{R}$ and $\beta_1=\beta_2^*\in \mathbb{C}\backslash(\mathbb{R}\cup \ii \mathbb{R})$.
\end{prop}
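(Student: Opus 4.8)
\textbf{Proof proposal for Proposition~\ref{prop:elliptic-int-2}.}

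The plan is to mimic the strategy used in the proof of \Cref{prop:elliptic-int-1}, but since two of the branch points $\beta_1 = \beta_2^*$ are genuinely complex (not real, not purely imaginary), a direct linear-fractional substitution sending four real points to $0,1,1/k,\infty$ is unavailable. Instead I would introduce a variable $z$ adapted to the \emph{real} quantities $A = |\lambda_1^2-\lambda_2^2|$ and $B = |\lambda_1^2|$, which measure the modulus/distance data of the conjugate pair. First, I would write $\beta_1 + \beta_2 = 2\Re(\beta_1)$ and $\beta_1\beta_2 = |\beta_1|^2$, so that the quadratic factor $(\Lambda-\beta_1)(\Lambda-\beta_2)$ is a real polynomial in $\Lambda$; completing the square (or better, factoring through the standard substitution for a complex-conjugate pair in Byrd--Friedman~\cite{ByrdF-54}) turns $((\Lambda-\beta_1)(\Lambda-\beta_2))^{1/2}$ into a form amenable to the half-angle-type substitution. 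The key algebraic identities to verify are $A^2 = (\beta_3-\Re\beta_1)^2 + (\Im\beta_1)^2$-type relations linking $A,B$ to $\beta_{1,2,3}$, which follow from the definitions $A=|\lambda_1^2-\lambda_2^2|$, $B=|\lambda_1^2|$ together with $\beta_1=\lambda_1^2$, $\beta_2=\lambda_3^2=\lambda_1^{*2}$, $\beta_3=\lambda_2^2$, $\beta_4=0$ and the Case~2 symmetry $\lambda_1=-\lambda_3^*$.

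For \eqref{eq:elliptic-int-2-4}, I would posit a substitution of the shape $\Lambda = \Lambda(s)$ rational (degree one or two) in $s$ chosen so that the four branch points $\beta_1,\beta_2,\beta_3,\beta_4$ pull back to $\pm 1, \pm 1/k$; concretely, since $\beta_3,\beta_4\in\mathbb{R}$ bracket (or are bracketed by) the real part of the conjugate pair, the natural choice is one sending $\beta_4=0$ and $\beta_3$ to the real pair $\{\pm 1\}$ (or one real, one complex) and arranging the complex pair to land on $\{\pm 1/k\}$ with $k$ real. Substituting and computing $\dd\Lambda$ directly, the radical $\prod_{i=1}^4(\Lambda-\beta_i)^{1/2}$ collapses to a constant multiple of $((1-s^2)(1-k^2s^2))^{1/2}$; matching the constant produces the factor $-\ii/\sqrt{AB}$ and the modulus $k^2 = (\beta_2^2-(A-B)^2)/(4AB)$. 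The sign/branch of the $\ii$ is fixed by tracking which sheet of $y$ and which edge of the cut the integration path lies on, exactly as in Step~3 of \Cref{prop:case-1-P}; here one checks $\Re(\dd\Lambda)$ or $\Im(\dd\Lambda)$ against $\Re(\dd s)$ on the relevant segment. For \eqref{eq:elliptic-int-2-3} (the three-point integral, obtained by deleting the factor at $\beta_4=0$) I would run the same substitution restricted appropriately, and the modulus becomes $k^2 = (2(A+\beta_3)-\beta_1-\beta_2)/(4A) = (2(A+\lambda_2^2)-\lambda_3^2-\lambda_1^2)/(4A)$, matching the definition of $k_2^{(2)}$ in \eqref{eq:u-parameters-cn-k}; the prefactor simplifies to $1/\sqrt{A}$ (no $\ii$, since now only one pair of the remaining branch points is complex and the combinatorics of edges differs).

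The main obstacle I anticipate is \emph{bookkeeping of branches and signs}: because $\beta_1,\beta_2$ are complex, the square roots $((\Lambda-\beta_1)(\Lambda-\beta_2))^{1/2}$ and the substitution $s\mapsto\Lambda$ both have non-obvious branch structure, and the factor $-\ii$ versus $+\ii$ (and $+\sqrt{AB}$ versus $-\sqrt{AB}$) in \eqref{eq:elliptic-int-2-4} is only pinned down by a careful analysis of how the contour in the $\Lambda$-plane — running along a specified edge of the cut between the branch points — is pushed forward to the $s$-plane. The resolution is the same as in the Case~1 propositions: fix an orientation of the $a_i$- and $b_i$-cycles (per \Cref{fig:genus-two-figure}), determine the sign of $\Re$ or $\Im$ of $\dd\Lambda$ along each edge, and use the substitution's Jacobian to read off the corresponding sign of $\dd s$, thereby selecting the correct branch of the radical. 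The purely algebraic verifications that $k^2$ as written lies in $(0,1)$ and that the prefactors are as claimed are routine once the correct $\Lambda(s)$ is in hand, so I would not dwell on them beyond stating the identities needed.
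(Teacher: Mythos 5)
Your overall direction matches the paper's (a Byrd--Friedman-type change of variables adapted to the conjugate pair, followed by sign/branch bookkeeping along the cycles), but the proposal has a real gap exactly where the content of the proposition lies: you never produce the substitution, and the recipe you give for finding one would not work as stated. You ask for a rational map $\Lambda(s)$ under which the four branch points pull back to $\pm 1,\pm 1/k$ with $k$ real. A degree-one (M\"obius) map with real coefficients sends the non-real conjugate pair $\beta_1=\beta_2^*$ to another conjugate pair, so it can never place them at real points $\pm 1/k$; and for the degree-two map that does the job, the branch-point correspondence is not the one you describe. What the paper actually does for \eqref{eq:elliptic-int-2-4} is a two-step reduction: first the real M\"obius transformation \eqref{eq:trans-L-z-2-1}, built from $A=|\beta_3-\beta_2|$ and $B=|\beta_4-\beta_2|$, which sends $\beta_3\mapsto 1$, $\beta_4\mapsto -1$ and the conjugate pair to $\pm\ii k^{\prime}/k$, producing the intermediate form $\int \ii\,\dd z\,\bigl(AB(1-z^2)(k^{\prime 2}+k^2z^2)\bigr)^{-1/2}$; then the further substitution $z^2=1-s^2$, which converts this to the Legendre form and generates the factor $-\ii$. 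In the composite map the real branch points land at $s=0$, and the new branch points $s=\pm 1$ come from the square-root substitution itself, not from preimages of any $\beta_i$ --- so "choose the map so the $\beta_i$ go to $\pm1,\pm1/k$" is not the principle that produces the correct transformation.

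The second equation is also not obtained by "running the same substitution restricted appropriately": deleting the factor at $\beta_4=0$ but keeping the quartic substitution leaves a residual rational factor and yields the wrong modulus. The paper instead uses a different M\"obius map, \eqref{eq:trans-z-Lambda-3}, involving only $A$ and $\beta_3$ (this is the algebraic shadow of the map \eqref{eq:fs-3} with modulus $k^{(2)}_2$), again followed by $z^2=1-s^2$, which gives \eqref{eq:elliptic-int-2-3} with $k^2=(2(A+\beta_3)-\beta_1-\beta_2)/(4A)$ and no factor of $\ii$. Your remarks about verifying the identities relating $A,B$ to the branch points and about fixing signs by tracking the contour edge (as in \Cref{prop:case-1-P}) are correct and are indeed how the constants and the $\pm\ii$ are pinned down, but without the explicit transformations \eqref{eq:trans-L-z-2-1} and \eqref{eq:trans-z-Lambda-3} (or equivalents), the proof is not yet there.
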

\begin{proof}
	Let the correspondence between $\beta_{3,4}$ in the
 $\Lambda$-plane and $\pm 1$ in the $z$-plane  be established.
 Introducing the transformation
	\begin{equation}\label{eq:trans-L-z-2-1}
		z=\frac{(A+B)\Lambda-(A\beta_4+B\beta_3)}{(A-B)\Lambda-(A\beta_4-B\beta_3)}, \qquad \Lambda=\frac{(A\beta_4+B\beta_3)-z(A\beta_4-B\beta_3)}{(A+B)-z(A-B)}, 
	\end{equation}
	together with $\dd \Lambda=2AB(\beta_3-\beta_4)((A+B)-z(A-B))^{-2} \dd z$,
 $A=|\beta_3-\beta_2|$, $B=|\beta_4-\beta_2|$,
	we obtain 
	\begin{equation}\label{eq:int-1-trans}\nonumber
		\int \frac{  \dd \Lambda}{\prod_{i=1}^{4}(\Lambda-\beta_i)^{1/2}}
		=\int \frac{ \ii  \dd z }{(AB(1-z^2)(k^{\prime2}+k^2z^2))^{1/2}}\xlongequal{z^2=1-s^2}\int \frac{- \ii \dd s }{(AB(1-s^2)(1-k^2s^2))^{1/2}}.
	\end{equation}
	Thus, we obtain the equation \eqref{eq:elliptic-int-2-4}.
	
	Similarly, consider the linear fractional transformation $z(\Lambda)$, its inverse $\Lambda(z)$ and the corresponding differential form:
	\begin{equation}\label{eq:trans-z-Lambda-3}
		z=\frac{\Lambda+A-\beta_3}{\Lambda-A-\beta_3},\qquad \Lambda=\beta_3+A\frac{z+1}{z-1}, \qquad 
		\dd \Lambda=\frac{-2A}{(z-1)^2}\, \dd z.
	\end{equation} 
	Substituting above formula into the equation \eqref{eq:elliptic-int-2-4}, we obtain 
	\begin{equation}
		\int \frac{  \dd \Lambda}{\prod_{i=1}^{3}(\Lambda-\beta_i)^{1/2}}
		= \int \frac{ - \dd z }{(A(1-z^2)(k^{\prime2}+k^2z^2))^{1/2}}\xlongequal{z^2=1-s^2}\int \frac{  \dd s }{\sqrt{A(1-s^2)(1-k^2s^2)}},
	\end{equation}
	where $k^2$ is defined in equation \eqref{eq:elliptic-int-2-3}. 
\end{proof}

\begin{prop}\label{prop:case-1-C}
The hyperelliptic integrals defined in equation \eqref{eq:hyper-1} can be expressed as a linear combination of some elementary integrals and the three types of standard elliptic integrals, namely $F(z,k_1^{(2)})$, $E(z,k_1^{(2)})$ and $\Pi(z,\alpha^2,k_1^{(2)})$, where $z=2\lambda(AB(\lambda_2^2-\lambda^2))^{1/2}/(\lambda^2(A-B)+\lambda_2^2B)$ and $\alpha^2 = (B-A)^2/(B+A)^2$.
\end{prop}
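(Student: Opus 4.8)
\textbf{Proof proposal for Proposition \ref{prop:case-1-C}.}

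The plan is to follow exactly the four-step template already used in the proofs of \Cref{prop:case-1-P} and \Cref{prop:case-2-P}, now adapted to the configuration \ref{case2}. In \textbf{Step 1}, I would start from the right-hand side of \eqref{eq:hyper-1}, i.e. the integral $\int \Lambda^n [\Lambda(\Lambda-\lambda_1^2)(\Lambda-\lambda_2^2)(\Lambda-\lambda_3^2)]^{-1/2}\dd\Lambda$ with $\Lambda=\lambda^2$, and invoke \Cref{prop:elliptic-int-2}, specifically the transformation \eqref{eq:trans-L-z-2-1} together with the modulus $k_1^{(2)}$ from \eqref{eq:elliptic-int-2-4} (note $k_1^{(2)} = (\lambda_2^4-(A-B)^2)^{1/2}/(2(AB)^{1/2})$ matches \eqref{eq:u-parameters-cn-k}). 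Composing the substitution $\Lambda=\lambda^2$ with the Möbius map of \Cref{prop:elliptic-int-2} and then with $z=\sn(\nu,k_1^{(2)})$ yields the claimed relation $z = 2\lambda(AB(\lambda_2^2-\lambda^2))^{1/2}/(\lambda^2(A-B)+\lambda_2^2 B)$; the parameter $\alpha^2=(B-A)^2/(B+A)^2$ is simply the image of the pole $\Lambda=0$ under the Möbius map, i.e. the value making $1-\alpha^2 z^2$ vanish there, which is exactly what produces the third-kind integrand. This fixes the conformal map \eqref{eq:fs-2} analogue for \ref{case2} (the map is \eqref{eq:fs-1} in the excerpt's numbering; I would cite it and \Cref{prop:appendenx-f-1-2} for bijectivity).

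In \textbf{Step 2} I would differentiate the relation between $\lambda$ and $\nu$ to express $y$ and $\dd\lambda$ through Jacobi elliptic functions of $\nu$ (the analogue of \eqref{eq:fs-2-y}, \eqref{eq:fs-1-y}), so that $\int \lambda^{2n}/y\,\dd\lambda$ becomes $\pm\int (\text{rational in }\sn,\cn,\dn)^n\,\dd\nu$ divided by a constant $\sqrt{AB}$-type factor. In \textbf{Step 3} I would pin down the integration path: build the $a_1$-circle as a loop hugging the cut $[\lambda_1^*,\lambda_1]$ (or the appropriate cut for \ref{case2}, cf. \Cref{fig:genus-two-figure-c}), determine the sign of $\Re(\dd\nu)$ or $\Im(\dd\nu)$ along the right-hand edge, and thereby fix the branch of $y$ (the $\pm$ sign). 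This reproduces the bracketed evaluations in \Cref{lemma:cn-1-int}, e.g. the endpoints $-3K_1^{(2)}+\ii K_1^{(2)\prime}$ and $-K_1^{(2)}+\ii K_1^{(2)\prime}$ appearing there. In \textbf{Step 4}, after a shift by $\pm\ii K_1^{(2)\prime}$ (via \eqref{eq:Jacobi-shift}) to land the integrand in a form $(a+b\sn^2(\nu,k_1^{(2)}))^{-n}$, I would apply the recursion \eqref{eq:integral} with $\alpha^2=(B-A)^2/(B+A)^2$ to reduce every power down to $J_0,J_1,J_{-1}$, which are elementary plus $F$ and $\Pi$; the $E$-contribution enters through $J_{-1}$ exactly as in \eqref{eq:integral}. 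Collecting terms gives the asserted linear combination of $F(z,k_1^{(2)})$, $E(z,k_1^{(2)})$, $\Pi(z,\alpha^2,k_1^{(2)})$ and elementary functions.

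The main obstacle, as in the genus-one and the \ref{case1} genus-two proofs, is \textbf{Step 3}: since $\beta_1=\lambda_1^2$ and $\beta_2=\lambda_3^2$ are genuinely complex (not on the real or imaginary axis), the cuts in the $\lambda$-plane are slanted, and the conformal image of the $a_1$- and $b_1$-cycles in the $\nu$-plane is a rotated/translated rectangle whose corners must be identified carefully to get the sign of $y$ right and to read off the correct real endpoints of integration. I expect the bijectivity argument (the \ref{case2} analogue of \Cref{prop:appendenx-f-1-2}) to require checking that the composite Möbius-then-$\sn^{-1}$ map sends the slit $\lambda$-plane onto the fundamental rectangle $S_1^{(2)}$, which amounts to tracking how the two complex branch points distribute among the rectangle's sides; once that is settled, Steps 1, 2, 4 are routine symbol-pushing. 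A minor additional point: one must verify that the odd/even parity arguments used in \Cref{lemma:cn-1-int} (e.g. "$\cn(\nu+\cdots)/(a+b\sn^2(\nu+\cdots))$ is odd") hold after the shift, which follows from the standard parity and quarter-period shift formulas \eqref{eq:Jacobi-shift} and is not a serious difficulty.
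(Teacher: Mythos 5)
Your proposal follows the paper's own four-step template and is correct in substance, but it deviates from the paper's proof in Step 4 and contains two slips worth fixing. On the map: the paper's Step 1 parametrization for \ref{case2} is the cn-based map \eqref{eq:fs-4}, $\cn(\nu,k_1^{(2)})=(\lambda^2(A+B)-\lambda_2^2B)/(\lambda^2(A-B)+\lambda_2^2B)$, with bijectivity supplied by \Cref{prop:appendenx-f-3-4}; your relation $z=\sn(\nu,k_1^{(2)})=2\lambda\big(AB(\lambda_2^2-\lambda^2)\big)^{1/2}/(\lambda^2(A-B)+\lambda_2^2B)$ is equivalent to it (since $\sn^2=1-\cn^2$), but the references you cite, \eqref{eq:fs-1}, \eqref{eq:fs-2} and \Cref{prop:appendenx-f-1-2}, are the \ref{case1} maps, not the ones relevant here. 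On the reduction: the paper does not shift to an $(a+b\sn^2)^{-n}$ form and invoke the $J_n$ recursion \eqref{eq:integral}; it keeps the integrand of \eqref{eq:elliptic-funda-1-cn} in the form $(1+\alpha\cn(u,k_1^{(2)}))^{-m}$ with $\alpha=(B-A)/(B+A)$ and uses the corresponding $G_m$ recursion, whose $G_1$ already yields the third-kind integral (with Legendre characteristic $\alpha^2/(\alpha^2-1)=(A-B)^2/(-4AB)$, exactly the characteristic seen in \Cref{lemma:cn-1-int}) plus a logarithm. Your sn-based route can be made to work: after the shift by $\ii K_1^{(2)\prime}$ via \eqref{eq:Jacobi-shift} and rationalization of the denominator, one gets a rational function of $\sn^2$ \emph{plus} odd pieces of the type $\sn\,\dn/(a+b\sn^2)^m$; these cross terms drop by parity on the closed cycles (as exploited in \Cref{lemma:cn-1-int}) and integrate to elementary functions for incomplete integrals, so they are harmless for the statement, but your sketch should acknowledge them rather than assert that the integrand becomes exactly $(a+b\sn^2)^{-n}$.

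The second slip is your heuristic for the characteristic: $\alpha^2=(B-A)^2/(B+A)^2$ is \emph{not} ``the image of the pole $\Lambda=0$''. The point $\Lambda=0$ maps to $z=0$, where $1-\alpha^2z^2=1\neq0$, and the integrand of \eqref{eq:hyper-1} has no pole there for $n\ge0$. The third-kind contribution comes from $\Lambda=\infty$, where the numerator $\Lambda^n$ blows up, i.e.\ from the vanishing of the denominator $(A+B)+(B-A)\cn\nu$ in \eqref{eq:fs-4}; this is precisely what produces the factor $(1+\alpha\cn)^{-m}$ in the paper's recursion, and the stated $\alpha^2$ is then related to the Legendre characteristic through the standard transformation (cf.\ \eqref{eq:Pi-K-alpha}). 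Neither issue breaks the argument: the Möbius map from \Cref{prop:elliptic-int-2}, the formulas for $y$ and $\dd\lambda$ in \eqref{eq:fs-4-y}, the sheet/sign determination along the cut $[\lambda_1^*,\lambda_1]$, and a standard recursion carry the proof exactly as in \Cref{prop:case-1-P} and \Cref{prop:case-2-P}.
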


\begin{proof}
	Analogous to the proofs of Propositions \ref{prop:case-1-P} and \ref{prop:case-2-P}, the evaluation of the hyperelliptic integrals \eqref{eq:hyper-1} is carried out in four steps.
	
	\textbf{Step 1.}
	By combining \Cref{prop:elliptic-int-2} with \Cref{define:elliptic-function}, 
	we obtain 
	\begin{equation}\nonumber
		\begin{split}
			\int_{0}^{\lambda}\frac{\dd \chi}{\prod_{i=1}^3(\chi^2-\lambda_i^2)^{1/2}}\xlongequal[\eqref{eq:elliptic-int-2-4},\eqref{eq:define-first-integral}]{\eqref{eq:hyper-1}}&\ 
			 \frac{ \ii }{(\lambda_3^2-\lambda_1^2)^{1/2}}\int_{0}^{\nu} \dd u
			= \frac{\ii \nu}{(\lambda_3^2-\lambda_1^2)^{1/2}}, \quad \nu=F(z,  k^{(2)}_1).
		\end{split}	
	\end{equation}
	By combining \Cref{prop:elliptic-int-2}, we obtain that the elliptic integrals in equation \eqref{eq:hyper-1} with $n=0$ can be expressed in terms of the elliptic integrals of the first kind defined in equation \eqref{eq:define-first-integral}.
	We now introduce the function
	\begin{equation}\label{eq:fs-4}
		\begin{split}
			&\cn(\nu,  k^{(2)}_1)=\frac{\lambda^2(A+B)-\lambda_2^2B}{\lambda^2(A-B)+\lambda_2^2B},
			\qquad 
			\lambda^2=\frac{\lambda_2^2B(1+\cn(\nu,  k^{(2)}_1))}{(A+B)+(B-A)\cn(\nu,  k^{(2)}_1)},\\
			&S_1^{(2)}:=\{\nu\in \mathbb{C}\left| |\Re(\nu)|\le 4  K^{(2)}_1, \,\, 0\le \Im(\nu)\le   K^{(2)\prime}_1 \right.\},
		\end{split}
	\end{equation}
	where parameters $k^{(2)}_1$, $A$, $B$ are defined in equation \eqref{eq:u-parameters-cn}.
	This function maps the periodic region $S^{(2)}_1$ in the $\nu$-plane onto the spectral parameter $\lambda$ over the entire complex plane $\mathbb{C}\cup \{ \infty \}$, which contains three cuts.
	Further details are provided in \Cref{prop:appendenx-f-3-4}. 
	In particular, the three pairs of cuts $[\lambda_1^*, \lambda_1]$, $[\lambda_2^*, \lambda_2]$, and $[\lambda_3^*, \lambda_3]$ in the $\lambda$-plane are mapped to the rectangular region in the $\nu$-plane with cuts $[-3K^{(2)}_1+\ii K^{(2)\prime}_1 , -K^{(2)}_1+\ii K^{(2)\prime}_1]$, $[-4 K^{(2)}_1 , 0]$, and $[3K^{(2)}_1+\ii K^{(2)\prime}_1, K^{(2)}_1+\ii K^{(2)\prime}_1]$ corresponding to the periodic region bounded by $-4K^{(2)}_1$, $4K^{(2)}_1$, $4K^{(2)}_1+\ii K^{(2)\prime}_1$ and $-4K^{(2)}_1+\ii K^{(2)\prime}_1$.

	\textbf{Step 2.}
	For the conformal map defined in equation \eqref{eq:fs-4}, and in combination with equations \eqref{eq:define-curve-algebro} and \eqref{eq:fs-4}, we obtain 
	\begin{subequations}\label{eq:fs-4-y}
		\begin{align}
			y=&\ \pm \frac{2\ii \lambda_2A^{3/2} B \, \sn(\nu,  k^{(2)}_1) \dn(\nu,  k^{(2)}_1)}{(A+B+(B-A)\cn(\nu,  k^{(2)}_1))^{3/2}(1+\cn(\nu,  k^{(2)}_1))^{1/2}},\label{eq:fs-4-y-1}\\
			\dd \lambda =&\  \frac{-\lambda_2 A B^{1/2} \, \sn(\nu,  k^{(2)}_1)\dn(\nu,  k^{(2)}_1)}{(A+B+(B-A)\cn(\nu,  k^{(2)}_1))^{3/2}(1+\cn(\nu,  k^{(2)}_1))^{1/2}} \, \dd \nu ,\label{eq:fs-4-y-2}
		\end{align}
	\end{subequations}
	where $(k^{(2)\prime }_1)^2=1-  (k^{(2)}_1)^2=((A+B)^2-\lambda_2^4)/(4AB)$.
	By applying equations \eqref{eq:fs-4} and \eqref{eq:fs-4-y-1}, we obtain 
	\begin{equation}\label{eq:elliptic-funda-1-cn}
		\begin{split}
			 \int \frac{\lambda^{2n}}{y}\dd \lambda
			\xlongequal[\eqref{eq:fs-4-y}]{\eqref{eq:fs-4}}  \int \frac{\pm \ii}{2(AB)^{1/2}} \left(\frac{\lambda_2^2B(1+\cn(\nu,  k^{(2)}_1))}{(A+B)+(B-A)\cn(\nu,  k^{(2)}_1)}\right)^n \dd \nu.
		\end{split}
	\end{equation}

	\textbf{Step 3.}
	Consider the integration path in equation \eqref{eq:hyper-1}.
	The $a_1$-circle (shown in \Cref{fig:genus-two-figure}) can be represented in the $\lambda$-plane as a path consisting of two straight-line segments: one along the left-hand edge of the branch cut from $\lambda_1^*$ to $\lambda_1$, and the other returning along the right-hand edge of the branch cut.
	Along the left-hand edge, we have $\Im(\dd \lambda)>0$. 
	By equation \eqref{eq:fs-4}, this implies $\Re(\dd \nu)>0$, so that the inequality $\Re( \sn(\nu,  k^{(2)}_1)\dn(\nu,  k^{(2)}_1)(A+B+(B-A)\cn(\nu,  k^{(2)}_1))^{-3/2}(1+\cn(\nu,  k^{(2)}_1))^{-1/2})<0$	holds.
	Hence, along this path, the parameter $y$ in equation \eqref{eq:fs-4-y-1} takes the sign $``+ "$, that is, we choose the positive sign in equation \eqref{eq:elliptic-funda-1-cn}.

	\textbf{Step 4.}
	By utilizing the following recursive formula \cite{ByrdF-54} 
	\begin{equation}\nonumber
		\begin{split}
			G_{m+1} = &\ \ \frac{(1-2m)((1-2k^2)a^2+2k^2)}{m(a^2-1)((k^{\prime })^2a^2+k^2)}G_m+\frac{(1-m)((1-2k^2)a^2+6k^2)}{m(a^2-1)((k^{\prime })^2a^2+k^2)}G_{m-1}\\
			&\ +\frac{2(3-2m)k^2 G_{m-2}}{m(a^2-1)((k^{\prime })^2a^2+k^2)}
			+\frac{(2-m)k^2G_{m-3}}{m(a^2-1)((k^{\prime })^2a^2+k^2)}
			+\frac{\sn(u,k)\dn(u,k)}{m(1+a\cn(u,k))^{m}}+C,\quad m\ge 0, \\
			G_m=&\  \int \frac{\dd u}{(1+\alpha\cn(u,k))^m},\\ \qquad 
			G_1=&\   \int\frac{\dd u}{1+\alpha \cn(u,k)}=\frac{1}{1-\alpha^2}\Pi\left(z, \frac{\alpha^2}{\alpha^2-1},k\right)
			 - \frac{\alpha}{2m(1-\alpha^2)} \ln\left(\frac{\dn(u,k)+m\sn(u,k)}{\dn(u,k)-m\sn(u,k)}\right), 
		\end{split}
	\end{equation}
	with $k=k^{(2)}_1$, $z=2\lambda(AB(\lambda_2^2-\lambda^2))^{1/2}/(\lambda^2(A-B)+\lambda_2^2B)$, $\alpha=(B-A)/(B+A)$ and $m=\alpha^2/(\alpha^2-1)-k^2$,
	we get the final result.	
\end{proof}

\begin{prop}\label{prop:case-2-C}
	For the \ref{case2}, the hyperelliptic integrals in \eqref{eq:hyper-2} can also be expressed as a linear combination of some elementary integrals and the normal elliptic integrals of the first and second kinds, $F(z,k_2^{(2)})$ and $E(z,k_2^{(2)})$, with $z=2A^{1/2}(\lambda_2^2-\lambda^2)^{1/2}/(\lambda_2^2+A-\lambda^2)$.
\end{prop}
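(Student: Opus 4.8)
The plan is to mirror, step for step, the arguments already carried out in \Cref{prop:case-1-P,prop:case-2-P,prop:case-1-C}, now with the conformal map tailored to \ref{case2}. First I would note that the substitution $\lambda^2=\Lambda$ (see \eqref{eq:hyper-2}) turns the integrands of \eqref{eq:hyper-2} into $\Lambda^{n}/\bigl(2\bigl((\Lambda-\lambda_1^2)(\Lambda-\lambda_2^2)(\Lambda-\lambda_3^2)\bigr)^{1/2}\bigr)$, i.e. a genuine cubic under the radical; this is exactly the type covered by \eqref{eq:elliptic-int-2-3}, which already explains why only the first- and second-kind integrals $F(z,k_2^{(2)})$, $E(z,k_2^{(2)})$ (and no third-kind integral) can occur. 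By \Cref{prop:elliptic-int-2} the $n=0$ case is $F(z,k_2^{(2)})$ with modulus $k_2^{(2)}$ from \eqref{eq:u-parameters-cn-k} and $z=2A^{1/2}(\lambda_2^2-\lambda^2)^{1/2}/(\lambda_2^2+A-\lambda^2)$. To obtain the correspondence over the full $\lambda$-plane I would introduce the conformal map \eqref{eq:fs-3} relating $\lambda$ to $\nu$ via a Jacobi elliptic function of modulus $k_2^{(2)}$, and check, as in \Cref{prop:appendenx-f-1-2} (and its \ref{case2} analogue \Cref{prop:appendenx-f-3-4}), that it sends a fundamental period rectangle in the $\nu$-plane onto $\mathbb{C}\cup\{\infty\}$, with the three branch cuts $[\lambda_i,\lambda_i^{*}]$ mapped onto the horizontal/vertical edges of that rectangle.

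Next, using \eqref{eq:define-curve-algebro} together with \eqref{eq:fs-3} I would compute the pullbacks of $y$ and of $\lambda\,\dd\lambda$ as rational functions of $\sn,\cn,\dn$ of $\nu$ — these are precisely the formulas \eqref{eq:fs-3-y}. Substituting them into $\int \lambda^{2n+1}/y\,\dd\lambda$ reduces it to $\pm(\ii/\sqrt{A})\int \bigl(\lambda_2^2-(A+A\cn(\nu,k_2^{(2)}))/(1-\cn(\nu,k_2^{(2)}))\bigr)^{n}\,\dd\nu$, and a translation by $\ii K_2^{(2)\prime}$ via the shift formulas \eqref{eq:Jacobi-shift} brings the integrand to a polynomial in $\dn^{2}(\nu,k_2^{(2)})$, namely $(\lambda_2^2-A+2A\dn^{2}(\nu,k_2^{(2)}))^{n}$ — exactly the normal form recorded in the proof of \Cref{lemma:cn-2-int}, the factors $\sn\cdot\dn$ being odd and hence dropping out over a full period. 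Step~3 is the selection of the contour: the $a_1$-cycle of \Cref{fig:genus-two-figure} is realized as the two edges of the cut $[\lambda_1^{*},\lambda_1]$, and evaluating the sign of $\Re(\lambda\,\dd\lambda)$ (resp. $\Im$) along one edge fixes the sign of $\Re(\dd\nu)$ through \eqref{eq:fs-3}, thereby pinning down the ambiguous $\pm$ in \eqref{eq:fs-3-y} and the $\nu$-endpoints $\mp K_2^{(2)}\mp\ii K_2^{(2)\prime}$; the $a_2$- and $b_i$-cycles are treated in the same way, which also gives $\mathcal Y_{a_2}^{[2n+1]}=0$.

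Finally, Step~4 is purely computational: one expands $(\lambda_2^2-A+2A\dn^{2}(\nu,k_2^{(2)}))^{n}$ and applies the $\dn$-recursion \eqref{eq:recursive-formula-dn} (with $k_2^{(2)}$ in place of $k_1^{(2)}$), using $\int\dd\nu = F$, $\int\dn^{2} = E$, and the fact that each higher even power of $\dn$ is a linear combination of lower ones plus an elementary term; this expresses every integral in \eqref{eq:hyper-2} as a linear combination of elementary integrals and of $F(z,k_2^{(2)})$, $E(z,k_2^{(2)})$, which is the assertion. The coefficients $v$ and $v_1$ appearing in \Cref{lemma:cn-2-int} are precisely what emerge from collecting terms at $n=2$.

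The step I expect to be the genuine obstacle is Step~3 — the determination of the integration contour in the $\nu$-plane and the choice of sheet of $y$. All the other steps are routine once \eqref{eq:fs-3} is in hand (they reduce to identities from Appendix~\ref{appendix:Elliptic-integrals-formulas}), but keeping the orientation of the cut, the correct branch of each square root, and the $\nu$-limits mutually consistent is delicate; here one leans on the conformal-map picture (the analogue of \Cref{fig:confromal-1}) and on the behaviour of $\lambda\,\dd\lambda$ along the cut, exactly as in the proofs of \Cref{prop:case-2-P,prop:case-1-C}.
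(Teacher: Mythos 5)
Your proposal is correct and follows essentially the same four-step route as the paper's own proof: the conformal map \eqref{eq:fs-3}, the pullbacks \eqref{eq:fs-3-y}, the sign/contour determination along the cut $[\lambda_1^*,\lambda_1]$, and the reduction via \eqref{eq:Jacobi-shift} and the recursion \eqref{eq:recursive-formula-dn} to $F$ and $E$ plus elementary terms. The only small nuance is that for a general (non-cycle) endpoint the odd $\sn\cdot\dn$ cross-terms do not ``drop out'' but integrate elementarily as antiderivatives in $\cn$, which is exactly how the paper absorbs them into the elementary part.
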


\begin{proof}
	The proof follows a process similar to that of \Cref{prop:case-1-C}.
	
	\textbf{Step 1.}
	By \Cref{prop:elliptic-int-2} and \Cref{define:elliptic-function}, since
	\begin{equation}\nonumber
		\begin{split}
			\int_{0}^{\lambda}\frac{\chi \dd \chi}{\sqrt{\prod_{i=1}^3(\chi^2-\lambda_i^2)}}\xlongequal[\eqref{eq:elliptic-int-2-3}]{\eqref{eq:hyper-2}}&\ 
			\int_{0}^{z}\frac{- \dd t}{\sqrt{A(1-t^2)(k^{\prime2}+k^2t^2)}}
			\xlongequal{\eqref{eq:define-first-integral}} \frac{ -1 }{\sqrt{A}}\int_{0}^{\nu} \dd u
			= \frac{-\nu}{\sqrt{A}}, 
		\end{split}	
	\end{equation}
	where $k=k_2^{(2)}$ and $\nu=F(z,k_2^{(2)})$,
	we introduce the function
	\begin{equation}\label{eq:fs-3}
		\begin{split}
			&\cn(\nu,  k^{(2)}_2)=\frac{\lambda_2^2-A-\lambda^2}{\lambda_2^2+A-\lambda^2},
			\qquad 
			\lambda^2=\frac{(\lambda_2^2-A)-(\lambda_2^2+A)\cn(\nu,  k^{(2)}_2)}{1-\cn(\nu,  k^{(2)}_2)},
			\\
			& S^{(2)}_2=\left\{\nu \in \mathbb{C}\left|\, | \Re(\nu) | \le 2  K^{(2)}_2, |\Im(\nu)|\le   K^{(2)\prime}_2 \right|\right\},
		\end{split}
	\end{equation}
	with $ k^{(2)}_2$ in equation \eqref{eq:u-parameters-cn}.
	This function maps the periodic region $ S^{(2)} _{2}$ in the $\nu$-plane onto the spectral parameter $\lambda\in \mathbb{C}\cup \{\infty\}$ in the entire $\lambda$-plane. 
	The three pairs of cuts $[\lambda_1^*, \lambda_1]$, $[\lambda_2^*, \lambda_2]$, and $[\lambda_3^*, \lambda_3]$ in the $\lambda$-plane are mapped to the rectangular region in the $\nu$-plane with cuts $[-  K^{(2)}_2-\ii   K^{(2)\prime}_2 , K^{(2)}_2-\ii   K^{(2)\prime}_2]$, $[-2  K^{(2)}_2, 2  K^{(2)}_2]$, and $[-  K^{(2)}_2+\ii   K^{(2)\prime}_2 , K^{(2)}_2+\ii   K^{(2)\prime}_2]$, 
	within the rectangular region $S^{(2)}_2$.
	The detailed proof of this result is given in \Cref{prop:appendenx-f-3-4}. 
	
	\textbf{Step 2.}
	For the transformation \eqref{eq:fs-3}, together with equation \eqref{eq:define-curve-algebro}, we obtain 
	\begin{equation}\label{eq:fs-3-y}
			y=  \pm \ii \frac{2A\sqrt{A}\, \sn(\nu,  k^{(2)}_2)\dn(\nu,  k^{(2)}_2)}{\left(1-\cn(\nu,  k^{(2)}_2)\right)^2},
		\qquad 
			\lambda	\dd \lambda =  \frac{A\sn(\nu,  k^{(2)}_2)\dn(\nu,  k^{(2)}_2)}{\left(1-\cn(\nu,  k^{(2)}_2)\right)^2} \dd \nu,
	\end{equation}
	where $(k^{(2)\prime }_2)^2=1-  (k^{(2)}_2)^2=(2A-2\lambda_2^2+\lambda_1^2+\lambda_3^2)/(4A)$.
	Furthermore, by utilizing equations \eqref{eq:fs-3} and \eqref{eq:fs-3-y}, we obtain
	\begin{equation}\label{eq:Y-a-2n+1-cn-calculate}
		\begin{split}
			\int\frac{\lambda^{2n+1}}{y}\dd \lambda
			\xlongequal[\eqref{eq:fs-3-y}]{\eqref{eq:fs-3}} &\ \int \pm \frac{\ii}{2\sqrt{A}} \left(\lambda_2^2-\frac{A+A\cn(\nu,  k^{(2)}_2)}{1-\cn(\nu,  k^{(2)}_2)}\right)^n \dd \nu.
		\end{split}
	\end{equation}

	\textbf{Step 3.}
	Consider the integration in \eqref{eq:Y-a-2n+1-cn-calculate}.
	The $a_1$-circle (\Cref{fig:genus-two-figure}) can be represented in the $\lambda$-plane as a path along the left edge of the branch cut from $\lambda_1^*$ to $\lambda_1$, returning along the right edge. 
	Along the left edge,  $\Im(\lambda \dd \lambda)<0$ and $\Re(\dd \nu)>0$, so that $\Im(\sn(\nu,  k^{(2)}_2)\dn(\nu,  k^{(2)}_2)/(1-\cn(\nu,  k^{(2)}_2))^2)<0.$
	Hence, along this path the sign of parameter $y$ in equation \eqref{eq:fs-3-y} is $``-"$.
	By equations \eqref{eq:Jacobi-shift}, \eqref{eq:fs-3} and \eqref{eq:fs-3-y}, the hyperelliptic integral defined in equation \eqref{eq:hyper-2} can be expressed as 
	\begin{equation}\label{eq:Y-a-2n+1-cn-calculate-1}
		\begin{split}
			 \int_{\lambda_1^*}^{\lambda_1}\frac{\lambda^{2n+1}}{y}\dd \lambda 
			\xlongequal[\eqref{eq:fs-3-y}]{\eqref{eq:fs-3}}&\ \int_{-  K^{(2)}_2-\ii   K^{(2)\prime}_2}^{  K^{(2)}_2-\ii   K^{(2)\prime}_2}-\frac{\ii}{2\sqrt{A}} \left(\lambda_2^2-\frac{A+A\cn(u,  k^{(2)}_2)}{1-\cn(u,  k^{(2)}_2)}\right)^n \dd u\\
			\xlongequal{\eqref{eq:Jacobi-shift}}&\ \int_{-  K^{(2)}_2}^{ K^{(2)}_2}\frac{\ii\left(\lambda_2^2-A+2A\dn^2(u,  k^{(2)}_2)-2\ii Ak\sn(u,  k^{(2)}_2)\dn(u,  k^{(2)}_2)\right)^n}{-2\sqrt{A}}  \dd u.
		\end{split}
	\end{equation}
	
	\textbf{Step 4.}
	It is straightforward to show that $\int \dn^{2n+1}(u,k^{(2)}_2)\sn^{2m+1}(u,k^{(2)}_2) \dd u 
	= \int -((k^{(2)\prime}_2)^2+(k^{(2)}_2)^2\cn^2(u,k^{(2)}_2))^{n}(1-\cn^2(u,k^{(2)}_2))^{m} \dd (\cn(u,k^{(2)}_2))$.
	By combining this with the recursive formulas in \eqref{eq:recursive-formula-dn}, 
	we complete the proof.
\end{proof}

\begin{prop}\label{prop:appendenx-f-3-4}
	The functions in equations \eqref{eq:fs-4} and \eqref{eq:fs-3}
	map the entire complex $\lambda$-plane onto the rectangular region  $ S_1^{(2)}$ and $ S_2^{(2)}$ in the $\nu$-plane, respectively.
\end{prop}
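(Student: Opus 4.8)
\textbf{Proof proposal for Proposition \ref{prop:appendenx-f-3-4}.}

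The plan is to follow exactly the strategy already used in the proof of \Cref{prop:appendenx-f-1-2}, which established the analogous statement for the maps \eqref{eq:fs-2} and \eqref{eq:fs-1} under \ref{case1}. The essential structural fact is that each of the functions \eqref{eq:fs-4} and \eqref{eq:fs-3} expresses $\lambda^2$ as a Möbius (linear fractional) transformation of $\cn(\nu,k_i^{(2)})$, and the Jacobi function $\cn(\cdot,k)$ is known to map a fundamental rectangular period cell onto the whole Riemann sphere $\mathbb{C}\cup\{\infty\}$ in a one-to-one fashion. Composing a bijective elliptic function with a Möbius map and then with the two-to-one squaring $\lambda\mapsto\lambda^2$ produces a map of the stated period region onto the $\lambda$-plane; the job is to check that the chosen rectangle $S_1^{(2)}$ (resp. $S_2^{(2)}$) together with the listed cuts is precisely the right fundamental domain, and that the branch points $\lambda_i,\lambda_i^*$ land on the indicated segments.

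First I would treat the map \eqref{eq:fs-3}, as it is the simpler one: here $S_2^{(2)}=\{\nu: |\Re(\nu)|\le 2K_2^{(2)},\ |\Im(\nu)|\le K_2^{(2)\prime}\}$, which is exactly the fundamental period rectangle of $\cn(\nu,k_2^{(2)})$ traversed twice in the real direction (period $4K_2^{(2)}$). Writing $\cn(\nu,k_2^{(2)})=w$, the relation $\lambda^2 = ((\lambda_2^2-A)-(\lambda_2^2+A)w)/(1-w)$ is a Möbius transformation sending $w\in\mathbb{C}\cup\{\infty\}$ bijectively onto $\lambda^2\in\mathbb{C}\cup\{\infty\}$: the pole $w=1$ (attained on the real axis $\nu=0$) corresponds to $\lambda=\infty$, and $w=-1$ (attained at $\nu=\pm 2K_2^{(2)}$) corresponds to $\lambda^2=\lambda_2^2$. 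Composing with $\lambda\mapsto\lambda^2$ then gives a map of $S_2^{(2)}$ onto the $\lambda$-plane, two-to-one globally but one-to-one after restricting to, say, the half-region $\{\Re(\lambda)\ge 0\}$, mirroring the argument in \Cref{prop:appendenx-f-1-2} where one parametrizes $\cn=re^{i\theta}$ and checks that the upper half of the rectangle maps to the upper half $\lambda$-plane via an explicit Möbius-then-square composition. Finally I would verify the cut correspondences: solving $\cn(\nu,k_2^{(2)})=(\lambda_j^2-A-\lambda^2)/(\lambda_j^2+A-\lambda^2)$ at $\lambda=\lambda_1,\lambda_2,\lambda_3$ and using $|\lambda_1^2-\lambda_2^2|=A$, one gets $\cn$-values lying on the appropriate edges $\Im(\nu)=\mp K_2^{(2)\prime}$ and $\Im(\nu)=0$, reproducing the three listed segments $[-K_2^{(2)}\mp\ii K_2^{(2)\prime},\,K_2^{(2)}\mp\ii K_2^{(2)\prime}]$ and $[-2K_2^{(2)},2K_2^{(2)}]$.

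For \eqref{eq:fs-4} the argument is the same in spirit but bookkeeping-heavier: the rectangle $S_1^{(2)}=\{|\Re(\nu)|\le 4K_1^{(2)},\ 0\le\Im(\nu)\le K_1^{(2)\prime}\}$ is the upper half of a $8K_1^{(2)}\times 2K_1^{(2)\prime}$ block, i.e.\ a doubled period cell of $\cn(\nu,k_1^{(2)})$, and the Möbius relation $\lambda^2=\lambda_2^2 B(1+w)/((A+B)+(B-A)w)$ with $w=\cn(\nu,k_1^{(2)})$ again sends $w$-sphere bijectively to $\lambda^2$-sphere (pole at $w=-(A+B)/(B-A)$, which is where $\lambda=\infty$). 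I would again reduce to a half-$\lambda$-plane by restricting $\Im(\nu)$, invoke bijectivity of $\cn$ on its period cell, and then check the three cut images by direct substitution of $\lambda=\lambda_j$; the extra factor-of-two width $4K_1^{(2)}$ is needed precisely because $\lambda^2$ runs over a full sphere while $\lambda$ itself needs the doubled domain, exactly as $S_1^{(1)}$ in \eqref{eq:fs-2} has width $K_1^{(1)}$ but covers the whole $\lambda$-plane. The main obstacle will be the second, somewhat tedious, part: keeping track of which edge of the $\nu$-rectangle each branch cut $[\lambda_j^*,\lambda_j]$ is sent to, and confirming that the orientation conventions (left/right edge of a cut, and hence the sign choice of $y$ recorded in Step 3 of the calling proofs) are consistent across the four corners of $S_1^{(2)}$. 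This is a finite, checkable computation — using $\cn$'s shift formulas \eqref{eq:Jacobi-shift} and the identities $|\Upsilon_j|=1$ type relations implicit in $A=|\lambda_1^2-\lambda_2^2|$, $B=|\lambda_1^2|$ — but it is where an error would most easily creep in, so I would carry it out carefully point by point for the six branch points, just as \Cref{fig:confromal-1} does pictorially for \ref{case1}.
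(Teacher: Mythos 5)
Your proposal follows essentially the same route as the paper's own proof: factor each map as the squaring map composed with a Möbius (fractional linear) transformation of $\cn(\nu,k_i^{(2)})$, use the conformal-mapping property of this composition on half of the rectangle for each half of the $\lambda$-plane, and then confirm the branch-cut correspondences that the paper itself only records pictorially in \Cref{fig:confromal-2}. One minor caution: $\cn$ is an order-two elliptic function, so it is two-to-one rather than one-to-one on a full period cell; this does not harm your argument, since (like the paper) you ultimately restrict to half-regions, and the proposition only asserts surjectivity onto the $\lambda$-plane.
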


\begin{proof}
	The function $\lambda^2$ maps the right half  $\lambda$-plane onto the entire complex plane. 
	Next, we consider the first function of equation \eqref{eq:fs-4},
	which is a fractional linear transformation.
	By the properties of fractional linear transformations, there exists a conformal mapping between the periodic region $(0,4  K_1^{(1)}, 4  K_1^{(1)}+\ii   K_1^{(1)\prime},\ii   K_1^{(1)\prime})$ and  the right half $\lambda$-plane.
	Similarly, for the left half $\lambda$-plane, we obtain the  periodic region $(0,-4  K_1^{(1)},-4  K_1^{(1)}+\ii   K_1^{(1)\prime},\ii   K_1^{(1)\prime})$.
	In summary, the function defined in equation \eqref{eq:fs-4} maps the complex $\lambda$-plane to the rectangular region $ S_1^{(2)}$ of the $\nu$-plane.
	Likewise, the function defined in equation \eqref{eq:fs-1} maps the rectangular area $ S_2^{(2)}$ onto the entire complex plane.
\end{proof}
The Figure \ref{fig:confromal-2} illustrates the correspondence between the $\lambda$-plane and the periodic region $ S_{1,2}^{(2)}$ in the $\nu$-plane.

\begin{figure}[h!]
	\centering
	\subfigure[The conform of the functions \eqref{eq:fs-4}]{\includegraphics[width=0.8\textwidth]{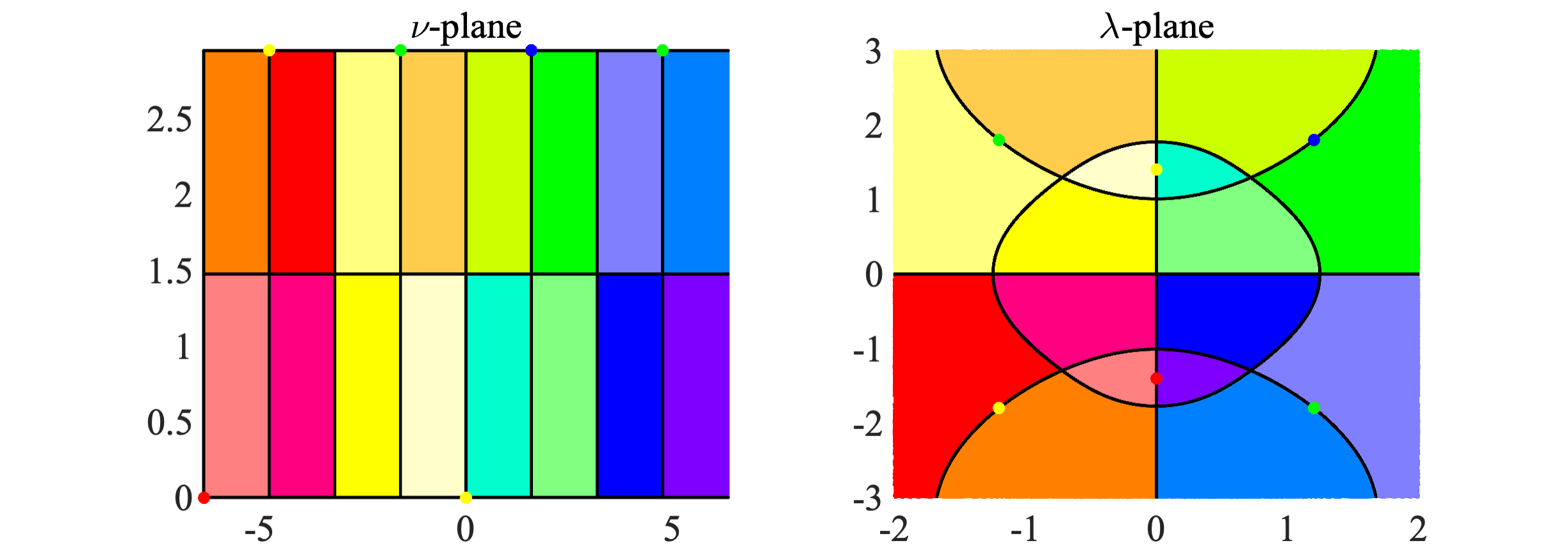}}
	\subfigure[The conform of the functions \eqref{eq:fs-3}]{\includegraphics[width=0.8\textwidth]{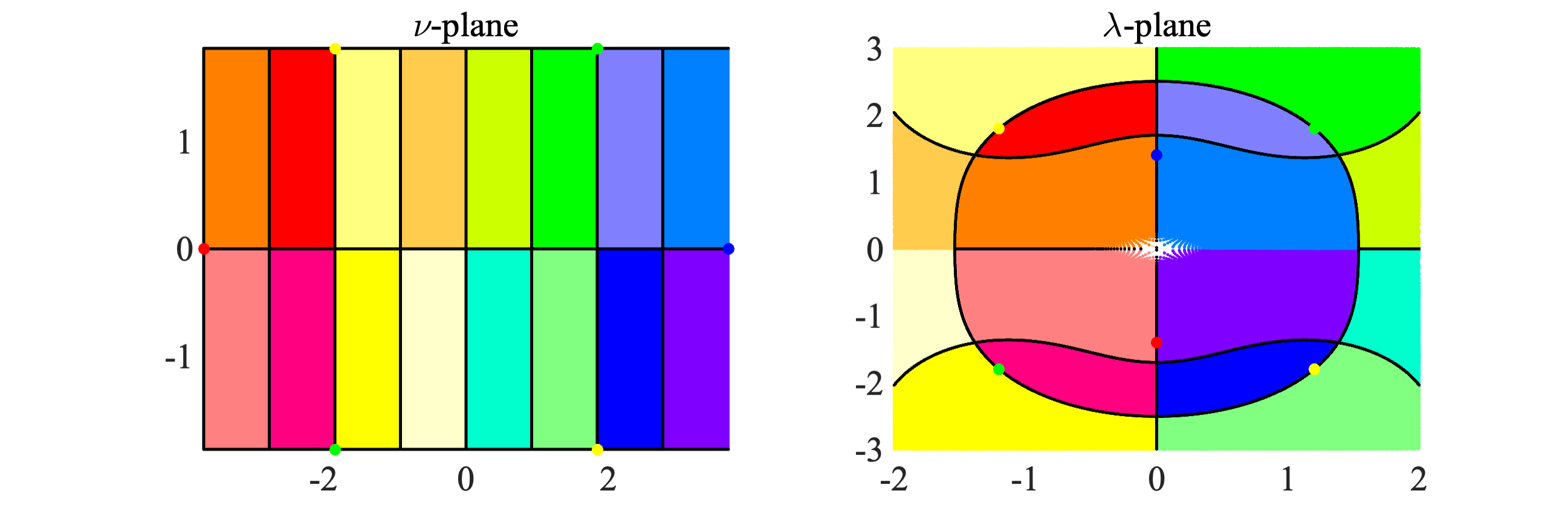}}
	\caption{The correspondence between the $\nu$-plane and $\lambda$-plane with branch points satisfying \ref{case2} and $\Re(\lambda_1)<0=\Re(\lambda_2)<\Re(\lambda_3)$. 
		The points shown in the $\lambda$-plane are correspondent with the branch points $\lambda_{1,2,3}$.}
	\label{fig:confromal-2}
\end{figure}

\bibliographystyle{siam}
\bibliography{reference}
\end{document}